\newtheorem{theorem}{Theorem}
\newtheorem{lemma}[theorem]{Lemma}
\newcommand{\id}{\bm{1}}
\newcommand{\tr}{\mathrm{Tr}}
\newcommand{\upi}{\mathrm{i}}
\newcommand{\upe}{\mathrm{e}}
\newcommand{\mi}{\mathrm{i}}
\newcommand{\cB}{\mathcal{B}}
\newcommand{\cC}{\mathcal{C}}
\newcommand{\cE}{\mathcal{E}}
\newcommand{\cF}{\mathcal{F}}
\newcommand{\cG}{\mathcal{G}}
\newcommand{\cH}{\mathcal{H}}
\newcommand{\cM}{\mathcal{M}}
\newcommand{\cN}{\mathcal{N}}
\newcommand{\cP}{\mathcal{P}}
\newcommand{\cQ}{\mathcal{Q}}
\newcommand{\cR}{\mathcal{R}}
\newcommand{\cS}{\mathcal{S}}
\newcommand{\cT}{\mathcal{T}}
\newcommand{\cV}{\mathcal{V}}
\newcommand{\cW}{\mathcal{W}}
\newcommand{\ket}[1]{\left|{#1}\right\rangle}
\newcommand{\be}{\begin{equation}}
\newcommand{\ee}{\end{equation}}
\newcommand{\bea}{\begin{eqnarray}}
\newcommand{\eea}{\end{eqnarray}}
\newcommand{\bdm}{\begin{displaymath}}
\newcommand{\edm}{\end{displaymath}}
\newcommand{\bit}{\begin{itemize}}
\newcommand{\eit}{\end{itemize}}
\newcommand{\ben}{\begin{enumerate}}
\newcommand{\een}{\end{enumerate}}
\newcommand{\nc}{\newcommand}
\nc{\beq}{\begin{equation}} \nc{\eeq}{\end{equation}}
\nc{\beqa}{\begin{eqnarray}} \nc{\eeqa}{\end{eqnarray}}
\nc{\lsim}{\begin{array}{c}\,\sim\vspace{-21pt}\\< \end{array}}
\nc{\gsim}{\begin{array}{c}\sim\vspace{-21pt}\\> \end{array}}
\def\Dslash{\not{\hbox{\kern-3pt $D$}}}
\NewDocumentCommand{\xincludegraphics}{O{}m}
 {
  \group_begin:
  \tl_clear:N \l_miguel_label_tl
  \clist_clear:N \l_miguel_label_clist
  \keys_set:nn { miguel/label } { #1 }
  \tl_if_empty:NTF \l_miguel_label_tl
   {
    \miguel_includegraphics:Vn \l_miguel_label_clist { #2 }
   }
   {
    \SetHorizontalCoffin\imagecoffin
     {
      \miguel_includegraphics:Vn \l_miguel_label_clist { #2 }
     }
    \SetHorizontalCoffin\labelcoffin
     {
      \raisebox{\depth}
       {
        \bool_if:NTF \l_miguel_label_box_bool
         { \fcolorbox{white}{white}{\l_miguel_label_size_tl\l_miguel_label_tl} }
         { \l_miguel_label_size_tl\l_miguel_label_tl }
       }
     }
    \SetVerticalPole\imagecoffin{left}{3pt+\CoffinWidth\labelcoffin/2}
    \SetVerticalPole\imagecoffin{right}{\Width-3pt-\CoffinWidth\labelcoffin/2}
    \SetHorizontalPole\imagecoffin{up}{\Height-3pt-\CoffinHeight\labelcoffin/2}
    \SetHorizontalPole\imagecoffin{down}{3pt+\CoffinHeight\labelcoffin/2}
    \use:x{\JoinCoffins\imagecoffin[\l_miguel_label_pos_tl]\labelcoffin[vc,hc]} 
    \TypesetCoffin\imagecoffin
   }
   \group_end:
 }
\NewDocumentCommand{\setlabel}{m}
 {
  \keys_set:nn { miguel/label } { #1 }
 }
\title{Adaptive quantum codes: constructions, applications and fault tolerance}
\author{AKSHAYA J}
\date{OCTOBER 2020}
\begin{document}

\maketitle  
\fancyhf{}
\thispagestyle{empty}
\mbox{}
\setcounter{page}{0}
%
\certificate
\setcounter{page}{-1}
\vspace*{0.5in}

\noindent This is to certify that the thesis entitled {\bf Adaptive quantum codes: constructions, applications and fault tolerance}, submitted by {\bf Akshaya J} to the Indian Institute of Technology Madras for the award of the degree of {\bf Doctor of Philosophy}, is a bonafide record of the research work done by her under our supervision. The contents of this thesis, in full or in part, have not been submitted to any other Institute or University for the award of any degree or diploma.

\vspace*{1.in}

\begin{minipage}{\textwidth}
\begin{minipage}{0.7\textwidth}
\end{minipage}%
\hfill
\begin{minipage}{0.29\textwidth}
\vspace*{1ex}
\vspace*{1ex}
\noindent {\bf Dr. Prabha Mandayam} \\ 
\noindent Research Guide \vspace{-1.2ex}\\ 
\noindent Associate Professor \vspace{-1.2ex} \\
\noindent Dept. of Physics  \vspace{-1.2ex}\\
\noindent IIT Madras \vspace{-1.2ex} \\

\vspace*{0.5in}

\end{minipage}
\end{minipage}

\vspace*{0.25in}
\noindent Place: Chennai, India\\
Date:  January 7, 2021

%
\fancyhf{}
\mbox{}
\thispagestyle{empty}
\setcounter{page}{0}
%
\acknowledgements
Hurray! It feels nice to finally see some light at the end of the tunnel. It has been a strenuous yet a wonderful experience striding into science. It is a great pleasure to thank everyone who has contributed to this ongoing journey in research.

First and foremost, I am extremely grateful to my advisor Dr.Prabha Mandayam who has been very encouraging and supportive right from the very beginning. She has inspired me with her knack in asking the right insightful questions while setting up a research problem and her perseverance in seeking a solution. I always admire her for her eloquence in expressing her thoughts, especially when it comes to explaining physics in a more comprehensible language. Her constructive criticism has certainly helped me in moulding myself as a student of physics. She has allowed me to grasp the subject at my pace and encouraged independent thinking, which helped me to identify my strengths and boundaries. I aspire to achieve her levels of perception, clarity and decisiveness some day. I am forever indebted to her for all that I have learnt from her.

I will always relish the moments of insightful discussions that I have had with Prof. Arul Lakshminarayan. He is a remarkable teacher and his systematic approach towards solving a research problem has always inspired me. I am very thankful for all his encouragement and kind words of advice. I am very thankful to my doctoral committee Chair, Prof. Lakshmi Bala who has always encouraged me. Her extraordinary teaching ability as well as her quick and witty responses have always awed me. I am very thankful to Prof. V. Balakrishnan, for his kind words of advice. I feel blessed to have had the opportunity to interact with him.

I am very thankful to Prof. Hui Khoon for all that I learnt from her through our collaborative projects. I am extremely thankful to Prof. M. V. Sathyanarayanan, Prof. Rajesh Narayanan, Prof. Suresh Govindarajan, Dr. Sunethra Ramanan, Dr. Vaibhav Madhok, Dr. Ashwin Joy, and Dr.Pradeep Sarvepalli for all the stimulating discussion sessions and their valuable feedback. I am indebted to Prof. K.P.N Murthy, who has always been a pillar of support and encouragement through my MSc days at Hyderabad. I thank our HoD, Prof. Sethupathy, for his suggestions and advice during my doctoral programme. I am very thankful to Prof. Sibasish Ghosh, Prof. Bhanu Pratap Das and Prof. M.V.N. Murthy for their keen interest in my progress and words of encouragement. I am thankful to IIT Madras for the financial support I received during my tenure as a research scholar and the computer facilities I could avail. I thank Tech. superintendent Mr.Rajan, for showing interest in my progress and motivating words. \\

I have thoroughly enjoyed the role of a teaching assistant. It enriched my learning experience through interactions and discussions with other students. I am glad that I have made a big 
group of friends here at IIT Madras. I am grateful to my friends- Shruti Dogra, Madhuparna, Vandana, Krishna Mohan, Vasumathy, Dipanwita, Roshna, Dileep and Sharmila for all their timely support, encouragement and discussions,
useful suggestions which kept me up through hard times. I will always cherish the interesting times I have had with them.  I thank my friends Aravinda, Abinash, N. Dileep, Suhail, Sreeram, Ipsita, Bhaswati, Anjala, Anant, Rohan, Kaushik, Manoj Gowda, Arindam, Sharat, Sashi, Malayaja for all the stimulating discussions and their help during needy hours. I thank My Duy Hoang Long for all the academic discussions that we have been having over the recent days.

I am sure words won't not be enough in expressing how indebted I feel towards my Amma and Appa, for their unconditional love, support, and encouragement. Their constructive criticisms have always helped me introspect myself and grow up to a better person. I am grateful to my immediate family- Paati, Banu, Viji Mami and Mama for their love and encouragement. I thank my parents-in-law for showing keen interest in my progress. Last but not the least, I am grateful to my husband Prasannaa for being an incredible source of strength and support over these years. I feel blessed with these people who have made this journey possible and a wonderful one.

\fancyhf{}
\mbox{}
\thispagestyle{empty}
\mbox{}
\thispagestyle{empty}
\setcounter{page}{4}
\abstract

\noindent KEYWORDS: \hspace*{0.5em} \parbox[t]{4.4in}{Quantum error correction, Channel-adapted codes, Fault tolerance}

\vspace*{24pt}
 A major obstacle towards realizing a practical quantum computer is the `noise' that arises due to system-environment interactions.  While it is very well known that quantum error correction (QEC) provides a way to protect against errors that arise due to the noise affecting the system, a `perfect' quantum code requires atleast five physical qubits to observe a noticeable improvement over the no-QEC scenario.  However, in cases where the noise structure in the system is already known, it might be more useful to consider quantum codes that are adapted to specific noise models.  It is already known in the literature that such codes are resource efficient and perform on par with the standard codes. In this spirit, we address the following questions concerning such adaptive quantum codes. \\
(a) Construction:  Given a noise model, we propose a simple and fast numerical optimization algorithm to search for good quantum codes. Specifically, we search through the space of encoding unitaries by decomposing them using the Cartan form, into `local' and `nonlocal' parts. This allows us to explicitly search over the potential non-local parts thereby reducing our search parameters.  We also provide simple quantum circuits which can implement the optimal codes obtained via our search. \\
(b) Application: As a simple application of such noise-adapted codes, we propose an adaptive QEC protocol that allows transmission of quantum information from one site to the other over a $1$-d spin chain with high fidelity. We obtain explicit numerical and analytical results in the case of both ideal and disordered spin chains, provided the nearest neighbour interaction is governed by a total spin-preserving Hamiltonian. \\
(c) Fault-tolerance: Finally, we address the question of whether such noise-adapted QEC protocols can be made fault-tolerant. We show, by explicit construction, that it is possible to obtain fault-tolerant gadgets -- a universal gate set consisting of \textsc{ccz}, Hadamard gates and an error correction unit -- starting with a $[[4,1]]$ code, in systems where the elementary gates are assumed to be affected primarily by amplitude-damping noise. We obtain rigorous estimates of the critical error threshold, as $5.31\times 10^{-4}$ for the \textsc{cphase}-\textsc{exrec} unit and $2.8 \times 10^{-3}$ for the memory unit using our scheme.

\pagebreak

\fancyhf{}
\mbox{}
\mbox{}
\setcounter{page}{8}


\begin{singlespace}
\tableofcontents
\thispagestyle{empty}
\mbox{}
\thispagestyle{empty}
\listoffigures
\addcontentsline{toc}{chapter}{LIST OF FIGURES}
\fancyhf{}
\mbox{}
\thispagestyle{empty}
\setcounter{page}{7}
\end{singlespace}

\mbox{}
\thispagestyle{empty}
\setcounter{page}{18}
\abbreviations

\noindent 
\begin{tabbing}
xxxxxxxxxxx \= xxxxxxxxxxxxxxxxxxxxxxxxxxxxxxxxxxxxxxxxxxxxxxxx \kill
\textbf{CPTP}     \>Completely Positive and Trace Preserving \\
\textbf{QEC}     \> Quantum Error Correction \\
\textbf{CP}      \> Completely Positive \\
\textbf{AQEC}      \>Approximate Quantum Error Correction \\
\textbf{\textsc{cnot}}      \>Controlled-NOT \\
\textbf{\textsc{cz}}      \>Controlled-Phase \\
\textbf{\textsc{ccz}}      \>Controlled-Controlled-Phase \\
\textbf{\textsc{exrec}}      \>Extended Rectangle\\

\end{tabbing}
\fancyhf{}
\mbox{}
\thispagestyle{empty}
\pagebreak


\chapter*{\centerline{GLOSSARY OF SYMBOLS}}
\addcontentsline{toc}{chapter}{GLOSSARY OF SYMBOLS}

\begin{singlespace}
\begin{tabbing}
xxxxxxxxxxx \= xxxxxxxxxxxxxxxxxxxxxxxxxxxxxxxxxxxxxxxxxxxxxxxx \kill
$|\psi\rangle $               \> state vector \\ 
$\rho$                     \> density matrix \\
$\cH_S$                               \>  system Hilbert space  \\
$\cH_E$                               \>  environment Hilbert space  \\
$d$                        \> dimension of the Hilbert space \\ 

$\cB{(\cH_S)}$                               \> Set of bounded linear operators in the system Hilbert space  \\
$\cE$                               \>  arbitrary noise channel  \\
$E_i$                               \> Kraus operator of the noise channel $\cE$  \\
$\cH_S \otimes \cH_E$     \> Joint Hilbert space of the system and environment  \\
$\rho_{SE}$                               \> Joint density operator on the Hilbert space $\cH_S \otimes \cH_E$\\
$p, \gamma$                                     \>  noise parameter  \\ 
$\cC$                                     \>  codespace  \\ 
$\cW$                                     \>  encoding unitary map  \\ 
$\cR$                                     \>  recovery channel \\ 
$\cR_P$                                     \>  Petz recovery channel \\ 
$\cW^{-1}$                                     \>  decoding unitary map \\ 
$\{I,X,Y,Z\}$                \>single-qubit  Pauli operators \\ 
$P$                                     \>  projection onto the codespace \\ 
$\cP$                                     \> projection map \\ 
$\cN$                                     \>  normalization map \\ 
$F(.,.)$                        \> fidelity \\ 
$F_{\min}(.,.)$                        \>worst-case fidelity \\ 
$F^{2}_{\min}(.,.)$                        \>square of worst-case fidelity \\ 
$\eta(.,.)$                        \> fidelity-loss \\ 
$\Delta_{ij}$                        \>traceless matrices $\in$ codespace\\ 
$F(.,.)^{2}$                        \>sqaure of fidelity \\ 
$\sigma_X, \sigma_Y,\sigma_Z$                        \> alternate representation of single-qubit Pauli matrices\\ 
$SU(d)$                        \> special unitary group \\ 
$U$                        \>unitary operator \\ 
$\mathfrak{SU}(d)$                 \> Lie algebra \\ 
$\cE_{AD}$                        \> amplitude-damping channel \\ 
$\cE_{RAD}$                        \>rotated amplitude-damping channel \\ 
$\widetilde{\cE}_{AD}$                        \> amplitude-damping channel \\ 

$f_{r,s}^{N}(t)$                     \> transition amplitude with sender's site $s$, receiver's site $r$ on a $N$ spin chain after time $t$ \\ 
$\langle F^{2}_{\min}\rangle _{\delta}$                        \>disorder- averaged square of worst-case fidelity \\ 
$\delta$                        \> disorder strength\\ 
 $Loc$                        \> localization length\\
 $E_{1}$                        \>damping error\\
  $H$                        \>Hadamard\\
 $p_{\rm th}$                        \> noise threshold\\
 $\bar{X}$                        \> logical operator- XXII\\
  $\bar{Z}$                        \> logical operator- ZIZI\\
  $\beta_{ij}$                        \> Bell state\\
$\{s,t,u,v,h,g\}$                      \>syndrome bits\\
$R_{Z}$                                \>phase recovery\\
$S, S'$                                \>syndrome extraction units\\

\end{tabbing}
\end{singlespace}
\fancyhf{}
\mbox{}
\thispagestyle{empty}
\pagebreak
\clearpage


\pagenumbering{arabic}

\chapter{Introduction} 

\label{Chapter1} 

\lhead{Chapter 1. \emph{Introduction}} 
A new paradigm for computing emerged in the early 1980s, hearlded by interesting results~\cite{deutsch, deutschjosza} that demonstrated that certain computational tasks could be performed much more efficiently using the laws of quantum mechanics. For example, the quantum algorithm devised by Shor~\cite{shor_factor} performs exponentially faster than any existing classical algorithm, whereas the quantum search algorithm offers a quadratic speedup over the best classical search algorithms~\cite{grover}. However, one of the major challenges in realizing a practical quantum computer is to tackle the \emph{noise} that arises out of inevitable system-environment interactions~\cite[Chapter 8]{nielsen}. A couple of remarkable results in the mid-nineties --- the invention of the  nine-qubit quantum error-correcting code~\cite{shor_qec} and the notion of quantum fault-tolerance ~\cite{shor_ft}, suggested the possibility of performing reliable quantum computation, even in the presence of noise. This laid the foundation for further developments in the field of quantum error correction (QEC) and fault tolerance~\cite{terhal}.

Quantum error correction (QEC) is a technique developed for protecting the quantum system against the effects of noise, by embedding the information to be protected in a subspace of the physical Hilbert space, called the \emph{quantum error-correcting code} or the \emph{codespace}~\cite{shor_qec, steane}. The process of encoding the information into a quantum code introduces redundancy, which makes it possible to detect and correct the errors that arise due to a noise process, via an appropriate recovery procedure. The challenges that a general theory of QEC needs to overcome are given below. 
\begin{itemize}
\item The \emph{no-cloning}~\cite{wootters} theorem prohibits the possibility of having quantum repetition codes similar to the classical codes. 
\item A continuum of errors may affect a quantum system, and identifying which error occurred could be difficult.
\item Measuring a quantum system can destroy the information that we wish to protect.
\end{itemize}

\noindent A general framework has been formulated~\cite{bennet,knill, ekert} providing the necessary and sufficient conditions for QEC, for any general noise model, overcoming  these challenges. Majority of the work on QEC centers around codes capable of removing the effects of \emph{arbitrary} errors on individual qubits perfectly. This is achieved by discretizing the errors in the Pauli operator basis, thereby offering protection against any unknown noise process. Such codes are called the \emph{perfect} codes, which strictly obey the so-called Knill-Laflamme conditions~\cite{nielsen}. The stabilizer codes~\cite{nielsen}, including the well-known Shor code~\cite{shor_qec}, Steane code~\cite{steane} and the five-qubit code~\cite{laflamme} fall in this category of codes. The shortest known perfect code protecting a qubit worth information is the standard five-qubit code~\cite{laflamme}. While quantum codes offer a certain degree of protection for noisy qubits, the theory of quantum fault tolerance deals with mitigating errors due to faulty quantum gates~\cite{preskill_FT}. The final step in realizing robust, universal quantum computation is therefore to identify fault-tolerant quantum gates and circuits, such that the quantum encoding and recovery processes work even when the elementary gates are noisy.

We are today in the so-called ``NISQ'' (noisy-intermediate scale quantum) era~\cite{preskill_nisq}, with access to quantum devices that have a small number of noisy qubits, that are not amenable to implement standard QEC and fault tolerance schemes. A standard QEC protocol assumes that the noise is unknown, and makes use of perfect codes which are resource intensive, to protect a single qubit against any arbitrary error. However, in practice, there are instances where a specific noise dominates the quantum system. Common examples are the case of an atom in a cavity undergoing a $T_1$ relaxation process due to the interaction with the incoming photons~\cite{haroche} and a superconducting qubit suffering from a dominant dephasing noise~\cite{dephasing}. In such a scenario, where we have prior knowledge about the noise afflicting the system, channel-adapted codes --- codes adapted to the noise in question --- are known to be more effective~\cite{leung, fletcher_codes, fletcher_rec, hui_prabha}. 

A four-qubit code adapted to the amplitude-dampingchannel protecting a single qubit worth information was constructed in~\cite{leung}. The four-qubit code was shown to satisfy a less restrictive, perturbed form of the perfect QEC conditions, yet performing comparably to the standard five-qubit code in protecting a single qubit of information. In subsequent works, a stabilizer-based approach of constructing codes~\cite{gottesman_stabilizer}, adapted to the amplitude-dampingchannel was provided in~\cite{fletcher_codes}. Furthermore, a numerical recovery map  was obtained~\cite{fletcher_rec} through semi-definite programming, adapted to the four-qubit code~\cite{leung}, recovering the information with high enough fidelity. At this juncture, it is important to point out that the Kraus operators that constitute the amplitude-dampingchannel are not scalable in terms of Pauli operators, thereby posing more challenge towards correcting them efficiently. 

In related work, \emph{approximate} QEC conditions were obtained~\cite{hui_prabha} by perturbing the Knill-Laflamme conditions and an analytical form of a universal, near-optimal recovery map, often referred to as Petz map~\cite{Petz} was obtained. It was further demonstrated~\cite{hui_prabha} that using the four-qubit code and Petz recovery one can achieve comparable protection against the amplitude-dampingchannel as the standard five-qubit code. All of these results suggest the need to develop ideas and techniques that go beyond the standard QEC formalism, that may lead to protocols that use less resources by taking advantage of the nature of the noise affecting the system.

Taking inspiration from these past results, we study three important aspects relating to channel-adapted codes, namely, construction, applications, and fault tolerance, in this thesis. In the subsequent sections, we give the background and motivation behind the work presented in this thesis.

\section{Construction of channel-adapted codes}
A QEC protocol is defined by the pair of encoding and recovery, for a given noise process. As mentioned earlier, while a vast majority of work centers around the \emph{perfect} QEC strategy requiring atleast five physical qubits to protect a single qubit, a couple of remarkable works demonstrated early on that the requirement of perfect QEC could be too restrictive~\cite{leung, fletcher_codes}. Rather, \emph{approximate} QEC strategies were proposed, and it was shown that a comparable protection against the amplitude-dampingnoise is possible, by chaarcterizing the \emph{worst-case fidelity} for the four-qubit code tailored to this noise~\cite{leung}. This construction was generalized to obtain a class of codes~\cite{fletcher_codes} adapted to the amplitude-dampingchannel, based on the stabilizer formalism. An optimal recovery map was constructed in Ref.~\cite{fletcher_rec} via semidefinite programming, adapted to the amplitude-dampingchannel, with optimality defined in terms of the \emph{entanglement fidelity}. A detailed note on the stabilizer based construction of the codes and optimal recovery maps adapted to the amplitude-dampingchannel can be found in~\cite{fletcherthesis}. Subsequently, convex-optimization techniques were used to obtain adaptive codes and adaptive recovery maps, using the entanglement fidelity~\cite{schum} as the figure of merit~\cite{kosut, yamamoto, reimpell}. 

Deviating from the numerical approach to approximate QEC problem in the past, a near-optimal analytical recovery map optimized for the \emph{average entanglement fidelity}, was constructed in~\cite{Barnum}.  Subsequenty, approximate quantum error correction conditions were obtained~\cite{beny} and optimal recovery maps based on the\emph{worst-case entanglement fidelity} were constructed. As mentioned earlier, more recently, a universal, near-optimal recovery map~\cite{Petz} was demonstrated for the \emph{worst-case fidelity}~\cite{hui_prabha} and an algebraic form of approximate QEC conditions that generalize the Knill-Laflamme conditions, was obtained. This led to a simple algorithm~\cite{hui_prabha} for finding approximate or the adaptive codes using the worst-case fidelity as the figure of merit.

Taking off from the approach outlined in~\cite{hui_prabha}, we study the problem of finding channel-adapted codes for an arbitrary noise process~\cite{ak_cartan}. Specifically, we study the optimization problem of AQEC, which involves finding the combination of the encoding and recovery that optimizes the chosen figure of merit. In our work, we focus on finding channel-adapted codes that minimize the worst-case fidelity for the storage of a single qubit of information. We reduce the original triple optimization to a single optimization by fixing the recovery as the universal, near-optimal map, namely the Petz recovery~\cite{Petz}. The use of the Petz recovery further permits the use of an analytical expression for the worst-case fidelity for codes encoding a single qubit~\cite{hui_prabha}, thereby requiring an optimization only over the encoding operations. 

A key aspect of our work is that we reduce the difficulty of this final numerical optimization over all possible encodings, by invoking the Cartan decomposition~\cite{khaneja_glaser} of the encoding operation. The Cartan decomposition splits up any $n$-qubit encoding unitary as an alternating product of single-qubit (local) and multi-qubit (nonlocal) unitaries. Under the well-motivated assumption that the noise process takes a tensor product structure over the $n$ qubits of the code, the Cartan decomposition allows us to explicitly search over the non-local encoding unitaries, thereby leading to the potential codespaces. Our algorithm uses the downhill-simplex technique to search over the space of the encodings. Altogether, these steps give a fast and easy algorithm for finding good channel-adapted codes for the worst-case fidelity, the preferred figure of merit for quantum computing tasks. The Cartan form also gives a nice structure to the optimal codes constructed using our numerical search procedure. This also leads to simple circuit implementations of these channel-adapted codes, which are amenable to implementation on the few-qubit quantum devices available today. 

\section{Applications: Quantum state transfer using channel adapted codes}
We next move on to demonstrate an important application of such channel-adapted QEC protocols in the context of quantum state transfer. Quantum state transfer entails transmission of an arbitrary quantum state from one spatial location to another. Spin chains are a natural medium for quantum state transfer over short distances, with the dynamics of the transfer being governed by the Hamiltonian describing the spin-spin interactions along the chains. Starting with the original proposal by Bose~\cite{bose} for state transfer via a $1$-d Heisenberg chain, several protocols were developed for {\it perfect} as well as {\it pretty good} quantum state transfer via spin chains. 

Perfect state transfer protocols typically involve engineering the coupling strengths between the spins in such a way as to ensure perfect fidelity between the state of the sender's spin and that of the receiver's spin~\cite{christandl,christandl2005perfect,albanesemirror,karbach,di}. Alternately, there are proposals to use multiple spin chains in parallel, and apply appropriate encoding and decoding operations at the sender and receiver's spins so as to transmit the state perfectly~\cite{conclusive,perfect,efficient}. Experimentally, perfect state transfer protocols have been implemented in various architectures including nuclear spins~\cite{bochkin} and photonic lattices using coupled waveguides~\cite{perez2013,chapman}. 

Relaxing the constraint of perfect state transfer, protocols for pretty good transfer aim to identify optimal schemes for transmitting information with high fidelity across permanently coupled spin chains~\cite{godsil2012,godsil}. One approach was to encode the information as a Gaussian wave packet in multiple spins at the sender's end~\cite{osborne,hasel}. Moving away from ideal spin chains, quantum state transfer has also been studied over disordered chains, both with random couplings and as well as random external fields~\cite{perfect, ashhab, chiara}.  

In our work, we study the problem of pretty good state transfer from a quantum channel point of view~\cite{ak_statetransfer}. It is known~\cite{bose} that state transfer over an ideal $XXX$ chain (also called the Heisenberg chain) can be realized as the action of an amplitude-dampingchannel~\cite{nielsen} on the encoded state. Naturally, this leads to the question of whether  quantum error correction (QEC) can improve the fidelity of quantum state transfer. While QEC-based protocols that achieve pretty good transfer were developed for noisy $XX$~\cite{kay, kay2018perfect} and Heisenberg spin chains~\cite{allcock} in the past, using perfect QEC strategies, we study the role of adaptive QEC in achieving pretty good transfer over a class of $1$-d spin systems, which preserve the total spin. This includes both the $XX$ as well as the Heisenberg chains, and more generally, the $XXZ$ chains. Motivated by \cite{leung, fletcher_codes, hui_prabha}, we propose a protocol to transmit the information efficiently by encoding it using channel-adapted codes, across multiple, identical and parallel spin chains. This is in contrast to the protocols in~\cite{kay, kay2018perfect} which use perfect QEC codes and encode into multiple spins on a single chain. Using the worst-case fidelity between the states of the sender and receiver's spins as the figure of merit, we demonstrate that pretty good state transfer may be achieved over a class of spin-preserving Hamiltonians using a channel adapted code and the Petz recovery map.

Finally, we present explicit results for the fidelity of state transfer obtained using our QEC scheme, for disordered $XXX$ chains. The presence of disorder in a $1$-d spin chain is known to lead to the phenomenon of localization~\cite{anderson}. This naturally leads to the question of whether it is possible to achieve state transfer over disordered spin chains~\cite{allcock}. In our work~\cite{ak_statetransfer}, we study the distribution of the transition amplitude, both numerically as well as analytically, for a disordered $XXX$ chain, with random coupling strengths drawn from a uniform distribution. We modify the QEC protocol suitably so as to ensure pretty good transfer when the disorder strength is small. As the disorder strength increases, our analysis points to a threshold beyond which QEC does not help in improving the fidelity of state transfer.

\section{Fault-tolerant quantum computation using the four-qubit code}

Finally, we study an important open question in the context of channel-adapted codes, namely, whether such codes can be used to develop fault-tolerant schemes for quantum computation. A typical quantum algorithm involves several hundred gate operations, so that even a single error anywhere could lead to an incorrect outcome. Therefore, a practical large-scale quantum computer must be tailored \emph{fault tolerantly}, so as to perform reliably even in the presence of noise.  A fault-tolerant quantum computer is protected from noise not just while storing the information, but at every stage of processing, thereby preventing the errors from spreading catastrophically within the circuit~\cite{gottesman_intro, gottesman_nature}. 

A fault-tolerant simulation of an ideal computation proceeds by performing encoded operations on the logical qubits of a quantum code. An encoded operation is implemented by a composite object called a \emph{gadget}, made of elementary noisy physical operations such as single-qubit and two-qubit gates which constitute the finite universal gate set~\cite{aliferis}. In a fault-tolerant computation, errors are removed from the system via periodic syndrome checks performed on ancillary qubits, before they accumulate to a point where the damage becomes irreparable. The central idea behind any such fault-tolerance scheme is the threshold theorem, which states that a scalable quantum computation is possible provided the physical error rate per gate or per time step is below a certain critical value called the \emph{threshold}~\cite{gottesman_intro}. 
The following assumptions~\cite{aliferis_thesis} are necessary for obtaining such a fault tolerance threshold:
\begin{itemize}
\item Faults affecting multiple qubits simultaneously are assumed to be suppressed in amplitude.
\item An inexhaustible supply of fresh ancilla qubits, or the ability to refresh and reuse the ancilla qubits an indefinite number of times.
\item The ability to apply quantum gates in parallel on a disjoint set of qubits, so as to not let the errors spread uncontrollably.
\end{itemize}

The final step enroute to building a robust quantum computer is to construct a scalable fault-tolerant architecture with high noise threshold, and minimum resource overheads. Shor's original proposal~\cite{shor_ft} shows how to build a polynomial size quantum circuit, which can tolerate an error rate per gate that decays polylogarithmically with the size of the computation. Currently, two families of quantum error correcting codes, namely, concatenated stabilizer codes and topological codes are considered leading candidates for building a fault-tolerant architecture, specifically for their lower error rate per logical qubit~\cite{campbell}. Concatenated codes are obtained by combining two codes of smaller distance, where the logical qubits of one code are encoded into the logical qubits of the other code recursively, to various levels of hierarchy. Although this increases the resources exponentially, the logical error rate falls off as a double exponential, in the number of levels that one concatenates into~\cite{nielsen}. Topological codes are codes whose physical qubits are laid on a lattice, allowing only the nearest neighbour interactions. Such codes are known to have larger distances (correct for more number of errors) and higher noise thresholds~\cite{lidar_brun}. For instance, a family of topological codes, called the surface codes~\cite{bombin}, have atleast $d^{2}$ physical qubits and achieve a distance $d$~\cite{campbell}. 

Once the quantum code architecture is decided, the next step is to be able to perform the finite universal set of gates in a fault-tolerant way on the logical qubits of the code. The finite universal gate set includes the clifford gates (Hadamard, $S$, \textsc{cnot}) and the $T$ gate~\cite{nielsen}. Some of the stabilizer codes allow for Clifford logical gates which are transversal, that is, the logical gate can be implemented as an $n$-fold tensor product of the elementary physical gates. For example, the logical Hadamard gate on the Steane seven-qubit code is implemented as $H^{\otimes 7}$, where $H$ is the single qubit Hadamard gate. Similarly, the logical $S$ gate and the logical \textsc{cnot} are also transversal in the case of the Steane code~\cite{campbell}. However, the logical $T$ gate is not transversal and is realised by a much more complicated construction. It has been shown ~\cite{eastin_restrictions,chen_restrictions} that a nontrivial code cannot realise a set of universal gates fault-tolerantly, in a tranversal manner, emphasizing the need for alternate constructions. One approach is to use the magic state distillation procedure~\cite{bravyi} to perform non-trivial gates such as the logical $T$ gate.

\begin{table}
\hspace*{-1cm}\begin{tabular}{|c|c|c|}
\hline
Fault-tolerant Scheme& Noise model& Threshold\\
\hline
\hline
General&Simple stochastic noise&$\sim 10^{-6}$~\cite{shor_ft}\\
\hline
Polynomial codes (CSS codes)&general noise model&$10^{-6}$~\cite{aharonov}\\
\hline
Concatenated $7$-qubit code&Adversarial stochastic noise&$2.7 \times 10^{-5}$~\cite{aliferis}\\
\hline
Error detecting concatenated codes ($C_4/ C_6$)&depolarizing noise&$1\%$~\cite{knill_nature}\\
\hline
Concatenated $7$-qubit code&Hamiltonian noise&$10^{-8}$~\cite{ng}\\
\hline
Concatenated repetition code& biased dephasing noise & $0.5 \%$~\cite{aliferis_biasedExp}, $0.24\%$~\cite{gourlay} \\
\hline
Surface codes&depolarizing&$0.75\%$~\cite{raus_FT} , $0.5\% -  1\%$~\cite{stephens}\\
\hline
$[[4,2,2]]$ concatenated toric code& depolarizing&$0.41\%$~\cite{criger}\\
\hline
\end{tabular}
\caption{\label{tab:table1} Noise thresholds for various fault-tolerant schemes.}
\end{table}

%


The current status of quantum fault tolerance is  summarized in Tab.~\ref{tab:table1}, where we present various fault-tolerant schemes and the thresholds obtained against a given noise model. We see from Tab.~\ref{tab:table1}, that the fault tolerance threshold obtained for a given QEC code or a class of codes, depends crucially on the error model under consideration~\cite{knill_nature}. Most of the threshold estimates obtained so far have assumed either symmetric depolarizing noise or erasure noise. However, when the structure of the noise associated with a given qubit architecture is already known, a fault tolerance scheme tailored to the specific noise process could be advantageous, both in terms of less resource requirements as also in improving the threshold. This is borne out by the fault tolerance prescription developed for \emph{biased} noise, in systems where the dephasing noise is known to be dominant ~\cite{aliferis_biased, FT_assymetry}. This construction was used to obtain a universal scheme for pulsed operations on flux qubits~\cite{aliferis_biasedExp}, taking advantage of the preferentially high degree of dephasing noise over bit-flip noise, in the \textsc{cphase} gate, leading to a numerical threshold estimate of $.0.5\%$ for the error rate per gate operation, as pointed out in Tab.~\ref{tab:table1}.

Motivated by this example, we address the question of whether it may be possible to develop fault-tolerant schemes for approximate, channel-adapted codes. As mentioned earlier, QEC codes adapted to specific noise models are often known to offer a similar level of protection as the general purpose codes, while using fewer qubits to encode the information~\cite{leung, fletcherthesis, hui_prabha}.  More recently, it has been demonstrated~\cite{gottesman_2016} that a family of fault-tolerant circuits can be constructed for the $[[4,2,2]]$ code, using which a criterion to test experimental fault tolerance may also be derived. The  $[[4,2,2]]$ code is known to detect arbitrary single qubit erasure errors at unknown locations~\cite{vaidman} and correct any arbitrary single qubit errors at known locations~\cite{grassl}. The usefulness of the $[[4,2,2]]$ error-detecting code in improving the fidelity of fault-tolerant gates over the fidelity of physical unencoded qubits has been studied in~\cite{flammia}. 

In our work, we demonstrate a universal scheme for fault-tolerant quantum computation in a scenario where the qubits are susceptible to amplitude-damping noise~\cite{ak_ft}. Physically, this corresponds to a $T_{1}$ relaxation process and is known to be a dominant source of noise in superconducting qubit systems. Specifically, we use the four-qubit code~\cite{leung} and develop a syndrome unit, recovery unit and a universal set of fault-tolerant gadgets, assuming \emph{local}, stochastic amplitude-damping noise. Finally, we analytically estimate the error threshold for the logical \textsc{cz} gadget as well as the memory unit.

\section{Thesis outline}

We have given a comprehensive overview of the motivations and background for this thesis work in the sections above. The rest of the thesis is organized as follows. In Chapter~\ref{Chapter2} we introduce the necessary mathematical preliminaries describing standard as well as approximate QEC. In Chapter~\ref{Chapter3}, we decribe our work on the construction of channel- adapted quantum codes using Cartan decomposition. In Chapter~\ref{Chapter4} we elaborate on the application aspect, where we study quantum state transmission using adaptive codes. In Chapter~\ref{Chapter5}, we describe in detail, how we may achieve fault tolerance using the four-qubit code. Finally, in Chapter~\ref{Chapter6}, we conclude with a  summary and future outlook of the work presented in this thesis.


\chapter{Preliminaries: Channel-adapted Quantum Error Correction} 

\label{Chapter2} 

\lhead{Chapter 2 \emph{Chap:2}} 
\section{Dynamics of open quantum systems}
\hspace{0.4cm} In this chapter, we motivate the need for QEC and formally define the concepts of \emph{perfect} and \emph{approximate} QEC. It is known that the dynamics of a closed quantum system is described by a unitary transformation as depicted in Fig.~\ref{fig:sys_env}. However, a quantum system is never fully isolated due to unavoidable interaction with the environment. This unwanted interaction can give rise to \emph{noise} that affects the quantum system. For example, an electron under the action of an unwanted stray magnetic field tends to precess along the direction of the external field, leading to a loss of information about the original state that it started with. 

The evolution of an \emph{open} quantum system is described by a \emph{quantum channel}~\cite[Chapter 8]{nielsen} as shown in Fig.~\ref{fig:sys_env}. This is mathematically described by a \emph{completely-positive and trace-preserving} (CPTP) map  $\cE$, acting on the system state $\rho$ $\in$ $\cB(\cH_S)$, as, \[ \cE(\rho) = \sum_{i}E_{i}\rho E_{i}^{\dagger}, \] 
where $\cH_S$ refers to the Hilbert space of the system and $\cB(\cH_S)$ refers to the set of bounded linear operators on $\cH_S$. The operators $\{E_{i}\}$ that constitute $\cE$ are known as Kraus operators, satisfying the \emph{trace-preserving} (TP) condition $\sum_{i}E_{i}^{\dagger}E_{i}=I $. The TP condition ensures that $\rm tr(\cE(\rho))=1$, $\forall \ \rho$. The map $\cE$ is \emph{completely positive} (CP) if and only if, (i) $\cE(\rho) > 0$, $\forall \ \rho$, implying that the resulting operator after the action of the map is also a positive definite matrix, (ii) $\cE\otimes \mathbb{I}$ is a positive map on $\cB(\cH_{S} \otimes \cH_{E})$, for an extended Hilbert space $\cH_{S} \otimes \cH_{E}$, where $\cH_S$ and $\cH_E$ refer to the system and environment Hilbert space respectively. This is an important physical requirement to ensure that $(\cE\otimes \mathbb{I}) \ (\rho_{SE})$ is a valid density operator on the Hilbert space $\cH_{S} \otimes \cH_{E}$,  with the joint density matrix $\rho_{SE}$ $\in$ $\cH_{S} \otimes \cH_{E}$,  such that $\cE$ acts only on $H_{S}$ and $\mathbb{I}$ is the identity operator on $\cH_{E}$. It was shown~\cite{choi,kraus1971,kraus} that a map $\cE$ is completely positive, if and only if, there exists a set of operators $\{E_{i}\}$ (the so-called Kraus operators), such that the action of the map $\cE$ on the density matrix $\rho$ is given by, $\cE(\rho)$=$\sum_{i}E_{i}\rho E_{i}^\dagger$. This form of the map is also referred to as the \emph{Kraus representation} of $\cE$. Note that the Kraus representation of a channel $\cE$ is not unique, since two different sets of Kraus operators $\{E_{i}\}$ and $\{F_{i}\}$ related by a unitary transformation can describe the same channel~\cite[Theorem 8.2]{nielsen}.
  
\begin{figure}[H]
\centering
\includegraphics[scale=0.8]{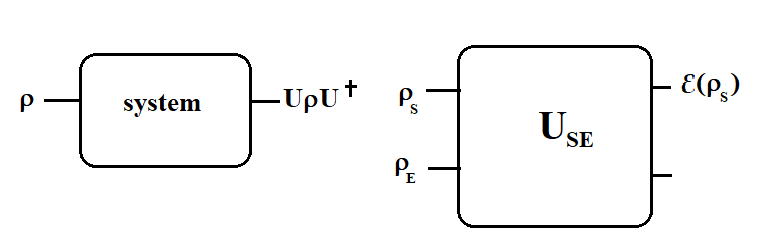}
\caption{ Closed system evolution depicting a system with state $\rho$ evolving under the unitary $U$ (left); Open system evolution (right), where $\rho_S$ is the initial state of system, $\rho_E$ is the initial state of the environment, $U_{SE}$ is the system-environment interaction, $\cE(\rho_S)$ is the evolved state of the system.}
 \label{fig:sys_env}
\end{figure}

\noindent We present a few examples of single-qubit quantum channels below. \\
{\bf (1) Bit-flip and phase-flip channels}: A bit-flip channel is constructed analogous to a classical binary symmetric channel, flipping a qubit from the state $|0\rangle$  to $|1\rangle $ and vice-versa, with a probability $p$. It leaves the state unchanged with a probability $1-p$. The action of the bit-flip channel is depicted in Fig.~\ref{fig:bitflip} below.
\begin{figure}[H]
\centering
\includegraphics[scale=1]{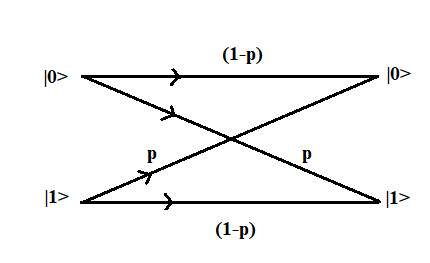}
\caption{Bit-flip channel}
\label{fig:bitflip}
\end{figure}
\noindent The Kraus operators of this channel in the basis $\{|0\rangle, |1\rangle\}$ are given by, 
\begin{equation}\label{eq:bitflipnoise}
E_{0}  = \sqrt{1-p} \ {I}= \sqrt{1-p} \left( \begin{array}{cc}
1 & 0 \\
0 &1
\end{array} \right), \; E_{1} = \sqrt{p}X =\sqrt{p}\left( \begin{array}{cc}
0 & 1\\
1 & 0
\end{array} \right).
\end{equation} where ${I}$ refers to the identity map and $X$ refers to the Pauli X operator.
The effect of the phase-flip channel is similar to the bit-flip channel, but with the Pauli Z operator replacing the Pauli X operator. The Kraus operators for the phase-flip channel in the basis $\{|0\rangle, |1\rangle\}$ are given as,  \begin{equation}
E_{0}  = \sqrt{1-p} \ {I}= \sqrt{1-p} \left( \begin{array}{cc}
1 & 0 \\
0 &1
\end{array} \right), \; E_{1} = \sqrt{p}Z =\sqrt{p}\left( \begin{array}{cc}
1 & 0\\
0 & -1
\end{array} \right).
\end{equation}

{\bf (2) Depolarizing channel}: The \emph{depolarizing} channel leaves a state completely mixed (depolarized) with a probability $p$ and leaves the state unchanged with a probability $1-p$. The effect of  a depolarizing channel on a density matrix $\rho$ can be written as $\cE(\rho)$= $(1-\frac{3p}{4})\rho + \frac{p}{4} (X\rho X +Y\rho Y+ Z\rho Z)$, where $X,Y,Z$ are the single-qubit Pauli operators. The Kraus operators in this case are given by, \[\{\sqrt{(1-3p/4)}{I},\sqrt{p}X/2,\sqrt{p}Y/2,\sqrt{p}Z/2\}.\]

\noindent {\bf (3) Amplitude-damping channel}: This is an important physical channel, describing the energy dissipation process in a two-level system. For instance, the state of an atom spontaneously decaying to the ground state by losing a photon, is well described by the action of the amplitude-damping channel (see chapters 7, 8 of ~\cite{nielsen}). The Kraus operators of the amplitude-damping channel $\cE_{AD}$ described in the $\{|0\rangle, |1\rangle \}$ basis are given as,
 \begin{equation}\label{eq:ampdamp}
E_{0}  =  \frac{(1+\sqrt{1-p}){I}}{2} +  \frac{(1-\sqrt{1-p}) {Z}}{2}=  \left( \begin{array}{cc}
1 & 0 \\
0 &\sqrt{1-p}
\end{array} \right), \nonumber
\end{equation}
\begin{equation}
 E_{1} =  \frac{\sqrt{p}(X+ i Y)}{2}=\sqrt{p}\left( \begin{array}{cc}
0 & 1\\
0 & 0
\end{array} \right).
\end{equation}
The operator $E_{0}$ leaves $|0\rangle$ unchanged, but reduces the amplitude of  $|1\rangle$, with a probability $1-p$. The operator $E_{1}$ damps $|1\rangle$ to $|0\rangle$ with a probability $p$. An interesting point to note here is that no linear combination of $E_{0}$ and $E_{1}$ can give an element proportional to a single-qubit Pauli operator, which in turn makes it challenging to correct for the noise arising from such a channel.

\section{Quantum error correction (QEC)}

\hspace{0.4cm} QEC provides a framework to protect quantum systems against noise resulting from system-environment interactions, as described in the last section. The basic idea behind QEC is to \emph{encode} the information that we wish to protect, namely a $d_0$-dimensional Hilbert space $\cH_0$, in a $d_0$-dimensional subspace $\cC$ of a larger Hilbert space $\cH$. The \emph{encoding} is an isometry $\cW$ whose action is described as the map $\cW:\cB(\cH_0)\rightarrow \cB(\cC)\subseteq\cB(\cH)$, where the encoded space is often $\cH$= $\cH_{0}^{\otimes n}$. The action of noise on the encoded space is described by a CPTP map $\cE: \cB(\cC) \rightarrow \cB(\cH)$. One then applies a recovery $\cR: \cB(\cH) \rightarrow \cB(\cC)$, a CPTP map, that reverses the effects of noise and maps the state back into the codespace. Finally, the decoding unitary $\cW^{-1}$ is applied, bringing the corrected information back to the original physical Hilbert space that we started with, given by, $\cW^{-1}: \cB(\cC) \rightarrow \cB(\cH_0)$. A schematic of a typical QEC protocol is shown in Fig.~\ref{fig:schematic} below. \\
\begin{figure}[H]
\centering
\includegraphics[scale=.5]{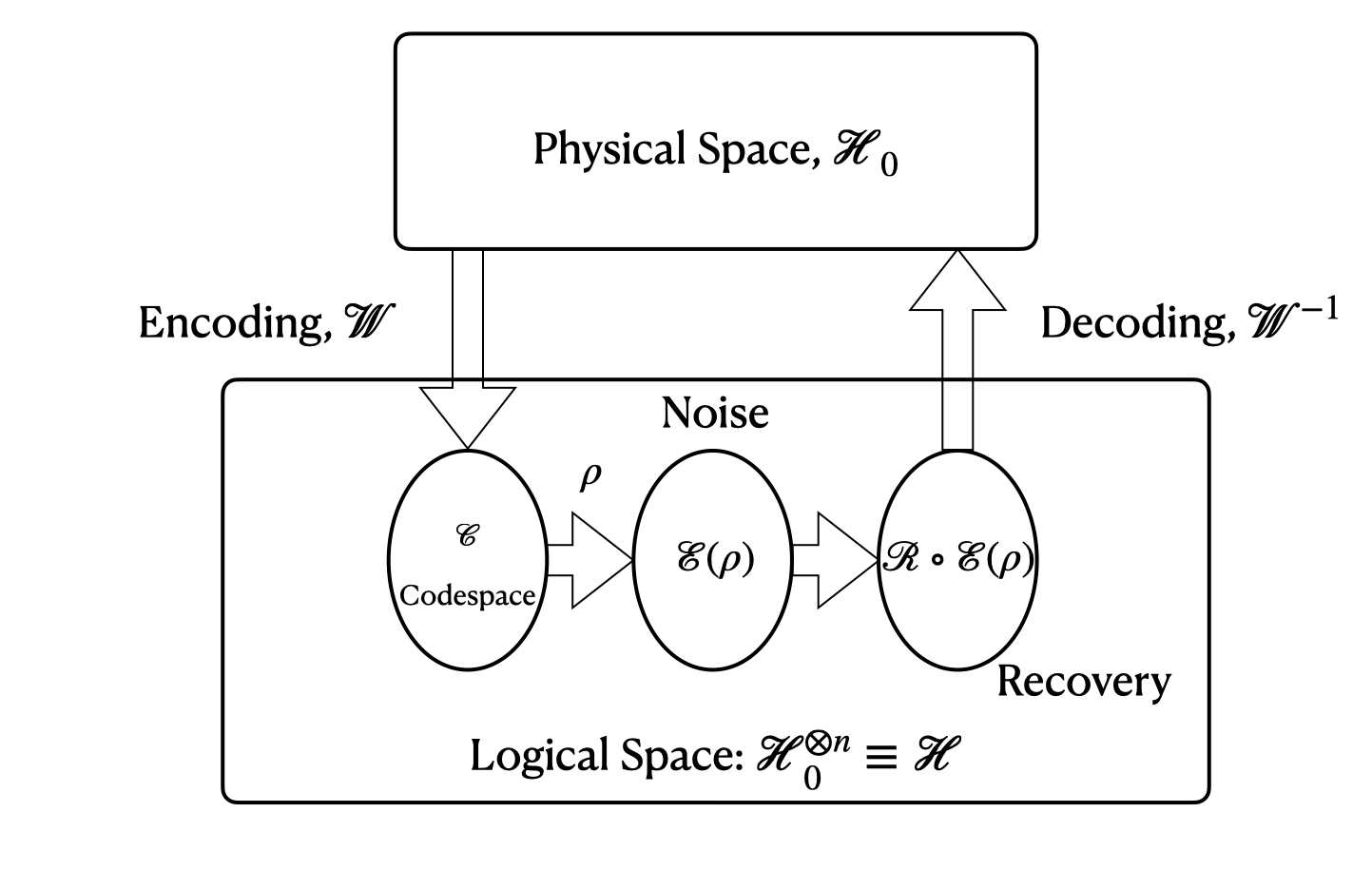}
\caption{Schematic of QEC}
\label{fig:schematic}
\end{figure}
We move on to formally elaborate on the perfect and approximate QEC strategies in subsequent sections.
\subsection{Perfect QEC}
\hspace{.4cm}A QEC protocol is described by an encoding $\cW$ and a recovery $\cR$, for a given noise process $\cE$. This allows one to perfectly correct for the complete CPTP noise channel or correct upto $t$ errors or $t$ Kraus operators that constitute the channel $\cE$. Such a strategy is deemed as \emph{perfect} QEC. Formally, a quantum code $\cC$ is said to perfectly protect against the noise channel $\cE$ composed of the Kraus operators $\{ E_{i}\}$, if there exists a recovery map $\cR$ such that $\cR\circ\cE (\rho) \propto \rho$, for all $\rho$ $\in$ $\cC$. Algebraically, the conditions for perfect QEC~\cite{ekert,knill,bennet} are given below.
\begin{theorem}[Conditions for perfect QEC]
A necessary and sufficient condition for the existence of a recovery operation $\cR$ correcting a set of Kraus operators $\{E_{i}\}$ on $\cC$~\cite[Theorem 10.1]{nielsen}, is given by,
\begin{equation}\label{eq:perfectqec}
 PE_{i}^{\dagger}E_{j}P = \alpha_{ij}P, \; \forall i, j, 
\end{equation}
where $P$ is the projection onto $\cC$, and $\alpha_{ij}$ are complex elements of a Hermitian matrix $\alpha$.
\end{theorem}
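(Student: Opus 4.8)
The plan is to prove the two implications of the biconditional separately, writing $P$ for the projector onto $\cC$ and restricting attention to states $\rho$ supported on $\cC$.

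\emph{Necessity.} Suppose a recovery $\cR$ with Kraus operators $\{R_k\}$ exists with $\cR\circ\cE(\rho)=\rho$ for every $\rho$ supported on $\cC$. The composite map $\cR\circ\cE$, restricted to $\cB(\cC)$, then acts as the identity channel, whose Kraus representation consists of the single operator $P$. The operators $\{R_k E_i P\}$ give a second Kraus representation of this very same channel, so by the unitary freedom relating two Kraus decompositions of one CPTP map~\cite[Theorem 8.2]{nielsen}, there exist scalars $c_{ki}$ with $R_k E_i P = c_{ki} P$ for all $k,i$. Inserting the trace-preserving identity $\sum_k R_k^\dagger R_k = I$ of the recovery then gives
\begin{equation}
P E_i^\dagger E_j P = \sum_k (R_k E_i P)^\dagger (R_k E_j P) = \Big(\sum_k c_{ki}^* c_{kj}\Big) P,
\end{equation}
so setting $\alpha_{ij}=\sum_k c_{ki}^* c_{kj}$ yields the stated relation, and $\alpha_{ji}=\alpha_{ij}^*$ makes $\alpha$ Hermitian.

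\emph{Sufficiency.} Suppose $PE_i^\dagger E_j P = \alpha_{ij}P$ holds. Since $\alpha$ is Hermitian I would diagonalize it by passing to the equivalent Kraus set $F_k=\sum_i u_{ki}E_i$ (describing the same noise channel), chosen so that $P F_k^\dagger F_l P = d_k\,\delta_{kl}\,P$, with $d_k$ the eigenvalues of $\alpha$. Positivity of $\|F_k|\psi\rangle\|^2 = d_k$ for normalized $|\psi\rangle\in\cC$ forces $d_k\ge 0$. This has two consequences: each $F_k$ sends $\cC$ to a subspace on which it acts as $\sqrt{d_k}$ times an isometry (codeword norms and overlaps are merely rescaled by $d_k$), and the images $F_k\cC$ for distinct $k$ are mutually orthogonal. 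I would then define $\cR(\sigma)=\sum_k U_k P_k \sigma P_k U_k^\dagger + (\text{correction})$, where $P_k$ projects onto $F_k\cC$, $U_k$ is the unitary undoing the isometry $F_k/\sqrt{d_k}$ back onto $\cC$, and the extra term absorbs the orthogonal complement of $\bigoplus_k F_k\cC$ so that $\cR$ is genuinely trace preserving. Checking $\cR\circ\cE(\rho)=\big(\sum_k d_k\big)\rho$ and using $\sum_k d_k=\sum_i\alpha_{ii}=1$ --- which follows by sandwiching the normalization $\sum_i E_i^\dagger E_i=I$ between projectors $P$ --- gives exact recovery.

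The main obstacle is the key step in the necessity direction: justifying $R_k E_i P = c_{ki}P$. This rests entirely on the unitary freedom in Kraus representations; one must recognize that ``$\cR\circ\cE$ equals the identity on $\cC$'' is not merely a statement about expectation values but forces every composite Kraus operator to be proportional to $P$ on the codespace. On the sufficiency side, the subtle point is the bookkeeping needed to make $\cR$ a bona fide CPTP map --- the $d_k=0$ eigenspaces must be dropped and the complement of $\bigoplus_k F_k\cC$ folded into an additional Kraus operator --- together with confirming that this padding never contaminates states that genuinely arise as $\cE(\rho)$ for $\rho$ supported on $\cC$.
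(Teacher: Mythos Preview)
Your proposal is correct and, on the sufficiency side, follows essentially the same construction the paper sketches: diagonalize $\alpha$ to pass to Kraus operators $F_k$ with $PF_k^\dagger F_l P=d_k\delta_{kl}P$, observe that the images $F_k\cC$ are mutually orthogonal rotated copies of $\cC$, and build the recovery from the inverse isometries. The paper itself does not prove necessity at all---it simply defers to \cite{nielsen}---so your unitary-freedom argument for that direction is an addition rather than a deviation; it is the standard argument and is carried out correctly.
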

These conditions allow us to identify codes for a given noise channel $\cE$, such that the noise becomes perfectly correctible via an appropriate CPTP recovery channel $\cR$.  We refer to \cite{nielsen} for a formal proof of this condition. Rather, in what follows, we will describe what this condition implies about the nature of the noise and its action on the codespace.

Note that we can always find a unitary matrix $u$, such that the matrix $\alpha$ in Eq.~\ref{eq:perfectqec} can be diagonalized as $d = u^{\dagger} \alpha u$, where $d$ is a diagonal matrix. Now, defining the operators, $F_{k}$= $\sum_{i} u_{ik} E_{i}$, where $u_{ij}$ are the elements of unitary $u$, we can rewrite the perfect QEC conditions in Eq.~\ref{eq:perfectqec} as,
\begin{equation}\label{eq:diagqec}
PF_{k}^{\dagger}F_{l}P=\delta_{kl}d_{kk}P.
\end{equation}
Here, $d_{kk}$ are the elements of a diagonal matrix $d$, such that $d_{kk} > 0$, $\forall \ k$, since the left side of Eq.~\ref{eq:diagqec} is a positive semi-definite matrix when $k=l$. Using polar decomposition, and following Eq.~\ref{eq:diagqec}, we can write,
\begin{equation}\label{eq:polar}
F_{k}P = U_{k}\sqrt{PF_{k}^{\dagger}F_{k}P}= \sqrt{d_{kk}}U_{k}P.
\end{equation}
Thus, the action of the Kraus operator $F_{k}$ on the codespace $\cC$ is to rotate the codespace into a different subspace, dictated by $U_{k}$, such that the new subspace is given by $U_{k}PU_{k}^{\dagger}$. It is now easy to notice that the overlap between the two rotated subspaces, $U_{k}PU_{k}^{\dagger}U_{l}PU_{l}^{\dagger}$ =0, when $k\neq l$, from Eq.~\ref{eq:polar}. Thus, the effect of the operators $\{F_{k}\}$ on the codespace is to scramble the information into mutually orthogonal subspaces as depicted in Fig.~\ref{fig:perfectqec}. This allows us to construct the recovery $\cR_{\rm perf}$, whose Kraus operators $\{R_{k}\}$ are $\{PU_{k}^{\dagger}\}$, satisfying, \[ (\cR_{\rm perf} \circ \cE)  (\rho) =\sum_{k}d_{kk} P, \ \forall \rho \in \cC\] where $\sum_{k}d_{kk}$ is the trace of $\cE(\rho)$, which is $1$ only when $\cE$ is TP.

\begin{figure}[H]
\centering
\includegraphics[scale=.4]{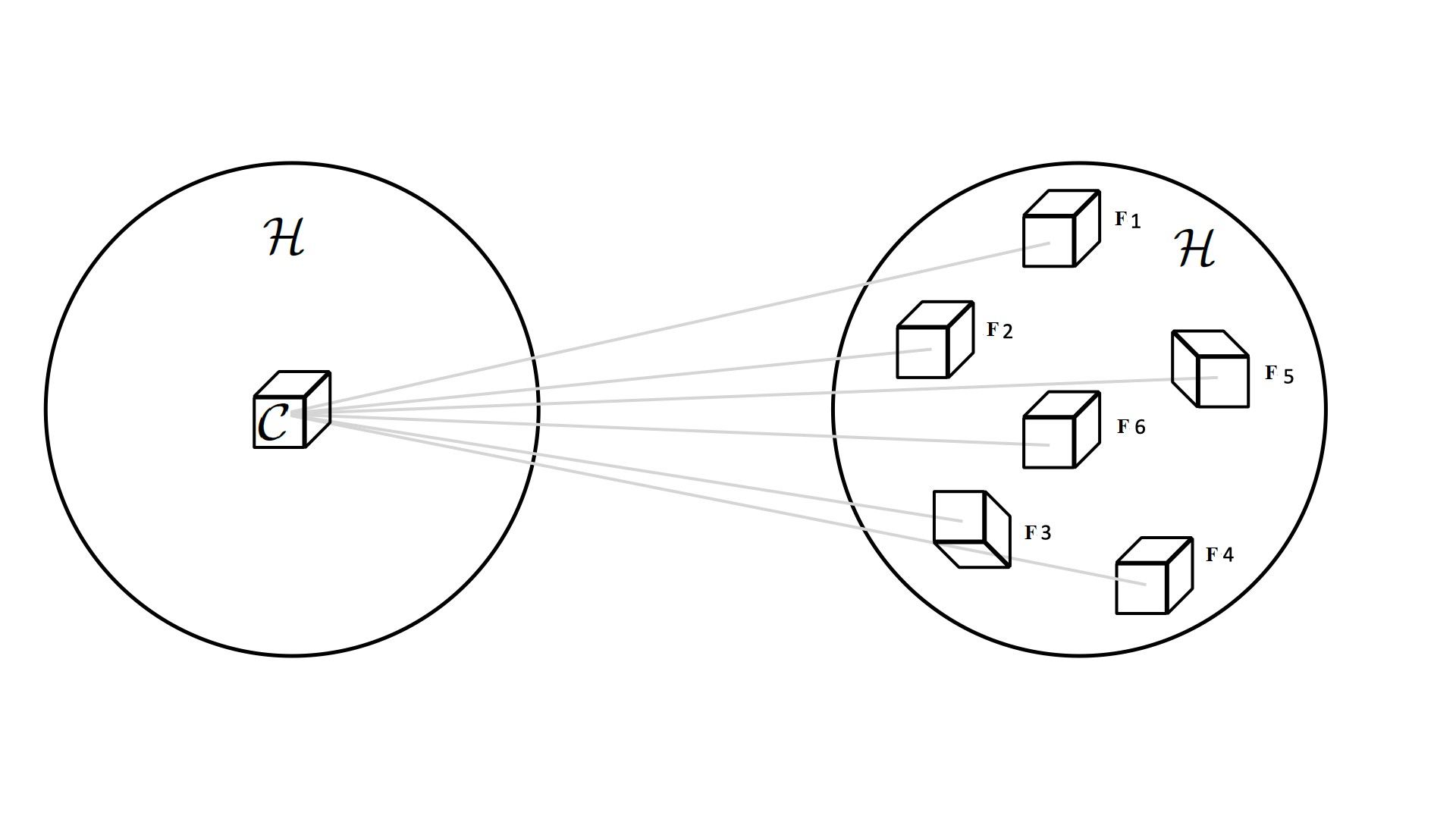}
\caption{Perfect QEC: $\cB(\cC) \rightarrow \cB(\cH)$~\cite{nielsen}}
\label{fig:perfectqec}
\end{figure}

The set of errors $\cE$ $\sim$ $\{E_{i}\}$ that satisfy Eq.~\ref{eq:perfectqec} are said to be \emph{correctable}. Any linear combination of $\{E_{i}\}$ is also correctable since Eq.~\ref{eq:perfectqec} is linear. Hence, it suffices to check the perfect QEC conditions in Eq.~\ref{eq:perfectqec} for the ''Pauli errors'', namely $\{{I},X,Y,Z\}$, since any single-qubit channel can always be expanded in terms of this Pauli error basis. 
Based on its error-correction properties, a quantum code is characterized by three parameters $n$, $k$, $d$. An $[[n,k,d]]$ quantum code is one where $k$ qubits that we wish to protect are encoded into $n$ physical qubits. $d$ refers to the \emph{distance} of a code, given by $2t+1$, where $t$ represents the number of correctable errors.

The shortest length perfect quantum code is the five-qubit code protecting a single qubit worth of information, denoted as, $[[5,1,3]]$~\cite{laflamme}.  It can correct for any arbitrary error occuring on a single qubit, such that $t=1$ and distance $d=3$, by discretizing errors in the Pauli basis. It saturates the so called quantum Hamming bound~\cite{nielsen}(Chapter 8). The other examples of perfect codes include Steane's $[[7,1,3]]$ code~\cite{steane} and Shor's $[[9,1,3]]$ code~\cite{shor_qec}. There are many known constructions of perfect codes in the literature~\cite{nielsen, lidar, gottesman_stabilizer,terhal}. 

The QEC schemes discussed above, that involve encoding, followed by detecting errors, and finally use a recovery operation to correct for the errors, fall under the ambit of \emph{active} QEC. An alternate approach adopted in fixing the errors is the \emph{passive} QEC strategy~\cite[Chapter 3]{lidar}, where the information to be protected is stored in subspaces or subsytems of a larger Hilbert space that remain unaffected by the noise, called \emph{Decoherence-free subspaces} (DFS)~\cite{lidar_paper,zanardi} and \emph{Noiseless subsystems} (NS)~\cite{shabani,holbrook}, respectively. The existence of such DFS/NS is based on underlying symmetries in the noise structure.  A unified algebraic framework called \emph{operator QEC}~\cite{kribs} lays down the conditions for both active and passive QEC techniques. Besides these approaches, the technique of dynamical decoupling~\cite[Chapter 4]{lidar} involves applying periodic control pulses to nullify the system environment interactions.
In this thesis we focus on active QEC schemes.

\subsection{Stabilizer codes}

Stabilizer codes form an important class of quantum error correcting codes, that are derived from the classical linear codes. Their construction is based on the group-theoretic structure of the Pauli matrices. In what follows, we will briefly review the \emph{stabilizer formalism} underlying the construction of such codes~\cite[Chapter~10]{nielsen}. Recall that the single-qubit Pauli operators $\{I,X,Y,Z\}$, form a group under multiplication. The Pauli-operator basis for $n$ qubits, which comprises $n$-fold tensor products of the single-qubit Pauli operators, along with the multiplicative factors $\{\pm 1,\pm i\}$ also form a group $\mathscr{G}_{n}$, represented as,
\[\mathscr{G}_{n}= \{\pm I^{\otimes n}, \pm i I^{\otimes n}, \pm X I^{\otimes n-1}, \pm i X I^{\otimes n-1}, \ldots, \pm i Z^{\otimes n}\} .\]
A stabilizer group $\mathscr{S}$ is defined as an abelian subgroup of $\mathscr{G}_{n}$, with elements that commute with each other, excluding the element $-I$.

\emph{Stabilizer code~\cite{gottesman_stabilizer}}: A stabilizer code is defined as the vectorspace $\cV$ which is left invariant after the action of the abelian subgroup $\mathscr{S}$ of $\mathscr{G}_{n}$ in the following way.
\begin{equation}\label{eq:stabilizer}
A |v\rangle = |v\rangle,  A \in \mathscr{S}, \{ |v\rangle \} \in \mathscr{V} ,
\end{equation}
where $A$ is any element of $\mathscr{S}$ and the set of vectors $|v\rangle$ which span $\cV $ are the $+1$ eigenvalue eigenstates, common to all the elements in $\mathscr{S}$. This set of vectors $\{|v\rangle\}$ is said to be stabilized by the elements of $\mathscr{S}$ and is said to span the stabilizer code.

Any Pauli operator within the group $\mathscr{G}_n$ either commutes or anticommutes with the elements of $\mathscr{S}$. A detectable error $B$ on the codespace is the one that anticommutes with atleast one element of $\mathscr{S}$, such that $B A =-A B, A \in\mathscr{S}$. Following Eq.~\ref{eq:stabilizer}, the action of the error $B$ on $\cV$ is then given by,
\begin{equation}
A(B|v\rangle) =- B( A|v\rangle) = -B|v\rangle .
\end{equation}
Thus the error operator $B$ acting on $\cV$ maps the information to be protected from the $+1$ eigenspace to the  $-1$ eigenspace of $A$, thus making it possible to detect and correct the action of the error. 

The well known $3$-qubit bit-flip code is an example of a stabilizer code (as well as a perfect code), constructed from a classical repetition code. Obtained as the span of $\{|000\rangle,|111\rangle\}$, this code is stabilized by the abelian subgroup $\{I^{\otimes 3}, ZIZ, ZZI, IIZ\}$ of the three-qubit Pauli group. A minimal set of operators called the \emph{generators} are enough to obtain all the elements of $\mathscr{S}$. The generators of the three-qubit bit-flip code are given by $\langle ZZI, IZZ\rangle$.

Consider the case of single-qubit bit-flip noise shown in Eq.~\ref{eq:bitflipnoise}. Observe that a single-qubit bit-flip error on any of the three qubits anticommutes with atleast one of the stabilizer generators, thereby mapping the codespace to the $-1$ eigenvalue eigenspace of those stabilizer elements that anticommute with the single-qubit bit-flip error. For example, a bit-flip error on the first qubit, denoted by $XII$, anticommutes with $ZZI$ and $ZIZ$. The information about which qubit is affected due to the error is identified by performing a \emph{syndrome measurement}. Such a measurement procedure typically involves measuring the generators of the stabilizer group $\mathscr{S}$ without disturbing the quantum state of the system, by adding ancillary systems. The measurement outcomes are referred to as \emph{syndrome bits} and they uniquely identify the correctable errors. Finally, after the syndrome extraction procedure, one applies a recovery operation to correct for the error. We show the syndrome bits obtained against each single-qubit bit-flip error, and the corresponding recovery operator in the table below.
 \begin{figure}[H]
 \hspace{-1cm} \includegraphics[scale=.9]{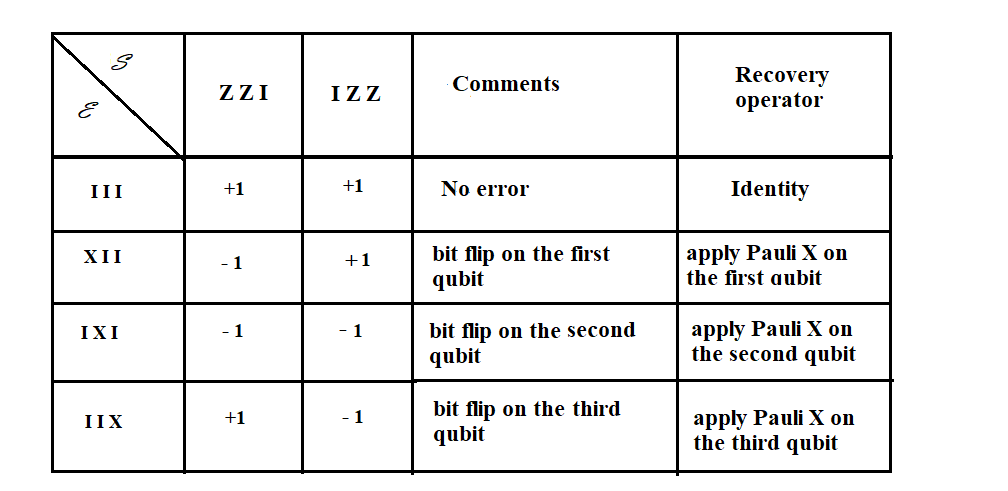}
 \caption*{Syndrome bits obtained by measuring the generators of $\mathscr{S}$ for the no error and single-qubit bit-flip errors $\in$ $\cE$, along with the recovery operators.}
   \label{tab:bitflip}
  \end{figure}
In the table above, the values $\pm 1$ correspond to obtaining $+1$ and $-1$ eigenstates of the corresponding generators of $\mathscr{S}$, during the syndrome measurement shown in Fig.~\ref{fig:syndrome}. The encoding circuit for the three-qubit bit-flip code is shown in Fig.~\ref{fig:enc} below.
 \begin{figure}[H]
 \centering
 \includegraphics[scale=.6]{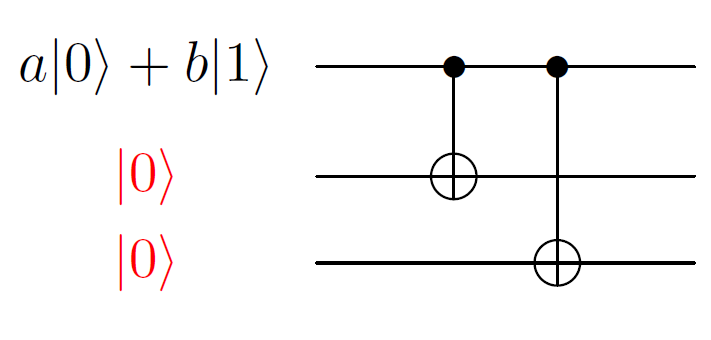}
 \caption{Encoding circuit for three-qubit bit-flip code }
 \label{fig:enc}
 \end{figure}
In Fig.~\ref{fig:enc}, the single qubit information to be protected from the bit-flip noise is given by $a|0\rangle+b|1\rangle$, where $a$ and $b$ satisfy $\mid a\mid^2 +\mid b\mid^2=1$. The state after encoding is obtained as $a|000\rangle +b|111\rangle$. The syndrome measurement unit extracting the syndrome bits for the bit-flip noise is shown in Fig.~\ref{fig:syndrome} below. This unit uses two ancilla qubits indicated in red, such that the first ancilla measures $ZZI$ and the second ancilla qubit measures $IZZ$ on the three-qubit encoded state.
 \begin{figure}[H]
 \centering
 \includegraphics[scale=.5]{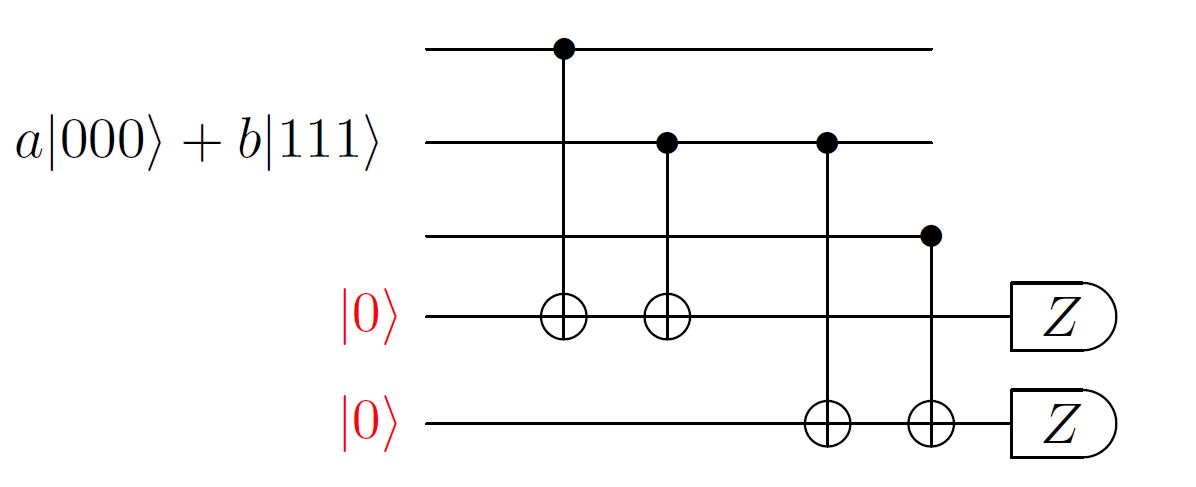}
  \caption{Syndrome measurement unit measuring $ZZI$ and $IZZ$}
  \label{fig:syndrome}
  \end{figure}
We see from this example of the $3$-qubit bit-flip code that the stabilizer formalism serves as a powerful tool in constructing quantum codes, detecting and correcting errors. We refer to~ \cite[Chapters~2,6]{lidar} for a complete overview of the stabilizer formalism.

\section{Approximate QEC}

\begin{figure}[H]
\centering
\includegraphics[scale=.4]{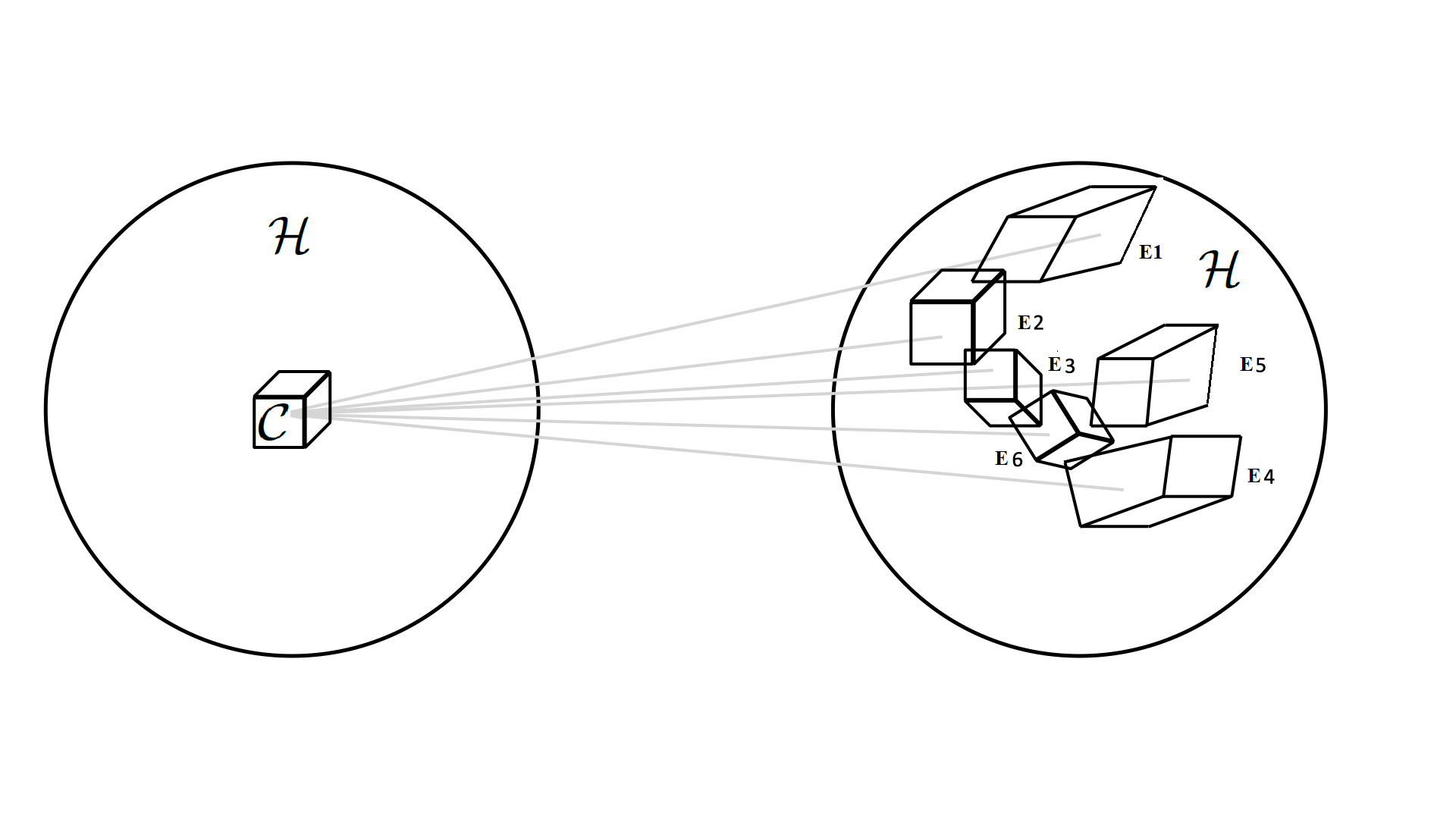}
\caption{Approximate QEC: $\cB(\cC) \rightarrow \cB(\cH)$~\cite{nielsen}}
\label{fig:aqec}
\end{figure}

A perfect QEC protocol demands that a quantum code get mapped to mutually orthogonal subspaces, as captured by the QEC conditions in Eq.~\ref{eq:diagqec}, under the action of errors. In practice, such a constraint might be too stringent and one can expect a codespace to get mapped to overlapping subspaces under the action of errors, as shown in Fig.~\ref{fig:aqec}. The idea of \emph{approximate} QEC owes its origins to a four-qubit code that was constructed to protect a single qubit of information against amplitude-damping noise~\cite{leung}. This code was shown to satisfy \emph{approximate QEC} conditions, which were obtained as a perturbed form of the perfect QEC conditions described in Eq.~\eqref{eq:perfectqec}. As we discuss below, such an approximate quantum code with just four qubits performs comparable to the perfect $5$-qubit code, in terms of the fidelity function. 

Formally, a quantum code $\cC$ is said to be an approximate code for the noise channel $\cE$, if there exists a recovery map $\cR$ such that $\cR\circ\cE (\rho)$ is close, in terms of some well defined distance measure, to the initial state $\rho$. The four-qubit approximate code, denoted as $[[4,1]]$ is obtained as the span of~\cite{leung},
\begin{equation}\label{eq:4qubit}
 |0_{L}\rangle = \tfrac{1}{\sqrt 2}(|0000\rangle +|1111\rangle), \quad
 |1_{L}\rangle = \tfrac{1}{\sqrt 2}(|1100\rangle +|0011\rangle),
 \end{equation}
where $|0_{L}\rangle$ and $|1_L\rangle$ are the logical qubits encoding a single qubit of information.
The encoding circuit for the $[[4,1]]$  code can be obtained as shown in Fig.~\ref{fig:enc_four} below.
\begin{figure}[H]
\centering
\includegraphics[scale=1.1]{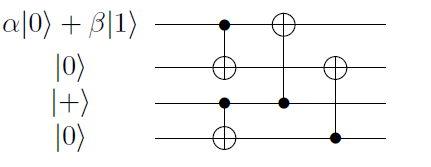}
\caption{ Encoding circuit for $[[4,1]]$ code.}
 \label{fig:enc_four}
\end{figure}

 It is easy to verify that the $[[4,1]]$ code satisfies the perfect QEC conditions in Eq.~\ref{eq:perfectqec} upto a deviation of $O(p^{2})$, for amplitude-damping noise acting on each of the four qubits in an identical and independent manner. This four-qubit channel is represented as $\cE_{AD}^{\otimes 4}$, where $\cE_{\rm AD}$ is the single-qubit amplitude-damping noise described in Eq.~\ref{eq:ampdamp}. Note that the representation of the four-qubit code, $[[4,1]]$ does not have a distance parameter, since the notion of distance is not well defined for an approximate code.

The $[[4,1]]$ code maybe considered the first known example of a \emph{channel-adapted code}, since it is designed to correct specifically for amplitude-damping noise, rather than to correct for generic Pauli errors. Furthermore, it is also an approximate code since it satisfies a perturbed form of the Knill-Laflamme condition. Subsequently, there was a lot of interest in constructing channel-adapted codes as well as channel-adapted recovery schemes. A generalization of the approximate code for the amplitude-damping channel, based on stabilizer formalism was soon developed in Ref.~\cite{fletcher_codes}. A channel-adapted recovery map that corrects for the four-qubit code given in Eq.~\ref{eq:4qubit} affected by the amplitude-damping channel, was also constructed via a convex-optimization procedure~\cite{fletcher_rec}. A few other works in the past~\cite{kosut,yamamoto,reimpell} adopted similar numerical optimization strategies to look for channel-adapted codes or channel-adapted recoveries, or both, using the \emph{average entanglement fidelity} as the figure of merit. Some of the past works also focussed on developing the approximate QEC (AQEC) conditions from an information-theoretic perspective~\cite{beny, beny2009,klesse, schumacher}.  

On the analytical front, a near-optimal map for reversing the dynamics was identified, using average entanglement fidelity as the figure of merit in ~\cite{Barnum}. More recently, simple algebraic AQEC conditions were established in terms of the \emph{worst-case fidelity}, based on the construction of a universal, recovery map, which was shown to be near-optimal for any encoding~\cite{hui_prabha}. This approach forms the basis of the methodology used in this thesis to construct and characterise approximate codes. In the following sections, we outline this algebraic approach to approximate QEC
 as well as the recovery map presented in Ref.~\cite{hui_prabha} in some detail, since we make use of these in subsequent chapters.
 
\subsection{Petz Recovery map $\cR_{P}$}\label{sec:Petzmap}
We now define and discuss the universal near-optimal recovery map $\cR_{P}$~\cite{hui_prabha}, often referred to as the Petz map in the literature. For a given quantum channel $\cE$ with Kraus operators $\{E_{i}\}$ and a code $\cC$ with projector $P$, the Petz map is defined as,
 \begin{equation}\label{eq:Petzmap}
\cR_{P}(\cdot) \equiv \sum_{i=1}^{N} PE_{i}^{\dagger} \cE(P)^{-1/2}(\cdot) \cE(P)^{-1/2}E_{i}P, 
\end{equation}
with Kraus operators $\{R_{i}\equiv P E_{i}^{\dagger}\cE(P)^{-1/2}\}_{i=1}^{N}$. Noe that the inverse of $\cE(P)=\sum_{i}E_{i}PE_{i}^{\dagger}$ is taken on its support. The recovery $\cR_{P}$ can be seen as the composition of three CP maps, $\cP\circ \cE^{\dagger}\circ \cN$, where $\cP=P(.)P$ is the projection onto the codespace, $\cE^{\dagger}$ describes the adjoint of the noise channel $\cE$ with the Kraus operators $\{E^{\dagger}_{i}\}$, and $\cN$ is a normalization map given by, $\cE(P)^{-1/2}(.)\cE(P)^{-1/2}$. The normalization map $\cN$ ensures that the map $\cR_{P}$ is trace-preserving (TP). 

The recovery map $\cR_P$ in Eq.~\ref{eq:Petzmap} was originally introduced in the work of Petz in an information-theoretic context~\cite{Petz}, and is hence known in the literature as the Petz map. This map was shown to saturate Ulhmann's theorem on the monotonicity of relative entropy~\cite{petz_entropy}. The structure of the Petz map $\cR_{P}$ is independent of the choice of the noise channel $\cE$, it satisfies the TP condition below.
\begin{equation}
\sum_{i} (P E_{i}^{\dagger}\cE(P)^{-1/2})^{\dagger} P E_{i}^{\dagger}\cE(P)^{-1/2}= P_{\cE},
\end{equation}
where $P_{\cE}$ is the projector onto the support of $\cE(P)$. We note that the map $\cR_{P} \circ \cE$ is unital for any noise channel $\cE$, satisfying $\cR_{P} \circ \cE (P)= P$.

\subsection{The worst-case fidelity function}\label{sec:worst_case}
The performance of a QEC protocol can be studied using the \emph{fidelity} function~\cite[Chapter 9]{nielsen}, which is a measure of distance between quantum states. The fidelity between any two quantum states $\rho$ and $\sigma$ is defined as, 
\begin{equation}\label{eq:fid}
F(\rho,\sigma)=\rm tr(\sqrt{\rho^{1/2}\sigma\rho^{1/2}})
\end{equation}
$F(\rho,\sigma)$ takes values between $0$ and $1$. Observe that for $\rho=\sigma$, $F=1$, and when $\rho$ and $\sigma$ have orthogonal support, $F=0$. For a pure state $\rho = |\psi\rangle\langle \psi|$, and for $\sigma\equiv \cM(\rho)$ where $\cM$ is a CPTP map, the fidelity in Eq.~\ref{eq:fid} is obtained as,
\begin{equation}
F(|\psi\rangle,\cM) =\sqrt{ \langle \psi|\,\cM( |\psi\rangle\langle\psi|) \,|\psi\rangle}.
\end{equation}

Specifically, the {\it worst-case fidelity}~\cite[Chapter 9]{nielsen} is a useful figure of merit in quantifying the performance of a QEC protocol, since it assures that a certain minimum fidelity is achieved by all the states on a given Hilbert space. The \emph{worst-case fidelity} for a given pair of encoding and recovery map $(\cW,\cR)$, for a noise $\cE$, is defined as,
\begin{equation}
F_{\min}(\cW, \cR;\cE) \equiv \min _{|\psi\rangle \in \cH_{0}} F(|\psi\rangle,\cW^{-1}\circ \cR \circ \cE \circ \cW). \label{eq:worstcase_fidelity}
\end{equation}
Equivalently, it is also useful to describe the performance of a QEC scheme in terms of the \emph{fidelity-loss} function, which is defined as
\begin{equation}
\eta(\cW,\cR;\cE)\equiv 1-F_{\min}(\cW,\cR;\cE).
\end{equation}
Note that the fidelity is a quantity that is jointly concave in its arguments, which means that for any probability distribution $\{p_i\}$ and density matrices $\rho_i$, $\sigma_i$, 
\begin{eqnarray}
F(\sum_i p_i \rho_i, \sum_i p_i \sigma_i) \geq \sum_i p_i F(\rho_i, \sigma_i) 
\end{eqnarray}
Suppose $\rho= \sum_i p_i \rho_i$ = $\sum_i p_i |\psi_i\rangle \langle \psi_i|$, and $\sigma_i$ = $\cE(\rho_i)$, then
\begin{equation}
F(\rho,\cE(\rho))=F(\sum_i p_i \rho_i, \sum_i p_i \sigma_i) \geq \sum_i p_i F(\rho_i, \sigma_i) 
\end{equation}
For any pure state $|\psi\rangle_i$, it thus follows that,
\begin{eqnarray}\label{eq:lowerbound}
F(\rho,\cE(\rho)) &&\geq \sum_i p_i F(\rho_i,\cE(\rho_i)) \\
&&\geq F(|\psi\rangle_i, \cE(|\psi\rangle_i))
\end{eqnarray}
Thus, the lower bound on fidelity is achieved by a pure state as illustrated in Eq.~\ref{eq:lowerbound}. Hence, it is enough to minimize the fidelity function over the state space of just pure states to obtain the worst-case fidelity. 

\subsection{AQEC conditions and near-optimality of the Petz map}
We now state and explain the AQEC conditions based on the Petz map in Eq.~\ref{eq:Petzmap}. Although the Petz recovery may not achieve the largest worst-case fidelity for given channel $\cE$ and code $\cC$, it was shown to perform close to optimal in \cite{hui_prabha}. Specifically, this near-optimality is captured by the bounds~\cite[Corollary 4]{hui_prabha}),
\begin{equation}\label{eq:optimalrange}
\eta_{\cR_\mathrm{op}}\leq \eta_P\leq \eta_{\cR_\mathrm{op}}{\left[(d+1)+O(\eta_{\cR_\mathrm{op}})\right]},
\end{equation}
where $\eta_{\cR_\mathrm{op}}$ and $\eta_P$ are the fidelity losses if we used the optimal and Petz recoveries, respectively, for a given encoding $\cW$ and noise $\cE$.  An interesting point to note is that the code dimension $d$ also appears in Eq.~\ref{eq:optimalrange}. Thus, for smaller values of $d$, the Petz recovery map is a good approximation of the optimal map. 

Based on the near-optimality of the Petz map, one can obtain conditions for approximate QEC, as shown in~\cite[Theorem 6]{hui_prabha}.
\begin{theorem}[Conditions for AQEC]
Consider a noise channel $\cE$ with Kraus operators $\{E_{i}\}$, acting on a codespace $\cC$ with projector $P$, such that,
\begin{equation}\label{eq:aqec}
PE_{i}^{\dagger}\cE(P)^{-1/2}E_{j}P= \beta_{ij}P +\Delta_{ij},
\end{equation}
where the scalars $\beta_{ij} \in \mathbb{C}$ are obtained as $\rm tr(PE_{i}^{\dagger}\cE(P)^{-1/2}E_{j}P)/d$, $d$ being the dimension of the codespace $\cC$, and $\Delta_{ij} \in \cB(\cC)$ are traceless matrices. Then, there exists a recovery map with  fidelity loss $\eta$ given by,
\begin{equation}
\eta =\underset{|\psi\rangle \in \cC}{\textrm{max}} [\langle \psi| \Delta{ij}^{\dagger} \Delta_{ij}|\psi\rangle - |\langle \psi| \Delta_{ij}|\psi\rangle|^{2}],
\end{equation}
such that, for any $\epsilon \in [0,1]$, the code $\cC$ achives a worst-case fidelity $\epsilon$ for the noise $\cE$ (a) if $\eta \leq \epsilon$, and (b) only if $\eta \leq f(\epsilon; d)$, where $f(\epsilon; d)$= $d+1+ O(\eta)$.
\end{theorem}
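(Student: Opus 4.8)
The plan is to take the unspecified recovery map of the claim to be the Petz map $\cR_P$ of Eq.~\ref{eq:Petzmap}, and to compute its worst-case fidelity directly. By the joint-concavity argument of Eq.~\ref{eq:lowerbound} it suffices to evaluate the fidelity on pure code states, so for $|\psi\rangle\in\cC$ I would write, using the Petz Kraus operators $R_i = PE_i^\dagger\cE(P)^{-1/2}$,
\[ F^2(|\psi\rangle,\cR_P\circ\cE)=\langle\psi|(\cR_P\circ\cE)(|\psi\rangle\langle\psi|)|\psi\rangle=\sum_{i,j}\bigl|\langle\psi|R_iE_j|\psi\rangle\bigr|^2 . \]
The whole proof then reduces to controlling the gap between this sum of squared diagonal matrix elements and the corresponding sum of squared norms.

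The key intermediate step is a completeness identity. Summing the Petz Kraus operators gives $\sum_i R_i^\dagger R_i=\cE(P)^{-1/2}\cE(P)\cE(P)^{-1/2}=P_\cE$, the projector onto the support of $\cE(P)$; since $\mathrm{supp}(E_jP)\subseteq\mathrm{supp}\,\cE(P)$ one has $P_\cE E_jP=E_jP$, and trace preservation of $\cE$ then yields $\sum_{i,j}\|R_iE_j|\psi\rangle\|^2=\langle\psi|\sum_jE_j^\dagger E_j|\psi\rangle=1$. Combining this with the expression for $F^2$ above, I obtain
\[ 1-F^2(|\psi\rangle,\cR_P\circ\cE)=\sum_{i,j}\Bigl[\,\|R_iE_j|\psi\rangle\|^2-|\langle\psi|R_iE_j|\psi\rangle|^2\,\Bigr], \]
a sum of nonnegative quantum variances of the operators $R_iE_j$ in the state $|\psi\rangle$.

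To bring in the AQEC hypothesis I substitute $R_iE_jP=\beta_{ij}P+\Delta_{ij}$ from Eq.~\ref{eq:aqec}, so that $R_iE_j|\psi\rangle=(\beta_{ij}P+\Delta_{ij})|\psi\rangle$. Using $P|\psi\rangle=|\psi\rangle$, $P\Delta_{ij}=\Delta_{ij}P=\Delta_{ij}$, and $\mathrm{Tr}\,\Delta_{ij}=0$, the $\beta_{ij}$ and the cross terms cancel between the two pieces of each variance, leaving exactly
\[ 1-F^2(|\psi\rangle,\cR_P\circ\cE)=\sum_{i,j}\Bigl[\langle\psi|\Delta_{ij}^\dagger\Delta_{ij}|\psi\rangle-|\langle\psi|\Delta_{ij}|\psi\rangle|^2\Bigr]. \]
Minimising $F^2$ over $|\psi\rangle\in\cC$ is the same as maximising the right-hand side, which is precisely the quantity called $\eta$ in the statement; this establishes the existence of a recovery (the Petz map) whose fidelity loss equals $\eta=1-F^2_{\min}$.

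For the sufficiency direction (a), if $\eta\le\epsilon$ then the Petz map already gives $F^2_{\min}=1-\eta\ge1-\epsilon$, so the code is $\epsilon$-correctable. For the necessity direction (b) I would invoke the near-optimality bounds of Eq.~\ref{eq:optimalrange}, namely $\eta_{\cR_\mathrm{op}}\le\eta_P\le\eta_{\cR_\mathrm{op}}[(d+1)+O(\eta_{\cR_\mathrm{op}})]$: reading ``the code achieves worst-case fidelity $\epsilon$'' as the statement that the optimal recovery attains fidelity loss $\eta_{\cR_\mathrm{op}}\le\epsilon$, one gets $\eta=\eta_P\le\epsilon[(d+1)+O(\epsilon)]=:f(\epsilon;d)$. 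I expect the main obstacle to be the normalisation bookkeeping in the completeness step --- verifying $\sum_iR_i^\dagger R_i=P_\cE$, the support containment $P_\cE E_jP=E_jP$, and the role of trace preservation --- since that is what pins the constant to exactly $1$ and makes the variances collapse cleanly; a secondary subtlety is the $F$-versus-$F^2$ convention, which I would fix by reading the fidelity loss $\eta$ as $1-F^2_{\min}$ throughout, consistent with the $F^2_{\min}$ notation used earlier.
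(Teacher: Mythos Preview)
Your proof is correct and follows exactly the route one expects from the surrounding text: identify the recovery as the Petz map, compute $1-F^2(|\psi\rangle,\cR_P\circ\cE)$ as a sum of variances via the completeness identity $\sum_iR_i^\dagger R_i=P_\cE$ together with $P_\cE E_jP=E_jP$, observe that variances are invariant under the scalar shift $\beta_{ij}P$, and then invoke the near-optimality bound Eq.~\eqref{eq:optimalrange} for part~(b). Note that the paper does not actually prove this theorem---it is quoted from \cite[Theorem~6]{hui_prabha} as background---so there is no in-text argument to compare against beyond the ingredients already laid out in Secs.~\ref{sec:Petzmap}--\ref{sec:worst_case}; your reconstruction matches what that reference does. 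One small remark: the tracelessness of $\Delta_{ij}$ is not needed for the cancellation step (the variance of $\beta_{ij}P+\Delta_{ij}$ equals that of $\Delta_{ij}$ for any $\Delta_{ij}$); it is simply the normalisation convention that fixes $\beta_{ij}$ uniquely. Your caveat about the $F$ versus $F^2$ convention is well taken---the thesis uses both, and the formula for $\eta$ here is manifestly $1-F_{\min}^2$.
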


In the case of a \emph{perfect} QEC code, the $\Delta_{ij}$ matrices in Eq.~\ref{eq:aqec} vanish and the matrix of scalars $\beta=\sqrt{\alpha}$ given in Eq.~\ref{eq:perfectqec}, leading us back to the perfect QEC conditions in  Eq.~\ref{eq:perfectqec}. This shows that the Petz map is indeed the optimal map for perfect quantum codes.

In contrast to a perfect code, an approximate code with a non-zero  $\Delta_{ij}$ in Eq.~\ref{eq:aqec}, gets mapped to overlapping, and hence indistinguishable subspaces under the action of errors as illustrated in Fig.~\ref{fig:aqec}. Hence, recovering the information could be even more challenging and one cannot construct unitaries to recover as in the case of perfect QEC. A detailed treatment of this approximate QEC approach and Petz recovery that we outlined in this section above can be found in~\cite{hui_prabha, prabha_thesis}.

\subsection{Recent developments in AQEC}

We conclude this chapter with a brief review of some of the recent developments in AQEC. We first note that the AQEC formalism elaborated in the sections above, based on the Petz recovery, has been extended to study the case of subsystem codes in~\cite{prabha_aqec}. The performance of AQEC  schemes has been studied for the case of generalized amplitude-damping channel, using a recovery similar to the Knill-Laflamme perfect recovery, in~\cite{cafaro}. This scheme uses the entanglement fidelity to study the efficiency of the protocol.

More recently, the work due to~\cite{elizabeth} demonstrates how approximate codes arise naturally in translation-invariant many body systems. This work also  draws connections between quantum chaotic systems with ETH (Eigenstate Thermalization Hypothesis) and approximate quantum codes. More generally, the interplay between continuous symmetries and approximate quantum codes has been analyzed in great detail, in \cite{preskill_symmetry}, which also demonstrated the construction of quantum codes covariant with respect to a general group.

The Petz map and its variants have provided a fertile area of study, leading to several interesting directions of work. The Petz construction has been used in \cite{todd} to show how orbital angular momentum properties of the photons can be protected from decoherence caused by atmospheric turbulence. An explicit construction of the Petz map has been provided for the case of guassian channels in \cite{wilde}. More recently,~\cite{petz_implementation} demonstrated a quantum algorithm to implement the Petz map using the quantum singular value transformation. This algorithm provides a procedure to perform the pretty-good measurement which is a special case of Petz recovery, allowing for near-optimal state discrimination.
 

\chapter{Constructing adaptive codes using the Cartan form} 

\label{Chapter3} 

\lhead{Chapter 3. \emph{Chap:3}} 

\section{\label{sec:intro}Introduction}

Interest in building quantum computing devices has grown steadily, with rapid progress in the last few years thanks to the fresh injection of industry support. Current quantum computing devices, like the ones being built by IBM, Google, and Rigetti, comprise only a few (at best, tens of) qubits, and are quite noisy. We are right now in the ``NISQ era" \cite{preskill2018quantum}, a term referring to the near-to-intermediate-term situation where physical devices are too noisy and too small to implement regular quantum error correction (QEC) and fault tolerance schemes to deal with the noise in the device. Hence, there is a strong need to find better QEC and fault tolerance schemes with lower resource overheads, essential for the eventual implementation of robust and scalable quantum computing devices. 
 Generally, QEC~\cite{shor,  steane,  calderblank,  gottesman, knill} (see also a recent review~\cite{terhal}) tries to store the information to be protected in a special part of the quantum state space with the property that errors due the noise can be identified and their effects removed through a recovery procedure.
 
Much of the existing work on QEC centers around codes capable of removing the effects of \emph{arbitrary} errors on individual qubits, powerful enough to deal with general, even unknown, noise. {The stabilizer codes~\cite{nielsen}, including the well-known Steane code~\cite{steane} and Shor code~\cite{shor}, fall in this category of codes which can correct for single-qubit errors.} 

One cannot find codes capable of correcting an arbitrary error on any single qubit unless one uses at least five physical qubits to encode a one qubit of information \cite{laflamme}. However, when there is a reasonable level of characterization of the noise afflicting the qubits, \emph{channel-adapted codes}~\cite{fletcherthesis}--- codes tailor-made to deal with the specific noise channel encountered in the physical device--- become of interest. Such codes can be expected, and are known (see, for example, the $4$-qubit code for amplitude-damping noise discovered in Ref.~\cite{leung}), to be less demanding in resources.

The most general formulation of channel-adapted codes requires full knowledge of the noise. We make the following experimentally well-motivated assumption~\cite{tannu} about the structure of the noise. We assume that the noise takes a tensor-product structure and propose a numerical algorithm to find good codespaces--- regions of the state space resilient to the noise--- among states with a nonlocal structure, using a Cartan decomposition of the encoding operation.

The question of finding channel-adapted codes can be formulated as an optimization problem~\cite{yamam, reimpell, fletcher_codes,kosut,wang}, one of finding the combination of code and recovery that optimizes a chosen figure of merit for the given noise channel. When the figure of merit is the \emph{average} entanglement fidelity~\cite{schum}, one only has a double optimization over encoding (of a given block length) and recovery. This problem is known to be tractable via convex optimization techniques~\cite{reimpell, fletcher_rec, kosut}. 

In our work, we focus on finding channel-adapted codes that minimize the \emph{worst-case fidelity} for the storage of a single qubit of information. We argue that we can, in practice, reduce the original triple optimization to a single optimization by making use of the Petz recovery ~\cite{Petz}, shown to be near-optimal~\cite{hui_prabha} for any noise channel and choice of code. The use of the Petz recovery leads to an analytical expression for the worst-case fidelity for codes encoding a single logical qubit. Furthermore, a key aspect of this work is that we can reduce the difficulty of this remaining numerical optimization over all possible encodings by employing a Cartan decomposition of the encoding operation, motivated by the noise-locality--code-nonlocality dichotomy. We vary only over the nonlocal pieces of the decomposition, thereby reducing the dimension of the search space. Altogether, these steps give a fast and easy algorithm for finding good channel-adapted codes for the worst-case fidelity, the preferred figure of merit for quantum computing tasks.

\section{The optimization problem}\label{sec:prelim}
We begin by describing the various ingredients in our approach to the problem of finding channel-adapted codes for the worst-case fidelity measure. We explain how to reduce the problem to a single optimization, over the encoding operation.

\subsection{Basic formulation}
Consider a physical quantum information processing system of dimension $d$, with Hilbert space $\cH$. The noise acting on the system can be described by a \emph{quantum channel}, i.e., a completely positive (CP) and trace-preserving (TP) linear map, denoted by $\cE$. $\cE$ acts on $\cB(\cH)$, the set of linear operators on $\cH$, $\cE:\cB(\cH)\rightarrow \cB(\cH)$. Its action can be written as $\cE(\,\cdot\,)=\sum_{i=1}^NE_i(\,\cdot\,)E_i^\dagger$, for a set of (non-unique) Kraus operators $\{E_i\}_{i=1}^N$, a structure that assures the CP nature of the map. The Kraus operators further satisfy $\sum_{i=1}^NE_i^\dagger E_i=\id$, for the TP property.

To protect the quantum information from damage by the noise, QEC proposes to store the information---assumed to be a $d_0$-dimensional Hilbert space $\cH_0$ of states---in a $d_0$($\leq d$)-dimensional subspace $\cC$ of $\cH$, the Hilbert space of the physical system. We refer to $\cC$ as the \emph{codespace}. The encoding operation $\cW$---a unitary operation, and hence invertible---is a one-to-one mapping of states from $\cH_0$ to $\cC$, $\cW:\cB(\cH_0)\rightarrow \cB(\cC)\subseteq\cB(\cH)$.  The action of the noise $\cE$ on the encoded state, the output of $\cW$, can then  be regarded as $\cE: \cB(\cC) \rightarrow \cB(\cH)$. After the action of the noise, the QEC protocol applies a suitable recovery map $\cR$, a CPTP map $\cR: \cB(\cH) \rightarrow \cB(\cC)$ that restores the state into the codespace, and in the process removing (hopefully most of) the errors due to the noise. If we want, we can then decode the physical state back into the quantum informational state of $\cH_0$ by applying the decoding operation, $\cW^{-1}$.

Traditionally, the QEC protocol, specified by the pair $(\cW,\cR)$ for given $\cH_0$ and $\cH$, is chosen to satisfy (at least approximately) what are known as the QEC conditions \cite{knill,hui_prabha}, for successful removal of the errors caused by the noise. Here, it is more straightforward to think directly in terms of an optimization problem. For that, we first quantify the performance of a code $\cC$ (or, equivalently, $\cW$) with recovery $\cR$ for the noise process $\cE$ by a measure that compares the output state of the QEC protocol $(\cR\circ\cE)(\rho)$ to the input state $\rho\in\cB(\cC)$. We then characterize the performance of a given pair $(\cW,\cR)$ for noise $\cE$, using the {\it worst-case fidelity} which is the \emph{square} of the quantity defined in   Eq.~\ref{eq:worstcase_fidelity}, as shown below.
\begin{equation}
F^2_{\min}(\cW, \cR;\cE) \equiv \min _{|\psi\rangle \in \cH_{0}} F^{2}(|\psi\rangle,\cW^{-1}\circ \cR \circ \cE \circ \cW). \label{eq:worstcase_fid}
\end{equation}
Recall from Sec.~\ref{sec:worst_case} that it is enough to minimize $F(.,.)$ and by extension $F^{2}(.,.)$ also, over pure states. This minimization over $|\psi\rangle\in\cH_0$ usually has to be done numerically unless one has special properties that simplify the problem (as we will see below). Alternatively, one can make use of the \emph{fidelity loss} quantity defined as, 
\begin{equation}
\eta(\cW,\cR;\cE)\equiv 1-F_{\min}^2(\cW,\cR;\cE).
\end{equation}

We can now state the basic formulation of the optimization problem for channel-adapted codes: For given noise $\cE$, and the available dimension $d$ of the physical system, the best code is given by the solution to the following optimization over encoding operations $\cW$ and recovery maps $\cR$.
\begin{align}\label{eq:optimize}
&\quad~\underset{\cW}{\textrm{argmax}}\,\underset{\cR}{\textrm{argmax}} ~F^2_{\min}(\cW,\cR;\cE)\\
&=\underset{\cW}{\textrm{argmin}}\,\underset{\cR}{\textrm{argmin}} ~\eta(\cW,\cR;\cE)\nonumber\\
&=\underset{\cW}{\textrm{argmax}}\,\underset{\cR}{\textrm{argmax}} \min_{|\psi\rangle\in \cH_0} F^2(|\psi\rangle,\cW^{-1}\circ \cR \circ \cE \circ \cW).\nonumber
\end{align}
This is the triple optimization, over the encoding $\cW$, the recovery $\cR$, and the input state $|\psi\rangle$, mentioned in the introduction.

We note that, in principle, one could also add an optimization over the dimension $d$ of the physical state space used to encode $\cH_0$. For the current situation of independent noise on the physical system, one expects better fidelity with a larger number of physical qubits, as this will gives better ``de-localization" of the information. However, in the current NISQ era, the number of physical qubits available for encoding the information will largely come from practical constraints. We thus take $d$ to be fixed, and find the best $(\cW,\cR)$ for that given $d$.

We first reduce this triple optimization problem to a double optimization over $\cW$ and $|\psi\rangle$, by choosing the Petz recovery $\cR_{P}$. Recall from Sec.~\ref{sec:Petzmap} that the Petz map is defined as~\cite{Petz, hui_prabha},
\begin{equation}\label{eq:Petzmap}
\cR_{P}(\cdot) \equiv \sum_{i=1}^{N} PE_{i}^{\dagger} \cE(P)^{-1/2}(\cdot) \cE(P)^{-1/2}E_{i}P, 
\end{equation}
where $\{R_{i}\equiv P E_{i}^{\dagger}\cE(P)^{-1/2}\}_{i=1}^{N}$ constitute the Kraus operators of $\cR_{P}$. Here, $P$ is the projector onto the codespace $\cC$, and the inverse of $\cE(P)$ is taken on its support. Having fixed the recovery, our optimization problem now reduces to a double optimization of the form,
\begin{equation}\label{eq:minfidelity}
  \underset{\cW}{\textrm{argmax}} \ \underset{|\psi\rangle\in \cH_{0}}{\text{min}} F^{2}(|\psi\rangle,\cW^{-1}\circ \cR_{P} \circ \cE \circ \cW).
\end{equation}
We denote the fidelity loss for an encoding $\cW$ as $\eta_\cW\equiv \eta(\cW,\cR_P;\cE)$; the optimal encoding $\cW_{\mathrm{op}}$ is then the one that attains $\eta_{\mathrm{op}}\equiv \min_\cW\eta_\cW$.

\subsection{ Fidelity loss for qubit codes}

In this work we search for codes which preserve a qubit worth information. It turns out that the optimization problem of Eq.~\eqref{eq:minfidelity} can be further simplified, since the worst-case fidelity $\min_{|\psi\rangle\in \cH_0} F^{2}(|\psi\rangle,\cW^{-1}\circ \cR_{P} \circ \cE \circ \cW)$ for encoding $\cW$, or equivalently, the fidelity loss function $\eta_\cW$, has a simple form for the case of qubit codes (i.e., $d_0=2$) with the Petz recovery. Specifically, $\eta_{\cW}$ can be easily computed via eigenanalysis~\cite{hui_prabha}. We recall the steps here, for completeness.

We encode a qubit $\cH_0$ into a two-dimensional codespace $\cC$. For an orthonomal basis $\{|v_1\rangle,|v_2\rangle\}$ on $\cC$, the Pauli basis (orthogonal but not normalized) for operators on $\cC$, denoted as $\{\sigma_\alpha\}_{\alpha=0,x,y,z}$, can be defined in the usual way as,
\begin{align}\label{eq:Pauli}
\sigma_0 =& |v_1\rangle\langle v_1|+|v_2\rangle\langle v_2|=P \equiv I,\\
\sigma_x =& |v_1\rangle\langle v_2|+|v_2\rangle\langle v_1|, \nonumber \\
\sigma_y =& -i(|v_1\rangle\langle v_2|-|v_2\rangle\langle v_1|), \nonumber\\
\textrm{and}\quad\sigma_z =& |v_1\rangle\langle v_1|-|v_2\rangle\langle v_2|.
\end{align}
Codestates $\rho\in\cB(\cC)$ can then be described using the Bloch representation,
\begin{equation}
\rho = \dfrac{1}{2}(I + \mathbf{s} . \bm{\sigma}),
\end{equation} 
where $\mathbf{s}=(s_x,s_y,s_z)$ is a real $3$-dimensional vector --- the Bloch vector for $\rho$ --- with Euclidean length $|\mathbf{s}|\leq 1$, and $\bm{\sigma}=(\sigma_x,\sigma_y,\sigma_z)$.

Consider the channel $\cM:\cB(\cC)\rightarrow \cB(\cC)$ constructed by composing the noise followed by the Petz recovery, $\cM\equiv\cR_P\circ\cE\circ\cP$, acting on the codespace. $\cP(\cdot)\equiv P(\cdot)P$ is the map that enforces the pre-condition that we start in the codespace. $\cM$ is both trace-preserving [$\cM^\dagger (P)=P$] and unital [$\cM(P)=P$], and its action can be expressed in the Pauli operator basis defined above, as the following matrix.
 \begin{equation}\label{eq:tmatrix}
M  = {\left(\begin{array}{c|c}
1 & 0~~0~~0\\
\hline
0 &  \\
0 & T \\
0 & 
\end{array}\right)},
\end{equation}
Note that $M$ is a real matrix with real matrix entries $M_{\alpha\beta}\equiv \frac{1}{2}\tr\{\sigma_\alpha\cM(\sigma_\beta)\}$, with $T$ denoting a $3\times 3$ matrix of the non-zero $\alpha,\beta=x,y,z$ entries. The action of $\cM$ on an input state $\rho\in\cB(\cC)$ can then be expressed in terms of the action on the Bloch vector as $\mathbf{s}\mapsto \mathbf{s}'\equiv T\mathbf{s}$.
The fidelity loss $\eta_\cW$ (for a given encoding $\cW$ that defines the $\cC$ subspace) is then, by straightforward algebra,
\begin{equation}\label{eq:fidelityloss}
\eta_\cW=\max_{\mathbf{s},|\mathbf{s}|=1}\tfrac{1}{2}(1-\mathbf{s}^\mathrm{T}T_\mathrm{sym}\mathbf{s})=\tfrac{1}{2}{\left[1-t_{\min}(\cW)\right]},
\end{equation}
where $T_\mathrm{sym}\equiv \frac{1}{2}(T+T^\mathrm{T})$, and $t_{\min}(\cW)$ is the smallest eigenvalue of $T_\mathrm{sym}$. Here, the superscript $\mathrm{T}$ denotes the transpose operation.

\subsection{A single numerical optimization}
In this way, we have reduced the minimization needed to compute the worst-case fidelity in Eq.~\ref{eq:minfidelity} to a simple diagonalization of a $3\times 3$ matrix and finding its smallest eigenvalue. Thus we have only a single optimization left to do, to find the best channel-adapted code, namely,
\begin{equation}
  \underset{\cW}{\textrm{argmin}}~\frac{1}{2}{\left[1-t_{\min}(\cW)\right]}.
\end{equation}

The optimization over the encoding $\cW$ has to be done numerically. We parameterize the search space as follows. Every codespace $\cC$ is specified by $d_0$ orthogonal pure states in $\cH$, forming a basis for $\cC$. Varying over the codespace can then be thought of as starting with a fixed basis with $d_0$ elements, and then applying a rotation of the basis, via a unitary operator $U$ acting on the full $d$-dimensional Hilbert space of the physical system. Choosing different codespaces $\cC$ then corresponds to choosing different unitary operators $U$. The search space is then the set of all $d$-dimensional unitary operators, each element of which is specified by $d^2$ real parameters.

As noted above, $t_{\min}(\cW)$ has to be computed numerically for each $\cW$. This means that we do not have a closed-form expression for the gradient of our objective function, so that standard optimization methods that require a formula for the gradient do not work. This is easily solved, however, by going to methods that estimate the gradient numerically via gradient-descent algorithms. A well-known approach, the one that we used here, is the Nelder-Mead search technique (also known as the downhill simplex method)~\cite{nelderpaper, NumericalRecipes}, which has been explained in Sec.~\ref{sec:nmsearch} of Appendix~\ref{AppendixA}.

\section{Simplifying the search: The Cartan decomposition}\label{sec:search}

As stated earlier, the optimization over $\cW$ involves a $d^2$-dimensional search. For $n$-qubit physical systems, the typical experimental scenario, where $d=2^n$, the search space dimension grows exponentially with $n$. It would hence be useful to further reduce the complexity of the search by considering a restricted search over the set of encoding unitaries $\cW$. For that, we recall our focus, as motivated in the introduction, on noise channels with a tensor-product structure over the $n$ qubits. This local structure in the noise suggests  the use of codes with a nonlocal nature. To separate the nonlocal pieces of the unitary search space from the local pieces, we make use of the Cartan decomposition, as described in the following paragraphs.

Our search space, originally comprising elements of the unitary group $U(2^{n})$ for an $n$-qubit code, can be restricted to elements of the special unitary group $SU(2^{n})$ without loss of generality. We then use the Cartan decomposition originally proposed in~\cite{khaneja_glaser}, whereby any $n$-qubit unitary is realised as a product of single-qubit (local) and multi-qubit (nonlocal) unitaries. The specific paramterization we use is due to~\cite{cartan}, where the standard Pauli basis is employed to decompose an arbitrary element of  $SU(2^{n})$ in terms of its local and nonlocal parts in an iterative fashion.

\subsection*{Cartan form of $SU(2^{n})$}\label{sec:cartan}

Recall that the special unitary group $SU(d)$ --- the group of $d \times d$ complex matrices with determinant one --- forms a real Lie group of dimension $d^{2}-1$. Let $\mathfrak{SU}(d)$ denote the corresponding Lie algebra, the algebra of traceless anti-Hermitian $d\times d$ complex matrices with the Lie bracket $-\upi[\,\cdot\,,\,\cdot\,]$, that is, $(-\upi)$ times the commutator.

The central idea behind the Cartan form is the fact that any element of $SU(2^{m})$ can be represented, up to local unitaries, using elements of two Abelian subalgebras $\mathfrak{h_{m}}$ and $\mathfrak{f_{m}}$ ($m=2, 3, \ldots , n$) of $\mathfrak{SU}(2^{m})$. This was shown in~\cite{khaneja_glaser} via an iterative decomposition of the form $U = U' H U''$, where $H$ is generated alternately from elements of $\mathfrak{h_{m}}$ and $\mathfrak{f_{m}}$, while $U'$ and $U''$ belong to the subgroup of $SU(2^{m})$ generated by a subalgebra orthogonal to $\mathfrak{h_{m}}$ and $\mathfrak{f_{m}}$. The exact structure of the decomposition depends on the choice of an appropriate basis for $\mathfrak{SU}(2^{m})$ that can be obtained recursively for $m=2,3,\ldots,n$. 

For example, for $n=2$, one can use twofold tensor products of the single-qubit Pauli operators ($I,X,Y,Z$) as basis elements for the Lie algebra $\mathfrak{SU}(4)$. Using this basis to partition $\mathfrak{SU}(4)$ into orthogonal subspaces leads to an identification of the Abelian subalgebras $\mathfrak{h}_{2} = {\rm span}\{XX,YY, ZZ\}$ and $\mathfrak{f}_{2} = \{0\}$. This leads to the well known Cartan form for any $U \in SU(4)$~\cite{Zhang2003, Rezakhani2004},
\begin{equation}
 U = (U_{1}\otimes U_{2})\,\upe^{-\upi  (c_{1}XX + c_{2}YY + c_{3}ZZ)} (U_{3}\otimes U_{4}), \label{eq:su2}
 \end{equation}
where $U_{1}, U_{2}, U_{3}$, and $U_{4} \in SU(2)$ are local, single-qubit unitaries, and $c_{1}, c_{2}$, and $c_{3}$ are scalar parameters characterizing the nonlocal operators.
 
Following this intuition from $SU(2)$, it was shown that a basis comprising $n$-fold tensor products of the single-qubit Pauli basis can be obtained for any $\mathfrak{SU}(2^{n})$ by an iterative process~\cite{cartan}, which partitions $\mathfrak{SU}(2^{n})$ into Abelian subalgebras $\mathfrak{h}_{n}$ and $\mathfrak{f}_{n}$. The Cartan decomposition of any $n$-qubit unitary operator can then be obtained as follows. \\[1ex]
{\textbf{Cartan decomposition}~\cite{cartan}\textbf{.~}} \textit{Any $U \in SU(2^{n})$, for $n > 2$ can be decomposed as,
\begin{equation}\label{eq:cartan}
G = K^{(1)} F^{(1)} K^{(2)} J K^{(3)} F^{(2)} K^{(4)}.
\end{equation}
Here, $K^{(i)}$ are product operators from $SU(2^{n-1}) \otimes SU(2)$, $F^{(j)}\equiv\exp(-\upi f^{(j)})$ and $J\equiv \exp{(-\upi h)}$ are unitary operators \emph{nonlocal} on the entire $n$-qubit space, with $h \in \mathfrak{h}_{n}$ and $f^{(j)} \in \mathfrak{f}_{n}$. The decomposition can be applied recursively, to further decompose each $SU(2^m)$ operator in $K^{(i)}$ in the same form as in Eq.~\eqref{eq:cartan}, for all $m=2,3,\ldots,n-1$.}\\

As in the $n=2$ case, the Cartan decomposition for $n >2$ again separates out the local and nonlocal degrees of freedom  in an iterative fashion. However, for $n > 2$, a second Cartan decomposition is required in order to identify the factors that are nonlocal on the entire $n$-qubit space. This stems from the fact that a pair of nontrivial Abelian subalgebras $\mathfrak{h}_{n}, \mathfrak{f}_{n}$ maybe identified for any $n >2$, and this leads to a two{-}step decomposition. First, using the generators of the subalgebra $\mathfrak{h}_{n}$, we obtain the unitary $J\in SU(2^{n})$, as well as $U', U'' \in SU(2^{n})$, such that $G = U' J U''$ for any $G \in SU(2^{n})$, $n>2$. Further Cartan decompositions of $U'$ and $ U''$ using the generators of $\mathfrak{f}_{n}$ gives the form $ G= K^{(1)} F^{(1)} K^{(2)} J K^{(3)} F^{(2)} K^{(4)}$, where the operators $K^{(i)} \in SU(2^{n-1})\otimes SU(2)$ are no longer nonlocal on the entire $n$-qubit space. Starting with the bases for $\mathfrak{h}_{2}, \mathfrak{f}_{2}$ identified above,~\cite{cartan} provides a simple recursive prescription to identify the bases for the subalgebras $\mathfrak{h}_{n}, \mathfrak{f}_{n}$, for any $n>2$. 

To illustrate how the above prescription can be used to obtain a nice parameterization of the encoding unitaries for QEC, we explicitly write down the Cartan form for $n=3$ and $4$. Any element of $SU(2^{3})$ can be constructed using the formalism in Eq.~\ref{eq:cartan} as,
\begin{align}\label{eq:su8}
U &= K^{(1)} F^{(1)} K^{(2)} J K^{(3)} F^{(2)} K^{(4)},\\
F^{(i)} &= \upe^{-\upi (c^{(i)}_{1}XXZ + c^{(i)}_{2}YYZ + c^{(i)}_{3}ZZZ)}, \nonumber \\ 
\textrm{and }\quad J &= \upe^{-\upi(a_{1}XXX+a_{2}YYX+  a_{3}ZZX+ a_{4}IIX)}.\nonumber
\end{align}
As stated above, $K^{(j)} \in SU(4) \otimes SU(2)$, and each element in $SU(4)$ can be obtained similarly from Eq.~\ref{eq:su2}. Recall that the standard description of any unitary in $SU(2^{3})$ requires $63$ real parameters, whereas the recursive Cartan decomposition described in Eq.~\ref{eq:su8} requires a total of $82$ real parameters. However, the key advantage of using the Cartan parameterization is that the nonlocal factors of any unitary in $SU(2^{3})$ are easily described in terms of $22$ real parameters, namely, the set of ten real parameters $\{a_{1}, a_{2}, a_{3}, a_{4}, c^{(i)}_{1}, c^{(i)}_{2}, c^{(i)}_{3}\}$, along with the three real parameters for each of the four $SU(4)$ factors. 

Similarly, we note that any element $ U \in SU(2^{4})$ can be decomposed as,
\begin{align}\label{eq:su16}
 U &= K^{(1)} F^{(1)} K^{(2)} {J} K^{(3)} F^{(2)} K^{(4)}, \nonumber \\ 
 F^{(i)} &=\exp\bigl(-{\upi}(c^{(i)}_{1} XXIZ+c^{(i)}_{2}YYIZ\nonumber\\
 &~\quad\qquad + c^{(i)}_{3} ZZIZ + c^{(i)}_{4} IIXZ + c^{(i)}_{5} XXXZ\nonumber\\
 &~\quad\qquad + c^{(i)}_{6} YYXZ + c^{(i)}_{7} ZZXZ) \bigr),\quad i=1,2, \nonumber \\ 
\textrm{and } J &= \exp\bigl(-{\upi} (a_{1} IIIX+a_{2} XXIX + a_{3} YYIX  \nonumber \\  
 &~\quad\qquad + a_{4} ZZIX+ a_{5} IIXX+ a_{6}XXXX+ \nonumber \\  
 &~\quad\qquad + a_{7} YYXX + a_{8} ZZXX) \bigr), 
\end{align}
where $ K^{(j)}$ $\in$ $SU(8)\otimes SU(2)${, $j=1$, 2, 3, and 4}. Such a decomposition requires a total of $362$ real parameters including the set $\{a_{1}, a_{2}, a_{3},\ldots, a_{8}, c^{(i)}_{1}, c^{(i)}_{2},\ldots c^{(i)}_{7}\}$ which parameterizes the fully nonlocal factors.
 
We simplify our numerical search for optimal encodings by fixing the local components in the Cartan form and searching only over the nonlocal degrees of freedom. This greatly reduces the dimension of our search space {and allows us to search much faster, compared to an unstructured search}. This restriction only to nonlocal degrees of freedom, as we will see from our examples in Sec.~\ref{sec:results}, does not lead to substantial loss in fidelity. Furthermore, choosing the local unitaries in the decomposition appropriately allows us to construct encoding unitaries with simple structures that permit easy circuit implementations of the encoding procedure. 

An example of such a \emph{structured} encoding would be to set all the $K^{(i)}$s in Eq.~\ref{eq:cartan} to be the identity operator, thus reducing the form of the encoding unitary to $U =F^{(1)} J F^{(2)}$. This implies the following form for the encoding unitary in the case of $SU(2^3)$,
\begin{equation}
U=
   \begin{bmatrix}
\ast  & \ast&   0 &   0 &   0 &   0 &   \ast & \ast &\\
  \ast & \ast &   0 &   0 &   0 &   0 &  \ast & \ast &\\
   0 &   0 & \ast & \ast & \ast & \ast &   0 &   0 &\\
     0 &   0 &  \ast & \ast & \ast & \ast &   0 &   0 &\\
   0 &   0 & \ast & \ast & \ast &\ast & 0 &   0 &\\
   0 &   0 &  \ast  & \ast & \ast & \ast &   0 &   0 &\\
  \ast & \ast &   0 &   0 &   0 &   0 &  \ast & \ast &\\
   \ast & \ast &   0 &   0 &   0 &   0 &  \ast & \ast &
  \end{bmatrix} , \label{eq:U_8}
  \end{equation}
where $\ast$ refers to some non-zero complex number.  Such a structured encoding $U$ with only non-local Cartan factors is easy to implement using only single{-}qubit gates and the \textsc{cnot} gate, as explained in Appendix~\ref{AppendixA}, Sec.~\ref{sec:circuit}.

\section{Examples}\label{sec:results}

We now demonstrate the efficacy of our numerical approach to finding good codes through some examples. We consider $n$-qubit noise channels of the form $\cE=\cE_1\otimes\cE_2\otimes\ldots\otimes \cE_n$, where $\cE_i$ is a single-qubit channel on the $i$-th qubit. The first few examples are for the case where all the $\cE_i$s are the same channel, corresponding to the common experimental situation where all the qubits see the same environment and hence undergo the same noise dynamics. We focus on three examples, namely, the amplitude-damping channel, the rotated amplitude-damping channel, and an arbitrary (randomly chosen, no special structure) single-qubit channel. 

In each example, we use the form of the encoding unitary in Eq.~\ref{eq:cartan} to perform both {\it unstructured} as well as {\it structured} search over the space of all encoding unitaries.
In an unstructured search, we retain the general form of the encoding unitary in Eq.~\ref{eq:cartan}, using \emph{all} parameters, local and nonlocal, in our search. For the structured search, we adopt two different approaches. In the case of a structured search with \emph{trivial} local unitaries, we set all the local ($SU(2)$) unitaries in the Cartan decomposition in Eq.~\ref{eq:cartan} equal to the identity, and search only over the nonlocal parameters in Eq.~\eqref{eq:cartan}. For example, while searching over the $4$-qubit space, we retain the nontrivial $3$-qubit nonlocal pieces as well as the $2$-qubit pieces, but set all the single-qubit unitaries to identity. Alternately, we also implement structured search with \emph{nontrivial} local{ ($SU(2)$)} unitaries, where the choice of the local unitaries in the Cartan decomposition is guided by the structure of the channel. This kind of structured search with nontrivial local unitaries will be particularly relevant for the example of the rotated amplitude-damping channel discussed in Sec.~\ref{sec:rotated_AD}.

\subsection{Amplitude-damping channel}\label{sec:amp_damp}

\begin{figure}
\includegraphics[trim=8mm 12mm 15mm 5mm, clip, width=\columnwidth]{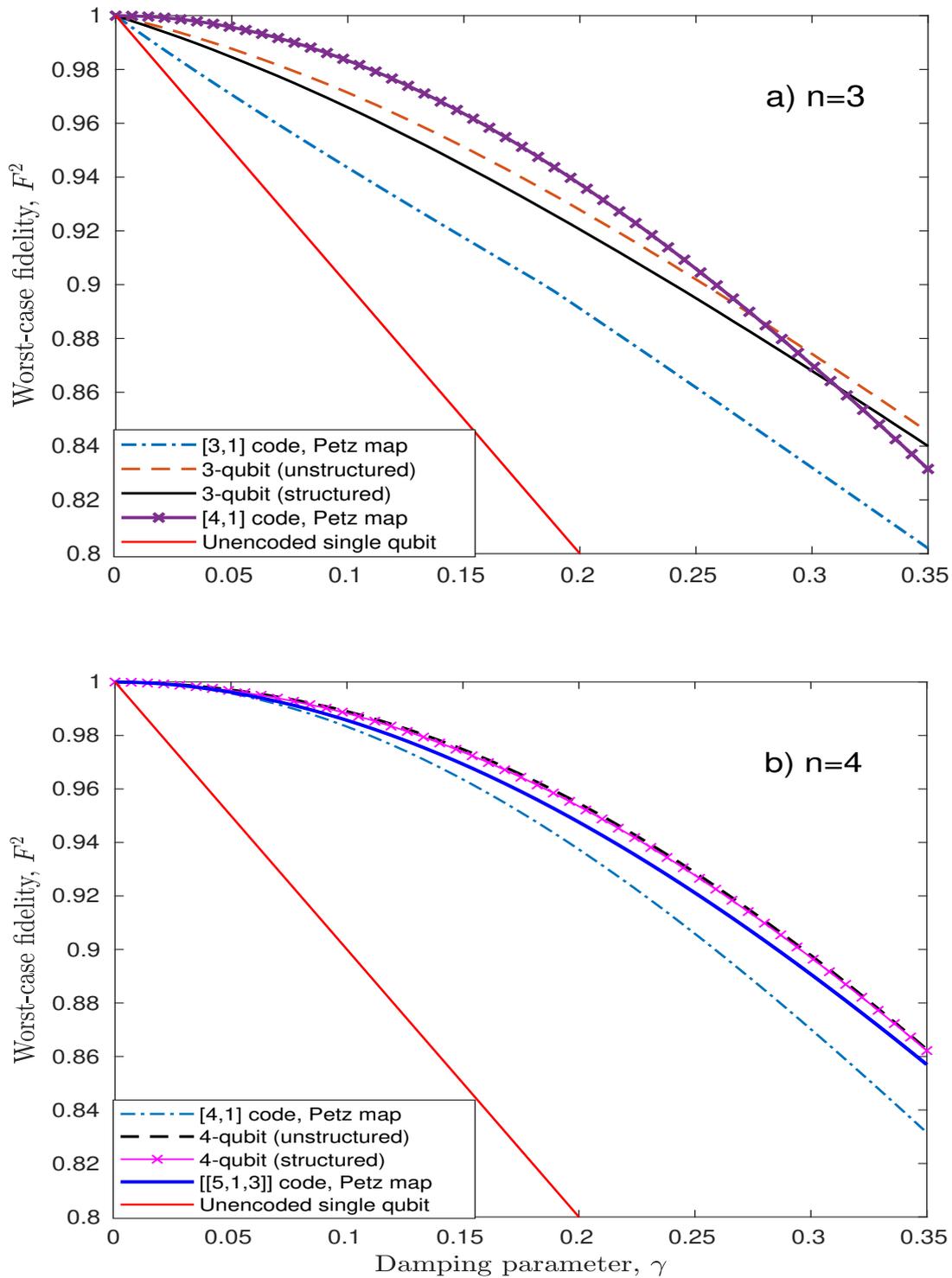}
\caption{\label{fig:3-4qubit}Performance of $n$-qubit codes for amplitude{-}damping, using the Petz recovery [Eq.~\eqref{eq:Petzmap}] with structured and unstructued encodings, for (a) $n=3$, and (b) $n=4$.} 
\end{figure}

Recall from Eq.~\ref{eq:ampdamp} that the single-qubit amplitude-damping channel $\cE_\mathrm{AD}$ is described by a pair of Kraus operators in the computational basis $\{|0\rangle,|1\rangle\}$, as,
{
\begin{align}\label{eq:amplitudedamping}
E_{0}  &= |0\rangle\langle 0|+ \sqrt{1-\gamma}\,|1\rangle\langle 1|,\nonumber\\
\textrm{and}\quad E_1 &= \sqrt{\gamma}\,|0\rangle\langle 1|,
\end{align}
}
where $E_1$ flips the $|1\rangle$ state to the $|0\rangle$ state, imitating a ``decay" to the $|0\rangle$ state; the deviation of $E_0$ from the identity ensures the trace-preserving nature of the channel. We perform the numerical search outlined in Sec.~\ref{sec:search} and obtain optimal encodings for $\cE=(\cE_\mathrm{AD})^{\otimes n}$, when $\gamma\ll 1$, for $n=3$ and $4$. We compare the performance of the codes we find with the various known codes in Fig.~\ref{fig:3-4qubit}. In particular, we compare with the $[3,1]$ approximate code~\cite{langshor}, given by the span of the states,
\begin{equation}\label{eq:3qubit}
 |0_{L}\rangle = \tfrac{1}{\sqrt 2}(|000\rangle +|111\rangle), \quad
 |1_{L}\rangle = \tfrac{1}{\sqrt 2}(|100\rangle +|011\rangle);
 \end{equation}
and the $[4,1]$ approximate code~\cite{leung}, which is the span of,
\begin{equation}\label{eq:4qubit}
 |0_{L}\rangle = \tfrac{1}{\sqrt 2}(|0000\rangle +|1111\rangle), \quad
 |1_{L}\rangle = \tfrac{1}{\sqrt 2}(|1100\rangle +|0011\rangle).
 \end{equation}
Fig.~\ref{fig:3-4qubit} shows that the numerically obtained codes via structured (with trivial local unitaries) and unstructured search outperform the known {\it approximate} codes of the same length described in Eq.~\eqref{eq:3qubit} and Eq.~\eqref{eq:4qubit} respectively. We observe that the performance of the $4$-qubit optimal codes is even better than the standard $[[5,1,3]]$ code, as seen in Fig.~\ref{fig:3-4qubit}(b). In both figures, we have also plotted the worst-case fidelity for a single unprotected qubit under the noise channel. The fidelity of the unencoded qubit falls off linearly with the noise parameter, thus demonstrating the advantage of using the $4$-qubit  codes found using our procedure. 

The codewords for the optimal $3$,$4$-qubit codes found in our search are presented in Appendix~\ref{AppendixA}, Sec.~\ref{sec:numericalcodes}. We also provide the encoding circuit corresponding to the optimal, structured $3$-qubit code, as an example of how the codes that emerge out of the structured search admit simple encoding circuits. Finally, we note that our numerical search procedure is indeed fast. The unstructured search for a specific value of damping parameter $\gamma$ takes a few hundred seconds on a standard desktop computer, while each structured search takes only a few milliseconds on the same computer.

\subsection{Rotated amplitude-damping channel}\label{sec:rotated_AD}

\begin{figure}
\includegraphics[trim=5mm 1cm 7cm 3mm, clip, width=18cm]{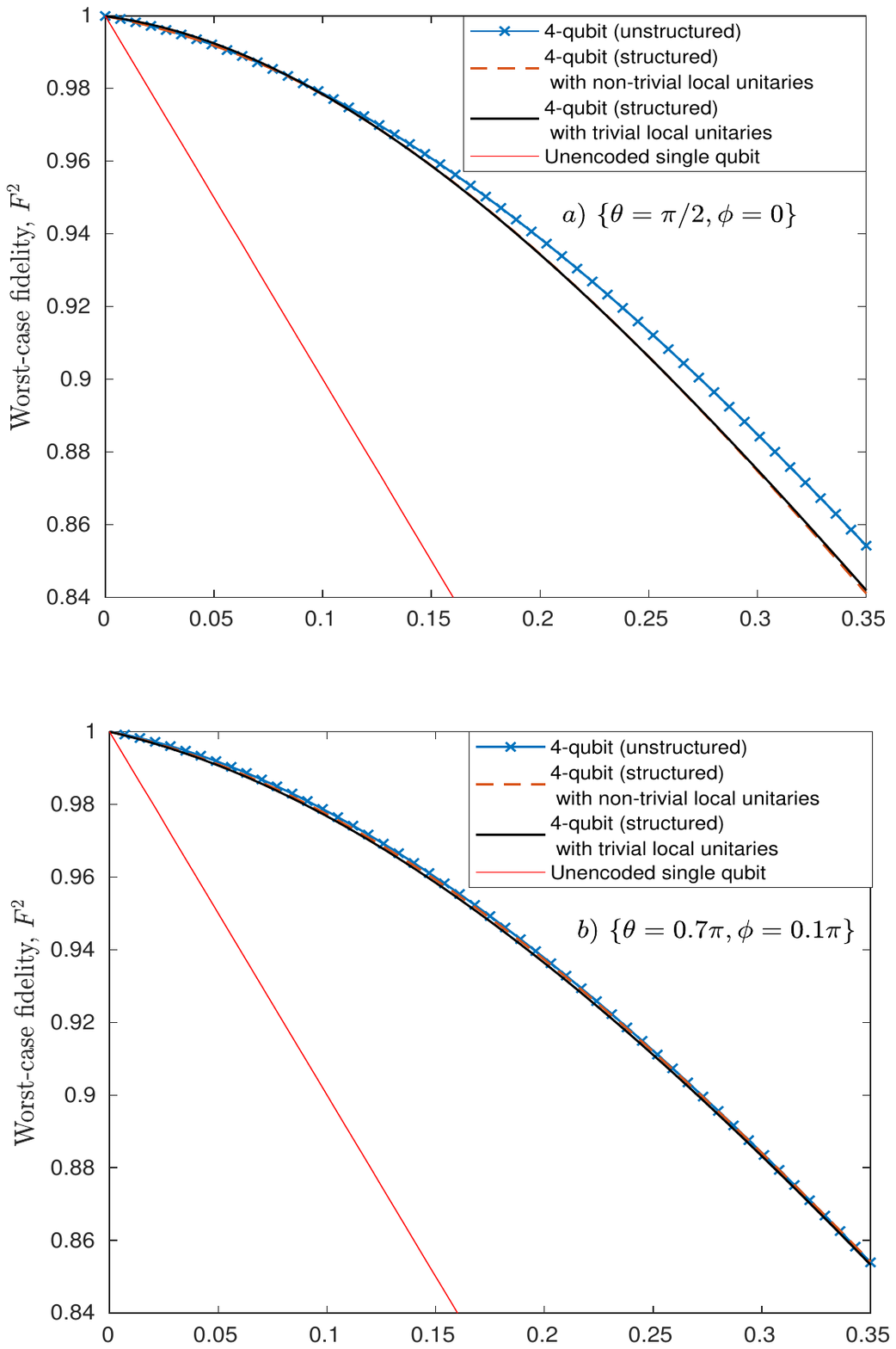}
\end{figure}
\begin{figure}
\includegraphics[trim=3mm 18cm 7cm 3mm,clip, width=20cm]{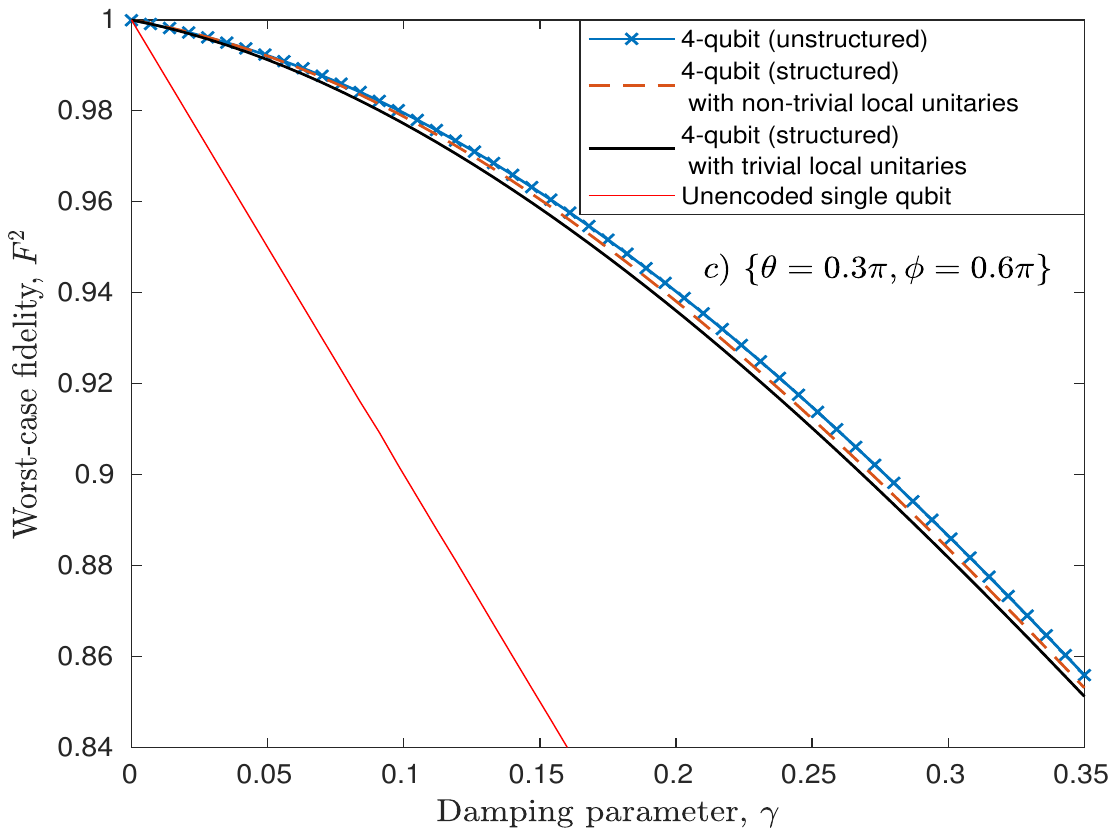}
\caption{Approximate $4$-qubit codes for damping along different directions in the Bloch sphere: (a) the $x$ direction (spherical coordinates $\{\theta,\phi \}$= $\{\pi/2,0\}$); (b) along the direction $\{\theta,\phi\}=\{0.7\pi,0.1\pi\}$; and (c) along the direction $\{\theta,\phi\}= \{0.3\pi,0.6\pi\}$.}
\label{fig2}
\end{figure} 

As our Cartan decomposition uses the Pauli basis, it is important to test if our numerical search is robust against noise not aligned along the axes used to define the Pauli basis. We therefore consider amplitude-damping channels where the damping is no longer in the $Z$ basis. Specifically, we consider the single-qubit \emph{rotated} amplitude-damping channel $\cE_\mathrm{RAD}$ described by the Kraus operators,
\begin{align} \label{eq:arbitrary}
E'_{0} &= |v\rangle\langle v| +\sqrt{1-\gamma} \, \vert v^{\perp}\rangle \langle v^{\perp}\vert, \nonumber \\
\textrm{and }\quad E'_{1} &= \sqrt{\gamma} \, \vert v \rangle \langle v^{\perp} \vert ,
\end{align}
where $\{|v\rangle,|v^\perp\rangle\}$ is a pair of orthonormal vectors on the Bloch sphere. Such a pair of vectors can be parameterized with respect to the $\{|0\rangle, |1\rangle\}$ basis using spherical coordinates,
\begin{eqnarray}\label{eq:parametrise}
|v\rangle &=& \cos(\theta/2) |0\rangle  + \upe^{\upi \phi} \sin (\theta/2) |1\rangle, \nonumber \\
\textrm{and }\quad |v^{\perp}\rangle &=& - \upe^{-\upi \phi} \sin(\theta/2) |0 \rangle + \cos(\theta/2) |1\rangle ,
\end{eqnarray}
with $\theta$ $\in$ $[0,\pi]$, $\phi$ $\in$ $[0,2\pi]$. The values of $\{\theta,\phi\}$ thus determine the damping direction. 

We present numerical search results for the amplitude-damping channel aligned along three different directions in Fig.~\ref{fig2}. In all three examples, the structured search with nontrivial local unitaries was implemented by fixing the local unitaries as $U \equiv (|v\rangle \langle 0|+|v^{\perp}\rangle \langle 1|)$ $\in$ $SU(2)$. For example, when the damping noise is aligned along the $x$-direction on the Bloch sphere, the basis $\{|v\rangle, |v^{\perp}\rangle\}$ is the eigenbasis of the Pauli $X$ operator and the local unitaries are fixed to be the Hadamard gate, which rotates the $\{|0\rangle, |1\rangle\}$ basis to the $\{|+\rangle, |-\rangle\}$ basis. 

Fig.~\ref{fig2}(a) shows the performance of different codes when the damping is with respect to the $X$ eigenstates, whereas Figs.~\ref{fig2}(b) and (c) present the results for choices of damping direction $|v\rangle$ not aligned with one of the standard Pauli axes. In all three cases, we observe that the codes obtained using the unstructured search offer only slightly better fidelity than the codes obtained using the structured searches. Furthermore, the codes obtained using nontrivial local unitaries are often distinct from, and offer better fidelity compared to the codes obtained using trivial local unitaries in the search. Once again, our search procedure is efficient, with the structured and unstructured searches taking between tens to hundreds of seconds on a standard desktop computer. As in the earlier case, we have also compared the performance of the $4$-qubit optimal codes with the fidelity of the single unprotected qubit.

\subsection{Random local noise}

As a third example of the usefulness of our numerical search procedure, we search for good codes for $\cE^{\otimes n}$, where $\cE$ is a randomly chosen single-qubit channel. A random qubit channel $\Phi$ is generated using a Haar-random unitary acting on the system qubit and a single-qubit ancilla initialized to the state $|0\rangle$; the unitary acts jointly on the qubit and the ancilla, after which the ancilla is traced out, yielding a single-qubit channel. We then admix $\Phi$ with the identity channel to give a family of qubit noise channels $\cE$, for different $\alpha\in[0,1]$, of the form,
\begin{equation}\label{eq:weaknoise}
\cE(\cdot)=(1-\alpha)(\cdot) + \alpha \Phi(\cdot).
\end{equation}
$\alpha$ parametrizes the noise strength: for small values of $\alpha$, $\cE$ describes weak noise, which is the practically relevant case.

\begin{figure}[H]
\includegraphics[trim=5mm 9cm 12mm 5mm, clip, height=10cm, width=\columnwidth]{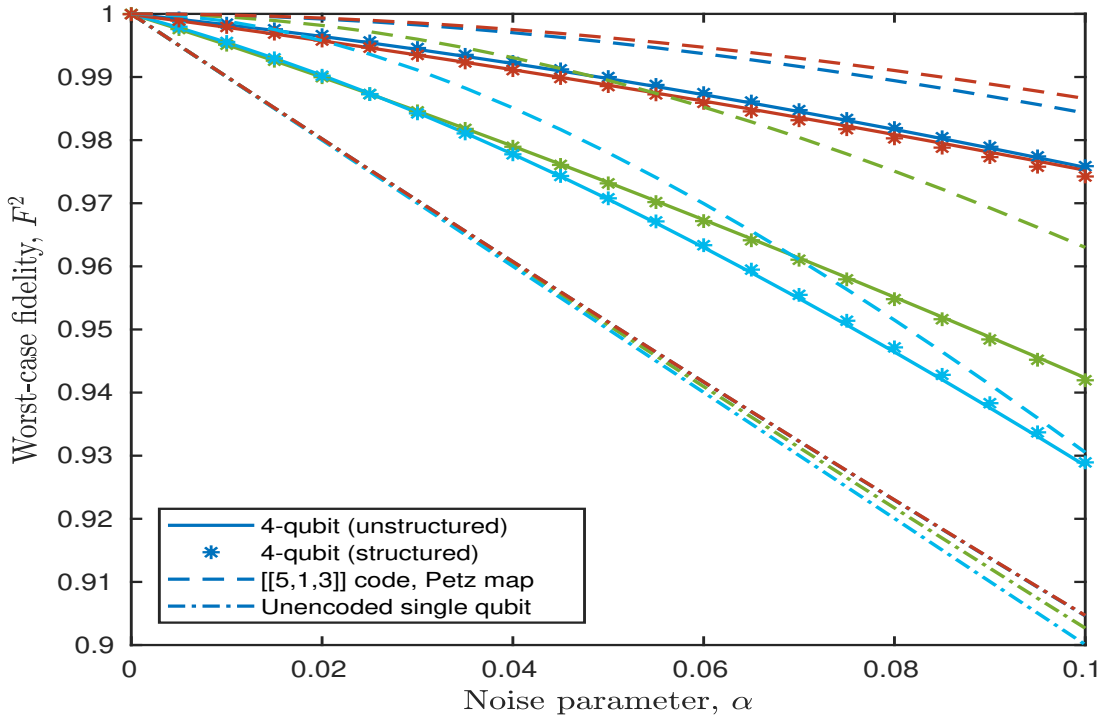}
\caption{Performance of $4$-qubit numerical code and the $[[5,1,3]]$ code for random local noise.}
\label{fig:random}
\end{figure}

Using our numerical search procedure we now obtain optimal $4$-qubit codes for the class of random local noise channels described by Eq.~\eqref{eq:weaknoise} in the weak noise regime, $\alpha \in [0,0.1]$. For each choice of random $\Phi$, and hence $\cE$, we use our numerical AQEC approach to identify good $4$-qubit codes for the 4-qubit channel $\cE^{\otimes 4}$.

Fig.~\ref{fig:random} shows the fidelities obtained for the optimal codes for four random choices of $\Phi$. We compare the performance of the four-qubit codes obtained via structured and unstructured searches with the performance of the $[[5,1,3]]$ code, for each random channel. The recovery procedure used in each case is the corresponding Petz recovery. 

In Fig.~\ref{fig:random}, we observe that the best $4$-qubit codes --- structured or unstructured --- have fidelities linear in the noise strength $\alpha$, suggesting that the $4$-qubit Hilbert space might not be sufficient to distinguish amongst the no-error case and the eight single-qubit errors arising from the weak noise $\cE$. This issue is clearly resolved when we use the $[[5,1,3]]$ code and the corresponding Petz recovery, since the fidelity is now quadratic to leading order in $\alpha$. Finally, we note that the $4$-qubit codes do yield a better worst-case fidelity than the single unencoded qubit under the action of the noise channel.

\section{Conclusions}\label{sec:conc1}

We have described a numerical search algorithm to find good quantum codes, using the worst-case fidelity as the figure of merit. By choosing the recovery map as the Petz recovery, we have reduced the general problem from a triple optimization over the encoding, recovery, and input states, to a single optimization over the encoding map only. Furthermore, the use of the Cartan decomposition, motivated by the typical scenario of independent per-qubit noise, allowed for a reduction of the search space to structured encodings, with performance comparable with the more expensive unstructured ones, as illustrated by our examples.           

The ability to identify channel-adapted codes that involve fewer qubits than the stabilizer codes targeting arbitrary noise, might suggest that the corresponding encoding/decoding circuits might also be smaller in size. In fact, the well-known $4$-qubit channel-adapted code due to Leung et al~\cite{leung} does have a much simpler encoding circuit~\cite{fletcherthesis} than the $5$-qubit stabilizer code, and the encoding circuit is made up of only Clifford gates. Furthermore, in Chapter~\ref{Chapter4} we demonstrate the usefulness of our shorter-length optimal numerical codes using a specific application.

{In our search for the quantum codes we have assumed that the encoding/decoding gates are perfect. This is of course the standard assumption in any discussion on QEC codes in the literature. However, the gates used to do the error correction are indeed the same ones as those used to do computation, so this assumption has to be relaxed, taking us into the domain of fault-tolerant quantum computing~\cite{gottesman_FT}. Our work deals with the first step of finding the optimal code for a given noise process. Extending this framework to fault tolerance is the next step for future work. In this context, our work provides an easy platform to explore and discover a large number of good candidate codes, which can then be individually examined to find the one that can be most easily implemented in a fault-tolerant manner. For the first time, we make an attempt to build a fault-tolerant computation scheme using the channel-adapted $4$-qubit code~\cite{leung} in Chapter~\ref{Chapter5}.}


\chapter{Quantum State Transfer using Adaptive Quantum Codes} 

\label{Chapter4} 

\lhead{Chapter 4. \emph{Chap:4}} 

\section{Introduction}
Quantum communication entails transmission of an arbitrary quantum state from one spatial location to another. Spin chains are a natural medium for quantum state transfer over short distances, with the dynamics of the transfer being governed by the Hamiltonian describing the spin-spin interactions along the chains. Starting with the original proposal by Bose~\cite{bose} for state transfer via a $1$-d Heisenberg chain, several protocols have been developed for {\it perfect} as well as {\it pretty good} quantum state transfer via spin chains. 

Perfect state transfer protocols typically involve engineering the coupling strengths between the spins in such a way as to ensure perfect fidelity between the state of sender's spin and that of the receiver's spin~\cite{christandl,christandl2005perfect,albanesemirror,karbach,di}. Alternately, there have been proposals to use multiple spin chains in parallel, and apply appropriate encoding and decoding operations at the sender and receiver's spins so as to transmit the state perfectly~\cite{conclusive,perfect,efficient}. Experimentally, perfect state transfer protocols have been implemented in various architectures including nuclear spins~\cite{bochkin} and photonic lattices using coupled waveguides~\cite{perez2013,chapman}. 

Relaxing the constraint of perfect state transfer, protocols for pretty good transfer aim to identify optimal schemes for transmitting information with high fidelity across permanently coupled spin chains~\cite{godsil2012,godsil}. One approach is for example to encode the information as a Gaussian wave packet in multiple spins at the sender's end~\cite{osborne,hasel}. Moving away from ideal spin chains, quantum state transfer has also been studied over disordered chains, both with random couplings and as well as random external fields~\cite{perfect, ashhab, chiara}.  

Here, we study the problem of pretty good state transfer from a quantum channel point of view. It is known~\cite{bose} that state transfer over an ideal $XXX$ chain (also called the Heisenberg chain) can be realized as the action of an amplitude-damping channel~\cite[Chapter 8]{nielsen} on the encoded state. Naturally, this leads to the question of whether quantum error correction (QEC) can improve the fidelity of quantum state transfer. Such an improvement has been reported by making use of the $5$-qubit code for pretty good transfer over the Heisenberg spin chain~\cite{allcock}. On a related note, QEC-based protocols have also been developed for pretty good state transfer over noisy $XX$~\cite{kay, kay2018perfect} chains. 

In our work we study the role of adaptive QEC in achieving pretty good transfer over a class of $1$-d spin systems which preserve the total spin. This includes both the $XX$ as well as the Heisenberg chains, and more generally, the $XXZ$ chain. We use an {\it approximate} QEC (AQEC) code, which has been shown to achieve the same level of fidelity as perfect QEC codes for certain noise channels while making use of fewer physical resources~\cite{leung, fletcher_codes, hui_prabha, cafaro}. Our protocol involves the use of multiple identical spin chains in parallel, with the information encoded in an entangled state across the chains. This is in contrast to the protocols in~\cite{kay, kay2018perfect} which use perfect QEC codes and encode into multiple spins on a single chain. Using the worst-case fidelity between the states of the sender and receiver's spins as the figure of merit, we demonstrate that pretty good state transfer maybe achieved over a class of spin-preserving Hamiltonians using an approximate code and a channel-adapted recovery map.

We present analytical and numerical results for the fidelity of state transfer obtained using our QEC scheme, for ideal as well as disordered $XXX$ chains. The presence of disorder in a $1$-d spin chain is known to lead to the phenomenon of localization~\cite{anderson}. Here, we analyze the distribution of the transition amplitude for a disordered $XXX$ chain, with random coupling strengths which are drawn from a uniform distribution. We modify the QEC protocol suitably so as to ensure pretty good transfer when the disorder strength is small. As the disorder strength increases, our analysis points to a threshold beyond which QEC does not help in improving the fidelity of state transfer. 

\section{Preliminaries}\label{sec:prelim}

We consider a general $1$-d spin chain with nearest neighbour interactions described by the Hamiltonian,
\begin{eqnarray} \label{eq:H_gen}
\cH &=& -\sum_{k} J_{k}\left(\sigma^{k}_{x}\sigma^{k+1}_{x}+\sigma^{k}_{y}\sigma^{k+1}_{y}\right) - \sum_{k}\tilde{J}_{k}\sigma_{z}^{k}\sigma^{k+1}_{z} \nonumber \\
&&  + \sum_{k}B_{k}\sigma_{k}^{z},
 \end{eqnarray}
 where,  $\{J_{k}\}>0$ and $\{\tilde{J}_{k}\}>0$ are site-dependent exchange couplings of a ferromagnetic spin chain, $\{B_{k}\}$ denote the magnetic field strengths at each site, and, $(\sigma^{k}_{x},\sigma^{k}_{y},\sigma^{k}_{z})$ are the Pauli operators at the $k^{\rm th}$ site. The spin sites are numbered as $j = 1,2, \ldots ,N$. We assume that the sender's site is the $s^{\rm th}$ spin and receiver's site is the $r^{\rm th}$ spin. 


We denote the ground state of the spin as $|\textbf{0}\rangle = |000\ldots 0\rangle $. Since we are interested in transmitting a qubit worth of information along the chain, we will work within the subspace spanned by the set of single particle excited states $|\textbf{j}\rangle$, with $|\textbf{j}\rangle$ denoting the state with the $j^{\rm th}$ spin alone flipped to $|1\rangle$. The Hamiltonian in Eq.~\eqref{eq:H_gen} preserves the total number of excitations, that is, $\left[ \cH,\sum_{i=1}^{N} \sigma _{z}^{i} \right] = 0 $ and hence the resulting dynamics is restricted to the $(N+1)$-dimensional subspace spanned by the single particle excited states and the ground state. 

The sender encodes an arbitrary quantum state $|\psi_{\rm{in}}\rangle = a|0\rangle + b|1\rangle $ at the  $s^{ th}$ site, with the coefficients $a$ and $b$ parameterized using a pair of angles $(\theta,\phi)$ as $a=\cos(\frac{\theta}{2})$, $b = e^{-i \phi} \sin(\frac{\theta}{2})$. The initial state of the spin chain is thus given by,
\begin{equation}
|\Psi(0)\rangle = a |\textbf{0}\rangle + b|\textbf{s}\rangle,  
\end{equation}
where $|\textbf{s}\rangle$ is the state of the spin chain with only the $s^{ th}$ spin is flipped to $|1\rangle$ and all other spins set to $|0\rangle$. Under the action of the Hamiltonian $\cH$ described in Eq.~\eqref{eq:H_gen}, after time $t$, the spin chain evolves to the state (here, and in what follows, we set $\hbar = 1$),
\begin{eqnarray}
|\Psi(t)\rangle &=& e^{-i\cH t}|\Psi(0)\rangle,\nonumber \\
 &=& a |\textbf{0}\rangle + b \sum_{j=1}^{N}\langle\textbf{j}\vert e^{-i\cH t}\vert \textbf{s}\rangle|\textbf{j}\rangle. \nonumber
\end{eqnarray}
Following~\cite{bose}, the state of the receiver's spin at the $r^{ th}$ site after time $t$, denoted as $\rho_{\rm out}(t)$, is obtained by tracing out all the other spins from the state of the full spin chain $\rho(t)= \vert\Psi(t)\rangle\langle\Psi(t)\vert$, as follows.
\begin{eqnarray}
&& \rho_{\rm out}(t) = \tr_{1,2,\ldots,r-1,r+1,N-1}\left[\rho(t)\right] \nonumber \\
 &=& \left[ |a|^{2} +|b|^{2}\left(1-|f_{r,s}^N (t)|^{2}\right) \right] |0\rangle\langle 0| + ab^{*} (f_{s,r}^{N}(t))^{*}|0\rangle\langle 1| \nonumber\\
&+&  ba^{*}f_{r,s}^{N}(t)|1\rangle\langle 0| + |b|^{2}|f_{r,s}^N(t)|^{2}|1\rangle\langle 1|, \label{eq:rho_out}
\end{eqnarray}
where, 
\begin{equation}
 f_{r,s}^N(t) = \langle \textbf{r} |e^{(-i \cH t)}|\textbf{s} \rangle \label{eq:trans_amp0}
 \end{equation}
is the {\it transition amplitude}, which gives the probability amplitude for the excitation to transition from the $s^{\rm th}$ site to $r^{\rm th}$ site. The function $f_{r,s}^{N}(t)$ satisfies,
\begin{eqnarray}
\sum_{r=1}^{N}|f_{r,s}^{N}(t)|^{2} &=& 1, \, \forall \; s = 1,2,\ldots, N . \nonumber \\
\sum_{k=1}^{N}f^{N}_{r,k}(t)(f^{N}_{k,s}(t))^{*} &=& \delta_{rs} , \, \forall \; k = 1,2,\ldots, N . \label{eq:trans_amp}
\end{eqnarray}
where $\delta_{rs}$ is the delta function with $\delta_{rs} = 1 $ for $r = s$ and $\delta_{rs} = 0$ for $r\neq s$.

As in the case of the Heisenberg chain~\cite{bose}, we obtain the reduced state in  Eq.~\eqref{eq:rho_out} at receiver's end as the action of a quantum channel $\widetilde{\cE}_{AD}$ on the input state. Specifically, 
\begin{equation} 
\rho_{\rm{out}}(t) = \widetilde{\cE}_{AD}(\rho_{\rm in}) = \sum_{k}E_{k}\rho_{\rm in}E_{k}^{\dagger}, 
\end{equation}
where $E_{0}$ and $E_{1}$ are the Kraus operators that describe the action of the channel. It is easy to see that the operators $E_{0}, E_{1}$ have the following form when written in the $\{|0\rangle, |1\rangle\}$ basis.
\begin{equation}
E_{0}  = \left( \begin{array}{cc}
1 & 0 \\
0 & f_{r,s}^{N}(t)
\end{array} \right), \; E_{1} = \left( \begin{array}{cc}
0 & \sqrt{1-|f_{r,s}^{N}(t)|^{2}} \\
0 & 0
\end{array} \right). \label{eq:Kraus_ideal}
\end{equation}
The Kraus operators in Eq.\eqref{eq:Kraus_ideal} lead to a channel that has the same structure as the amplitude-damping channel described in Eq.~\ref{eq:ampdamp}, but is more general since the parameter $f_{r,s}^{N}(t)$ characterizing the noise in the channel is complex. 

The standard amplitude-damping channel (Eq.~\ref{eq:ampdamp}) parameterized by a {\it real} noise parameter $p$, 
is the quantum channel induced in the original state transfer protocol in~\cite{bose}, where the Hamiltonian considered is a Heisenberg chain in the presence of an external field of the form $\vec{B} = B \hat{z}$, 
\begin{equation}
\tilde{\cH} =  - \frac{J}{2}\sum_{\langle i,j \rangle}\vec{\sigma}^{i}\cdot\vec{\sigma}^{j} - B\sum_{i}\sigma_{z}.  \label{eq:Heisenberg_B}
\end{equation}
By choosing the intensity of the $\vec{B}$-field appropriately, it is possible to adjust the phase of the complex amplitude $f_{r,s}^{N}(t)$ to be a multiple of $2\pi$ and hence replace $f_{r,s}^{N}(t)$ by $\vert f_{r,s}^{N}(t)\vert$, thus obtaining the amplitude-damping channel described in Eq.~\eqref{eq:ampdamp}. 

While much of the past work on state transfer has focused on the Heisenberg Hamiltonian in Eq.~\eqref{eq:Heisenberg_B}, here, we will focus on the more general Hamiltonian in Eq.~\eqref{eq:H_gen}. We study the problem of transmitting an arbitrary quantum state from the $s^{\rm th}$ site to the $r^{\rm th}$ site of an $N$-spin chain. We quantify the performance of the protocol in terms of the fidelity between the final state $\rho_{\rm out} \equiv \widetilde{\cE}_{AD}(|\psi_{\rm in}\rangle\langle\psi_{\rm in}|)$ and the input state $|\psi_{\rm in}\rangle$. Specifically, we use the square of the {\it worst-case} fidelity defined in Eq.~\ref{eq:worstcase_fidelity},
\begin{equation}
 F^{2}_{\rm min} (\widetilde{\cE}_{AD})  = \min_{a,b} \, \langle \psi_{\rm in} \vert \rho_{\rm out} \vert \psi_{\rm in} \rangle, \nonumber
\end{equation}
where the minimization is over all possible input states $a|0\rangle + b |1\rangle$. We say that pretty good state transfer is achieved when the worst-case fidelity $F^{2}_{\rm min}(\widetilde{\cE}_{AD}) \geq 1 -\epsilon$, for some $\epsilon > 0$. 

Let $|f^{N}_{r,s}(t)|$ and $\Theta$ refer to the amplitude and phase respectively, of the noise parameter $f^{N}_{r,s}(t) = e^{i\Theta}|f^{N}_{r,s}(t)|$ of the general quantum channel in Eq.~\eqref{eq:Kraus_ideal}. For such a channel, the worst-case fidelity depends on both the amplitude $|f^{N}_{r,s}(t)|$ as well as the phase $\Theta$. However, following the original protocol in~\cite{bose}, if we choose the magnetic fields $\{B_{k}\}$ so as to ensure that $\Theta$ is a multiple of $2\pi$, we can show that,
\begin{equation}
F^{2}_{\rm min} (\widetilde{\cE}_{AD}) = |f_{r,s}^{N}(t)|^{2}. \label{eq:fmin_noQEC}
\end{equation}
In what follows, we examine how the worst-case fidelity may be improved using techniques from quantum error correction. In particular, by obtaining a functional relationship between the worst-case fidelity and the transition amplitude using an adaptive QEC procedure, we show how the fidelity can be improved by an order of the magnitude in the noise parameter.


\section{State Transfer protocol based on adaptive QEC}\label{sec:ideal}

Given a specific form of the spin-conserving Hamiltonian in Eq.~\eqref{eq:H_gen}, it is possible to estimate $|f^{N}_{r,s}(t)|$ and $\Theta$ for a specific choice of sites $s,r$ and $t$ by making repeated measurements on the spin chain~\cite{perfect}. Knowing $\Theta$, we may apply a phase gate of the form,
\begin{equation}
 U_{\Theta} = \left( \begin{array}{cc}
1 & 0 \\
0 & e^{-i\Theta} 
\end{array} \right),  \label{eq:theta-gate}
\end{equation}
to change the encoding basis to $\{|0\rangle, e^{-i\Theta}|1\rangle\}$. In this rotated basis, the channel in Eq.~\eqref{eq:Kraus_ideal} is identical to the amplitude-damping channel described in Eq.~\eqref{eq:ampdamp}. At the level of the Hamiltonian, this is the same as choosing the field strengths $\{B_{k}\}$ so as to make the phase $\Theta$ trivial. Indeed, by making an appropriate choice of magnetic fields, it is always possible to transfom the spin-preserving Hamiltonian in Eq.~\eqref{eq:H_gen} into an $XXX$ interaction as in Eq.~\eqref{eq:Heisenberg_B}  (see~\cite{kayreview}) and hence map the underlying noise channel to an amplitude-damping channel. 

One na\"ive approach to improving the fidelity of state transfer is to therefore first apply the $U_\Theta$-gate and then use any of the well known QEC protocols which correct for amplitude-damping noise~\cite{leung, hui_prabha, fletcher_codes, AD_reliable2017}.  However, such an approach fails in the presence of disorder. When we consider a disordered $1$-d spin chain wherein either the  couplings $\{J_{k}, \tilde{J}_{k}\}$ or the fields $\{B_{k}\}$  in Eq.~\eqref{eq:H_gen} maybe random, the underlying noise channel is stochastic. The two real parameters $|f^{N}_{r,s}(t)|$ and $\Theta$ characterizing the noise in the channel vary with each disorder realization, and hence an encoding procedure that relies on knowledge of a specific realization of $\Theta$ is not useful. Moreover, implementing a phase gate as in Eq.~\eqref{eq:theta-gate} based on the disorder-averaged value of $\Theta$ does not help -- such a phase gate will no longer cancel out the arbitrary (random) phase in Eq.~\eqref{eq:Kraus_ideal} and we do not obtain an amplitude-damping channel in the rotated basis after the action of the phase gate.  

We would therefore like to tackle the problem of correcting for the more general noise channel in Eq.~\eqref{eq:Kraus_ideal}. Taking inspiration from the structural similarity to the amplitude-damping channel, we propose a QEC protocol using an {\it approximate} $4$-qubit code~\cite{leung} along with the channel-adapted near-optimal recovery proposed in~\cite{hui_prabha}. The $[[4,1]]$ code has already been described in Eq.~\ref{eq:4qubit}. We recall here that the $4$-qubit codespace $\cC$ is realised as the span of the following pair of orthogonal states,
\begin{eqnarray}\label{eq:4qubit_4}
|0_{L}\rangle &=& \frac{1}{\sqrt{2}}\left( \, |0000\rangle + |1111\rangle \, \right),\nonumber \\ 
|1_{L}\rangle &=& \frac{1}{\sqrt{2}}\left( \, |1100\rangle + |0011\rangle \, \right) .
\end{eqnarray}
This code was shown to be {\it approximately} correctable for amplitude-damping noise, both in terms of worst-case fidelity~\cite{leung} as well as entanglement fidelity~\cite{cafaro}. The code is approximate in the sense it does not satisfy the conditions for perfect quantum error correction~\cite{nielsen}, for any single-qubit error. 

We now use the Petz recovery mentioned in Sec.~\ref{sec:Petzmap} (Chapter~\ref{Chapter2}), known to achieve better worst-case fidelity for the case of amplitude-damping channel and $4$-qubit code compared to the standard QEC procedure~\cite{hui_prabha}. Recall from Sec.~\ref{sec:Petzmap} that the Petz map $R_P$ can be described in terms of the Kraus operators of the noise $\widetilde{\cE}_{AD}$ and the projector $P$ onto the codespace, as follows,
\begin{equation}\label{eq:Petz}
\cR_{P}(.) =\sum_{i}P E_{i}^{\dagger}\widetilde{\cE}_{AD}(P)^{-1/2}(.)\widetilde{\cE}_{AD}(P)^{-1/2}E_{i}P ,
\end{equation}
where the inverse of $\widetilde{\cE}_{AD}({P})$ is taken on its support. 
\begin{figure}[H]
\flushleft \hspace*{-1cm}\includegraphics[scale=0.8]{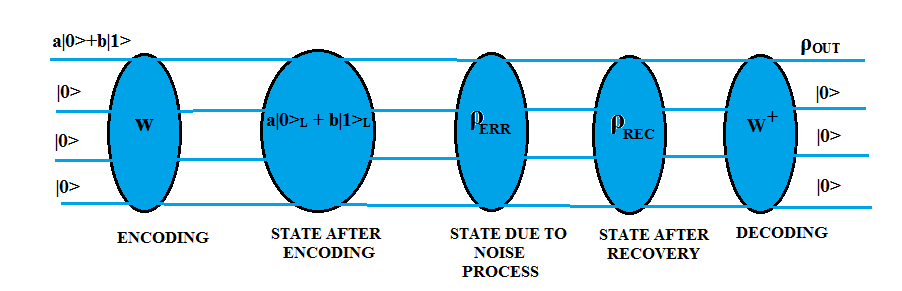}
\caption{$4$-qubit QEC on spin chains}
\label{fig:QEC_schematic}
\end{figure}

The quantum state transfer protocol with QEC is implemented using a set of $4$ unmodulated, identical, spin chains. Fig.~\ref{fig:QEC_schematic} depicts a schematic of our protocol. The initial, encoded state $|\psi_{\rm enc}\rangle$ is now an entangled state across the four chains, involving only a single spin (the $s^{\rm th}$ site)in each of the chains.
\begin{equation}\label{eq:psi_enc}
\vert\psi_{\rm enc}\rangle = a|0\rangle_{L} +b |1 \rangle_{L} .
\end{equation}
Once the initial state is prepared, the four chains are allowed to evolve in an uncoupled fashion, according to the Hamiltonian in Eq.~\eqref{eq:H_gen}. After time $t$, the state at the receiver's site is a joint state of the $r^{\rm th}$ site of the four chains, and is described by action of the map ${\widetilde{\cE}_{AD}}^{\otimes 4}$ with the time-dependent noise parameter $f_{r,s}^{N}(t)$. Thus,
\[ \rho_{\rm err} = \small{{\widetilde{\cE}_{AD}}^{\otimes 4}(\rho_{\rm enc})} = \sum_{i} E^{(4)}_{i} \rho_{\rm enc} \left( E^{(4)}_{i}\right)^{\dagger}, \]
where $E^{(4)}_{i}$ are the Kraus operators of the $4$-qubit noise channel realized as four-fold tensor products of the operators $E_{0}$ and $E_{1}$ in Eq.~\eqref{eq:Kraus_ideal}. After evolving the chains for time $t$, the recovery map $\cR_{P}^{(4)}$ is applied at the receiver's site of the four spin chains. The final state at the receiver's end after the QEC protocol is obtained as,
\begin{equation} 
\rho_{\rm rec} = \sum_{i,j} R^{(4)}_{j} E^{(4)}_{i} \rho_{\rm{enc}} \left(E^{(4)}_{i}\right)^{\dagger} \left(R^{(4)}_{j}\right)^{\dagger}, \nonumber
\end{equation}
with the Kraus operators $R^{(4)}_{i}$ given by, 
\begin{equation}
R^{(4)}_{i} = P\small{\left(E^{(4)}_{i}\right)^{\dagger}}\widetilde{\cE}_{AD}^{\otimes 4}(P)^{-1/2} . \label{eq:4qubit_petz}
\end{equation}
As usual, $P\equiv |0_{L}\rangle\langle 0_{L}| + |1_{L}\rangle\langle 1_{L}|$ is the projector onto the $4$-qubit space described in Eq.~\eqref{eq:4qubit_4}. The fidelity of the $4$-chain quantum state transfer protocol is then given by,
\[F^{2}_{\rm min} \left( \cR_{P}^{(4)}\circ\widetilde{\cE}_{AD}^{\otimes 4}, \cC \right) \equiv \min_{a,b}\langle \psi_{\rm enc}\vert \rho_{\rm rec}\vert\psi_{\rm enc}\rangle.,\]
where the minimization is over all states in the codespace $\cC$. As before, pretty good transfer is achieved when the worst-case fidelity is high, that is, $F^{2}_{\rm min}\left( \cR_{P}^{(4)}\circ\widetilde{\cE}_{AD}^{\otimes 4}, \cC \right) \geq 1 - \epsilon$, for $\epsilon > 0$. We now present a bound on the fidelity of state transfer using our adaptive QEC protocol, in terms of the transition amplitude $f^{N}_{r,s,}(t)$.

\begin{theorem}\label{thm:aqec_fid}
The fidelity of quantum state transfer from site $s$ to site $r$ under a spin-conserving Hamiltonian as in Eq.~\eqref{eq:H_gen}, using the $4$-qubit code $\cC$ and adaptive recovery $\cR_{P}^{(4)}$ at time $t$, is given by,
\begin{equation}
F_{\rm min}^{2} \left( \cR_{P}^{(4)}\circ\widetilde{\cE}_{AD}^{\otimes 4}, \cC \right) \approx 1- \frac{7p^{2}}{4}+ O(p^{3}), \label{eq:4qubit_fid}
\end{equation}
where $p= 1-|f_{r,s}^{N}(t)|^{2}$. 
\end{theorem}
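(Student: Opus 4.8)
The plan is to first strip the phase $\Theta$ out of the problem, reducing the general channel of Eq.~\eqref{eq:Kraus_ideal} to the \emph{standard} amplitude-damping channel, and only then evaluate the worst-case fidelity to order $p^{2}$. Writing $f_{r,s}^{N}(t)=e^{i\Theta}\sqrt{1-p}$ with $p=1-|f_{r,s}^{N}(t)|^{2}$, I would observe that the Kraus operators in Eq.~\eqref{eq:Kraus_ideal} factor as $E_{0}=V E_{0}^{\mathrm{AD}}$ and $E_{1}=V E_{1}^{\mathrm{AD}}$, where $V=\left(\begin{smallmatrix}1&0\\0&e^{i\Theta}\end{smallmatrix}\right)$ and $\{E_{0}^{\mathrm{AD}},E_{1}^{\mathrm{AD}}\}$ are the Kraus operators of $\cE_{AD}$ with real parameter $p$ (Eq.~\eqref{eq:ampdamp}). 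The second identity uses $V E_{1}^{\mathrm{AD}}=E_{1}^{\mathrm{AD}}$, since $V$ fixes $|0\rangle$. Hence $\widetilde{\cE}_{AD}=\mathrm{Ad}_{V}\circ\cE_{AD}$ and, tensoring, $\widetilde{\cE}_{AD}^{\otimes 4}=\mathrm{Ad}_{V^{\otimes 4}}\circ\cE_{AD}^{\otimes 4}$.

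The next step is to show that this phase unitary drops out of the recovered logical channel. Because $\widetilde{\cE}_{AD}^{\otimes 4}(P)=V^{\otimes 4}\,\cE_{AD}^{\otimes 4}(P)\,(V^{\otimes 4})^{\dagger}$, the inverse square root on the support transforms covariantly, and the Petz Kraus operators of Eq.~\eqref{eq:4qubit_petz} become $R^{(4)}_{i}=\big[P(E^{\mathrm{AD}}_{i})^{\dagger}\,\cE_{AD}^{\otimes 4}(P)^{-1/2}\big](V^{\otimes 4})^{\dagger}$. Composing recovery with noise, each product $R^{(4)}_{j}E^{(4)}_{i}$ contains the factor $(V^{\otimes 4})^{\dagger}V^{\otimes 4}=\id$, so the phase cancels completely and $\cR_{P}^{(4)}\circ\widetilde{\cE}_{AD}^{\otimes 4}\circ\cP$ coincides \emph{exactly}, on the code space, with the corresponding map for standard amplitude damping at parameter $p$. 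This is the conceptual heart of the argument: it establishes that the worst-case fidelity is independent of $\Theta$, which is precisely why the protocol survives the disordered (random-$\Theta$) regime where the naive $U_{\Theta}$-gate correction of Eq.~\eqref{eq:theta-gate} fails.

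It then remains to evaluate this $\Theta$-free fidelity to order $p^{2}$. Here I would invoke the qubit-code reduction of Chapter~\ref{Chapter3}: the logical channel $\cM=\cR_{P}^{(4)}\circ\cE_{AD}^{\otimes 4}\circ\cP$ is unital and trace-preserving, hence fixed by the $3\times 3$ block $T$ of Eq.~\eqref{eq:tmatrix}, and $F^{2}_{\min}=\tfrac12\big(1+t_{\min}\big)$ where $t_{\min}$ is the least eigenvalue of $T_{\mathrm{sym}}=\tfrac12(T+T^{\mathrm T})$, by Eq.~\eqref{eq:fidelityloss}. I would compute $\cE_{AD}^{\otimes 4}(P)$ from the sixteen product Kraus operators $E^{\mathrm{AD}}_{\vec b}$, expand $\cE_{AD}^{\otimes 4}(P)^{-1/2}$ perturbatively in $p$, and assemble the entries $M_{\alpha\beta}=\tfrac12\tr\{\sigma_{\alpha}\cM(\sigma_{\beta})\}$ using the Pauli operators of Eq.~\eqref{eq:Pauli} built from $|0_{L}\rangle,|1_{L}\rangle$. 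The weight structure of the codewords (within each logical ket the basis states agree up to the single, correctable damping event) forces the $O(p)$ terms to cancel, leaving the leading deviation at $O(p^{2})$; collecting these yields $t_{\min}=1-\tfrac72 p^{2}+O(p^{3})$ and hence Eq.~\eqref{eq:4qubit_fid}.

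The main obstacle is entirely in this last, quantitative step: one must carry $\cE_{AD}^{\otimes 4}(P)^{-1/2}$ to second order, track the coherence (off-diagonal) terms that the Petz map only approximately restores, and identify which Bloch-sphere direction actually realizes $t_{\min}$, since the minimizing axis is what pins the constant to $7/4$ rather than leaving a generic $1-O(p^{2})$. The covariance argument of the first two paragraphs is exact and structural; the difficulty lies solely in the $O(p^{2})$ bookkeeping that produces the precise coefficient.
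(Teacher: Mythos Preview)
Your proposal is correct. The $\Theta$-cancellation argument you give --- factoring $E_i=V E_i^{\mathrm{AD}}$ with $V=\mathrm{diag}(1,e^{i\Theta})$, so that $\widetilde{\cE}_{AD}^{\otimes 4}=\mathrm{Ad}_{V^{\otimes 4}}\circ\cE_{AD}^{\otimes 4}$ and hence $\cR_P^{(4)}=\cR_P^{(4),\mathrm{AD}}\circ\mathrm{Ad}_{(V^{\otimes 4})^{\dagger}}$, with the two adjoint maps annihilating one another in the composite --- is genuinely different from, and cleaner than, what the paper does. The paper writes $\widetilde{\cE}_{AD}^{\otimes 4}(P)^{-1/2}$ explicitly in the $4$-qubit computational basis, tracks where the $e^{\pm 4i\Theta}$ factors sit, and then checks Kraus-operator-by-Kraus-operator that sandwiching by $(E^{(4)}_{j})^{\dagger}(\cdot)E^{(4)}_{i}$ removes all phases; it also proves (Appendix~\ref{sec:theta_cancellation}) a general statement via a phase-weighted operator basis. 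Your covariance argument subsumes both in one line and makes transparent \emph{why} the Petz construction adapts automatically to the rotated channel. For the $O(p^{2})$ coefficient, the paper minimizes $F^{2}$ directly over $(\theta,\phi)$ and lands at $\{\theta,\phi\}=\{(2n+1)\pi/2,(2n+1)\pi/2\}$; your $T$-matrix/eigenvalue route from Chapter~\ref{Chapter3} is equivalent and will isolate the same minimizing Bloch axis, so the only remaining work in either approach is the honest second-order expansion you flag.
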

\begin{proof}
We first rewrite the Kraus operators given in Eq.~\eqref{eq:Kraus_ideal}, as, 
\begin{eqnarray}
 E_{0} &=& |0 \rangle \langle 0| +|f_{r,s}^{N}(t)|e^{i \Theta} |1\rangle \langle 1| \nonumber \\ \nonumber
 E_{1} &=& |0\rangle \langle 1| \sqrt{1-|f_{r,s}^{N}(t)|^{2}} ,
\end{eqnarray}
where, $|f_{r,s}^{N}(t)|$ and $\Theta$ are the absolute value and phase of the complex-valued transition amplitude $f_{r,s}^{N}(t)$. The state after the $4$-qubit recovery map is then given by,
\[ \rho _{\rm rec} = \left(\small{\cR_{P}^{(4)}\circ\widetilde{\cE}_{AD}^{\otimes 4}}\right) \left(\rho_{\rm enc}\right).\]
The composite map $\left(\cR_{P}^{(4)}\circ\widetilde{\cE}_{AD}^{\otimes 4}\right)$ comprising noise and recovery has Kraus operators of the form,
\begin{equation}
 \small{P\left(E^{(4)}_{j}\right)^{\dagger}}\widetilde{\cE}_{AD}^{\otimes 4}(P)^{-1/2}E^{(4)}_{i}P. \label{eq:kraus_composite}
 \end{equation}
The key step in obtaining the desired fidelity is to show that the Kraus operators of the composite map written above are independent of $\Theta$. First, we write out $\small{\widetilde{\cE}_{AD}^{\otimes 4}(P)^{-1/2}}$ in the (standard) computational basis of the $4$-qubit space. 
\begin{eqnarray}
&& \small{\widetilde{\cE}_{AD}^{\otimes 4}(P)^{-1/2}} = \sum_{i=1}^{16}\cG_{i}|i\rangle\langle i| + e^{-4 i \Theta } \cG_{17}|0000\rangle\langle 1111| \nonumber \\
 &+& e^{i 4\Theta} \cG_{17}|1111\rangle\langle0000| + \cG_{18}(|1100\rangle \langle0011| +|0011\rangle \langle 1100|), \nonumber
\end{eqnarray}
where $\{\cG_{i}\}$ are polynomial functions of the transition amplitude $|f_{r,s}^{N}(t)|$. The $\Theta$-dependence in this pseudo-inverse operator occurs only in the span of $\{|0000\rangle, |1111\rangle\}$. Since $\small{\widetilde{\cE}_{AD}^{\otimes 4}(P)^{-1/2}}$ is sandwiched between the Kraus operators of the $4$-qubit channel and their adjoints, we also write down the Kraus operators $\{E^{(4)}_{i}\}$ in the computational basis. Then, an explicit computation reveals that the $\Theta$-dependence gets conjugated out for each of the Kraus operators in Eq.~\eqref{eq:kraus_composite}.  We refer to Appendix~\ref{sec:E(P)} for the details of this calculation. Furthermore, we also prove a general theorem that states that $\cR_P^{(4)} \circ \widetilde{\cE}_{AD}^{\otimes 4}$ is independent of $\Theta$ for any choice of the $\cC$ $\in$ $\ \cH_2^{\otimes 4}$ in Appendix~\ref{sec:theta_cancellation}.


\noindent Hence the final state after noise and recovery $\rho_{\rm rec}$ can be expressed as a linear sum of terms that are independent of $\Theta$. Since the parameter $\Theta$ is effectively suppressed, the fidelity after using $4$-qubit code and the universal recovery in Eq.~\eqref{eq:Kraus_ideal}, is purely a function of $p = 1-|f_{r,s}^{N}(t)|^{2}$.  

\noindent The fidelity corresponding to the initial state $|\psi_{\rm enc}\rangle = a|0_{L}\rangle + b|1_{L}\rangle$ can thus be obtained as,
\begin{eqnarray} \label{eq:4qubitmin}
&& F^{2} ( \cR_{P}^{( 4)}\circ\widetilde{\cE}_{AD}^{\otimes 4},\cC)  \nonumber \\
&=& 1 - p^{2}\left((|a|^{2} - |b|^{2})^{2} - ((ba^{*})^{2}+(ab^{*})^2)+ 5 |a|^{2}|b|^{2}\right) \nonumber \\
&& + \, O(p^{3}) ,
\end{eqnarray}
where $O(p^{3})$ refers to terms of order $p^{3}$ and higher. Parameterizing $a$ and $b$ as $a=\cos{\frac{\theta}{2}}$, $b=e^{-i \phi}\sin{\frac{\theta}{2}}$, the fidelity attains its minimum value at $\{\theta,\phi\}= \{\frac{(2n+1)\pi}{2},\frac{(2n+1)\pi}{2}\}$ ($n=1,2,\ldots$), so that the worst-case fidelity over the $4$-qubit code $\cC$ is given by,
\[
F_{\rm min}^{2} \left( \cR_{P}^{(4)}\circ\widetilde{\cE}_{AD}^{\otimes 4}, \cC \right) \approx 1- \frac{7p^{2}}{4}+ O(p^{3}).
\] 
\end{proof}

Our result shows that using the adaptive recovery in conjunction with the approximate code leads to a fidelity that is independent of the phase $\Theta$ of the complex noise parameter $f^{N}_{r,s}(t)$. Thus, to optimize the fidelity of state transfer between the $s^{\rm th}$ and $r^{\rm th}$ site of a chain of $N$ spins evolving according to the Hamiltonian in Eq.~\eqref{eq:H_gen}, we simply need to find the time $t$ at which $\vert f^{N}_{r,s}(t)\vert^{2}$ is maximized. Recall that the worst-case fidelity without QEC (using the single chain protocol) is linear in the parameter $p$, as observed in  Eq.~\eqref{eq:fmin_noQEC}. Thus we see an $O(p)$ improvement in fidelity with QEC, as expected. 

Furthermore, our estimate of the worst-case fidelity implies that so long as the noise strength $p$ is such that $1 - (7/4)p^{2} > 1- p$, the adaptive QEC protocol achieves better fidelity than the single chain protocol without QEC. This constraints the noise strength $p$ to satisfy $0<p< (4/7) \approx 0.57$. This in turn implies a threshold for the transition amplitude, namely, $|f_{r,s}^{N}(t)|^{2} > 0.43$, below which our adaptive QEC protocol will not offer any improvement in the fidelity of state transfer. 


\section{Results for the $1$-d Heisenberg Chain} \label{sec:Heisenberg}

As a simple example to illustrate the performance of the adaptive QEC protocol, we now consider a special case of the Hamiltonian in Eq.~\eqref{eq:H_gen}, namely, an $N$-length, ideal Heisenberg chain, with $J_{k} = \tilde{J}_{k} =  J/2 (J>0)$ and $B_{k} =0$, for all $k$. This is also often referred to as the $XXX$-chain in the literature. Setting $J=1$ without loss of generality, we present numerical results on the fidelity of state transfer from the first ($s=1$) to the $N^{\rm th}$ ($r=N$) site.
\begin{figure}[H]
\centering
\includegraphics[width=1\textwidth]{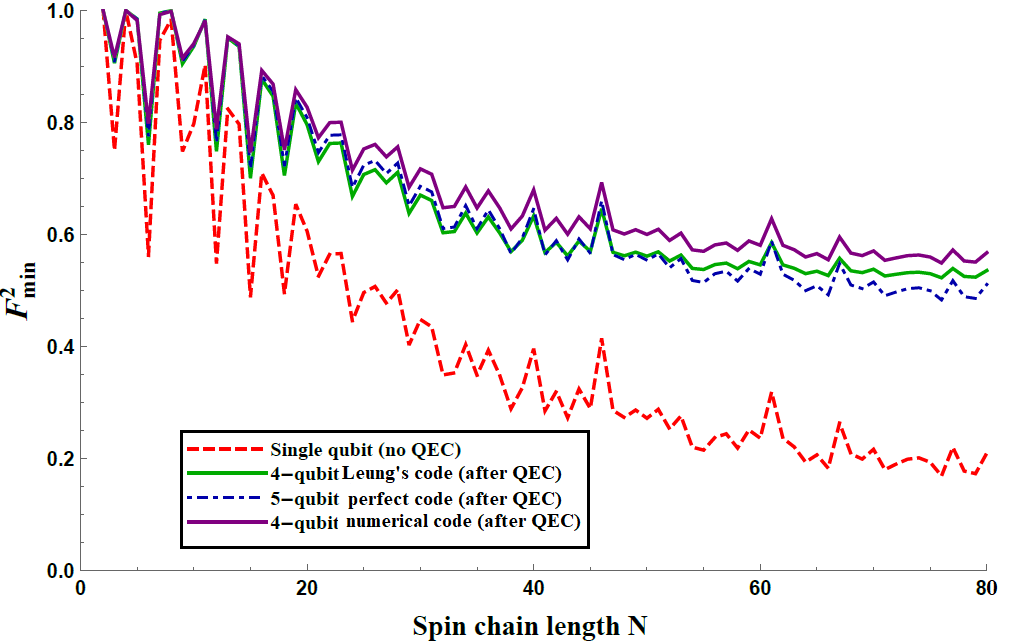}
\caption{Worst-case fidelity as a function of chain length $N$.}
\label{fig:fmin_QEC}
\end{figure}
Fig.~\ref{fig:fmin_QEC} compares the performance of state transfer protocols with and without QEC. In particular, it compares the performance of our $4$-chain state transfer protocol using the $4$-qubit code in \cite{leung} as well as a $4$-qubit code given in Appendix~\ref{sec:numerical_code}, obtained via our numerical search procedure described in Chapter~\ref{Chapter3}, with the single-chain (no QEC) protocol~\cite{bose} and the $5$-chain protocol proposed in~\cite{allcock}. For each $N$, we plot the fidelity of state transfer from the $1^{\rm st}$ site to the $N^{\rm th}$ site on a $N$-length spin chain, after a time $t^{*}$ chosen such that $|f_{N,1}^{N}(t)|$ is maximum at $t=t^{*}$, for $0<t<4000/J$.

From the plot we see that the QEC-based protocols achieve pretty good state transfer over longer distances than the single chain protocol. Furthermore, using {\it approximate} QEC it is possible to achieve as high as fidelity as with the standard $5$-qubit code, using fewer spin chains. Specifically, in the regime of small noise parameter $p$, it is easy to compute that the worst-case fidelity obtained using the $5$-qubit code is,
\begin{equation}
F^{2}_{\min} \approx 1-\frac{15p^{2}}{8}+ O(p^{3}) . \label{eq:5qubit_fid}
\end{equation}
Correspondingly, a $5$-chain protocol performs better than the single chain protocol when $0< p < (8/15) \approx 0.53$, implying that the transition amplitude should satisfy $|f_{r,s}^{N}(t)|^{2} > 0.47$, which is a higher threshold than that required by our adaptive QEC protocol.

For the ideal Heisenberg chain, it was recently shown that~\cite{godsil}, there always exists a time $t$ at which $|f_{1,N}^{N}(t)|^{2} > 1 - \epsilon$ if and only if the length of the chain is a power of $2$, that is, $N = 2^{m}$. In other words, pretty good state transfer is always possible between the ends of a Heisenberg spin chain whose length $N$ is of the form $N = 2^{m} (m>1)$.  We may therefore consider improving the performance of our QEC-based protocol by repeating the error correction procedure every $2^{m}$ sites. Specifically, we can achieve pretty good state transfer over a chain of arbitrary length $L$, by stitching together smaller chains whose lengths are of the form $N =2^{m}$. At every stage of the repeated QEC protocol, there are exactly $2^{m}$ interacting spins and the rest of the spin-spin-interactions are turned off. 
 \begin{figure}[H] 
 \centering
 \includegraphics[width=1\textwidth]{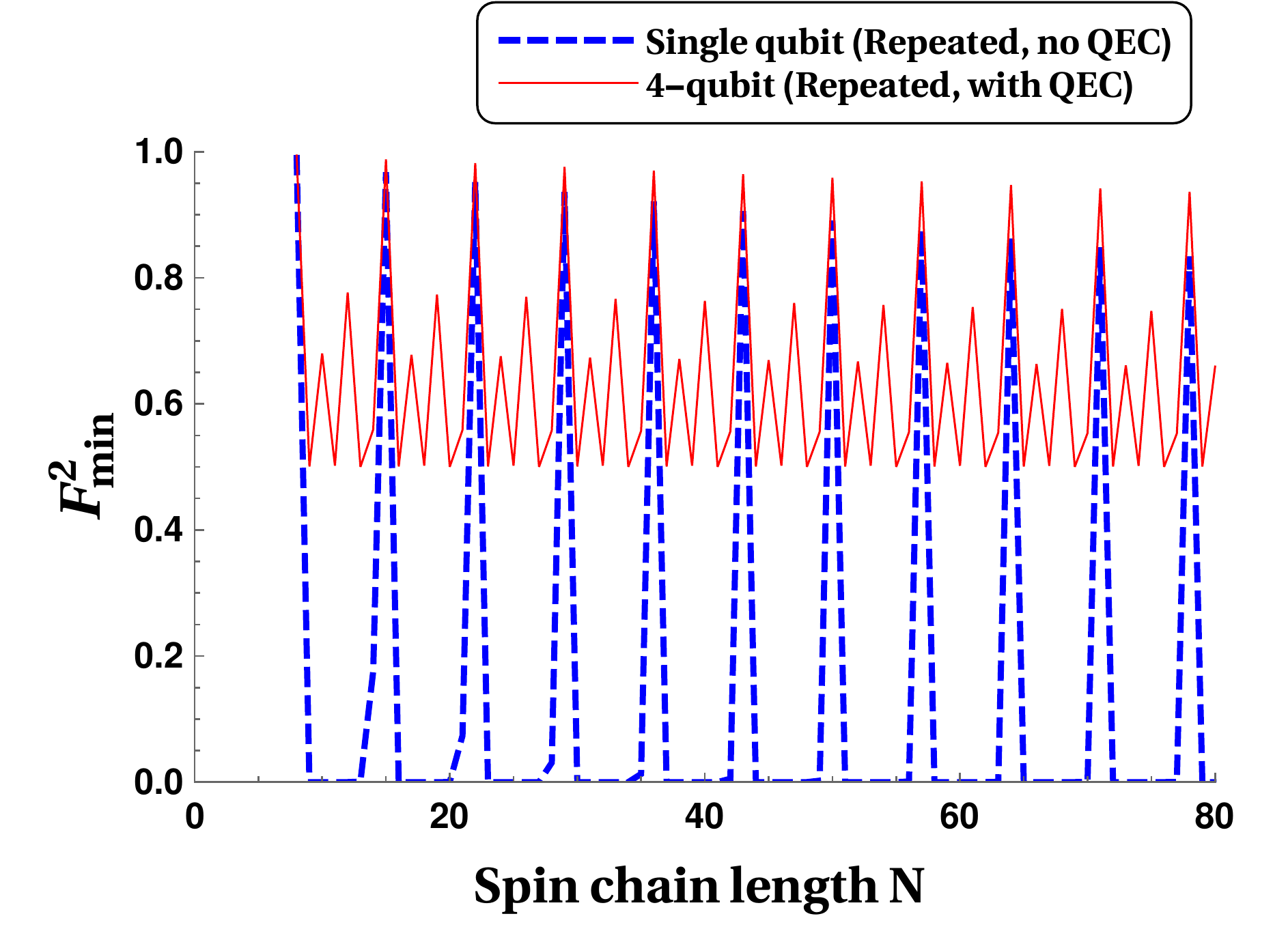}
 \caption{Worst-case fidelity using repeated QEC}
 \label{fig:repeated_QEC}
 \end{figure}
Fig.~\ref{fig:repeated_QEC} shows an example of the resulting improvement in fidelity when the QEC protocol is repeated every $8$ sites. For comparison, we plot the worst-case fidelity obtained by stitching together a sequence of length-$8$ chains, without QEC. The repeated QEC protocol proceeds as follows. We first implement our QEC protocol for an $8$-spin chain, evolving for time $t^{*}$ at which $|f_{8,1}^8 (t)|$ maximizes. We repeat this procedure some $k$ times, where $k$ is the largest integer such that $8k < N$ and finally perform QEC for the remaining $N-7k$ sites for the same waiting time $t^{*}$. Such a repeated QEC protocol indeed enables pretty good transfer for much  longer lengths, as seen in the plot.  

More generally, if $F^{2}_{\rm min} \approx 1 - \alpha p^{2}$ is the fidelity of the single-shot QEC protocol, repeating the procedure $k$ times gives us a fidelity of $F^{2}_{\min} = 1-(p_{\rm{new}})$ , with,
\begin{eqnarray}
p_{\rm{new}}= (1-(1-\alpha p^{2})^{k}) \nonumber
\end{eqnarray}
where $p_{\rm{new}}$ is the noise parameter obtained after repeating QEC $k$ times.

\section{Quantum state transfer over a disordered Heisenberg chain}\label{sec:disorder}

Moving away from an ideal spin chain with a fixed, uniform coupling between successive spins, we now study state transfer over a disordered $XXX$ chain, where the spin-spin couplings are randomly drawn from some distribution. It is well known that the presence of disorder in a $1$-d spin chain leads to the phenomenon of localization~\cite{anderson} of information close to one end of the chain. It is therefore a challenging task to identify protocols which achieve perfect or pretty good transfer over disordered spin chains, overcoming the effects of localization. 

Past work on disordered chains has primarily focused on the $XX$ chain. Starting with a modulated chain that admits perfect state transfer, both random magnetic field and random couplings have been studied~\cite{chiara}. Alternately, an unmodulated chain with random couplings at all except the sender and receiver sites has also been studied~\cite{ashhab}. 

When viewed in the quantum channel picture, the presence of disorder becomes an additional source of noise. The role of QEC in overcoming the effects of disorder has been studied both for the $XX$~\cite{kay} as well as the Heisenberg chains~\cite{allcock}. The QEC protocol for a noisy $XX$ chain with random couplings involves encoding into multiple spins on a single chain using modified CSS codes~\cite{kay}. The QEC protocol in~\cite{allcock} encodes into multiple identical, uncoupled chains using the standard $5$-qubit code, while also requiring access to multiple spins at the sender and receiver ends of each of the chains. Furthermore, the protocol based on the $5$-qubit code involves choosing an encoding based on the phase $\Theta$ of the transition amplitude (as explained in Sec.~\ref{sec:ideal}), which in turn is specific to the disorder realization. This makes the QEC procedure hard to implement in a practical sense. 

Here, we show how the channel-adapted QEC procedure described in Sec.~\ref{sec:ideal} can be used to achieve pretty good state transfer over an $XXX$ chain with random couplings. As before, we quantify the performance of the state transfer protocol in terms of the fidelity between the initial and final states. When the underlying quantum channel is stochastic, as in the case of a disordered chain, we use the {\it disorder-averaged} worst-case fidelity $\langle F^{2}_{\rm min}\rangle_{\delta}$, to characterize the performance of the state transfer protocol. We say that pretty good state transfer is achieved by a certain choice of code $\cC$ and recovery $\cR_{P}$ when the corresponding disorder-averaged fidelity $\langle F^{2}_{\rm min}\rangle_{\delta} \geq 1 -\epsilon$, for some $\epsilon > 0$. 

We consider a disordered Heisenberg chain with couplings $J_{k}$= $\frac{\overline{J}}{2}(1+\Delta_{k})$, where $\Delta_{k}$ are independent, identically distributed random variables drawn from a uniform distribution between $ \left[ -\delta,\delta \right ] $ and $\overline{J}$ is the mean value of the coupling strength, which we may set to $1$, without loss of generality.  Note that such a Hamiltonian conserves the total spin and hence falls within the universality class discussed in Sec.~\ref{sec:prelim}. 

Consider a state transfer protocol, where the sender wishes to transmit the state $|\psi_{\rm in}\rangle = a|0\rangle + b|1\rangle$ from the $s^{\rm th}$ site to the $r^{th}$ site via the natural dynamics of the chain. As before, the final state at the receiver's site, tracing out the other spins can be realized as the action of a quantum channel $\widetilde{\cE}_{AD}$,
\[ \rho_{\rm out} = \widetilde{\cE}_{AD}(\rho_{\rm in}) = \sum_{k}E_{k}\rho_{\rm in} E_{k}^{\dagger}, \]
with the same Kraus operators $\{E_{0}, E_{1}\}$ as in Eq.~\eqref{eq:Kraus_ideal}. The key difference however is in the nature of the noise parameter $p \equiv 1 - |f_{r,s}^{N}(t,\{\Delta_{k} \})|^{2}$: in the case of the {\it disordered} chain, the transition amplitude $f_{r,s}^{N}(t, \{\Delta_{k}\} ) $ between site $s$ and $r$ for a chain of length $N$ allowed to evolve for a time $t$, is a random variable whose value depends on the specific realization of the disorder variables $\{\Delta_{k}\}$. The distribution of $f_{r,s}^{N}(t, \{\Delta_{k}\} )$ for given set of $r,s,N,t$ values depends on the distribution over which the disorder variables $\{\Delta_{k}\}$ are sampled. To illustrate our point, we specifically consider below the case where the coupling strengths $\{\Delta_{k}\}$ are independently sampled from a uniform distribution.

\subsection{Transition amplitude in the presence of disorder}\label{sec:transAmp_disorder}

The Heisenberg Hamiltonian $\mathcal{H}$ with static disorder in the coupling strengths, has the form,
\begin{equation}\label{eq:H_dis}
\mathcal{H}_{\rm dis} = -\sum_{k}\frac{\overline{J}(1+\Delta_{k})}{2}(\sigma^{k}_{x}\sigma^{k+1}_{x}+\sigma^{k}_{y}\sigma^{k+1}_{y}+ \sigma_{z}^{k}\sigma^{k+1}_{z}).
\end{equation}
Here, the effect of disorder is introduced via the i.i.d. random variables $\{\Delta_{i}\}$ which take values over a uniform distribution between $\left [-\delta,\delta \right ]$. The quantity $\delta$ is called the disorder strength, and  $\overline{J}$ is the mean value of the coupling coefficient. We may view the disordered Hamiltonian as a sum of the form $\mathcal{H}_{\rm dis} =\mathcal{H}_{o}+ \mathcal{H}_{\delta}$, where $\mathcal{H}_{o}$ denotes the ideal $XXX$ Hamiltonian studied in the previous section and $\mathcal{H}_{\delta}$ is given by,
\[ \mathcal{H}_{\delta} = -\frac{\overline{J}}{2} \sum_{k}\Delta_{k} \overrightarrow{\sigma^{k}} \cdot \overrightarrow{\sigma^{k+1}}. \]
$\mathcal{H}_{\delta}$ captures the effect of disorder in the spin chain and can be treated as a perturbation of the Hamiltonian $\cH_{0}$. Since $[\cH_{0},\cH_{\delta}] \neq 0$, the transition amplitude maybe evaluated using the so-called time-ordered expansion, also referred to as the Dyson-series~\cite{dyson}. 

Specifically, the transition amplitude between the $r^{\rm th}$ and $s^{\rm th}$ site for the disordered Hamiltonian $\mathcal{H}_{\rm dis}$ in Eq.~\eqref{eq:H_dis} is given by (setting $\hbar = 1$),
\begin{eqnarray}
&& f^{N}_{r,s}(t, \{\Delta_{k}\} \,) \nonumber \\
&=& \langle \textbf{r} | e^{- i (\mathcal{H}_{o}+ \mathcal{H}_{\delta})t} |\textbf{s}\rangle \nonumber \\ 
&=& \langle \textbf{r}| e^{-i \mathcal{H}_{o} t}\mathcal{ T}\left[\exp{\left(-i\int_{0}^{t}e^{i\mathcal{ H}_{o} t'}\,\mathcal{H_{\delta}}\,e^{-i \mathcal{H}_{o} t'}dt'\right)}\right]| \textbf{s}\rangle \nonumber \\  
&=& f^{N}_{r,s}(t) - i\sum_{k=1}^{N} f^{N}_{r,k}(t) \int_{0}^{t}\langle \textbf{k}|e^{i \mathcal{H}_{o} t'} \mathcal{H}_{\delta}e^{-i\mathcal{ H}_{o} t'}|\textbf{s}\rangle dt' \nonumber \\
&& + \;  O(H_{\delta}^{2}), \nonumber
\end{eqnarray}
where $\cT$ is the time-ordering operator which has been expanded to first order in the perturbation in the final equation. As before, $f^{N}_{r,k}(t)$ denotes the transition amplitude between the $r^{\rm th}$ and $k^{\rm th}$ sites in the case of an ideal chain of length $N$, without disorder.  

Thus, using the time-ordered expansion, the transition amplitude in the presence of disorder can be evaluated as a perturbation around the zero-disorder value $f^{N}_{r,s}(t)$, of the form,
\begin{equation}
 f^{N}_{r,s}(t,\{\Delta_{k}\}) = f^{N}_{r,s}(t)+ \sum_{i=1}^{N-1} c^{N}_{i}(t) \Delta_{i} + \sum_{i,j=1}^{N-1}d^{N}_{ij} \Delta_{i}\Delta_{j} + \ldots . \label{eq:transAmp_final}
 \end{equation}
The explicit forms of the complex coefficients $c^{N}_{i}(t)$ are given in Eq.~\eqref{eq:c-coeff} in Appendix~\ref{sec:transAmp_dist}. A similar approach was used in~\cite{chiara} to study deviations from perfect state transfer due to the presence of disorder in an $XX$ chain.

Using the form of the transition amplitude stated in Eq.~\eqref{eq:transAmp_final}, we obtain the distribution of real part of the transition amplitude $x \equiv \texttt{Re}[f^{N}_{r,s}(t,\{\Delta_{k}\})]\,$, up to first order in the perturbation $H_{\delta}$, as,
\begin{equation}
 \cP^{\delta, N,t}(x) \propto \sum_{j=1}^{2^{N-1}} (-1)^{u_{j}}(q_{j})^{N-2}\,{\rm Sign}[q_{j}], 
 \end{equation}
where  $u_{j} \in [0,1]$ and the $Sign$ function is defined as
\begin{equation}
 {\rm Sign }(x-a) = \left\lbrace \begin{array}{cc}
   -1 , & x< a , \\ 
   0 , & x=a , \\ 
   1, & x>a . \end{array} \right. \nonumber
\end{equation}
The functions $q_{j} \left(x,\texttt{Re}[f^{N}_{r,s}(t)], \{\texttt{Re}[c^{N}_{i}(t)]\} \right)$ are linear combinations of the form,
\begin{equation}
 q_{j} \equiv x - \texttt{Re}[f^{N}_{r,s}(t)] + \delta\sum_{i=1}^{N-1} (-1)^{r_{i}^{j}}\texttt{Re}[c^{N}_{i}(t)]  , 
 \end{equation}
where $ r_{i}^{j} \in [0,1] \, \forall \, i=1,\ldots, N-1$ and $\texttt{Re}[c^{N}_{i}(t)]$ denote the real part of the coefficients in Eq.~\ref{eq:transAmp_final}. Since there are $N-1$ such coefficients for a spin chain of length $N$, the sum over $i$ ranges from $1$ to $N-1$. There are $2^{N-1}$ distinct linear combinations of the form $q_{j}$, corresponding to the $2^{N-1}$ distinct $(N-1)$-bit binary strings parameterized by $r^{j}$, so that the sum over $j$ runs from $1$ to $2^{N-1}$. The form of the distribution is identical for the imaginary part $\texttt{Im}[f^{N}_{r,s}(t,\{\Delta_{k}\})]$, with the real parts of $\{c^{N}_{i}(t)\}$ and $f^{N}_{r,s}(t)$ replaced by their imaginary parts. We refer to Eqs.~\eqref{eq:dist_real2},~\eqref{eq:dist_im2} in Appendix~\ref{sec:transAmp_dist} for a detailed description of the distributions of the real and imaginary parts of the disordered transition amplitude.

\begin{figure} 
\centering
\includegraphics[width=1\textwidth]{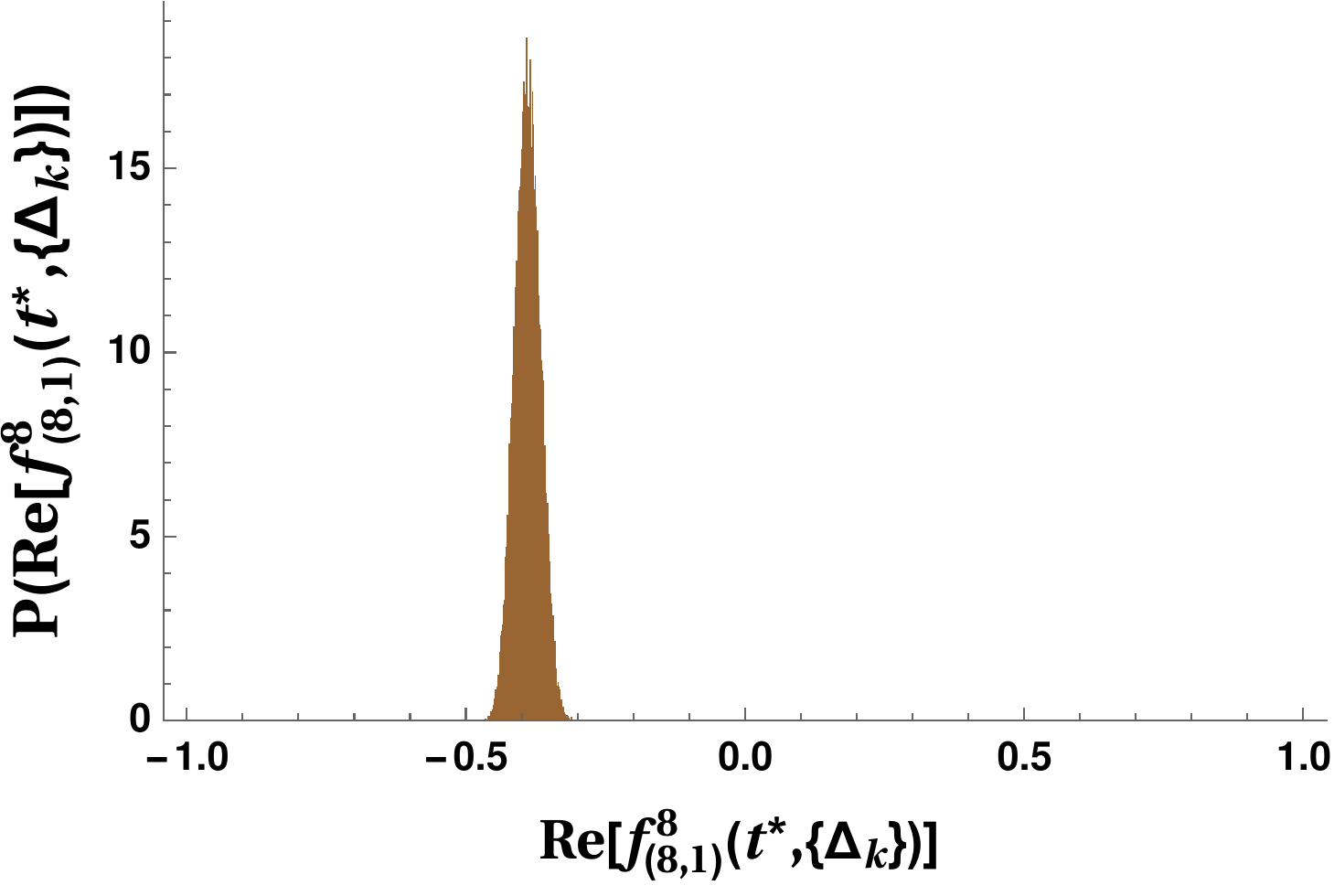}

\caption{Distribution of $\texttt{Re}[f_{8,1}^{8}(t^{*}, \{\Delta_{k}\})]$ 
for different disorder realizations, drawn from a uniform distribution with disorder strength $\delta = 0.001$.}
\label{fig:f_dist1}
\end{figure}

The key salient feature we observe from calculating the distribution functions above is that the limiting distribution in the case of no disorder ($\delta \rightarrow 0$), is indeed a delta distribution peaked around $f^{N}_{r,s}(t)$. Furthermore, in Appendix~\ref{sec:transAmp_dist} we also explicitly evaluate the mean and variance of $f^{N}_{r,s}(t,\{\Delta_{k}\})$ and show that the mean is equal to the zero-disorder value of $f^{N}_{r,s}(t)$, up to $O(\delta^{2})$ (see Eq.~\ref{eq:trans_ampAvg}). The variance goes as $O(\delta^{2})$, as shown in Eq.~\eqref{eq:variance}, making it vanishingly small in the limit of small $\delta$. This observation leads us to propose a modified QEC protocol for state transfer over disordered $XXX$ chains, using an adaptive recovery $\cR_{P}^{\rm avg}$ based on the {\it disorder-averaged} transition amplitude $\langle f_{r,s}^{N}(t,\{\Delta_{k}\})\rangle_{\delta}$.

\begin{figure}  
\centering
\includegraphics[width=1\textwidth]{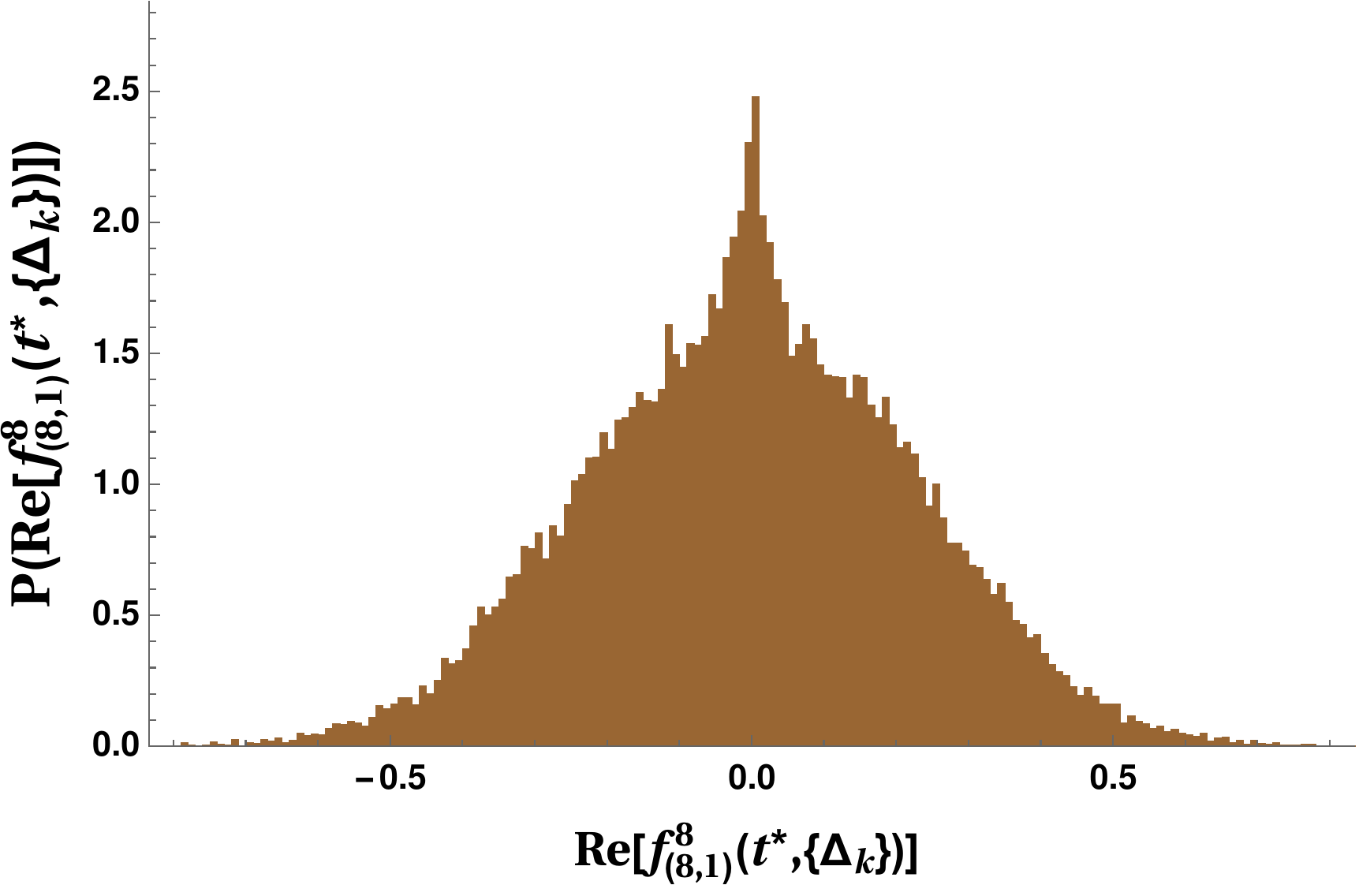}

\caption{Distribution of $\texttt{Re}[f_{8,1}^{8}(t^{*}, \{\Delta_{k}\})]$ 
for different disorder realizations, with disorder strength $\delta = 1$.}
\label{fig:f_dist2}
\end{figure}

The analysis presented thus far holds for any pair of sites $(s,r)$ on a spin chain of length $N$. As an example, we consider the specific case of an $8$-length chain, with $s=1$ and $r=8$. We plot the distribution of the real part of the transition amplitude at some fixed time $t^{*}$, for disorder strengths $\delta = 0.001$ and $\delta=1$, in Figs.~\ref{fig:f_dist1} and~\ref{fig:f_dist2} respectively. We see that when the disorder strength is small enough, the transition amplitude is indeed distributed like a delta function peaked around the zero-disorder value. For large values of $\delta$, the distribution spreads out quite a bit and its mean also shifts closer to zero, giving rise to a very small transition amplitude. The corresponding figures for $\texttt{Im}[f_{8,1}^{8}(t^{*}, \{\Delta_{k}\})]$ are presented in Appendix~\ref{sec:transAmp_dist}. 

\subsection{Adaptive QEC for a $1$-d disordered chain}\label{sec:aqec_diorder}
To summarize, the quantum channel for state transfer in the presence of disorder has the same structure as that of the ideal chain, but with a stochastic noise parameter $p \equiv 1 - |f^{N}_{r,s}(t,\{ \Delta_{k} \})|^{2}$, since the transition amplitude $f_{r,s}^{N}(t,\{\Delta_{k}\})$ is now a random variable whose value depends on the random couplings $\{\Delta_{k}\}$. However, as discussed in Sec.~\ref{sec:transAmp_disorder}, for small enough disorder strengths, $f_{r,s}^{N}(t,\{\Delta_{k}\})$ is peaked sharply around its mean value, and we may consider the disorder-averaged amplitude $\langle f_{r,s}^{N}(t,\{\Delta_{k}\})\rangle_{\delta}$ as a good estimate of the noise. 

We therefore propose an adaptive QEC procedure for a disordered $XXX$ chain involving the $4$-qubit code in Eq.~\eqref{eq:4qubit_4} and a recovery map $\cR_{P}^{\rm avg}$ with the same structure as that used in the case of the ideal chain, described in Eq.~\eqref{eq:4qubit_petz}. However, unlike the ideal case, the value of the channel parameter used in the recovery is different from the one in actual noise channel :  the recovery map uses the disorder-averaged amplitude $\langle f_{r,s}^{N}(t,\{\Delta_{k}\})\rangle_{\delta}$, and is therefore independent of the specific disorder realization, whereas the noise channel has the parameter $ f_{r,s}^{N}(t,\{\Delta_{k}\})$ which changes with every realization.

\begin{figure} [h!]
\centering
\includegraphics[width=1\textwidth]{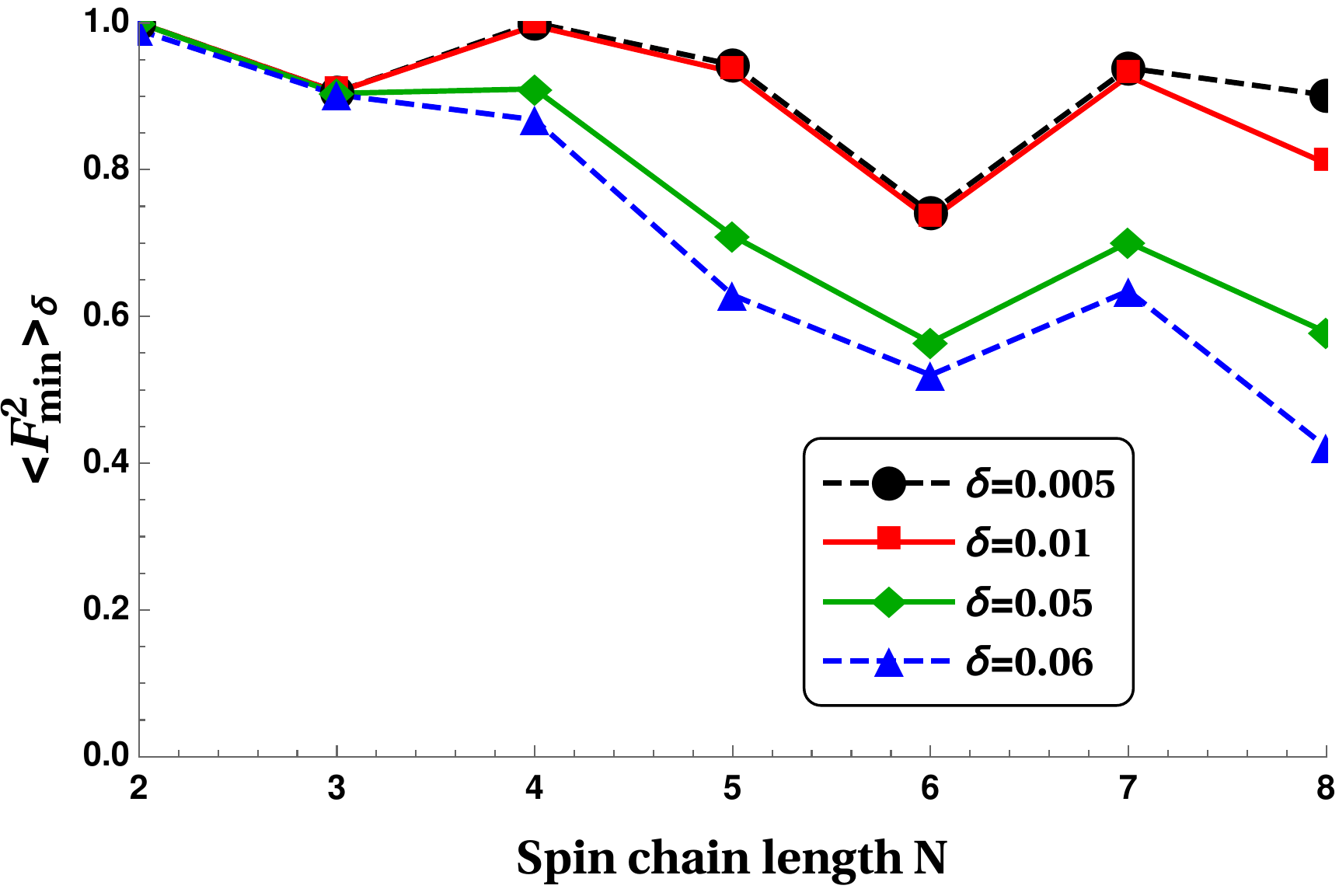}
\caption{Disorder-averaged worst-case fidelity $\langle F^{2}_{\rm min}\rangle_{\delta}$ obtained using the adaptive recovery $\cR_{P}^{\rm avg}$.}
\label{fig:disavg_fmin_8}
\end{figure}
To illustrate the performance of this modified recovery map, we present numerical results for quantum state transfer from the first site ($s=1$) to the $8^{\rm th}$ site ($r=8$)on an $8$-spin chain. Fig.~\ref{fig:disavg_fmin_8} shows the disorder-averaged worst-case fidelity $\langle F^{2}_{\rm min}\rangle_{\delta}$ obtained using the $4$-qubit code and the adaptive recovery $\cR_{P}^{\rm avg}$, for an $8$-spin chain. For disorder strengths $\delta \leq 0.01$,  our adaptive QEC protocol achieves pretty good transfer, with fidelity-loss $\epsilon < 0.2$. Beyond $\delta \geq 0.06$, we notice that $\langle F_{\min}^{2} \rangle_{\delta} < 0.5$ since the effects of localization are too strong to be counteracted by QEC. 

This is further borne out by our detailed analysis of the distribution of the transition amplitude in the presence of disorder (see Appendix~\ref{sec:transAmp_dist}). In particular, our expressions for the mean and standard deviation of the transition amplitude indicate that until $\delta \leq 0.01$, the disorder-averaged value $\langle f_{r,s}^{N}(t,\{\Delta_{k}\})\rangle_{\delta}$ is close to the value of the transition amplitude in the ideal (zero-disorder) case, and the standard deviation is insignificant compared to the mean. However, as the disorder strength increases further, the disorder-averaged value $\langle f_{r,s}^{N}(t,\{\Delta_{k}\})\rangle_{\delta}$ starts dropping and the standard deviation becomes comparable to the average value. Thus the effective noise parameter of the underlying quantum channel becomes too strong for the QEC procedure to be effective. 

The fact that $\delta = 0.06$ is a threshold of sorts can be seen more directly by studying the variation of the disorder-averaged transition amplitude with disorder strength. Previous studies on localization in disordered chains have used such a quantity, namely  $\langle |f_{n,1}^{N}(t,\{\Delta_{k}\})|^{2} \rangle_{\delta}$, as an indicator of the extent of localization~\cite{ashhab,chiara}.

\begin{figure} [H]
\centering
\includegraphics[width=1\textwidth]{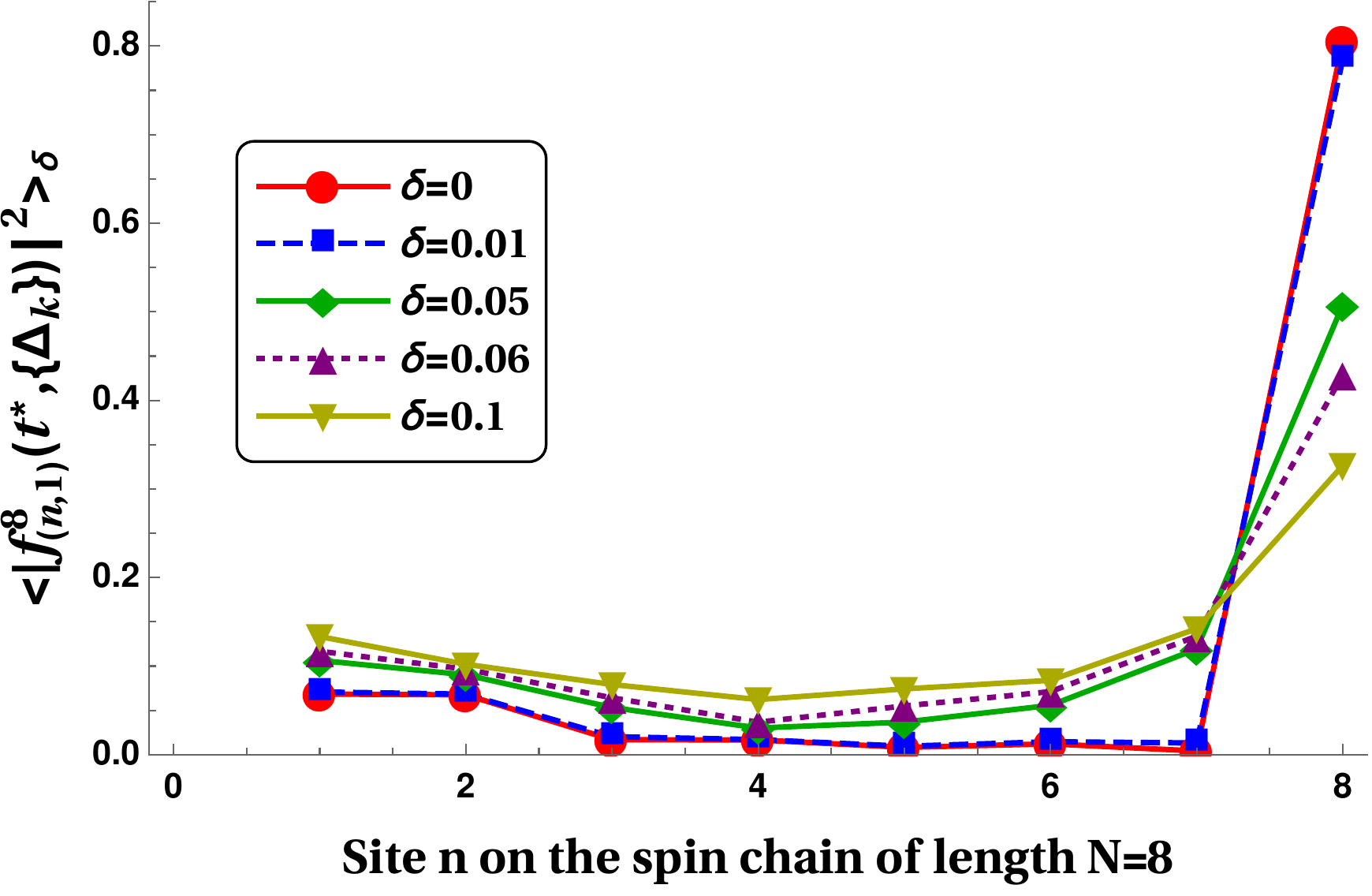}
\caption{$\langle |f_{n,1}^{8}(t)|^{2}\rangle_\delta$ for an $8$-spin chain  as a function of the site $n$.}
\label{fig:transAmp_dis8}
\end{figure}

In Fig.~\ref{fig:transAmp_dis8}, we plot the disorder-averaged transition amplitude $\langle |f_{n,1}^{N}(t,\{\Delta_{k}\})|^{2} \rangle_{\delta}$ for a fixed time $t^{*}$ and different disorder strengths $\delta$, as a function of the receiver site $n$, for the Heisenberg chain in Eq.~\eqref{eq:H_dis}. Empirically, we see that this plot follows an exponential distribution. The curves take the form $e^{-(\alpha n + \beta)/{\rm Loc}}$, where $\alpha, \beta$ are functions of disorder strength $\delta$ and ${\rm Loc}$ is the localization length, i.e. the length at which $\langle |f_{n,1}^{N}(t,\{\Delta_{k}\})|^2\rangle_{\delta}$ falls to $(1/e)$ of its maximum value. This would mean that in Fig.~\ref{fig:transAmp_dis8} for each site $n$, the disorder averaged transition probability $\langle |f_{n,1}^{8}(t,\{\Delta_{k}\})|^2\rangle_{\delta}$ gets suppressed exponentially depending on the strength of the disorder $\delta$. Therefore, with increase in disorder strength $\delta$, the localization effects become more pronounced. 

Specifically, when the disorder strength crosses $\delta=0.06$, the square of the transition amplitude between the ends of the $8$-spin chain falls below $0.43$ on average. However, we know from the fidelity estimate in Theorem~\ref{thm:aqec_fid} that the adaptive QEC protocol improves fidelity if only if $|f_{N,1}^{N}(t)|^{2} > 0.43$. Thus, for end to end state transfer on an $8$-length disordered Heisenberg chain, $\delta=0.06$ is indeed a threshold beyond which the adaptive QEC protocol cannot help in improving fidelity. Since our analysis of the distribution of the transition amplitude presented in Sec.~\ref{sec:transAmp_disorder} as well as the fidelity expression in Theorem~\ref{thm:aqec_fid} hold for any $s,r,N$ we can always identify such a threshold for a specific set of values. 


\section{Conclusions}\label{sec:concl2}

We develop a pretty good state transfer protocol based on adaptive quantum error correction (QEC), for a universal class of Hamiltonians which preserve the total spin excitations on a linear spin chain. Based on the structure of the underlying quantum channel, we choose an approximate code and near-optimal, adaptive recovery map, to solve for the fidelity of state transfer explicitly. For the specific case of the ideal Heisenberg chain, our protocol performs as efficiently as perfect-QEC-based protocols. Using repeated QEC on the chain, we are able to achieve high enough fidelity over longer distances for an ideal spin chain. 

In the case of disordered spin chains the underlying quantum channel is stochastic. For the case of a disordered $1$-d Heisenberg chain, we study the distribution of the transition amplitude, which in turn is directly related to the stochastic noise parameter of the noise channel. By suitably adapting the recovery procedure, we demonstrate pretty good transfer on average, for low disorder strengths. 

It is an interesting question as to whether such channel-adapted QEC techniques maybe used to achieve pretty good state transfer for other universal classes, such as the transverse-field Ising model and the $XYZ$-chain. It is also an open problem to obtain an efficient circuit implementation of the adaptive recovery map discussed here.

\chapter{ Achieving fault tolerance against amplitude-damping noise} 

\label{Chapter5} 

\lhead{Chapter 5. \emph{Chap:5}} 

\title{Achieving fault tolerance against amplitude damping noise}


\section{Introduction}
A practical quantum computer is prone to noise, both due to the fragile nature of quantum states but also due to imperfect gate operations. This necessitates the theory of quantum fault tolerance, which shows that reliable and scaleable quantum computing is possible even with a noisy quantum memory and noisy quatum gates, provided the noise rates are below a certain threshold~\cite{nielsen, gottesman97}. A fault-tolerant quantum computation proceeds by first encoding the physical qubits into a quantum error correcting (QEC) code. Subsequently, encoded gate operations are performed on the encoded qubits in a way that does not let the errors spread catastrophically. Errors are removed from the system via periodic syndrome checks performed on ancillary qubits, before they accumulate to a point where the damage becomes irreparable. The fundmental basis for any quantum fault tolerance scheme is the threshold theorem, which states that scalable quantum computation is possible provided the physical error rate per gate or per time step is below a certain critical value called the \emph{threshold}~\cite{gottesman_FT}. 

Starting with Shor's original proposal~\cite{shor_ft}, 
the early works on quantum fault tolerance focussed on developing fault-tolerant schemes using general pupose QEC codes, specifically CSS codes such as polynomial codes~\cite{aharonov} and the Steane $[[7,1,3]]$ code~\cite{preskill_FT}. Subsequently, fault tolerance thresholds have been estimated incorporating concatenation~\cite{steane, aliferis}, magic-state distillation~\cite{magic_2005}, as well as teleportation-based schemes~\cite{knill_FT}. Typical threshold estimates of the allowed error rates range between $10^{-6}$ and $10^{-3}$, as seen from Table~\ref{tab:table1}, which contains a summary of various fault tolerance schemes and the threshold values achieved for specific noise models. More recently, surface codes have emerged as a very promising candiate for achieving fault-tolerance~\cite{raus_FT} with threshold estimates ranging between $0.1\%-10\%$, although the resource overhead requirements are often as high as $10^{4}$-$10^{7}$ physical qubits per logical qubit. We refer to~\cite{terhal} for a recent overview of fault-tolerant schemes using surface codes and colour codes in different dimensions.

The fault tolerance threshold obtained for a given QEC code or a class of codes depends crucially on the error model under consideration~\cite{knill_nature}. Most of the threshold estimates obtained so far have assumed either symmetric depolarizing noise or erasure noise. However, when the structure of the noise associated with a given qubit architecture is already known, a fault tolerance scheme tailored to the specific noise process could be advantageous, both in terms of less resource requirements as also in improving the threshold. This is borne out by the fault tolerance prescription developed for \emph{biased} noise, in systems where the dephasing noise is known to be dominant ~\cite{aliferis_biased, FT_assymetry}. This prescription was used to obtain a universal scheme for pulsed operations on flux qubits~\cite{aliferis_biasedExp}, taking advantage of the high degree of dephasing noise in the \textsc{cphase} gate, leading to a numerical threshold estimate of $0.5\%$ for the error rate per gate operation.

Motivated by this example, we address the question of whether it is possible to develop fault tolerance schemes for channel-adapted codes~\cite{hui_prabha}. QEC codes adapted to specific noise models are often known to offer a similar level of protection as the general purpose codes, while using fewer qubits to encode the information~\cite{leung, fletcherthesis}. Furthermore, in today's era of noisy intermediate-scale quantum (NISQ) devices~\cite{preskill_nisq}, there has been growing interest in identifying shorter QEC codes that are amenable to fault-tolerant circuit constructions. For example, it has been demonstrated~\cite{gottesman_2016, flammia} that a family of fault-tolerant encoding and error-detection circuits can be constructed using five qubits for the $[[4,2,2]]$ erasure code, using which a criterion to test experimental fault tolerance may also be derived.

In our work~\cite{ak_ft}, we develop fault-tolerant gadgets for qubits that are susceptible to amplitude-damping noise. Physically, this corresponds to a $T_{1}$ relaxation process and is known to be a dominant source of noise in superconducting qubit systems, for example. Moving away from general purpose codes, we use the well known $4$-qubit code~\cite{leung} to develop gadgets that are fault-tolerant against \emph{local}, stochastic, amplitude-damping noise. We analytically estimate the error threshold for the logical \textsc{cphase}, as well as the memory threshold, using our prescription. Our work thus provides the first rigorous level-$1$ threshold estimates for fault tolerance against amplitude-damping noise. Given the \emph{approximate} nature of the $[[4,1]]$ code used here, it is not immediately clear as to how our our level-$1$ gadgets maybe concatenated to obtain fault tolerance at higher levels. 
\section{Preliminaries}\label{sec:prelim} 

We follow the basic framework of quantum fault tolerance developed by Aliferis \emph{et al.}~\cite{aliferis}, which we briefly review this formalism here for completeness. In a fault-tolerant quantum computation, ideal operations are simulated by performing encoded operations on logical qubits. Encoded operations, in turn, are implemented by composite objects called \emph{gadget}s which are made of elementary physical operations such as single- and two-qubit gates (including identity gates for wait times), state preparation, and measurements. A \emph{location} refers to any one of these elementary physical operations. A location in a gadget is said to be \emph{faulty} whenever it deviates from the ideal operation, and can result in errors in the qubits storing the computational data. The key challenge of is to design the gadgets in such a way as to minimize the propagation of errors due to the faults within the same encoded block. 

In what follows we refer to the elementary physical operations are called \emph{unencoded} gadgets. Our goal is to construct the \emph{encoded} or logical gadgets corresponding to the {$4$-qubit code}, which are resilient to a specific local stochastic noise model, namely, the amplitude-damping channel defined in Eq.~\ref{eq:ampdamp} below. In our scheme, we use the following unencoded operations to build the fault-tolerant encoded gadgets.  
\begin{equation} \label{eq:fault-tol}
 \{ \cP_{|+\rangle}, \cP_{|0\rangle}\} \cup \{ \cM_{X},\cM_{Z},\textsc{cnot},\textsc{cz},X, Z, S, T, H\} .
\end{equation}
Here,  $\cP_{|+\rangle}$ and $\cP_{|0\rangle}$  refer to the  eigenstates of single-qubit $X$ and $Z$ Pauli operators respectively, $\cM_{X}$ and  $\cM_{Z}$ refer to measurements in the $X$ and $Z$ basis respectively. \textsc{cnot} refers to the two-qubit controlled-\textsc{not} gate, \textsc{cz} refers to the two-qubit controlled-\textsc{phase} gate and $X,Z, H S$ and $T$ are the standard single-qubit gates. Note that $|0\rangle$ is the fixed state of the amplitude-damping channel defined in Eq.~\ref{eq:ampdamp} and is therefore inherently noiseless. We assume that rest of the gates and measurements in Eq.~\ref{eq:fault-tol} are susceptible to noise, as described below.

\subsection{Noise model and the $4$-qubit code}\label{sec:noisemodel} 

Our fault-tolerant construction is based on the assumption that the dominant noise process affecting the quantum device is amplitude-damping noise on each physical qubit. This is described by the single-qubit completely-positive and trace-preserving channel, $\cE_\mathrm{AD}(\,\cdot\,)=E_0(\,\cdot\,)E_0^\dagger + E_1(\,\cdot\,)E_1^\dagger$, with $E_0$ and $E_1$, the Kraus operators, defined as
 \begin{align}
E_0&\equiv\frac{1}{2}{\left[(1+\!\sqrt{1-p})I + (1-\!\sqrt{1-p}) Z\right]}\nonumber\\
&=|0\rangle\langle 0|+\sqrt{1-p}|1\rangle\langle 1|, \nonumber \\
\textrm{and}\quad E_1 &\equiv \frac{1}{2}\sqrt{p}(X+ \mi Y)=\sqrt p |0\rangle\langle 1|,  
\label{eq:ampdamp}
 \end{align}
 where $I$ is the qubit identity, $X, Y$, and $Z$ are the usual Pauli operators, and $|0\rangle$ and $|1\rangle$ are the eigenbasis of $Z$.
In our setting, we assume that storage errors, gate errors, as well as the measurement errors are all due $\cE_\mathrm{AD}$. Specifically, a noisy physical gate $\cG$ is modeled by the ideal gate followed by the noise $\cE_{AD}$ on each qubit. In the case of two-qubit gates such as the \textsc{cnot} and \textsc{cz}, we assume that a faulty gate implies an ideal gate followed by amplitude-damping noise acting on \emph{either} or \emph{both} of the control and the data qubits. All three possibilities are assumed to occur with equal probability and we build a two-qubit gate gadget that is tolerant to all three faults. A noisy measurement is modeled as an ideal measurement preceded by the noise $\cE_\mathrm{AD}$, while a noisy preparation is an ideal preparation followed by $\cE_\mathrm{AD}$. {Note that the noise acts on each physical qubit individually, and is assumed to be time- and gate-independent. One could more generally regard the parameter $p$ as an upper bound on the level of amplitude damping over time and gate variations. }
As the basis of our fault tolerance scheme, we make use of the well-known {$4$-qubit code}, originally introduced in \cite{leung} tailored to deal with amplitude-damping noise using four physical qubits.
The $[[4,1]]$ code space $\cC$ is the span of
\begin{align}\label{eq:4qubit}
 |0_{L}\rangle &\equiv \tfrac{1}{\sqrt 2}(|0000\rangle +|1111\rangle)\nonumber\\
\textrm{and}\quad |1_{L}\rangle &\equiv \tfrac{1}{\sqrt 2}(|1100\rangle +|0011\rangle),
 \end{align}
giving an encoded/logical qubit of information.
Recall that the $4$-qubit code given in Eq.~\ref{eq:4qubit} is an \emph{approximate} code, which does not perfectly satisfy the perfect QEC conditions~\cite{knill}, but can nonetheless correct for amplitude-damping noise with high fidelity. It has been noted that the set of $4$-qubit operators generated by the set $\left\langle XXXX,ZZII,IIZZ \right\rangle$ stabilize the code~\cite{fletcher}. We further observe that the {$4$-qubit code} can also be constructed by considering a pair of vectors in the $[[4,2,2]]$ code whose stabilizer generator set is given by $\left\langle XXXX,ZZZZ \right\rangle$~\cite{gottesman_2016}. The logical $\overline{X}$ and $\overline{Z}$ operators for the {$4$-qubit code} are identified as 
\begin{equation}\label{eq:logical} 
\overline{X} \equiv XXII; \overline{Z}\equiv ZIIZ,
\end{equation}
{up to multiplication by the stabilizer operators, of course.}

The action of the $4$-qubit noise channel $\cE_\mathrm{AD}^{\otimes4}$ can be discretized in terms of the single-qubit damping error $E_1$ $\propto$ $(X+ \mi Y)$ in Eq.~\ref{eq:ampdamp} and the single-qubit dephasing error $Z$ contained in $E_0$; there is also the no-error case of $I$ in $E_0$. We list here the complete set of errors that the $4$-qubit codespace $\cC$ is subject to, under the action of $\cE_\mathrm{AD}^{\otimes4}$. We first note the action of the single-qubit damping errors on the codespace $\cC$, as,
\begin{eqnarray}\label{eq:damping1}
(X + \mi Y)\otimes I^{\otimes 3}: \cC \rightarrow \cS\left(\{|0111\rangle,|0100\rangle \}\right) \nonumber \\ 
 I\otimes (X+ \mi Y)\otimes I^{\otimes 2} : \cC \rightarrow \cS\left(\{|1011\rangle,|1000\rangle \}\right) \nonumber \\ 
I^{\otimes 2}\otimes (X+\mi Y)\otimes I: \cC \rightarrow \cS\left(\{|1101\rangle,|0001\rangle \}\right)  \nonumber \\ 
I^{\otimes 3}  \otimes (X+\mi Y) : \cC \rightarrow \cS\left(\{|1110\rangle,|0010\rangle \}\right), 
\end{eqnarray} 
{where $\cS(.)$ denotes the span of the corresponding states. The two-qubit damping errors act as follows.}
\begin{eqnarray}\label{eq:damping2}
(X + \mi Y)^{\otimes 2} I^{\otimes 2}: \cC \rightarrow \cS\left(\{|0011\rangle,|0000\rangle \}\right) \nonumber \\ 
I^{\otimes 2} (X + \mi Y)^{\otimes 2}: \cC \rightarrow \cS\left(\{|1100\rangle,|1111\rangle \}\right) \nonumber \\ \nonumber
(X+\mi Y) \otimes I \otimes (X+\mi Y) \otimes I: \cC \rightarrow \cS\left(\{|0101\rangle\}\right) \nonumber \\ 
I \otimes (X+\mi Y)^{\otimes 2}\otimes I: \cC \rightarrow \cS\left(\{|1001\rangle \}\right) \nonumber \\ 
I \otimes (X+\mi Y) \otimes I\otimes (X+\mi Y): \cC \rightarrow  \cS \{|1010\rangle \} \nonumber \\ 
( X+\mi Y) \otimes I ^{\otimes 2} \otimes (X+\mi Y): \cC \rightarrow \cS\left(\{|0110\rangle\}\right).
\end{eqnarray}
The three-qubit damping errors $(X+\mi Y)^{\otimes 3} \otimes I$ map the codespace to subspaces that overlap with the single-damping error subspaces. The $4$-qubit damping error  $(X+\mi Y)^{\otimes 4}$ maps the codespace to the state $|0000\rangle$. The error set also comprises single-qubit dephasing errors of the form, 
\begin{equation}\label{eq:dephasing}
 Z\otimes I ^{\otimes 3}: \cC \rightarrow \cS\left(\{{|0000\rangle-|1111\rangle}/{\sqrt{2}},{|0011\rangle -|1100\rangle}/{\sqrt{2}}\} \right).
\end{equation}
For completeness sake, we note the no-error operator $I^{\otimes 4} : \cC \rightarrow \cC$ that leaves the codepsace invariant. It is easy to check that the $4$-qubit code \emph{perfectly} corrects for single-qubit damping errors and a partial set of two-qubit {damping }errors~\cite{leung}.  The set of syndrome operators that allow us to correct for the no-error case and the single-qubit damping errors include $(ZZII, IIZZ, ZIII, IIZI)$. 

We note that the operator $E_0$ in the expansion of amplitude-damping channel in Eq.~\ref{eq:ampdamp} is more than Identity with an additional back-action term of $O(p)$, which is a non-CP map of the form given below
\begin{equation}
\cF_{Z}(\cdot) = Z (\cdot) I+ I (\cdot) Z
\end{equation} 
Such effects of back-action have been studied in the past using bosing codes both theoritically and experimentally~\cite{liang, backaction}. In this chapter, we construct gadgets made of noisy elementary physical gates, tolerant against faults due to a single damping error. However, Ref.~\cite{ak_ft} provides a more extensive treatment by constructing gadgets fault-tolerant against both back-action errors and single-qubit damping errors arising due to amplitude damping channel.

\subsection{Principles of fault tolerance}\label{sec:faulttolerance}
We conclude this section by formalizing the notion of fault-tolerant {circuits and gadgets}, in the context of amplitude-damping noise. Specifically, we list the properties that the {error correction unit and the encoded gadgets} must satisfy, in order to lead to logical operations and circuits that are fault-tolerant against amplitude-damping noise. 
\begin{itemize}
\item[(Prop 1)] If an error correction unit has no fault, it takes an input with at most one damping error to an output with no errors.
\item[(Prop 2)] If an error correction unit contains at most one fault, it takes an input with no errors to an output with at most one damping error.
\item[(Prop 3)]{ A preparation unit without any fault propagates an input with upto one single-qubit damping error to an output with at most one single-qubit damping error. A preparation unit with at most one fault propagates an incoming state with no errors to an output with at most one single-qubit damping error. }

\item[(Prop 4)]{A measurement unit with no faults leads to a correctable classical outcome for an input with at most one damping error. A measurement unit with at most one fault anywhere leads to a correctable classical outcome for an input state with no errors. }

\item[(Prop 5)] An encoded gadget without any fault takes an input with upto a single-qubit damping error to an output state in each output block with at most one single-qubit damping error. An encoded gadget with a single damping fault takes an input with no error and leads to an output with a single-qubit damping error.
\end{itemize}

{
In what follows, we first develop fault-tolerant gadgets, which are resilient to single-qubit damping errors that occur with probability $O(p)$, neglecting the higher order dephasing and multi-qubit damping errors. Using the $4$-qubit code, we build the error correction unit, preparation and measurement units as well as a universal set of enoded gate gadgets. We prove that our gadgets satisfy the principles of fault tolerance listed above, and finally establish a pseudothreshold for the memory unit and controlled-\textsc{phase} gadget.}


\section{Basic units and encoded gadgets}\label{sec:enc_gadget}

In this section, we introduce the basic units which constitute the building blocks of our fault tolerance scheme. We begin by constructing the error correction unit, comprising the syndrome measurement and recovery circuits, which {corrects single-qubit damping errors} in a fault-tolerant manner. We then obtain a fault-tolerant preparation {unit} for a {two-qubit Bell state}, which eventually helps us to prepare the encoded states $|0\rangle_{L}$ and $|+\rangle_{L}$.  We also construct fault-tolerant measurement units corresponding to the logical $X$ and {$Z$ measurements}. We finally construct a set of basic encoded gates, namely, an encoded $X$ gadget and an encoded \textsc{cz} gadget. In every case, the physical gates come from the elementary set given in Eq.~\eqref{eq:fault-tol}.


\subsection{Error correction unit}\label{sec:ec_unit}

The error correction {unit}, henceforth referred to as the \textsc{ec} unit, is shown in Fig.~\ref{fig:ec}. It comprises two separate sub-units, one a fault-tolerant syndrome-extraction unit for extracting the syndrome bits corresponding to the set of correctable errors, the other a recovery unit. {We note here that we need a nontrivial recovery unit to correct for the single damping errors, rather than a classical Pauli-frame change. This is indeed due to the fact that the single damping errors of the form $X+\mi Y$ do not conjugate past all the elementary gate operations in Eq.~\eqref{eq:fault-tol}.} 
\begin{figure}[H]
\hspace{2cm}\includegraphics[width=.7\columnwidth]{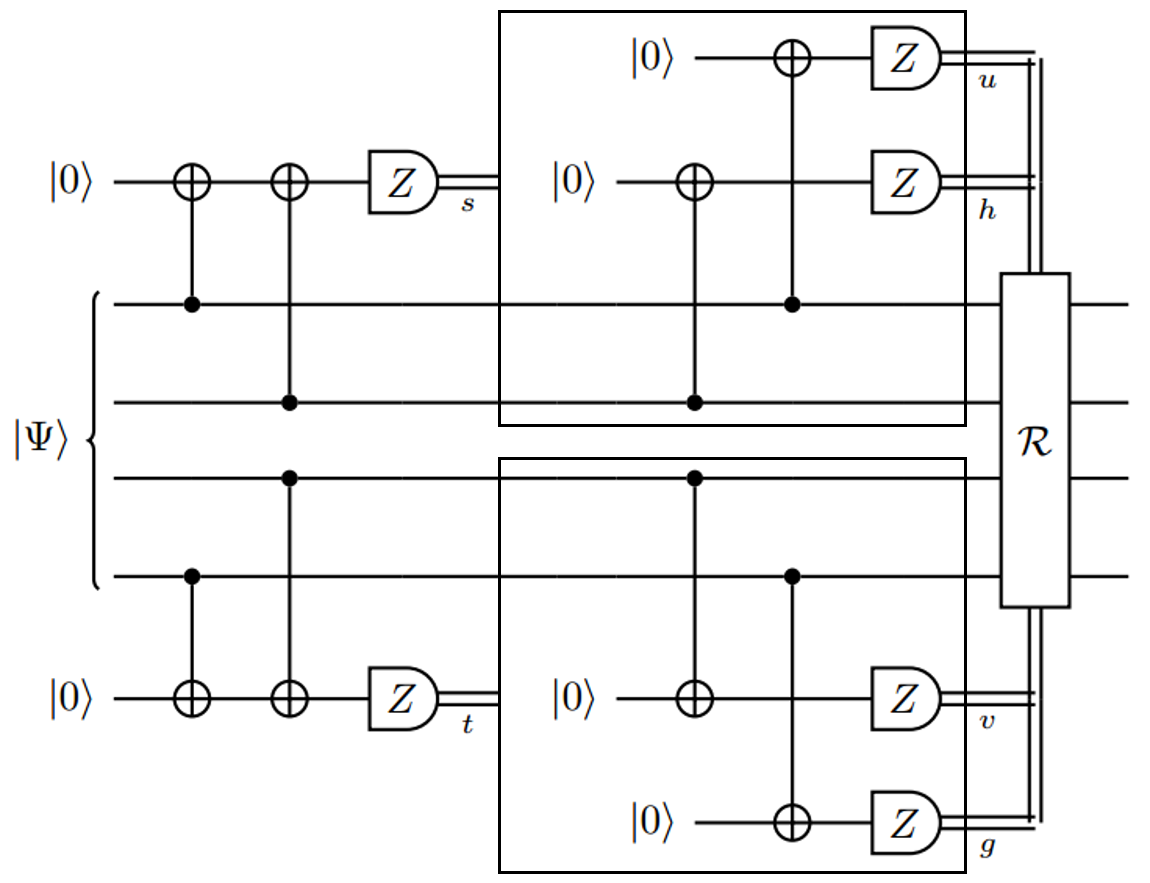}
\caption{Fault-tolerant error correction unit. $\ket{\Psi}$ denotes the four input data qubits, others are ancilla qubits. The circuits in the box are performed only if the corresponding syndrome bits are triggered (set to $1$). 
The recovery unit $\mathcal{R}$ is always performed if either of the two bits $s, t$ is triggered (see section \ref{sec:ec_unit}).}
\label{fig:ec}
 \end{figure}
The syndrome extraction procedure is a two-step process, {as shown in Fig.~\ref{fig:ec}. $|\Psi\rangle$ denotes the state of the encoded or \emph{data} qubits. The syndrome extraction} first involves parity measurements, using two ancilla qubits initialized to $|0\rangle$, on the two pairs of data qubits  separately. This results in two classical bits, denoted as $s$ and $t$ in Fig.~\ref{fig:ec}. The parity measurements correspond to measuring the stabilizers $ZZII$ and $IIZZ$ on the data qubits, with even(odd) parity registering as the outcome $0(1)$. 

{If both syndrome bits $s$ and $t$ take the value $0$, we terminate the error correction circuit. If either of $s$ or $t$ takes the value $1$, we perform another pair of non-stabilizer-group syndrome measurements, corresponding to measuring the operators $\{ZIII, IZII\}$ or $\{IIZI, IIIZ\}$. Specifically, if $s=1$ and $t=0$, we perform two measurements corresponding to $\{ZIII, IZII\}$, with outcomes denoted as $u$ and $h$. If $s=0$ and $t=1$, we perform two measurements corresponding to $\{IIZI, IIIZ\}$, with outcomes denoted as $v$ and $g$. Based on these four syndrome bits $s,t,u,h$ (or $s,t,v,g$), which are collectively denoted as $\textbf{P}$, we are able to diagnose all single fault events. Table \ref{tab:syndrome_bits} summarizes the mapping between a single fault/error and the syndrome bits. Note that there are higher-order fault events that can be detected -- for example, $s=1$, $t=1$ corresponds to damping errors in two data qubits -- however, these errors are uncorrectable by the $4$-qubit code. We decide to end the computation in such cases.}

\begin{table}[h] 
\begin{center}
\begin{tabular}{c | c | c | c | c | c || l}
	$s$ \ & \ $t$ \ & \ $u$ \ & \ $h$ \ & \ $v$ \ & \ $g$ \ & \textbf{Diagnosis}\\ 
			\hline\hline
			0&0&$\times$&$\times$&$\times$&$\times$& no error or undetected fault\\
			\hline
			1&0&0&1&$\times$&$\times$& Qubit 1 is damped \\
			\hline
			1&0&1&0&$\times$&$\times$& Qubit 2 is damped \\ 
			\hline
			1&0&1&1&$\times$&$\times$& Fault in one CNOT \\ 
			\hline
			0&1&$\times$&$\times$&0&1& Qubit 3 is damped \\ 
			\hline
			0&1&$\times$&$\times$&1&0& Qubit 4 is damped \\ 
			\hline
			0&1&$\times$&$\times$&1&1& Fault in one CNOT \\ 
		\end{tabular}
	\end{center}
	\caption{\label{tab:syndrome_bits} Mapping between syndrome bits and single fault/error events. Other combinations that do not appear in the table correspond to higher-order fault events.}
\end{table}
The recovery unit, denoted $\cR$ in Fig.~\ref{fig:ec}, is shown in detail in Fig.~\ref{fig:r_ec}. We note that the damping errors we wish to correct for, are of the form $X+\mi Y = (I+Z)X$. 
The recovery unit $\cR$ consists of two sub-units, namely, the $\cR_X$ unit for correcting the $X$ part, and the $\cR_Z$ unit for correcting the $(I+Z)$ part. The unit $\cR_X$ applies a local  $X$ gate to the damped qubits, as diagnosed by the syndrome bits. It may not apply any $X$ if the syndrome bits diagnose no damped qubits, for example, cases $4$ and $7$ in Table \ref{tab:syndrome_bits} . 

$\cR_Z$ corrects for the $(I+Z)$ part by measuring the stabilizer $XXXX$ using an ancilla qubit initialized to $\ket{+}$. The measurement, with outcome denoted as $c$, projects the encoded state into the code space, corresponding to the subspace with eigenvalue $+1$ ($c=0$), or to the subspace with eigenvalue $-1$ ($c=1$). The latter case corresponds to a $Z$ error and $\cR_Z$ applies a suitable local $Z$ gate from the set $\{ZIII,IZII,IIZI,IIIZ\}$, to correct for it. For example, if the first data qubit is damped and $c=1$, $\cR_Z$ can apply $Z$ to the first data qubit or to the second data qubit (since $ZZII$ is a stabilizer).

We note that the $\cR_Z$ unit will be reused in other gadgets with a slight variations on how the local $Z$ gates are applied at the end. Therefore, we represent this set of local, single-qubit $Z$ operations by a generic function $Z(\mathbf{P}, c)$ (see Fig.~\ref{fig:r_ec}) that depends on all the syndrome bits. We will explain the details further in each specific case.

\begin{figure}[H]
\begin{center}
		\includegraphics[width=0.5\columnwidth]{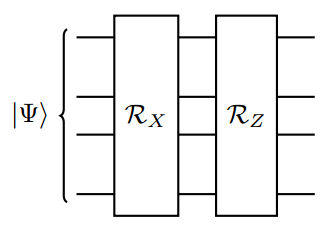}
		\caption{ Recovery unit $\mathcal{R}$ consists of two parts, $\cR_X$ and $\cR_Z$.}
\end{center}
\end{figure}
	\begin{figure}[H]
\begin{center}
	\includegraphics[width=.7\columnwidth]{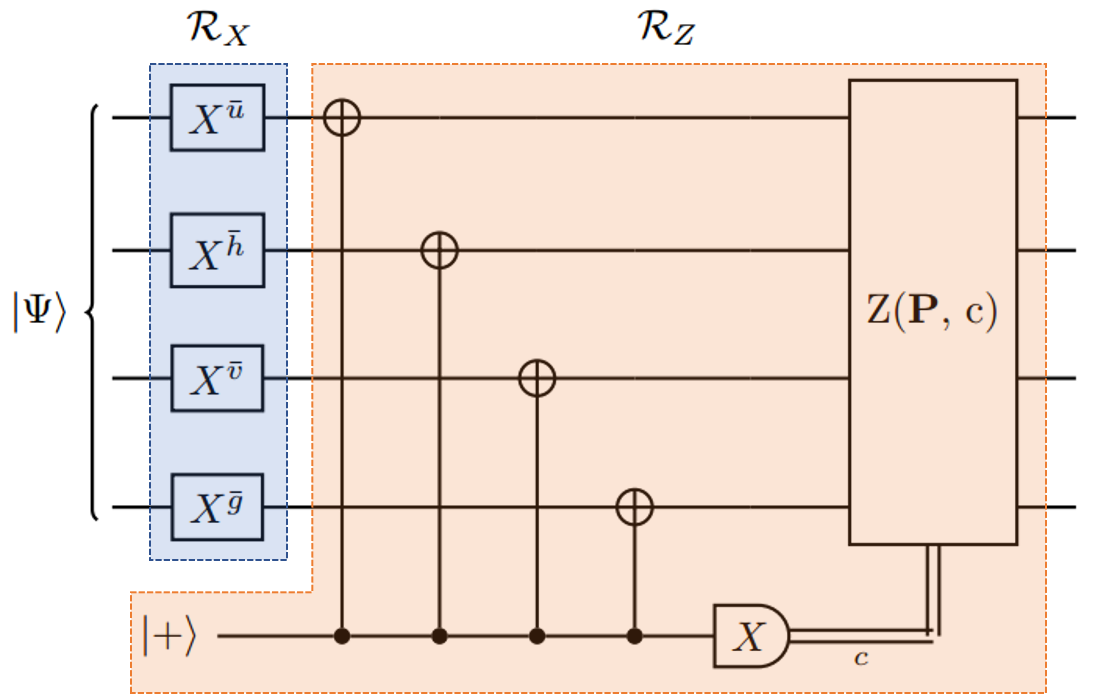}
		\caption{ ( Details of each sub-units. The bar notation denotes the complement of a syndrome bit. $\mathbf{P}$ denotes the set of syndrome bits $\{s,t,u,h,v,g\}$. $\cR_X$ applies an identity if the syndrome bit is unavailable. $\mathcal{R}_Z$ measures the stabilizer $XXXX$ and applies a corresponding $Z$ if $c=1$.}
\label{fig:r_ec}
\end{center}
\end{figure}
{The fault-tolerant property of the \textsc{ec} unit in Fig.~\ref{fig:ec} can be intuitively understood as follows. If there is one error in the input state and no fault in the \textsc{ec} unit, the syndrome extraction unit will detect the error and the recovery unit will correct for it. On the other hand, if we allow for one fault location in the \textsc{ec} unit, it can only occur in the first two parity measurements. If the fault is detected, it will be corrected by the recovery unit (fault-free). Otherwise, if the fault is undetected, the recovery unit will not be triggered and the fault will cause at most one error to the output, due to the way we construct the parity measurements.} A detailed proof that \textsc{ec} unit is fault tolerant is presented in Appendix \ref{app:FT}. 


\subsection{Bell-state preparation}\label{sec:prep}

We next demonstrate a fault-tolerant preparation of the two-qubit Bell state $|\beta_{00}\rangle$,
\begin{equation}\label{eq:bell}
 |\beta_{00}\rangle = \frac{1}{\sqrt 2}(|00\rangle+|11\rangle),
\end{equation}
serves as the input state to multiple fault-tolerant gadgets constructed in Sec.~\ref{sec:logical_prep}.
 
\begin{figure}[h]
		\centering
		\includegraphics[width=0.5\textwidth]{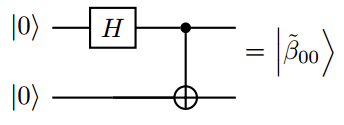}
		\caption{Preparation of the Bell state: (top) non-fault-tolerant preparation of the two-qubit Bell state $|\beta_{00}\rangle$}
		\label{fig:b_00}
	\end{figure}
\begin{figure}[h]
		\centering
		\includegraphics[width=0.7\textwidth]{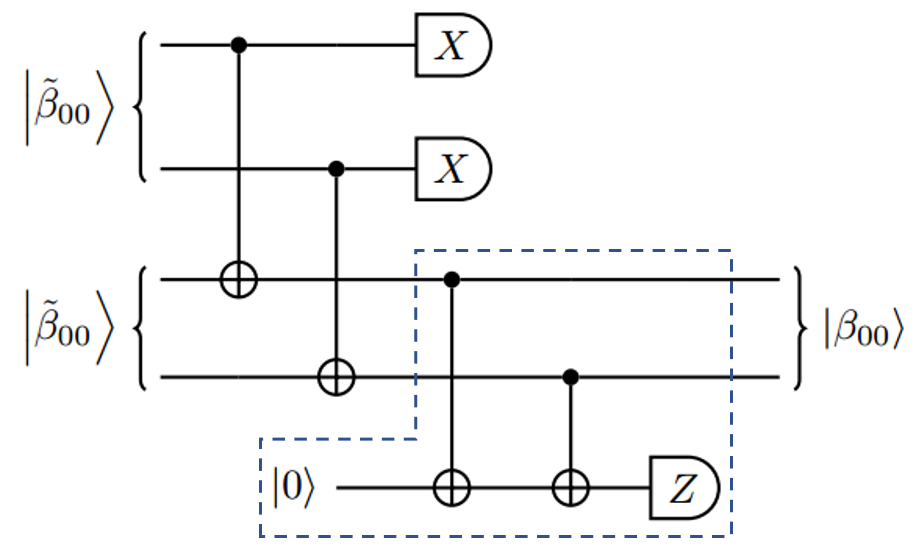}
		\caption{Preparation of the Bell state:a fault-tolerant version built from two non-fault-tolerant copies of $|\beta_{00}\rangle$ from the top circuit.}
		\label{fig:plus}
	\end{figure} 
The state $|\beta_{00}\rangle$ is first prepared in a non-fault-tolerant manner, as shown in Fig.~\ref{fig:b_00}. We then verify this Bell state using another copy of $|\beta_{00}\rangle$ prepared in a similar fashion, as shown in Fig.~\ref{fig:plus}. {When the circuits have no faults, both the $X$ as well as $Z$ measurements outcomes have even parity. However, when there is a single damping error any where in the circuit, one or both outcomes have odd parity, as explained in Appendix \ref{app:BellFT}. In the latter case, we reject the final state and start again, thus ensuring a fault-tolerant preparation of the Bell state $|\beta_{00}\rangle$.}


\subsection{Fault-tolerant $\overline Z$ and $\overline X$ measurements}\label{sec:X_meas}

\begin{figure}[H]
\begin{center}
\includegraphics[scale=0.6]{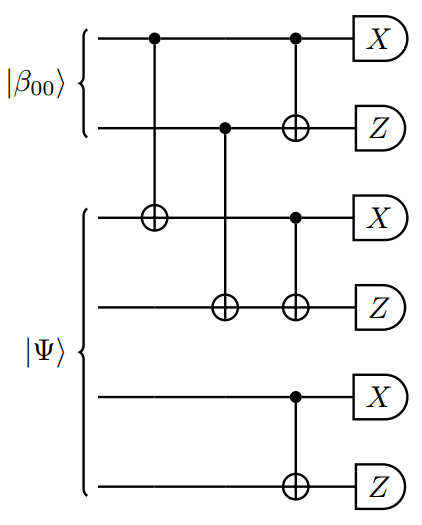}
\caption{\label{fig:xL}Measurement of the logical operator $\overline{X}$=$XXII$ on the encoded information (carried by the bottom $4$ qubits) with the use of two additional ancilla qubits (top $2$ qubits) fault-tolerantly prepared in $|\beta_{00}\rangle$.}
\end{center}
\end{figure}

{In this section, we demonstrate fault-tolerant circuits that perform logical $\overline X$ and $\overline Z$ measurements corresponding to the $4$-qubit code. Recall that a measurement unit is said to be fault-tolerant, if the presence of a single fault in the measurement circuit always leads to a correctable error in the classical outcome. In other words, distinct faults lead to distinct classical outcomes, so as to ensure that the faults can be diagnosed and corrected for, classically.}

{We first argue that the logical $Z$ -- $\overline{Z} = ZIIZ$ -- can be realised simply by performing four local $Z$ measurements on the encoded or data qubits. In the ideal, no-error scenario, the measurement outcomes (which are four-bit binary strings) have even parity. Specifically, the strings $0000$ and $1111$ project the data qubits onto the $|0\rangle_{L}$ state, whereas the strings $0011$ and $1100$ project the data qubits onto $|1\rangle_{L}$. A single fault in one of the local $Z$ measurements, or a single-damping error in the data qubits leads to outcomes strings with odd parity. Furthermore, faults in distinct locations lead to distinct four-bit classical strings, thus ensuring that the faults can diagnosed and corrected for. In particular, if the outcome string has more $1$'s, the \emph{correct} outcome corresponds to a projection onto the $|0\rangle_{L}$ state, whereas if the outcome string has more $0$'s the correct outcome corresponds to a projection onto the $|1\rangle_{L}$ state.}

Next, we give a fault-tolerant construction of the logical $X$ --- $\overline{X}=XXII$ --- measurement unit, by introducing two additional ancilla qubits initialized to the Bell pair $|\beta_{00}\rangle$, as shown in Fig.~\ref{fig:xL}.{Note that in Fig.~\ref{fig:xL} the first two qubits are the ancilla qubits and last four are the data qubits. We then perform three Bell measurements, one on the two ancilla qubits, and one each on each pair of data qubits. The Bell measurements on an \emph{ideal} encoded state (an encoded state with no error) lead to outcomes $(0,0)$ or $(1,0)$. Outcome $(0,0)$ indicates a projection onto $|+\rangle_{L}$, and the outcome $(1,0)$ indicates a projection onto $|-\rangle_{L}$, thus realising the logical $X$ measurement. 

We show that this measurement unit is fault-tolerant, that is, it leads to a correctable classical outcome even when there is a single fault anywhere in the circuit. We refer to Appendix \ref{app:X_meas} for the detailed argument. We merely note here that a single fault in the circuit would lead to a \emph{faulty} outcome ($(1,1)$ or $(0,1)$) for one of the Bell measurements. We simply ignore the faulty outcome and majority voting among the other outcome pairs ($(0,0)$ and $(1,0)$) decide if the projection is onto $|+\rangle_{L}$ or $|-\rangle_{L}$.}

\subsection{Logical $X$ operation}\label{sec:logicalX}

To obtain a universal set of logical gates, as explained in the Sec.~\ref{sec:universal}, we also need a fault-tolerant implementation of the logical $X$ operation. In standard fault tolerance schemes making use of Pauli-based codes, an operator like $\overline X=XXII$ (or alternatively, $IIXX$, with the two differing by a stabilizer operator) can be applied simply by performing $X$ on the first two (or last two) physical qubits, the fault tolerance guaranteed by the transversal nature of the operation. {In the case of the amplitude-damping code, however, the transversal operation is actually not fault-tolerant, due to the fact that the damping errors of the form $X+\mi Y$ do not conjugate past the $X$ operator.} 

Instead, a more elaborate scheme to ensure a fault-tolerant implementation of $\overline X$ is needed. Figure~\ref{fig:xgadget} gives a fault-tolerant gadget that accomplishes the $\overline X$ operation. {As before, $\ket{\Psi}$ denotes the state of the encoded or data qubits. The logical $X$ gadget requires six additional ancilla qubits, two of which are used in the implementation of the logical $X$ operation. The remaining four ancillas are used for parity checks, similar to the ones used in our \textsc{ec} unit in Fig.~\ref{fig:ec}. Depending on the outcomes of these parity checks, the logical $X$ gadget is made fault-tolerant by applying a recovery unit $\cR_{Z}$ (similar to the one already shown in Fig.~\ref{fig:r_ec}), or a modified error correction unit \textsc{ec}', shown in Fig.~\ref{fig:ecprime}.}

\begin{figure}[H]
	\centering
	\includegraphics[scale=0.45]{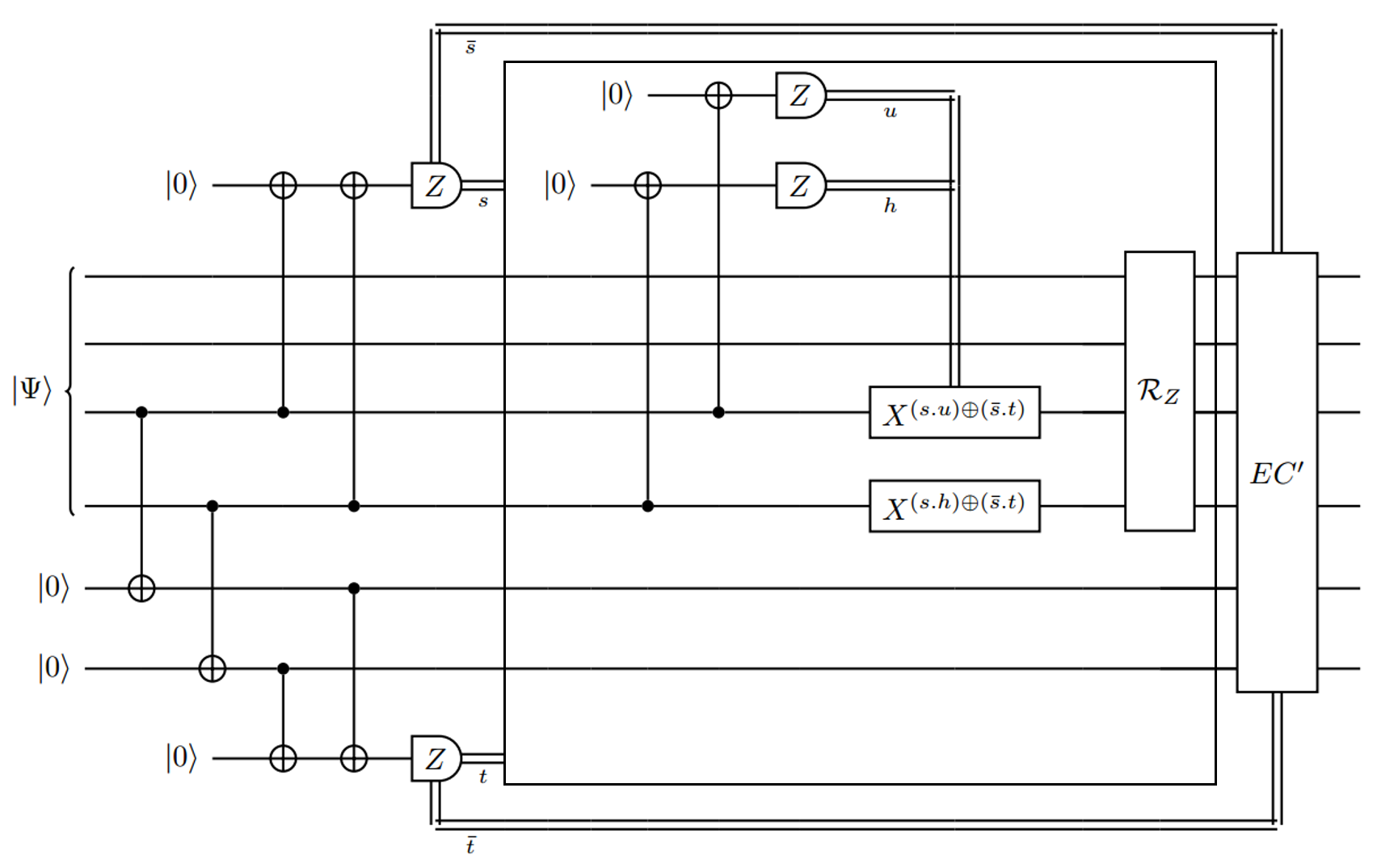}
	\caption{Fault-tolerant implementation of the logical $X$ operation}
	\label{fig:xgadget}
\end{figure} 

{Our logical $X$ operation proceeds as follows. If the parity check bits $s$ and $t$ are both trivial, we proceed to the \textsc{ec}' unit in Fig.~\ref{fig:ecprime}. This unit first applies  a pair of local $X$ gates on the data qubits, thus realising the logical operation $\overline{X} = IIXX$. In order to take into account faults that might have occurred in the logical $X$ implementation, the \textsc{ec}' unit includes additional parity checks on the last two data qubits, leading to an application of the \textsc{ec} unit described in Fig.~\ref{fig:ec}, conditioned on the syndrome bits $x,y$. }

\begin{figure}[h]
\centering
\includegraphics[scale=0.6]{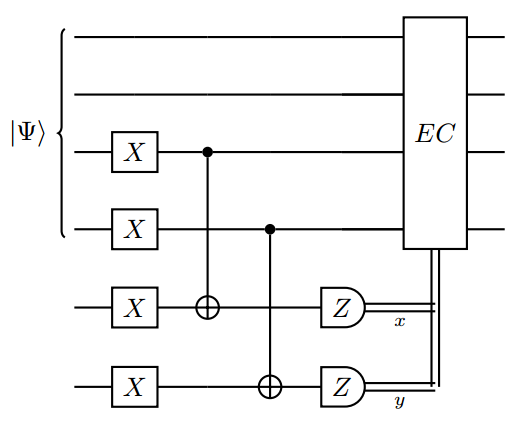}
\caption{The modified \textsc{ec} unit $EC'$ used in the logical $X$ circuit}
\label{fig:ecprime}
\end{figure}

{If either of the bits $s$ or $t$ takes the value $1$, we extract another pair of syndrome bits $u,h$ and proceed to apply one or both local $X$ gates on the encoded state $|\Psi\rangle$, as indicated in Fig.~\ref{fig:xgadget}. The presence of any faults in the gadget is flagged by the syndrome bits $s,t,u,h$, and the recovery unit $\cR_{Z}$ then fixes the state upto a single damping error, as explained in Sec.~\ref{sec:ec_unit} earlier.}


\subsection{Logical \textsc{cphase} operation}\label{sec:cphase}

We next demonstrate a fault-tolerant two-qubit logical operation, which is an essential ingredient for realising a universal set of logical gates. We first note that the logical \textsc{cnot} and the \textsc{cphase} gadgets for the $4$-qubit code both admit transversal constructions 
However, as noted earlier, transversality does not automatically translate into fault tolerance in the case of amplitude-damping errors and the $4$-qubit code. In fact, the transversal \textsc{cnot} is not fault-tolerant to amplitude-damping noise: a single error caused by the amplitude-damping noise can propagate through the transversal circuit into an error that is not correctable by the $4$-qubit code. 
For example, observe that, for two physical qubits connected by a physical \textsc{cnot} operation, a damping error $E_1$ [see Eq.~\eqref{eq:ampdamp}] on the control qubit propagates into  an $X$ error on the target, while the $E_1$ error on the control remains unchanged (see Fig.~\ref{fig:cnot_analysis}). The $4$-qubit code cannot deal with the propagated $X$ error. If such a situation occurs in one of the pairs of physical qubits in the transversal logical \textsc{cnot} gadget, the output of the gadget will not be correctable. A single fault on one of the qubits can thus result in an uncorrectable error, violating the requirements of fault tolerance, despite the transversal structure. 
\begin{figure}[H]
\centering
\hspace{2cm}\includegraphics[scale=.65]{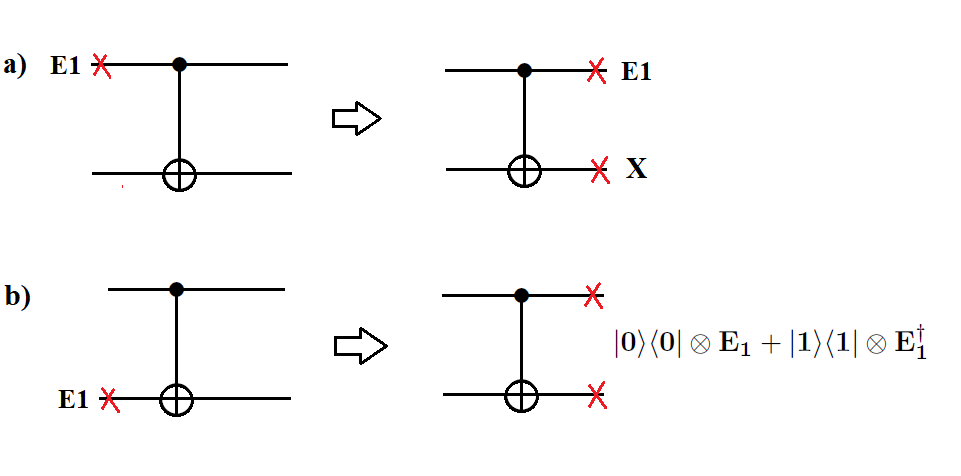}
\caption{a) Propagation of a damping error $E_1$ before the control (top) and b) target (bottom) of a \textsc{cnot} gate.}
\label{fig:cnot_analysis}
\end{figure}

Unlike the \textsc{cnot}, it turns out that the transversal \textsc{cphase} gadget shown in Fig.~\ref{fig:cphase} is fault tolerant against amplitude-damping noise. This is explained in detail in Appendix \ref{app:cz}. {The basic idea, however, is easy to understand by contrasting with the \textsc{cnot} gadget: a damping error at the control (target), after propagating through a physical \textsc{cphase} gate, propagates as a damping error at the control (target), as seen in Fig.~\ref{fig:cphase_analysis}. However, the damping error at the control (target) of the \textsc{cphase}, does lead to an additional phase ($Z$) error in the target (control). This explains the need for the trailing \textsc{ec} and recovery units in the \textsc{cphase} gadget.}

Note that the trailing \textsc{ec} unit in each block in the \textsc{cphase} gadget in Fig.~\ref{fig:cphase} conditions the phase recovery unit $\cR_{Z}$ in the other block. These recovery units are identical to the $\cR_{Z}$ unit in Fig.~\ref{fig:r_ec}, and are introduced to correct for the phase errors which may occur in the event of a single fault before the \textsc{cphase} gadget either in the control or target as shown in Fig.~\ref{fig:cphase_analysis}. The $\cR_{Z}$ unit in each block is  triggered when a fault occurs anywhere before the target (control) and applies a local $Z$ recovery operator depending on the outcomes of the syndrome extraction units of the trailing \textsc{ec} unit in the control (target) block. 


\section{Universal logical gate set}\label{sec:universal}

In the previous section, we described fault-tolerant constructions of a set of basic encoded gadgets: an error correction unit that implements correction with the $4$-qubit code, state preparation of the two-qubit Bell state $|\beta_{00}\rangle$, 
the $\overline X$ and $\overline Z$ measurements, the logical $X$ operation, as well as the logical \textsc{cphase} operation. We show in this section how to construct a fault-tolerant universal (encoded) gate set, tailored for amplitude-damping noise. In particular, we demonstrate the fault-tolerant implementation of a standard universal gate set comprising the logical \textsc{cphase}, Hadamard $H$, $S$ (or phase), and $T$ (or $\pi/8$) gates, where $H$ is the operation $H\equiv |+\rangle_L\langle +|-|-\rangle_L\langle -|$, $S\equiv |0\rangle_L\langle 0|+\mi |1\rangle_\langle 1|$, and $T\equiv |0\rangle_L\langle 0| + \mathrm{e}^{\mi\pi/4}|1\rangle_L\langle 1|$. The fault-tolerant logical \textsc{cphase} gate is already described in Sec.~\ref{sec:cphase}; here we complete the discussion with the logical Hadamard, $S$ and $T$ gates.

\begin{figure}[H]
\centering
\includegraphics[width=0.65\columnwidth]{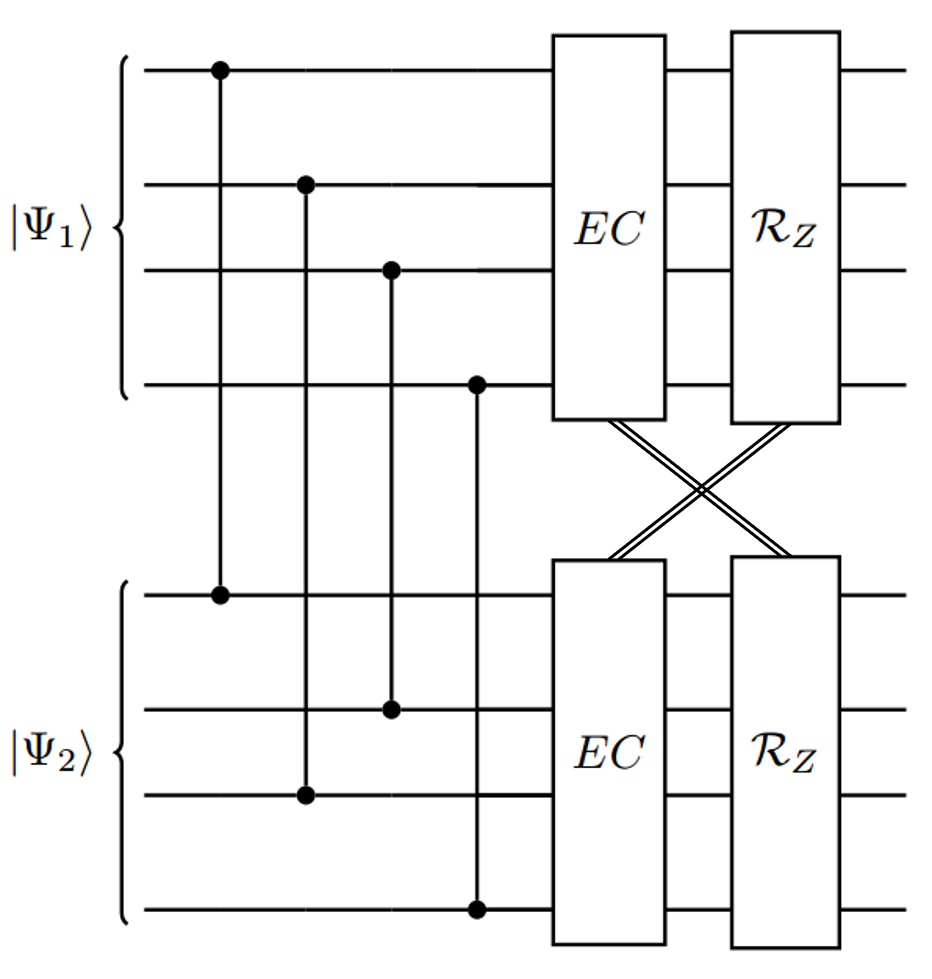}
\caption{The logical \textsc{cz} gadget.}
\label{fig:cphase}
\end{figure}  

As was the problem with the \textsc{cnot} gate, the physical $H$, $S$ and $T$ gates are not noise-structure-preserving: They change an input damping error into an error not correctable by the $4$-qubit code. We thus {do not have tranvsersal implementations of these logical gates}; rather, we need a different approach for getting fault-tolerant logical gate operations. Below, we make use of the well-known technique of gate teleportation \cite{knill_FT,nielsen} to construct our fault-tolerant logical gadgets. The resulting logical gadgets are manifestly fault-tolerant against amplitude-damping noise as we build the teleportation circuits using the basic encoded gadgets shown to be fault-tolerant in Sec.~\ref{sec:enc_gadget}.


\subsection{Preparation units}\label{sec:logical_prep}

Armed with the ability to prepare the Bell state fault-tolerantly, as shown in Sec.~\ref{sec:prep}, we can obtain fault-tolerant preparations of the logical states $|0\rangle_L$ and $|+\rangle_L$. The $|+\rangle_L$ state is easy: $|+\rangle_L$ is simply two copies of $|\beta_{00}\rangle$, that is, 
\[ |+\rangle_L=|\beta_{00}\rangle\otimes|\beta_{00}\rangle.\] 

To get the state $|0\rangle_L$, we start with a single copy of a fault-tolerantly prepared $|\beta_{00}\rangle$ and make use of the circuit shown in Fig.~\ref{fig:zero}, with two additional ancillas initialized to $|0\rangle$. Note that a single fault anywhere in the circuit of Fig.~\ref{fig:zero} is detected by performing parity measurements, identical to the measurements giving syndromes $s$ and $t$ shown in Fig.~\ref{fig:ec}. The prepared state is accepted only when both the parity measurements are even.
\begin{figure}[H]
\begin{center}
\includegraphics[width=0.65\columnwidth]{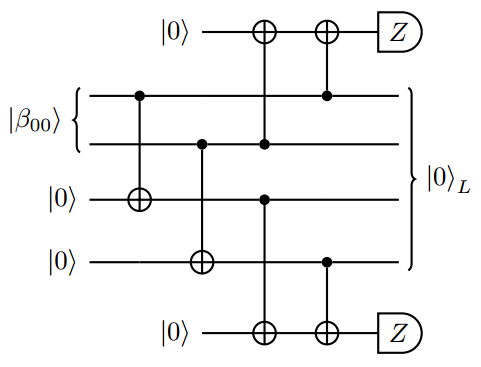}
\caption{Fault-tolerant preparation of $|0\rangle_L$. The input Bell state $|\beta_{00}\rangle$ is assumed to have been prepared fault-tolerantly by the preparation circuit of Fig.~\ref{fig:plus}.}
\label{fig:zero}
\end{center}
\end{figure} 

Finally, we also demonstrate fault-tolerant preparation units for the two-qubit states $\ket{\Phi_{S}}$ and  $\ket{\Phi_{S}}$, which act as resource states for constructing the logical $S$ and $T$ gates, respectively.  
\begin{align}\label{eq:resource}
  \ket{\Phi_S} &= \frac{1}{\sqrt{2}}{\left(\ket{0}_L + \mi\ket{1}_L\right)} \\
\textrm{and }\quad    \ket{\Phi_T} &= \frac{1}{\sqrt{2}}{\left(\ket{0}_L + \mathrm{e}^{\mi\pi/4}\ket{1}_L\right)}.\nonumber
\end{align}
The resource states $\ket{\Phi_S}$ and $\ket{\Phi_T}$ can be prepared and verified as shown in Fig.~\ref{fig:resource} starting with a fault-tolerant preparation of the states $|\beta_{S}\rangle$,$|\beta_{T}\rangle$, which are local-unitary equivalents of the Bell state $|\beta_{00}\rangle$.
\begin{align}
|\beta_{S}\rangle &\equiv \frac{1}{\sqrt 2}{\left(|00\rangle+ \mi |11\rangle\right)} \nonumber \\
\textrm{and}\quad |\beta_{T}\rangle &\equiv \frac{1}{\sqrt 2}{\left(|00\rangle+ \mathrm{e}^{\mi \pi/4} |11\rangle\right)}.
\end{align} 

We construct fault-tolerant preparation units for the states $|\beta_{S/T}\rangle$ as shown in Fig.~\ref{fig:b_ST}, using a circuit which is similar to the preparation of $|\beta_{00}\rangle$ in Fig.~\ref{fig:plus}. In each case, we accept the output state only when $X$-measurement outcomes are of even parity and the $Z$ measurement provides a trivial outcome. Using $|\beta_{S/T}\rangle$ states we then prepare and verify the resource states $|\phi_{S/T}\rangle$ in Eq.~\ref{eq:resource} as shown in Fig.~\ref{fig:resource}. We accept the prepared state only when both the $Z$ measurements show trivial outcomes.

\begin{figure}[H]
		\centering
		\includegraphics[width=0.65\textwidth]{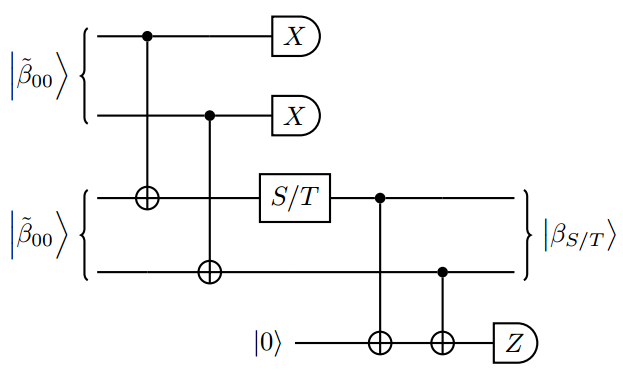}
		\caption{Fault-tolerant preparation of $|\beta_{S/T}\rangle$ (top)}
		\label{fig:b_ST}
	\end{figure}

	\begin{figure}
		\centering
		\includegraphics[width=0.65\textwidth]{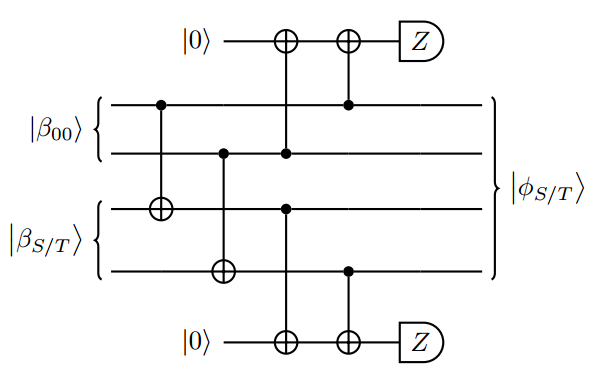}
		\caption{Fault-tolerant preparation of $|\phi_{S/T}\rangle$ (bottom)}
	\label{fig:resource}
\end{figure}

\subsection{Logical Hadamard}\label{sec:Hadamard}

Figure~\ref{fig:Hadamard} gives the construction of our fault-tolerant logical Hadamard. The teleportation scheme requires the ancillary input of $|+\rangle_L$, as well as the use of the logical \textsc{cphase} gadget , whose fault-tolerant preparation was described in Sec.~\ref{sec:prep}.

\begin{figure}[H]
\begin{center}
\includegraphics[width=0.55\columnwidth]{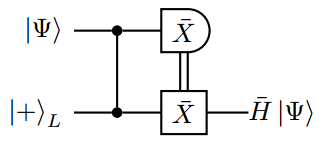}
\caption{Fault-tolerant logical Hadamard gadget, implemented using a teleportation scheme. Here, each horizontal qubit line represents 4 physical qubits, or, equivalently, one logical qubit. $|\Psi\rangle_L$ is the state of the incoming logical qubit on which we want to implement $\overline H$. }
\label{fig:Hadamard}
\end{center}
\end{figure}

In Fig.~\ref{fig:Hadamard}, a logical \textsc{cphase} gadget is applied to the incoming logical (data) state $|\Psi\rangle_L$ and the ancillary input state $|+\rangle_L$. The data qubits are then measured in the $\overline X$ basis, using our $\overline X$ measurement gadget. Depending on the outcome of the $\bar{X}$ measurement, the remaining four qubits (originally ancillas, but now the data qubits after the teleportation) end up in one of two states: $\overline H |\Psi\rangle_L$ or $\overline{X} \overline{H}|\Psi\rangle_L$. In the latter case, we apply a corrective logical $\overline{X}$ gadget to the outgoing state.

This logical Hadamard gate is fault-tolerant in the following sense: The gadget in Fig.~\ref{fig:Hadamard} with at most single damping fault teleports an incoming state with no-error to an output with upto one single-qubit damping error. The teleportation gadget without any faults, teleports an input with upto one single-qubit damping error to an output with at most one single-qubit damping error. This holds simply due to the fact the Hadamard gadget is constructed from the fault-tolerant basic encoded gadgets described in Sec.~\ref{sec:enc_gadget}, each tolerant to a single damping error.

\subsection{Logical $S$ and $T$ gates}\label{sec:ST}

We provide the teleportation circuits for the logical $S$ and $T$ gadgets in Fig.~\ref{fig:sgate} and Fig.~\ref{fig:tgate} respectively. Like the logical $H$, these circuits are composed of elementary encoded gadgets described in Sec.~\ref{sec:enc_gadget}, which can be realized fault-tolerantly.
\begin{figure}[H]
	\centering
    \includegraphics[width=0.55\columnwidth]{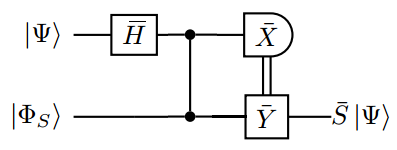}
    \caption{Logical S gate}
    \label{fig:sgate}
\end{figure}

\begin{figure}[H]
	\centering
    \includegraphics[width=0.55\columnwidth]{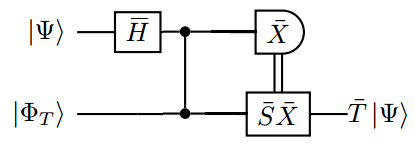}
    \caption{Logical T gate }
    \label{fig:tgate}
\end{figure}
In both cases, the state $|\Psi\rangle$ denotes the state of the incoming logical (data) qubits. In both cases, this state is acted upon by the logical $\overline H$ gadget, followed by a logical \textsc{cphase} gate applied to $\ket{\Psi}_{L}$ and the appropriate ancillary input state $\ket{\Phi_{S/T}}$. In both cases, the data qubits are measured in the $\overline X$ basis, which teleports the desired state to the ancilla block, upto a logical operation. 

The final step involves a conditional logical operation on the remaining qubits. In the case of the $\overline S$ gadget, this conditional operation is $\overline{Y} = \overline{Z}\overline{X} =\overline {H}\overline{X}\overline{H}\overline {X} $, whereas in the case of the $\overline T$ gadget, the conditional logical operation to be implemented is $\overline{S}\overline{X}$. Once again, the fault tolerance of the logical $\overline S$ and $\overline T$ gadgets follows from the fact that they are composed of basic encoded gadgets which are all known to be tolerant to single damping errors.

\section{Pseudothreshold calculation}\label{sec:threshold}

The reliability of a fault tolerance scheme is quantified by its error threshold, which refers to that critical value of the  error rate, $p_{th}$, at which the fault-tolerant protocol fails to perform any better than its unencoded version.  If $p$ refers to the unencoded error rate, and if the fault-tolerant procedure can correct upto $t$ errors, then the failure rate of a fault-tolerant circuit is given by $C p^{t+1}$, where $C$ refers to the number of ways in which a unit can have two faults and thus propagate an output which is not correctable.

The critical error threshold $p_{th}$ is then obtained as $C^{-1}$, for the $C$ that solves,  
 \begin{equation}\label{eq:threshold}
 p_{\rm th}= C p_{\rm th}^{t+1}.
\end{equation}
Since this value in Eq.~\ref{eq:threshold} describes the reliability of the scheme at just one level of encoding, we may think of it as a \emph{pseudo-threshold}. This is a key departure from the standard notion of a fault tolerance threshold which is defined for an entire class of codes, by concatenating a given code multiple times~\cite{aliferis}. 

\subsection{Memory Pseudothreshold}\label{sec:mem_threshold}

 \begin{figure}[H]
\centering
\includegraphics[scale=.65]{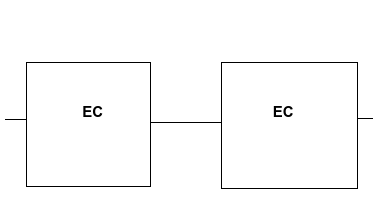}
\caption{Memory unit}
\label{fig:memory}
\end{figure} 

We first calculate such a pseudothreshold for a memory unit comprising a pair of \textsc{ec} gadgets interlaced with identity gates or rests, as shown in the Fig.~\ref{fig:memory}. By explicit enumeration of the pairs of faults, referred to as \emph{malignant} pairs, that lead to errors of $O(p^2)$ in the output state, we obtain,
\begin{equation}
p_{\rm th} \approx 2.8 \times 10^{-3}.
\end{equation}
We refer to Appendix~\ref{app:memory} for the details of the calculation. 

\subsection{Computational Pseudothreshold}\label{sec:comp_threshold}

In order to obtain a computational pseudothreshold, we go beyond our encoded gadgets to construct \emph{extended} units, which take into account both incoming errors as well as all faults occurring within a given gadget. The extended unit with the maximum number of malignant pairs determines the computational pseudothreshold. In our case, we have determined by exhaustive check that an extended \textsc{cphase} unit, shown in Fig.~\ref{fig:exrec}, is the unit that determines the error threshold. The extended \textsc{cphase} unit comprises of the encoded \textsc{cphase}gadget in Fig.~\ref{fig:cphase}, and \textsc{ec} units appended before each block of the encoded gadget. 

\begin{figure}[H]
\centering
\includegraphics[scale=0.7]{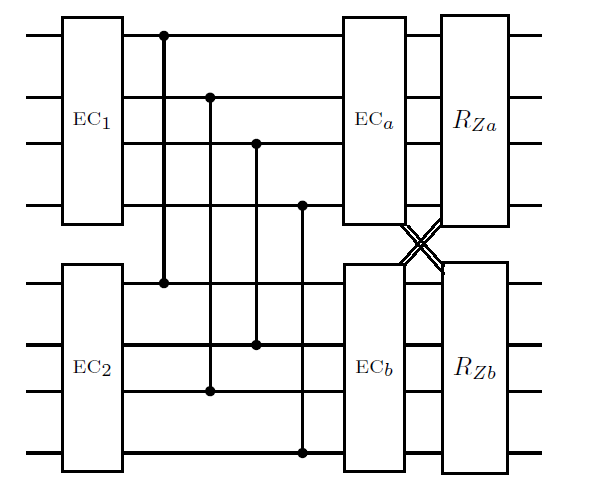}
\caption{Extended \textsc{cphase} unit for computing the pseudothreshold}
\label{fig:exrec}
\end{figure}

In order to compute the pseudothreshold of this extended unit, we first enumerate the number of malignant pairs of damping faults leading to an output that is not correctable in the extended \textsc{cz} unit, assuming that the incoming state has no errors. We first label the different blocks that constitute the extended unit, as follows.
\begin{enumerate}
	\item \textsc{ec$_1$} - leading \textsc{ec} in block 1
	\item \textsc{ec$_2$} - leading \textsc{ec} in block 2
	\item \textsc{cz} gadget
	\item \textsc{ec$_a$} + $\cR_{Za}$
	\item \textsc{ec$_b$} + $\cR_{Zb}$
\end{enumerate}
We can then represent the number of malignant pairs via a matrix whose rows and columns correspond to each block in Fig.~\ref{fig:exrec}, with the entries of the matrix denoting the total malignant pair contributions from the respective blocks. We merely note the final numbers here and refer to Appendix~\ref{app:cphase_threshold} for the detailed enumeration of malignant pairs for every pair of blocks.
\[
\begin{blockarray}{cccccccc}
	 & 1 & 2 & 3 & 4 & 5 \\
	\begin{block}{c (ccccccc)}
		1\  & 0 \\
		2\ & 74 & 0 \\
		3\  & 30 & 30 &  6  \\
		4\  & 274 & 240 & 110 & 134 \\
		5\  & 240 & 274 & 110 & 224 & 134 \\
	\end{block}	
\end{blockarray}
\]

The total number of malignant pairs of damping errors is simply the sum of the entries of the matrix. Apart from the pairs of damping errors, we also need to keep track of the $O(p^{2})$ phase errors, and we argue in Appendix~\ref{app:cphase_threshold} that there are $52$ malignant fault locations for the phase errors.  

The total number of malignant pairs due to damping errors and malignant fault locations due to $Z$ errors is  given by, $A = 1880$, leading to a computational pseudothreshold of 
\[ p_{\rm th} = 1/ A \approx 5.31 \times 10^{-4}.\]

\section{Conclusion}\label{sec:concl}
We develop a fault-tolerant scheme specific to the amplitude-damping channel, using the {$4$-qubit} code. We construct gadgets, namely, the error correction unit, a universal gate set, and a two-qubit gate which are tolerant against single-qubit damping errors. Our constructions show that achieving fault tolerance using channel-adapted codes for nontrivial, non-Pauli noise models like amplitude-damping channel poses interesting challenges, and often leads to counter-intuitive results when viewed from the standpoint of the well established principles of quantum fault tolerance. For instance, although the logical two-qubit \textsc{cnot} gadget is transversal, it is not tolerant to single damping errors, thus violating the wisdom gained from standard fault tolerance schemes for Pauli channels based on stabilizer codes. 

The structure of the non-Pauli noise dictates our choice of fault-tolerant gate gadgets. Thus, the transversal two-qubit \textsc{cphase} gadget and three-qubit \textsc{ccz} gadget turn out to be a naturally favoured set since they can be made tolerant against single damping errors. When it comes to single-qubit damping errors, we do not obtain transversal constructions for the $[[4,1]]$ code; rather, we have to rely on a teleportation-based scheme to implement the Hadamard gadget, $S$ and $T$ gadgets that are fault-tolerant.

Finally, we obtain a rigorous estimate on the level-$1$ \emph{pseudo threshold} for the memory unit as $2.8\times 10^{-3}$ and \textsc{cphase}-\textsc{exrec} unit of $5.31 \times 10^{-4}$, 
by explicitly enumerating the malignant pairs of faults. 
We present an updated, extensive fault-tolerant scheme in Ref.~\cite{ak_ft} which takes into account the back-action errors besides single-qubit damping errors.
While our work presents a first step towards achieving fault tolerance against a specific non-Pauli noise model, the ultimate goal would be to obtain a true fault tolerance threshold via concatenation or using surface codes. Whether such a concatenated scheme can be developed entirely using channel-adapted codes such as the $[[4,1]]$ code, or whether this would require us to concatenate the $[[4,1]]$ code with a suitable general purpose code remains an interesting open question. Another immediate question is whether gadgets that are tolerant to amplitude-damping noise can be constructed using $2$-d Bacon-shor code~\cite{renes}. Such a code might lead to a genuine fault tolerance threshold for amplitude-damping noise.



\chapter{Summary and Future Scope} 

\label{Chapter6} 

\lhead{Chapter 6. \emph{Chap:6}} 

Our thesis work entails a study of three interesting aspects centered around channel-adapted quantum codes, namely, 
\begin{itemize}
\item[(a)] Construction: we have proposed an efficient numerical way to search for channel-adapted codes and demonstrated its usefulness in the case of amplitude-damping noise~\cite{ak_cartan}.
\item[(b)] Application: we have shown how channel-adapted codes may be used in the context of the quantum state transfer protocol to obtain better fidelities over longer spin chains~\cite{ak_statetransfer}.
\item[(c)] Fault tolerance: we have developed a fault-tolerant scheme comprising a universal gate set and error-correction units, based on the $4$-qubit code adapted to amplitude-damping noise~\cite{ak_ft}.
\end{itemize}

The work presented in this thesis has several novel aspects and original results. For instance, the use of the Cartan form for finding good codespaces (Chapter~\ref{Chapter3}) to protect quantum information is novel and leads to an efficient numerical procedure to search for quantum codes. Similarly, in Chapter~\ref{Chapter5}, a new paradigm has been developed for quantum fault tolerance using channel-adapted quantum error correction, including several novel circuit constructions.

Our work opens up several directions of research. It would be interesting to study how the numerical procedure elaborated in Chapter~\ref{Chapter3} extends to the case of passive error suppression techniques such as decoherence-free subspaces (DFS) and noiseless subsystems(NS). Recently, \cite{wang} proposed a numerical procedure to search for the existence of DFS/NS in noise models with and without perturbations. We note here that it is indeed straightforward to extend our search procedure to check for the existence of DFS for a given noise model. For example, motivated by existence of a DFS for correlated amplitude-damping noise~\cite{correlated_AD} on two qubits, we examined the case of the $3$-qubit and $4$-qubit correlated amplitude-damping noise channels. Preliminary results suggest that our unstructured search procedure, where we make use of a full, unstructured parameterization of the encoding unitary, may work well and can identify the subspaces corresponding to a DFS for both cases. 

Channel-adapted codes are expected to have a wide variety of applications in different quantum information processing tasks. We have demonstrated one such application in the context of quantum state transfer over $1$-d  spin chains in Chapter~\ref{Chapter4}. While our work has focussed on spin-preserving Hamiltonians, our protocol can still be extended to study more general Hamiltonians based on realistic implementations. For example, the Josephson junction arrays with a more general non-spin preserving Hamiltonian~\cite{lyakhov} gives rise to an arbitrary noise channel. In such a case our state transfer protocol with channel-adapted codes could be more useful over the perfect QEC protocols.

The important open problem of achieving fault tolerance using a channel-adapted code has been studied in Chapter~\ref{Chapter5}. The problem of constructing fault-tolerant gadget set for a non-Pauli noise model is a counter-intuitive, challenging task, deviating from the standard fault-tolerant constructions that deal with Pauli errors and use \emph{perfect} codes. While we study fault tolerance using a specific channel-adapted code, namely,the $[[4,1]]$ code~\cite{leung}, it would be interesting to see if there are other codes tailormade to specific noise models which maybe used in this regard. For example, an immediate question is whether gadgets that are tolerant to amplitude-damping noise can be constructed using the $2$-d Bacon-shor code~\cite{renes}. Such a code might allow us to build a universal fault-tolerant scheme, leading to a genuine fault tolerance threshold for amplitude-damping noise.

\appendix

\chapter{Channel-adapted codes for the amplitude-damping channel} 
\label{AppendixA} 

\lhead{Appendix A. \emph{Title of the Appendix in short}} 
\section{Nelder-Mead search algorithm}\label{sec:nmsearch}
The Nelder-Mead algorithm~\cite{nelderpaper, NumericalRecipes} to search for {\it optimal} encoding proceeds through the following steps. 
\begin{itemize}
\item[1.] We first create a $m+1$-fold simplex for a $m$ parameter search space, with each of the vertex initialized to a unitary $U_{i}$ $\in$ $SU(2^{n})$ by sampling the parameters of the unitary $U_{i}$ from a uniform distribution over $[0,2\pi]$. We also pick an initial code space $\cC_{\rm initial}$ spanned by $\{|0_{\rm initial}\rangle,|1_{\rm   initial}\rangle \}$. At each vertex, we then obtain,
\begin{eqnarray}\label{eq:nmiterate}
 |0^{i}_{\rm new}\rangle &=& U_{i} |0_{\rm initial}\rangle \nonumber \\ \nonumber
 |1^{i}_{\rm new}\rangle &=& U_{i} |1_{\rm initial}\rangle
\end{eqnarray}
where $\{|0^{i}_{\rm new}\rangle,|1^{i}_{\rm new}\rangle\}$ span the code space $\cC_{i}$ at each vertex $i$.
\item[2.] Each iteration of the Nelder-mead algorithm proceeds via the following steps.
\begin{itemize}
\item[(i)] Sort : Each vertex $i$ is given by a set of real parameters $\mathbf{x_{i}} = (a_{i},b_{i},c_{i},\ldots)$, describing the unitary $U_i$ $\in$ $SU(2^{n})$ according to the Cartan decomposition given in Sec.~\ref{sec:cartan}. The objective function, which is the fidelity-loss function given in Eq.~\ref{eq:fidelityloss} is then evaluated at each vertex $i$ for $\mathbf{x_{i}}$ as $\eta(\mathbf{x_i})$, and then the set $\{\eta(\mathbf{x_{i}})\}$ is sorted such that, 
\begin{equation}
\eta(\mathbf{x_1}) <\eta(\mathbf{x_2}) <\eta(\mathbf{x_3})...<\eta(\mathbf{x_{m+1}})
\end{equation}
where $\eta(\mathbf{x_1})$ is the minimum value and $\eta(\mathbf{x_{(m+1)}})$ is the maximum value of the fidelity-loss. The centroid corresponding to this set of $(m+1)$ vertices after sorting is obtained as given below,
\begin{equation}
\bar{\mathbf{x}}= \frac{\sum_{i=1}^{m} \mathbf{x_{i}}}{m}
\end{equation}
\item[(ii)] Reflect : We now obtain $\mathbf{x_{r}}$ as follows,
\begin{equation}
\mathbf{x_{r}} =\bar{\mathbf{x}} +\alpha(\bar{\mathbf{x}}-\mathbf{x_{m+1}})
\end{equation}
where the constant, $\alpha$ $\in$ $\mathbb{R}$. If $\eta(\mathbf{x_{1}}) \leq \eta(\mathbf{x_{r}})< \eta(\mathbf{x_{m+1}})$, replace $\mathbf{x_{m+1}} $ with $\mathbf{x_{r}} $.
\item[(iii)] Expand: If $\eta(\mathbf{x}_{r})$ $<$ $\eta(\mathbf{x}_{1})$, then obtain,
\begin{equation}
\mathbf{x_{e}} =\bar{\mathbf{x}} +\beta(\mathbf{x_{r}}-\bar{\mathbf{x}})
\end{equation}
where the constant, $\beta$ $\in$ $\mathbb{R}$. If $\eta(\mathbf{x_{e}}) \leq \eta(\mathbf{x_{r}})$, replace $\mathbf{x_{m+1}} $ with $\mathbf{x_{e}} $, otherwise replace $\mathbf{x_{m+1}} $  with $\mathbf{x_{r}}$.
\item[(iv)] Contract: If  $\eta(\mathbf{x}_{m})$ $\leq$ $\eta(\mathbf{x}_{r})$ $<$ $\eta(\mathbf{x}_{m+1})$, then obtain,
\begin{equation}
\mathbf{x_{c}} =\bar{\mathbf{x}} +\Delta(\mathbf{x_{m+1}}-\bar{\mathbf{x}})
\end{equation}
where the constant, $\Delta$ $\in$ $\mathbb{R}$. If $\eta(\mathbf{x_{c}}) \leq \eta(\mathbf{x_{r}})$, replace $\mathbf{x_{m+1}} $ with $\mathbf{x_{c}} $, otherwise go to step (v).
\item[(v)] Shrink: for $2<i<m+1$, replace
\begin{equation}
\mathbf{x_{i}}= \mathbf{x_{1}} +\sigma(\mathbf{x_{i}}-\mathbf{x_{1}})
\end{equation}
where the constant, $\sigma$ $\in$ $\mathbb{R}$. We again go to step (i) .
\end{itemize}
\end{itemize}
Thus the algorithm proceeds by generating newer simplices at every step and flows towards the optimum. Also, note that at each step a new $U'_{i}$ is generated at $i^{th}$ vertex, giving a new code space $\cC'^{i}$ spanned by
\begin{eqnarray}\label{eq:nmiterate1}
 |0^{'i}_{\rm new}\rangle &=& U'_{i} |0_{\rm initial}\rangle \nonumber \\ \nonumber
 |1^{'i}_{\rm new}\rangle &=& U'_{i} |1_{\rm initial}\rangle
\end{eqnarray}
The search is terminated once the optimum is attained for the desired accuracy level.

\section{Structured Codes and Encoding Circuits}\label{sec:circuit}
Here, we describe simple circuits by means of which a structured encoding of the form given in Eq.~\eqref{eq:U_8} can be implemented. Recall that the Cartan decomposition for such a unitary is given by,
\begin{equation}\label{eq:nonlocalunitary}
U=F^{(1)}J F^{(2)},
\end{equation}
where the unitary operators $\{F^{(1)},{J},F^{(2)}\}$ are constructed from elements of an Abelian subgroup of the $n$-fold Pauli group. For the case of $SU(2^{3})$, these operators are given by{ [see Eq.~\eqref{eq:su8}]},
\begin{eqnarray}
 F^{(1)} &=& \upe^{-\upi(c_{1}XXZ+c_{2}YYZ+c_{3}ZZZ)} \nonumber \\ 
  &=& \upe^{-\upi c_{1}XXZ} \upe^{-\upi c_{2}YYZ} \upe^{-\upi c_{3}ZZZ} \nonumber \\ 
F^{(2)}&=& \upe^{-\upi c_{4}XXZ} \upe^{-\upi c_{5}YYZ} \upe^{-\upi c_{6}ZZZ} \nonumber \\ 
J &=& \upe^{-\upi(a_{1}XXX+a_{2}YYX+a_{3}ZZX+a_{4}IIX)} \nonumber \\
 &=&\upe^{-\upi a_{1}XXX} \upe^{-\upi a_{2}YYX} \upe^{-\upi a_{3}ZZX}\upe^{-\upi a_{4}IIX} . \label{eq:abeliangroup2}
\end{eqnarray}

Following a simple prescription in~\cite{swaddle}, we can construct quantum circuits to implement $F^{(1)}, F^{(2)}$  and $J$ by suitably combining the simple circuits given in Fig.~\ref{fig:circuits}. The full encoding unitary $U$ given in Eq.~\eqref{eq:nonlocalunitary} is then composed from the circuits for $F^{(1)}$, $F^{(2)}$ and $J$ as given in Fig.~\ref{fig:circuits2}.

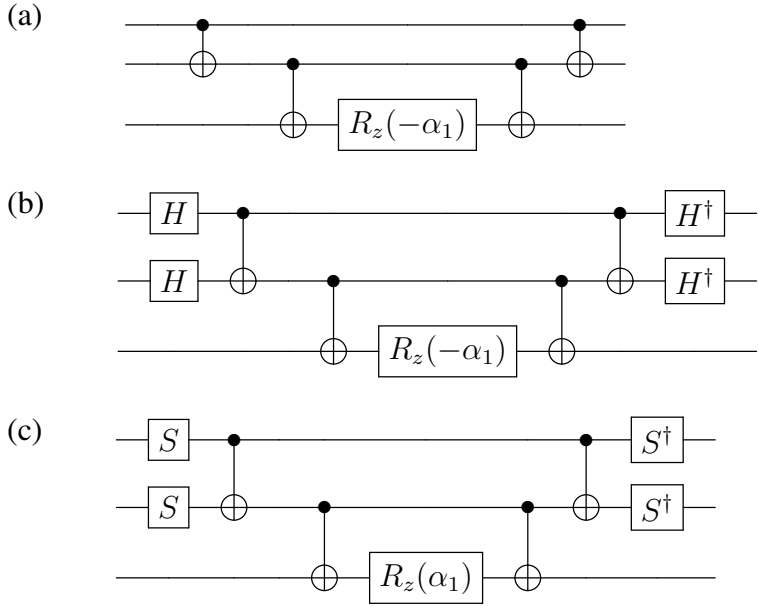
\begin{figure}[H]
\flushleft
(a) \hspace*{-0.7cm}\Qcircuit @C=1em @R=.7em {
&&&&& \qw & \ctrl{1} & \qw & \qw  &\qw & \qw  & \ctrl{1}    &\qw \\
&&&&& \qw & \targ  &\qw  & \ctrl{1} & \qw & \ctrl{1} & \targ  &\qw \\
&&&&& \qw &\qw  &\qw &\targ  &\gate{R_{z}(-\alpha_{1})}  &\targ &\qw  &\qw 
}
\vspace*{0.5cm}

(b) \hspace*{-0.4cm}\Qcircuit @C=1em @R=.7em {
&&&& \gate{H} & \ctrl{1} & \qw & \qw  &\qw & \qw  & \ctrl{1}   & \gate{H^{\dagger}} &\qw \\
&&&& \gate{H} & \targ  &\qw  & \ctrl{1} & \qw & \ctrl{1} & \targ & \gate{H^{\dagger}} &\qw \\
&&&& \qw &\qw  &\qw &\targ  &\gate{R_z(-\alpha_{1})}  &\targ &\qw & \qw &\qw
}
\vspace*{0.5cm}

(c) \hspace*{-0.4cm}\Qcircuit @C=1em @R=.7em {
&&&& \gate{S} & \ctrl{1} & \qw & \qw  &\qw & \qw  & \ctrl{1}   & \gate{S^{\dagger}} &\qw \\
&&&& \gate{S} & \targ  &\qw  & \ctrl{1} & \qw & \ctrl{1} & \targ & \gate{S^{\dagger}} &\qw \\
&&&& \qw &\qw  &\qw &\targ  &\gate{R_z(\alpha_{1})}  &\targ &\qw & \qw &\qw
}
\caption{Quantum circuit implementing {(for $\alpha_{1} \in \mathbb{R}$) (a) the gate $e^{(-i \alpha_{1} Z \otimes Z \otimes Z)}$; (b) the gate $e^{(-i \alpha_{1} X \otimes X \otimes Z)}$; and (c) the gate $e^{(-i \alpha_{1} Y \otimes Y \otimes Z)}$. Here, $H\equiv|+\rangle\langle 0|+|-\rangle\langle 1|$ is the Hadamard gate, and $S\equiv \frac{1}{\sqrt 2}(\id+\upi \sigma_x)$.}}
\label{fig:circuits}
\end{figure}

\begin{figure}[H]
\includegraphics[scale=.65]{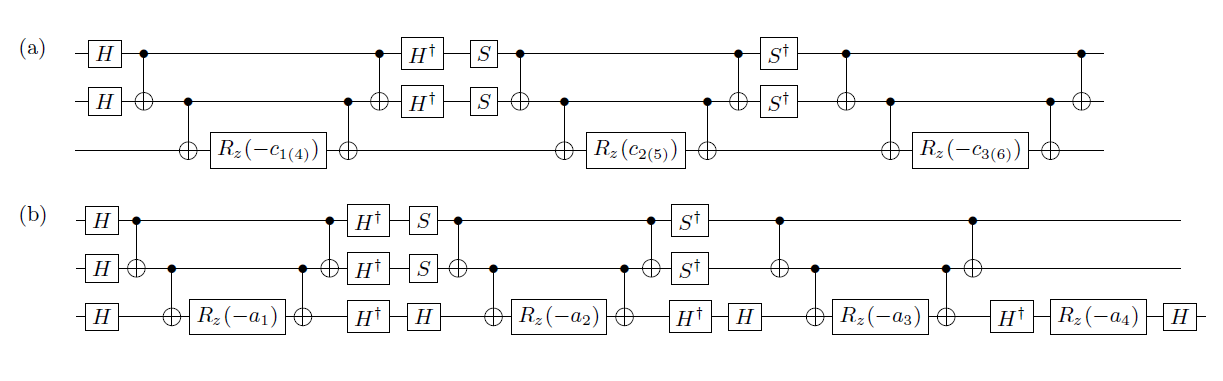}
\caption{\label{fig:circuits2} Circuits for (a) $F^{(1(2))}$ and (b) $J$ for $SU(2^3)$.} 
\end{figure}

Fig.~\ref{fig:circuits2} indicates that the encoding circuits are made up of \textsc{cnot} gates and single qubit unitaries which are rotations about $z$ axis on the Bloch sphere by angles $\alpha_{i}$ determined by the search parameters.  In other words, once we obtain the optimal code, we can easily encode into the desired subspace by only changing the rotation angle about the $z$ axis, while keeping the rest of the components in the encoding circuit fixed.

\section{{Optimal codes for the amplitude-damping channel}}\label{sec:numericalcodes}

We list {in Table \ref{tab1},~\ref{tab2} }the optimal codes obtained using our numerical search, for the standard amplitude-damping channel, whose performance has been described in the plots in Fig.~\ref{fig:3-4qubit}.

\begin{table*}
\centering
\begin{tabular}{||c|c|c||}
Encoding&$|0_L\rangle$&$|1_L\rangle$\\
\hline\hline
&&\\
3-qubit unstructured & 
~$\begin{pmatrix}
  -0.426 + 0.235\upi\\
   0.040 - 0.415\upi\\
   0.014 + 0.084\upi\\
  -0.312 + 0.323\upi\\
   0.021 + 0.278\upi\\
   0.089 + 0.167\upi\\
  -0.303 + 0.038\upi\\
  -0.403 + 0.108\upi\\
  \end{pmatrix}$~& 
  ~$\begin{pmatrix}
   0.275 + 0.103\upi\\
   0.248 + 0.191\upi\\
   0.116 - 0.116\upi\\
   0.008 - 0.194\upi\\
   0.429 + 0.266\upi\\
  -0.066 - 0.269\upi\\
  -0.086 + 0.305\upi\\
  -0.488 - 0.285\upi\\
  \end{pmatrix}$~\\
&&\\
\hline
&&\\
4-qubit unstructured & 
$\begin{pmatrix}
  0.448 + 0.236\upi\\
  -0.066 + 0.134\upi\\
  -0.052 + 0.003\upi\\
  -0.044 - 0.027\upi\\
  -0.037 + 0.058\upi\\
   0.313 + 0.048\upi\\
  -0.338 - 0.057\upi\\
   0.001 - 0.060\\
   0.006 - 0.114\upi\\
  -0.310 - 0.088\upi\\
  -0.356 - 0.073\upi\\
   0.020 - 0.004\upi\\
   0.059 - 0.004\upi\\
   0.041 - 0.002\upi\\
  -0.038 + 0.040\\
   0.412 + 0.250
\end{pmatrix}$&
$\begin{pmatrix}
   0.379 - 0.350\\
  -0.012 + 0.001\upi\\
  -0.040 + 0.042\upi\\
   0.027 - 0.038\upi\\
   0.038 - 0.024\upi\\
  -0.170 + 0.292\upi\\
   0.191 - 0.321\upi\\
   0.027 - 0.053\upi\\
   0.031 - 0.041\upi\\
   0.200 - 0.276\upi\\
   0.210 - 0.290\\
   0.027 + 0.077\upi\\
  -0.014 + 0.033\upi\\
   0.098 + 0.013\upi\\
   0.022 + 0.026\upi\\
   0.356 - 0.276\upi
\end{pmatrix}$~\\
&&
\end{tabular}
\caption{\label{tab1} Optimal unstructured codes for amplitude-damping channel plotted in Fig.~\ref{fig:3-4qubit}.}
\end{table*}

\begin{table*}
\centering
\begin{tabular}{||c||c|c||}
Encoding&$|0_L\rangle$&$|1_L\rangle$\\
\hline\hline
&&\\
3-qubit structured & 
~$\begin{pmatrix}
  -0.013 + 0.076\upi\\
  -0.587 + 0.370\\
   0\\
   0\\
   0\\
   0\\
   0.026 + 0.052\upi\\
   0.385 + 0.601\upi\\
  \end{pmatrix}$~&
  ~$\begin{pmatrix}
   0\\
   0\\
  -0.152 + 0.056\upi\\
  -0.330 - 0.177\upi\\
   0.491 + 0.763\upi\\
  -0.044 - 0.095\upi\\
   0\\
   0\\
  \end{pmatrix}$~\\
&&\\
\hline
&&\\
4-qubit structured & 
$\begin{pmatrix}
   0.580 - 0.352\upi\\
   0.026 - 0.210\\
   0.027 + 0.040\\
  -0.001 + 0.042\upi\\
   0\\
   0\\
   0\\
   0\\
   0\\
   0\\
   0\\
   0\\
  -0.014 + 0.030\\
  -0.056 + 0.025\upi\\
   0.134 - 0.166\upi\\
   0.048 + 0.662\upi
\end{pmatrix}$ &
$\begin{pmatrix}
 0\\
   0\\
   0\\
   0\\
   0.186 + 0.028\upi\\
  -0.353 + 0.178\upi\\
  -0.434 - 0.017\upi\\
  -0.099 + 0.059\upi\\
  -0.191 + 0.123\upi\\
   0.071 - 0.511\upi\\
  -0.346 + 0.379\upi\\
   0.051 + 0.157\upi\\
   0\\
   0\\
   0\\
   0
\end{pmatrix}$\\
&&
\end{tabular}
\caption{\label{tab2} Optimal structured codes for amplitude-damping channel plotted in Fig.~\ref{fig:3-4qubit}.}
\end{table*}


\chapter{Channel-adapted recovery for state transfer over $1$-d spin chains} 
\label{AppendixB} 

\lhead{Appendix B \emph{Title of the Appendix in short}} 

\section{Effect of the noise channel $\widetilde{\cE}_{AD}$ on the $4$-qubit code}\label{sec:E(P)}

We note the following structure for the Kraus operators of the $4$-qubit channel, by expanding them in the $4$-qubit computational basis. First, we note that the only Kraus operator diagonal in the computational basis is $E_{0}^{\otimes 4}$, with diagonal entry $e^{ij\Theta}|f_{r,s}^{N}(t)|^{j}$, corresponding to those basis vectors with $j$ $1$'s in them. 
All the other operators are off-diagonal matrices with support on some subset of computational basis states. For example, a three-qubit error operator (involving $E_{1}$ in three of the four  qubits) is of the form,
\begin{equation}
E_{0}\otimes E_{1}^{\otimes 3} = (1-|f^{N}_{r,s}(t)|^{2})^{3/2}|0000\rangle\langle 0111|+ e^{i\Theta}|f^{N}_{r,s}(t)|(1-|f^{N}_{r,s}(t)|^{2})^{3/2}|1000\rangle\langle 1111| . \label{eq:E_03}
\end{equation}
The remaining three-qubit errors are of the same form, with the strings $\{0111, 1000\}$ replaced by their permutations. Similarly, an operator which has $E_{1}$ errors on two of the qubits is a linear combination of the form,
\begin{eqnarray}
E_{0}^{\otimes 2}\otimes E_{1}^{\otimes 2} &=& (1-|f^{N}_{r,s}(t)|^{2})|0000\rangle\langle 0011| + e^{2i\Theta}|f^{N}_{r,s}(t)|^{2}(1-|f^{N}_{r,s}(t)|^{2})|1100\rangle\langle 1111| \nonumber \\&& +  e^{i\Theta}|f^{N}_{r,s}(t)|(1-|f^{N}_{r,s}(t)|^{2})\left(|0100\rangle\langle 0111| + |1000\rangle\langle 1011| \right). \label{eq:E_02}
\end{eqnarray}
Other two-qubit error operators are realized by replacing the strings $\{0011,1100,0100,1000\}$ with permutations thereof. A single-qubit error operator, with $E_{1}$ error on only one of the qubits has the form,
\begin{eqnarray}
 E_{0}^{\otimes 3}\otimes E_{1} &=& \sqrt{1-|f^{N}_{r,s}(t)|^{2}}|0000\rangle \langle 0001| +\nonumber \\&& e^{i\Theta}|f^{N}_{r,s}(t)|\sqrt{1-|f^{N}_{r,s}(t)|^{2}}(|0010\rangle\langle 0011|+ |0100\rangle\langle 0101| + |1000\rangle\langle1001|) \nonumber \\ 
&+& e^{2i\Theta} |f^{N}_{r,s}(t)|^{2}\sqrt{1-|f^{N}_{r,s}(t)|^{2}}\left(|1100\rangle\langle 1101| + |0110\rangle\langle 0111| + |1010\rangle\langle 1011| \right)  \nonumber \\ && + e^{3i\Theta} |f^{N}_{r,s}(t)|^{3}\sqrt{1-|f^{N}_{r,s}(t)|^{2}}|1110\rangle\langle 1111| .  \label{eq:E_01}
\end{eqnarray}
Finally, the four-qubit error operator $E_{1}^{\otimes 4}$ is of the form,
\begin{equation}
E_{1}^{\otimes 4} = (1-|f^{N}_{r,s}(t)|^{2})^{2} |0000\rangle \langle 1111| . \label{eq:E_14}
\end{equation}

We next explicitly write out the operator $\small{\widetilde{\cE}_{AD}^{\otimes 4}(P)}$ in the computational basis of the $4$-qubit space.
\tiny{
\begin{eqnarray}
&& \widetilde{\cE}_{AD}^{\otimes 4}(P) = \nonumber \\
&& \left[
\begin{array}{*{16}c}
 \cQ_{1} & 0 & 0 & 0 & 0 & 0 & 0 & 0 & 0 & 0 & 0 & 0 & 0 & 0 & 0 &  e^{-4 i \Theta }\, \cQ_{17} \\
 0 & \cQ_{2} & 0 & 0 & 0 & 0 & 0 & 0 & 0 & 0 & 0 & 0 & 0 & 0 & 0 & 0 \\
 0 & 0 &\cQ_{3} & 0 & 0 & 0 & 0 & 0 & 0 & 0 & 0 & 0 & 0 & 0 & 0 & 0 \\
 0 & 0 & 0 & \cQ_{4}& 0 & 0 & 0 & 0 & 0 & 0 & 0 & 0 & \cQ_{18}& 0 & 0 & 0 \\
 0 & 0 & 0 & 0 & \cQ_{5} & 0 & 0 & 0 & 0 & 0 & 0 & 0 & 0 & 0 & 0 & 0 \\
 0 & 0 & 0 & 0 & 0 & \cQ_{6} & 0 & 0 & 0 & 0 & 0 & 0 & 0 & 0 & 0 & 0 \\
 0 & 0 & 0 & 0 & 0 & 0 & \cQ_{7}& 0 & 0 & 0 & 0 & 0 & 0 & 0 & 0 & 0 \\
 0 & 0 & 0 & 0 & 0 & 0 & 0 & \cQ_{8} & 0 & 0 & 0 & 0 & 0 & 0 & 0 & 0 \\
 0 & 0 & 0 & 0 & 0 & 0 & 0 & 0 & \cQ_{9} & 0 & 0 & 0 & 0 & 0 & 0 & 0 \\
 0 & 0 & 0 & 0 & 0 & 0 & 0 & 0 & 0 &\cQ_{10}& 0 & 0 & 0 & 0 & 0 & 0 \\
 0 & 0 & 0 & 0 & 0 & 0 & 0 & 0 & 0 & 0 &\cQ_{11} & 0 & 0 & 0 & 0 & 0 \\
 0 & 0 & 0 & 0 & 0 & 0 & 0 & 0 & 0 & 0 & 0 & \cQ_{12} & 0 & 0 & 0 & 0 \\
 0 & 0 & 0 & \cQ_{18} & 0 & 0 & 0 & 0 & 0 & 0 & 0 & 0 & \cQ_{13} & 0 & 0 & 0 \\
 0 & 0 & 0 & 0 & 0 & 0 & 0 & 0 & 0 & 0 & 0 & 0 & 0 & \cQ_{14}& 0 & 0 \\
 0 & 0 & 0 & 0 & 0 & 0 & 0 & 0 & 0 & 0 & 0 & 0 & 0 & 0 & \cQ_{15} & 0 \\
 e^{4 i \Theta }\,\cQ_{17}& 0 & 0 & 0 & 0 & 0 & 0 & 0 & 0 & 0 & 0 & 0 & 0 & 0 & 0 & \cQ_{16} \\
\end{array}
\right], \nonumber
\end{eqnarray}
}
\normalsize
with $\{\cQ_{i}\}$ denoting polynomial functions of the transition amplitude $|f^{N}_{r,s}(t)|$. In terms of the rank-$1$ projectors onto the computational basis states, we may write $\small{\widetilde{\cE}_{AD}^{\otimes 4}(P)}$ as,
\begin{eqnarray}
 && \small{\widetilde{\cE}_{AD}^{\otimes 4}(P)} = \sum_{i=1}^{16}\cQ_{i}|i\rangle\langle i| + e^{-4 i \Theta } \cQ_{17}|0000\rangle\langle 1111| \nonumber \\
 && + e^{i 4\Theta} \cQ_{17}|1111\rangle\langle0000| + \cQ_{18}(|1100\rangle \langle0011| +|0011\rangle \langle1100|),
\end{eqnarray}
wherein $|i\rangle \in \{|0000\rangle,\ldots,|0100\rangle, \ldots, |1111\rangle\}$ denote the computational basis states of the $4$-qubit space.

Similarly, we can also express the pseudo-inverse $\small{\widetilde{\cE}_{AD}^{\otimes 4}(P)^{-1/2}}$ in the $4$-qubit computational basis, as follows.
\hspace*{-2.5cm}\begin{eqnarray}
&& \widetilde{\cE}_{AD}^{\otimes 4}(P)^{-1/2} = \nonumber \\
&& {\scriptstyle \left[
\begin{array}{*{16}c}
 \cG_{1} & 0 & 0 & 0 & 0 & 0 & 0 & 0 & 0 & 0 & 0 & 0 & 0 & 0 & 0 &  e^{-4 i \Theta }\, \cG_{17} \\
 0 & \cG_{2} & 0 & 0 & 0 & 0 & 0 & 0 & 0 & 0 & 0 & 0 & 0 & 0 & 0 & 0 \\
 0 & 0 &\cG_{3} & 0 & 0 & 0 & 0 & 0 & 0 & 0 & 0 & 0 & 0 & 0 & 0 & 0 \\
 0 & 0 & 0 & \cG_{4}& 0 & 0 & 0 & 0 & 0 & 0 & 0 & 0 & \cG_{18}& 0 & 0 & 0 \\
 0 & 0 & 0 & 0 & \cG_{5} & 0 & 0 & 0 & 0 & 0 & 0 & 0 & 0 & 0 & 0 & 0 \\
 0 & 0 & 0 & 0 & 0 & \cG_{6} & 0 & 0 & 0 & 0 & 0 & 0 & 0 & 0 & 0 & 0 \\
 0 & 0 & 0 & 0 & 0 & 0 & \cG_{7}& 0 & 0 & 0 & 0 & 0 & 0 & 0 & 0 & 0 \\
 0 & 0 & 0 & 0 & 0 & 0 & 0 & \cG_{8} & 0 & 0 & 0 & 0 & 0 & 0 & 0 & 0 \\
 0 & 0 & 0 & 0 & 0 & 0 & 0 & 0 & \cG_{9} & 0 & 0 & 0 & 0 & 0 & 0 & 0 \\
 0 & 0 & 0 & 0 & 0 & 0 & 0 & 0 & 0 &\cG_{10}& 0 & 0 & 0 & 0 & 0 & 0 \\
 0 & 0 & 0 & 0 & 0 & 0 & 0 & 0 & 0 & 0 &\cG_{11} & 0 & 0 & 0 & 0 & 0 \\
 0 & 0 & 0 & 0 & 0 & 0 & 0 & 0 & 0 & 0 & 0 & \cG_{12} & 0 & 0 & 0 & 0 \\
 0 & 0 & 0 & \cG_{18} & 0 & 0 & 0 & 0 & 0 & 0 & 0 & 0 & \cG_{13} & 0 & 0 & 0 \\
 0 & 0 & 0 & 0 & 0 & 0 & 0 & 0 & 0 & 0 & 0 & 0 & 0 & \cG_{14}& 0 & 0 \\
 0 & 0 & 0 & 0 & 0 & 0 & 0 & 0 & 0 & 0 & 0 & 0 & 0 & 0 & \cG_{15} & 0 \\
 e^{4 i \Theta }\,\cG_{17}& 0 & 0 & 0 & 0 & 0 & 0 & 0 & 0 & 0 & 0 & 0 & 0 & 0 & 0 & \cG_{16} \\
 \end{array}
\right], }\nonumber
\end{eqnarray}
with $\{\cG_{i}\}$ denoting a set of polynomials in $|f^{N}_{r,s}(t)|$. In terms of the rank-$1$ projectors onto the computational basis states, we have,
\begin{eqnarray}
 && \small{\widetilde{\cE}_{AD}^{\otimes 4}(P)^{-1/2}}= \sum_{i=1}^{16}\cG_{i}|i\rangle\langle i| + e^{-4 i \Theta } \cG_{17}|0000\rangle\langle1111| \nonumber \\
 && + e^{i 4\Theta} \cG_{17}|1111\rangle\langle0000| + \cG_{18}(|1100\rangle \langle0011| +|0011\rangle \langle1100|). \label{eq:EP_inv}
\end{eqnarray}
Upon sandwiching the operator in Eq.~\eqref{eq:EP_inv} between the different error operators of the four-qubit noise channel (as described in Eqs.~\eqref{eq:E_03},~\eqref{eq:E_02},~\eqref{eq:E_01},~\eqref{eq:E_14}) and their adjoints, it is easy to see that the phases cancel out everywhere. In other words, the Kraus operators of the composite channel comprising noise and recovery are all independent of the phase $\Theta$ of the transition amplitude.

\section{$\Theta$-cancellation in $\left(\cR_{P}^{(4)}\circ\widetilde{\cE}_{AD}^{\otimes 4}\right)$}\label{sec:theta_cancellation}
\begin{theorem}\label{theorem:thetacancellation}
The composite map $\left(\cR_{P}^{(4)}\circ\widetilde{\cE}_{AD}^{\otimes 4}\right)$ with the Kraus operators given in Eq.~\ref{eq:kraus_composite} is independent of $\Theta$, for any choice of the code space $\cC$ in a $4$-qubit Hilbert space. 
\end{theorem}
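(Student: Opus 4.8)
The plan is to show that the entire $\Theta$-dependence of $\widetilde{\cE}_{AD}^{\otimes 4}$ is carried by a single diagonal \emph{gauge} unitary, which then cancels identically inside the Petz sandwich. The starting observation is at the single-qubit level. Let $V_\Theta \equiv |0\rangle\langle 0| + e^{i\Theta}|1\rangle\langle 1|$, and let $\bar E_0 \equiv |0\rangle\langle 0| + |f_{r,s}^{N}(t)|\,|1\rangle\langle 1|$, $\bar E_1 \equiv \sqrt{1-|f_{r,s}^{N}(t)|^{2}}\,|0\rangle\langle 1|$ be the Kraus operators of the \emph{phase-free} amplitude-damping channel $\bar{\cE}_{AD}$. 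One checks directly that $E_0 = V_\Theta \bar E_0$ and $E_1 = V_\Theta \bar E_1$; the $E_1$ case is trivial because $V_\Theta$ fixes $|0\rangle$ and $\bar E_1$ already has image in $\mathrm{span}\{|0\rangle\}$. Hence $\widetilde{\cE}_{AD}(\cdot) = V_\Theta\,\bar{\cE}_{AD}(\cdot)\,V_\Theta^\dagger$, i.e. $\widetilde{\cE}_{AD} = \cV_\Theta\circ\bar{\cE}_{AD}$ with $\cV_\Theta(\cdot) \equiv V_\Theta(\cdot)V_\Theta^\dagger$.

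Next I would lift this to the four-qubit space. Since every tensor factor carries the same phase unitary, each four-qubit Kraus operator factorizes as $E^{(4)}_{i} = W_\Theta\,\bar E^{(4)}_{i}$, where $W_\Theta \equiv V_\Theta^{\otimes 4}$ is the diagonal unitary $W_\Theta|x\rangle = e^{i\Theta\,|x|}|x\rangle$ ($|x|$ the Hamming weight of the string $x$) and $\bar E^{(4)}_{i}$ are the phase-free four-qubit Kraus operators. Consequently $\widetilde{\cE}_{AD}^{\otimes 4}(P) = W_\Theta\,\bar{\cE}_{AD}^{\otimes 4}(P)\,W_\Theta^\dagger$ for \emph{any} projector $P$, and because conjugation by a unitary intertwines with spectral functions, $g(W_\Theta A W_\Theta^\dagger) = W_\Theta\,g(A)\,W_\Theta^\dagger$, the pseudo-inverse square root (the function $x\mapsto x^{-1/2}$ on the support, $0\mapsto 0$ on the kernel) obeys $\widetilde{\cE}_{AD}^{\otimes 4}(P)^{-1/2} = W_\Theta\,\bar{\cE}_{AD}^{\otimes 4}(P)^{-1/2}\,W_\Theta^\dagger$. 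Substituting both identities into the composite Kraus operator of Eq.~\eqref{eq:kraus_composite},
\begin{equation}
P\,(E^{(4)}_{j})^\dagger\,\widetilde{\cE}_{AD}^{\otimes 4}(P)^{-1/2}\,E^{(4)}_{i}\,P = P\,(\bar E^{(4)}_{j})^\dagger W_\Theta^\dagger\,W_\Theta\bar{\cE}_{AD}^{\otimes 4}(P)^{-1/2}W_\Theta^\dagger\,W_\Theta\bar E^{(4)}_{i}\,P,
\end{equation}
the two products $W_\Theta^\dagger W_\Theta = \id$ collapse, leaving $P\,(\bar E^{(4)}_{j})^\dagger\,\bar{\cE}_{AD}^{\otimes 4}(P)^{-1/2}\,\bar E^{(4)}_{i}\,P$, which contains no $\Theta$ whatsoever. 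Thus every Kraus operator of $\cR_P^{(4)}\circ\widetilde{\cE}_{AD}^{\otimes 4}$ coincides with the corresponding Kraus operator of the phase-free composite map, so the map itself is $\Theta$-independent; since $P$ was never constrained, this holds for an arbitrary codespace $\cC$.

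The calculation is short, and the only genuinely delicate point is justifying that the pseudo-inverse square root commutes with conjugation by $W_\Theta$. I would handle this by spectral decomposition of the positive operator $\bar{\cE}_{AD}^{\otimes 4}(P)$, noting that $W_\Theta$ merely rotates its eigenvectors while preserving eigenvalues, so any spectral function (restricted to the support, with the kernel fixed) transforms covariantly. As a consistency check against the explicit matrices in Appendix~\ref{sec:E(P)}, I would verify that conjugation by $W_\Theta$ reproduces exactly the phase pattern recorded there: an off-diagonal entry connecting strings of weight difference $\Delta w$ picks up $e^{i\Theta\,\Delta w}$, giving $e^{\pm 4i\Theta}$ on the $|0000\rangle\langle 1111|$ coherence and no phase on the $|1100\rangle\langle 0011|$ coherence, precisely matching the $\cQ_{17}$ and $\cQ_{18}$ terms. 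With these two points in place the theorem follows.
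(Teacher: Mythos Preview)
Your argument is correct and considerably cleaner than the paper's. The paper works entirely in the computational basis: it expands $(\widetilde{\cE}_{AD}^{\otimes 4}(P))^{-1/2}$ in an operator basis of the form $e^{i(a+b+c+d)\Theta}|abcd\rangle\langle mnqr|e^{-i(m+n+q+r)\Theta}$, then argues case-by-case (with a worked example) that sandwiching by $E_i^{(4)\dagger}(\cdot)E_j^{(4)}$ cancels the accumulated phase on every term, essentially by tracking which of $E_0,E_1$ can act nontrivially on each ket and bra. Your approach identifies the whole $\Theta$-dependence as a single post-composition by the unitary channel $\cW_\Theta(\cdot)=W_\Theta(\cdot)W_\Theta^\dagger$, so that the cancellation $W_\Theta^\dagger W_\Theta=\id$ happens once, structurally, rather than term-by-term. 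This buys you generality for free: nothing in your argument uses the number $4$, so the same one-line proof gives $\Theta$-independence of $\cR_P^{(n)}\circ\widetilde{\cE}_{AD}^{\otimes n}$ for any $n$ and any codespace. The paper's basis-tracking argument would in principle extend too, but the bookkeeping would grow. Your remark that unitary conjugation intertwines with the spectral calculus (hence with the Moore--Penrose inverse square root) is exactly the point that needs stating, and you handle it correctly.
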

\begin{proof}
In order to prove this theorem we need to show that the Kraus operators of the composite map $\left(\cR_{P}^{(4)}\circ\widetilde{\cE}_{AD}^{\otimes 4}\right)$ =$\{ \small{P\left(E^{(4)}_{j}\right)^{\dagger}}\widetilde{\cE}_{AD}^{\otimes 4}(P)^{-1/2}E^{(4)}_{i}P\}$ given in Eq.~\ref{eq:kraus_composite} are independent of $\Theta$, where $\{E_{i}^{(4)}\}$ are the Kraus operators of the map $\widetilde{\cE}_{AD}^{\otimes 4}$ and are realized as four-fold tensor products of $E_0$, $E_1$ given in Eq.~\ref{eq:Kraus_ideal}.  In other words, it is enough to show that $ \small{\left(E^{(4)}_{j}\right)^{\dagger}}\widetilde{\cE}_{AD}^{\otimes 4}(P)^{-1/2}E^{(4)}_{i}$\ $\forall$  $\{i,j\}$ is independent of $\Theta$, since the projection $P$ on the code space has no dependence on $\Theta$.

Recall from Eq.~\ref{eq:Kraus_ideal} that the Kraus operators $\{E_0,E_1\}$ of the single-qubit noisy channel $\widetilde{\cE}_{AD}$ are of the form,
\begin{eqnarray} \label {eq:error}
E_{0}&= &|0\rangle \langle 0| + f _{r,s}^{N}(t) |1\rangle \langle 1|  \nonumber \\
 E_{1}&=& |0\rangle \langle 1| \sqrt{1-|f _{r,s}^{N}(t)|^{2}} 
\end{eqnarray}
where $f_{r,s}^{N}(t)$=$|f_{r,s}^{N}(t)| e^{i \Theta}$.
The action of these Kraus operators in Eq.~\ref{eq:Kraus_ideal} and their adjoint on the standard basis, i.e the eigenbasis of Pauli Z operator, can be given as,
\begin{eqnarray}\label{eq:basis1}
E_{0}^{\dagger}|0\rangle &=& |0\rangle \nonumber \\ \nonumber
E_{0}^{\dagger} |1\rangle &=&|f_{r,s}^{N}(t)| e^{-i \Theta}|1\rangle \nonumber \\ \nonumber
E_{1}^{\dagger}|0\rangle&=&  \sqrt{1-|f _{r,s}^{N}(t)|^{2}}  |1\rangle \nonumber \\ 
E_{1}^{\dagger}|1\rangle&=0 \end{eqnarray}
Similarly,
\begin{eqnarray}\label{eq:basis2}
\langle 0 |E_{0}&=&\langle 0| \nonumber \\ \nonumber
\langle 1 |E_{0} &=&|f_{r,s}^{N}(t)| e^{i \Theta} \langle 1| \nonumber \\ \nonumber
\langle 0|E_{1}&=& \sqrt{1-|f _{r,s}^{N}(t)|^{2}} \langle 1| \nonumber \\ 
\langle 1| E_{1}&=0 \end{eqnarray}

Observe that operators of the form $\{ E_{i}^{(4)\dagger}PE_{j}^{(4)}\}$, can always be expanded as a matrix with real entries, in the basis,
\begin{eqnarray}\label{eq:operatorbasis}
&&\{ e^{i (a+b+c+d)\Theta} |a\ b\ c\ d \rangle \langle m\ n\ q\ r| e^{-i (m +n+q+r)\Theta} \}\nonumber \\ \nonumber &&=\{ |0000\rangle \langle 0000|, |0000\rangle\langle 0001|e^{-i \Theta}, |0000\rangle\langle0010|e^{-i \Theta}, \ldots , |1111\rangle\langle 1111| \}
\end{eqnarray}
where $\{a,b,c,d,m,n,q,r\} \in \{0,1\}$, irrespective of the choice for the  codespace $\cC$. This choice of the operator basis is particularly due to the way the $\{E_{i}^{(4)}\}$ act on the standard basis $\{|0000\rangle,|0001\rangle ,\ldots, |1111\rangle\}$. Hence, we can always represent $(\widetilde{\cE}_{AD}^{\otimes 4}(P))^{-1/2}$ in the same operator basis. Note that $E_{i}^{(4)\dagger}(\widetilde{\cE}_{AD}^{\otimes 4}(P))^{-1/2} E_{j}^{(4)}$ can then be expanded in terms of the following elements,
\begin{equation}\label{eq:expand}
E_{i}^{(4)\dagger} e^{i (a+b+c+d)\Theta} |a\ b\ c\ d \rangle \langle m\ n\ q\ r| e^{-i (m +n+q+r)\Theta} E_{j}^{(4)}
\end{equation}
Based on the set of Eqs.~\ref{eq:basis1},~\ref{eq:basis2}, carefully observe that the Kraus operators $\{E_{i}^{(4)}\}$  that contribute to Eq.~\ref{eq:expand} are those that tend to cancel the appropriate $\Theta$ contribution coming from the basis elements in Eq.~\ref{eq:operatorbasis}, thereby leaving the term $E_{i}^{(4)\dagger} (\widetilde{\cE}_{AD}^{\otimes 4}(P))^{-1/2} E_{j}^{(4)}$ independent of $\Theta$.

\noindent Here is an example which illustrates the $\Theta$ cancellation. Consider the case where $a=0, b=0,c=1,d=1,m=0,n=1,q=1,r=1$ in Eq.~\ref{eq:expand} as given below.
\begin{equation}\label{eq:ex1}
E_{i}^{(4)\dagger} e^{i 2 \Theta} |0011 \rangle \langle 0111| e^{-i 3\Theta} E_{j}^{(4)}
\end{equation}
We should necessarily have the combinations of the following form, in order to contribute to Eq.~\ref{eq:ex1}.
\begin{equation}\label{eq:3}
 \left( \_ \ \ \otimes \_ \ \ \otimes E_{0}^{\dagger} \otimes E_{0}^{\dagger} \right) (e^{i 2 \Theta} 0011 \rangle \langle 0111| e^{-i 3\Theta})  \left( \_ \ \ \otimes E_{0} \otimes E_{0}\otimes E_{0}\right) . 
\end{equation}
The blank spaces in Eq.~\ref{eq:3} can be filled with either $E_{0}$ or $E_{1}$, which would act on $|0\rangle$ and hence will not lead to a new $\Theta$-dependant factor, as can be seen in Eq.~\ref{eq:basis1},~\ref{eq:basis2}. Therefore Eq.~\ref{eq:3} is independent of $\Theta$. The same argument holds for each operator element in Eq.~\ref{eq:expand}, thereby leaving  $\{ \small{\left(E^{(4)}_{j}\right)^{\dagger}}\widetilde{\cE}_{AD}^{\otimes 4}(P)^{-1/2}E^{(4)}_{i}\}$ and hence $\left(\cR_{P}^{(4)}\circ\widetilde{\cE}_{AD}^{\otimes 4}\right)$ independent of $\Theta$. 
\end{proof}
\section{$4$-qubit numerical structured code}\label{sec:numerical_code}
We present here the codewords of the $4$-qubit code obtained via our numerical search procedure described in Chapter~\ref{Chapter3}, for the amplitude-damping channel that arises in the context of state transfer over the $1$-d Heisenberg chain.  The performance of the corresponding state transfer protocol is shown in Fig.~\ref{fig:fmin_QEC}.

$|0\rangle_L$=$\begin{pmatrix}
0.2618 - 0.6116\upi \\
0.0000 - 0.0740\upi \\
-0.3098 + 0.0389 \upi \\ 
0.0000 + 0.0786 \upi \\
0  \\
0 \\
0 \\
0 \\
0 \\
0 \\
0  \\
0 \\
0.0960 - 0.0112\upi \\
-0.0000 + 0.1473 \upi\\ 
0.2209 - 0.0093\upi\\
0.0000 + 0.6069\upi \\
\end{pmatrix}$
$|1\rangle_L$=$\begin{pmatrix}
0 \\
0 \\
0 \\
0  \\
0.0000 - 0.1827 \upi \\
0.5116 - 0.2553 \upi\\
-0.0000 - 0.3948 \upi \\
-0.2770 - 0.1778\upi \\
-0.0000 + 0.1036 \upi \\
0.1211 - 0.4826 \upi \\
0.0000 + 0.3189 \upi \\
0.1153 + 0.0473 \upi \\
0 \\
0 \\
0 \\
0 \\
\end{pmatrix}$
\section{Distribution of the transition amplitude for a disordered $XXX$ chain}\label{sec:transAmp_dist}

Here we derive the distribution of the transition amplitude $f^{N}_{r,s}(t,\{\Delta_{k}\})$ for the disordered $XXX$ chain described in Eq.~\eqref{eq:H_dis}, as a function of time $t$ and disorder strength $\delta$. Recall that the transition amplitude between the $r^{\rm th}$ and $s^{\rm th}$ site for the disordered Hamiltonian $\mathcal{H}_{\rm dis}$ is given by,
\begin{equation}
f^{N}_{r,s}(t,\{\Delta_{k}\}) = \langle \textbf{r} | e^{- i (\mathcal{H}_{o}+ \mathcal{H}_{\delta})t} |\textbf{s}\rangle = \langle \textbf{r}| e^{-i \mathcal{H}_{o} t}\mathcal{ T}\left[\exp{\left(-i\int_{0}^{t}e^{i\mathcal{ H}_{o} t'}\,\mathcal{H_{\delta}}\,e^{-i \mathcal{H}_{o} t'}dt'\right)}\right]| \textbf{s}\rangle, \label{eq:transAmp_delta}
\end{equation}
where $\cT$ denotes the time-ordering operator. We first expand the time-ordered perturbation series in Eq.~\eqref{eq:transAmp_delta} as follows, 
\begin{eqnarray}
f^{N}_{r,s}(t,\{\Delta_{k}\}) &=&\sum_{k=1}^{N}\langle \textbf{r}| e^{-i \mathcal{H}_{o} t} |\textbf{k}\rangle \langle \textbf{k}|\mathcal{T}\left[e^{(-i\int_{0}^{t}e^{i\mathcal{ H}_{o} t'}\mathcal{H_{\delta}}e^{-i \mathcal{H}_{o} t'}dt')}\right] |\textbf{s}\rangle  \nonumber \\
&=& \sum_{k=1}^{N}f^{N}_{r,k}(t) \langle \textbf{k} | \, I - i O(\mathcal{H}_{\delta}) + \frac{i^{2}}{2!} O(\mathcal{H}_{\delta}^{2}) + \ldots \, | \textbf{s}\rangle \label{eq:transAmp_disorder}
\end{eqnarray}
where, $f^{N}_{r,k}(t) = \langle \textbf{r}| e^{-i\mathcal{ H}_{o} t} |\textbf{k}\rangle$ is the transition amplitude in the absence of disorder. Expanding the first order term ($O(\mathcal{H}_{\delta})$) as a time-ordered form, we have,
\begin{eqnarray}
\langle \textbf {k} |O(\mathcal{H}_{\delta}) |\textbf{s} \rangle &=& \int_{0}^{t}\langle \textbf{k}|e^{i \mathcal{H}_{o} t'} \mathcal{H}_{\delta}e^{-i\mathcal{ H}_{o} t'}|\textbf{s}\rangle dt' \nonumber \\
&=& \sum_{l,m=1}^{N} \int_{0}^{t}\langle \textbf{k}|e^{i \mathcal{H}_{o} t'} |\textbf{l}\rangle \langle \textbf{l}|\mathcal{H}_{\delta}|\textbf{m}\rangle\langle \textbf{m}|e^{-i\mathcal{ H}_{o} t'}|\textbf{s}\rangle dt' \label{eq:first_order1}
\end{eqnarray}
where,
\begin{equation}
\langle \textbf{l} | \mathcal{H}_{\delta} |\textbf{m}\rangle = \frac{\overline{J}}{2}\left(\sum_{i=1}^{N-1}( u_{i}^l \Delta_{i})\delta_{lm} - 2\Delta_{l}\delta_{m(l+1)}  - 2\Delta_{l-1}\delta_{m(l-1)} \right), \label{eq:first_order2}
\end{equation}
with the coefficients $u^{l}_{i} \in \{\pm 1\}$. For example, $\cH_{\delta}$ for a $4$-qubit spin chain is a tridiagonal matrix of the form,

\begin{equation}
\cH_{\delta} =
\frac{\overline{J}}{2}\left(
\begin{array}{cccc}
 -\Delta_{1}-\Delta_{2}+\Delta_{3} & -2 \Delta_{3} & 0 & 0 \\
 -2 \Delta_{3} & -\Delta_{1}+\Delta_{2}+\Delta_{3} & -2 \Delta_{2} & 0 \\
 0 & -2 \Delta_{2} & \Delta_{1}+\Delta_{2}-\Delta_{3} & -2 \Delta_{1} \\
 0 & 0 & -2 \Delta_{1} &  \Delta_{1}-\Delta_{2}-\Delta_{3} \\
\end{array}
\right) . \nonumber
\end{equation}

Substituting the form of $\cH_{\delta}$ in Eq.~\eqref{eq:first_order2} to the first order term in Eq.~\eqref{eq:transAmp_disorder}, and setting  $\overline{J}=1$ throughout, we get, 
\begin{eqnarray}
f^{N}_{r,s}(t,\{\Delta_{k}\})= f_{r,s}^{N}(t) &-& \frac{i}{2} \int_{0}^{t} \sum_{l,k=1}^{N} f^{N}_{r,k}(t)(f^{N}_{k,l}(t'))^{*}f^{N}_{l,s}(t')(\sum_{i=1}^{N-1}u_{i}^{l}\Delta_{i})dt'\nonumber \\&-&\frac{i}{2} \int_{0}^{t} \sum_{l=1}^{N-1}\sum_{k=1}^{N} f^{N}_{r,k}(t)(f^{N}_{k,l}(t'))^{*}f^{N}_{l+1, s}(t')(-2\Delta_{l})dt' \nonumber \\ 
&-& \frac{i}{2} \int_{0}^{t} \sum_{l=1}^{N-1}\sum_{k=1}^{N} f^{N}_{r,k}(t)(f^{N}_{k,l+1}(t'))^{*}f^{N}_{l,s}(t')(-2\Delta_{l})dt' . \nonumber
 \end{eqnarray}
Thus, up to first order in perturbation, $f^{N}_{r,s}( t,\{\Delta_{k}\})$ is simply a linear combination of the random variables $\{\Delta_{k}\}$, of the form,
\begin{equation}\hspace{-1cm}
 f^{N}_{r,s}(t,\{\Delta_{k}\}) = f^{N}_{r,s}(t)+ \sum_{i=1}^{N-1} c^{N}_{i}(t) \Delta_{i}, \label{eq:transAmp_final2}
 \end{equation}
where $\{c^{N}_{i}(t)\}$ are complex coefficients given by,
\begin{eqnarray}
c^{N}_{i}(t) &=& -\frac{i}{2}\sum_{k=1}^{N}f^{N}_{r,k}(t) \left[\int_{0}^{t}  \sum_{l=1}^{N} u_{i}^{l} (f^{N}_{k,l}(t'))^{*}f^{N}_{l,s}(t')dt'\right]\nonumber \\ &-&\frac{i}{2}\sum_{k=1}^{N}f^{N}_{r,k}(t)\left[2 \int_{0}^{t} ( f^{N}_{k,i}(t'))^{*}f^{N}_{i+1, s}(t')dt' -2 \int_{0}^{t}(f^{N}_{k,i+1}(t'))^{*}f^{N}_{i,s}(t')dt' \right]. \nonumber \\ \label{eq:c-coeff}
\end{eqnarray}

We first note that in the limit of large $N$, the distribution of $f^{N}_{r,s}(t)$ tends towards a normal distribution. This is a direct consequence of the central limit theorem, since $\{\Delta_{i}\}$ are i.i.d random variables. In what follows, we will obtain the exact form of the distribution of $f^{N}_{r,s}(t,\{ \Delta_{k} \})$, specifically, the real and imaginary parts of $f^{N}_{r,s}(t,\{ \Delta_{k} \})$ in terms of $N, t$ and $\delta$. 

Since the $\{\Delta_{i}\}$ are randomly drawn from a uniform distribution between $\left [ -\delta,\delta \right ]$, the joint probability density $P\left(\Delta_{1},\Delta_{2}, \ldots, \Delta_{N} \right)$ is given by,
\begin{equation}
 P \left(\, \Delta_{1},\Delta_{2}, \ldots, \Delta_{N-1} \,\right) = \left\lbrace \begin{array}{cc}
 \frac{1}{(2\delta)^{N-1}}, &  -\delta \leq \Delta_{i} \leq \delta, \; \forall i=1,2,\ldots, N-1. \\
 0, & {\rm otherwise}. 
  \end{array} \right.
\end{equation}
Let $x \equiv \texttt{Re}[f^{N}_{r,s}(t,\{\Delta_{k}\})]$ and $y \equiv \texttt{Im}[f^{N}_{r,s}(t,\{\Delta_{k}\})]$ denote the real and imaginary parts of the transition amplitude in Eq.~\eqref{eq:transAmp_final2}. Then, we may obtain the distribution of $x$ and $y$ as follows:
\begin{eqnarray}
\cP^{\delta,t,N}(x) &=& \int_{\Delta_{1}=-\delta}^{\delta}\ldots\int_{\Delta_{N-1}=-\delta}^{\delta}  \left(\prod_{i=1}^{N-1}d\Delta_{i}\right) P(\Delta_{1},\Delta_{2},\ldots,\Delta_{N-1}) \, \nonumber \\ &&\delta\left( x - ( \, \texttt{Re}[f^{N}_{r,s}(t)] + \sum_{i=1}^{N-1}\texttt{Re}[c^{N}_{i}(t)]\Delta_{i})\right), \nonumber \\
\cP^{\delta,t,N}(y) &=& \int_{\Delta_{1}=-\delta}^{\delta}\ldots\int_{\Delta_{N-1}=-\delta}^{\delta} \left(\prod_{i=1}^{N-1}d\Delta_{i}\right) P(\Delta_{1},\Delta_{2},\ldots,\Delta_{N-1}) \,\nonumber \\ && \delta\left( y - (\, \texttt{Im}[f^{N}_{r,s}(t)] + \sum_{i=1}^{N-1}\texttt{Im}[c^{N}_{i}(t)] \Delta_{i}) \right). \nonumber 
\end{eqnarray}
Replacing the Dirac delta functions with their Fourier transforms, and then integrating out the $\{\Delta_{k}\}$ variables, we get,
\begin{eqnarray}
\cP^{\delta,t,N}(x) &=& \frac{1}{\sqrt{2\pi}(2\delta)^{N-1}}\int_{\Delta_{1}=-\delta}^{\delta}\ldots\int_{\Delta_{N-1}= -\delta}^{\delta} \int_{k=-\infty}^{\infty} \prod_{i=1}^{N-1}d\Delta_{i} dk \nonumber \\ && \exp{\left(-i k \left( x - \left[ \, \texttt{Re}[f^{N}_{r,s}(t)]+\sum_{i=1}^{N-1}\texttt{Re}[c^{N}_{i}(t)]\Delta_{i}\right]\right)\right)}  \nonumber \\
 &=& \frac{1}{\sqrt{2\pi}(2\delta)^{N-1}} \int_{k=-\infty}^{\infty} dk \exp{\left(-i k ( \, x - \texttt{Re}[f^{N}_{r,s}(t)] \,) \right)} \nonumber \\ \nonumber &&\prod_{i=1}^{N-1} \frac{2\sin \left( k\delta\,\texttt{Re}[c^{N}_{i}(t)] \right)}{k\,\texttt{Re}[c^{N}_{i}(t)]}    \label{eq:dist_real1} \\
&=& \frac{1}{\sqrt{2\pi}(2\delta)^{N-1}} \int_{k=-\infty}^{\infty} dk \exp{\left(-i k ( \, x - \texttt{Re}[f^{N}_{r,s}(t)] \,) \right)} \nonumber \\ && \prod_{i=1}^{N-1} \frac{e^{i \left( k\delta\,\texttt{Re}[c^{N}_{i}(t)] \right)}-e^{-i \left( k\delta\,\texttt{Re}[c^{N}_{i}(t)] \right)}}{i k\,\texttt{Re}[c^{N}_{i}(t)]} \nonumber  \\ \nonumber
&=& \frac{1}{\sqrt{2\pi}(2\delta)^{N-1}} \int_{k=-\infty}^{\infty} dk \exp{\left(-i k ( \, x - \texttt{Re}[f^{N}_{r,s}(t)] \,) \right)}\nonumber \\ &&  \frac{\sum_{j=1}^{2^{N-1}} (-1)^{\alpha_{j}}e^{i (k\,\delta\sum_{i=1}^{N-1} (-1)^{r_{i}^{j}}\texttt{Re}[c^{N}_{i}(t)])}}{(i k)^{N-1}\,\prod_{i=1}^{N-1}\texttt{Re}[c^{N}_{i}(t)]}, \nonumber  
\end{eqnarray}
where, $\alpha_{j}, r_{i}^{j} \in [0,1], \; \forall i,j$. Simplifying further, we get,
\begin{eqnarray}
\cP^{\delta,t,N}(x) &=&
\frac{1}{\sqrt{2\pi}(2\delta)^{N-1}\prod_{i=1}^{N-1}\texttt{Re}[c^{N}_{i}(t)]}\nonumber \\ && \int_{k=-\infty}^{\infty} dk \frac{\sum_{j=1}^{2^{N-1}} (-1)^{\alpha_{j}}\exp{\left(-i k ( \, x - \texttt{Re}[f^{N}_{r,s}(t)] + \delta \sum_{i=1}^{N-1}(-1)^{r^{j}_{i}}\texttt{Re}[c^{N}_{i}(t)] ) \right)}}{{(ik)}^{N-1}} \nonumber \\ 
& =& \left(\frac{1}{(2\delta)^{N-1}}\right)\left(\frac{1}{\prod_{i=1}^{N-1}\texttt{Re}[c^{N}_{i}(t)]}\right) \sum_{j=1}^{2^{N-1}} (-1)^{u_{j}}(q_{j})^{N-2}\,{\rm Sign}[q_{j}], \label{eq:dist_real2}
\end{eqnarray}
where $u_{j} \in [0,1]$, and $q_{j} (x,\texttt{Re}[f^{N}_{r,s}(t)], \{\texttt{Re}[c^{N}_{i}(t)]\})$ are linear combinations of the form,
\begin{equation}
q_{j} \equiv x - \texttt{Re}[f^N_{r,s}(t)] + \delta\sum_{i=1}^{N-1} (-1)^{r_{i}^{j}}\texttt{Re}[c^{N}_{i}(t)]  , \; r_{i}^{j}\in [0,1], \; \forall i=1,\ldots, N-1. 
\end{equation} 
We may evaluate the distribution of the imaginary part of the transition amplitude in a similar fashion, to get,
\begin{equation}
\cP^{\delta,t,N}(y) = \left(\frac{1}{(2\delta)^{N-1}}\right)\left(\frac{1}{\prod_{i=1}^{N-1}\texttt{Im}[c^{N}_{i}(t)]}\right) \sum_{i=1}^{2^{N-1}} (-1)^{u_{j}}(\tilde{q}_{j})^{N-2}\,{\rm Sign}[\tilde{q}_{j}], \label{eq:dist_im2}
\end{equation}
where the $\tilde{q}_{j} (x,\texttt{Im}[f^{N}_{r,s}(t)], \{\texttt{Im}[c^{N}_{i}(t)]\})$ are linear combinations of the form,
\begin{equation}
\tilde{q}_{j} \equiv y - \texttt{Im}[f^{N}_{r,s}(t)] + \delta\sum_{i=1}^{N-1} (-1)^{r_{i}^{j}}\texttt{Im}[c^{N}_{i}(t)]  , \; r_{i}^{j}\in [0,1], \; \forall i=1,\ldots, N-1. 
\end{equation}
We see from Eq.~\eqref{eq:dist_real1} that the limiting distribution in the case of no disorder ($\delta \rightarrow 0$), is indeed a delta distribution peaked around $\texttt{Re}[f^{N}_{r,s}(t)]$:
\begin{equation}
\lim_{\delta\rightarrow 0}\cP^{\delta,t,N}(x) = \frac{1}{\sqrt{2\pi}}\int_{k=-\infty}^{\infty} dk \exp{\left(-i k ( \, x - \texttt{Re}[f^{N}_{r,s}(t)] \,) \right)} = \delta\left( x - \texttt{Re}[f^{N}_{r,s}(t)]\right).
\end{equation}
\begin{figure}  [H]
\centering
\begin{subfigure}{
\includegraphics[width=.75\textwidth]{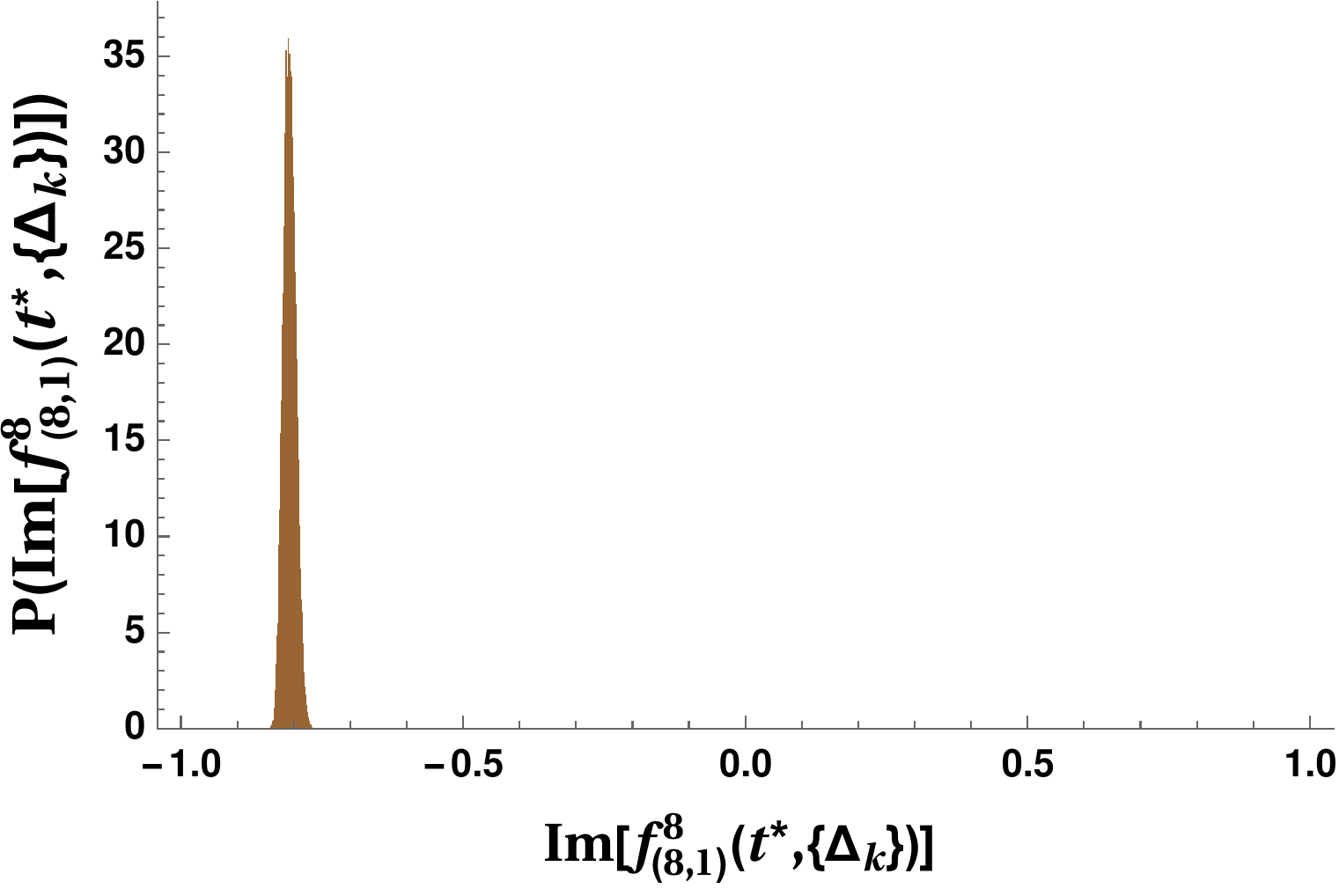}
}
\end{subfigure}
\begin{subfigure}{
\includegraphics[width=.75\textwidth]{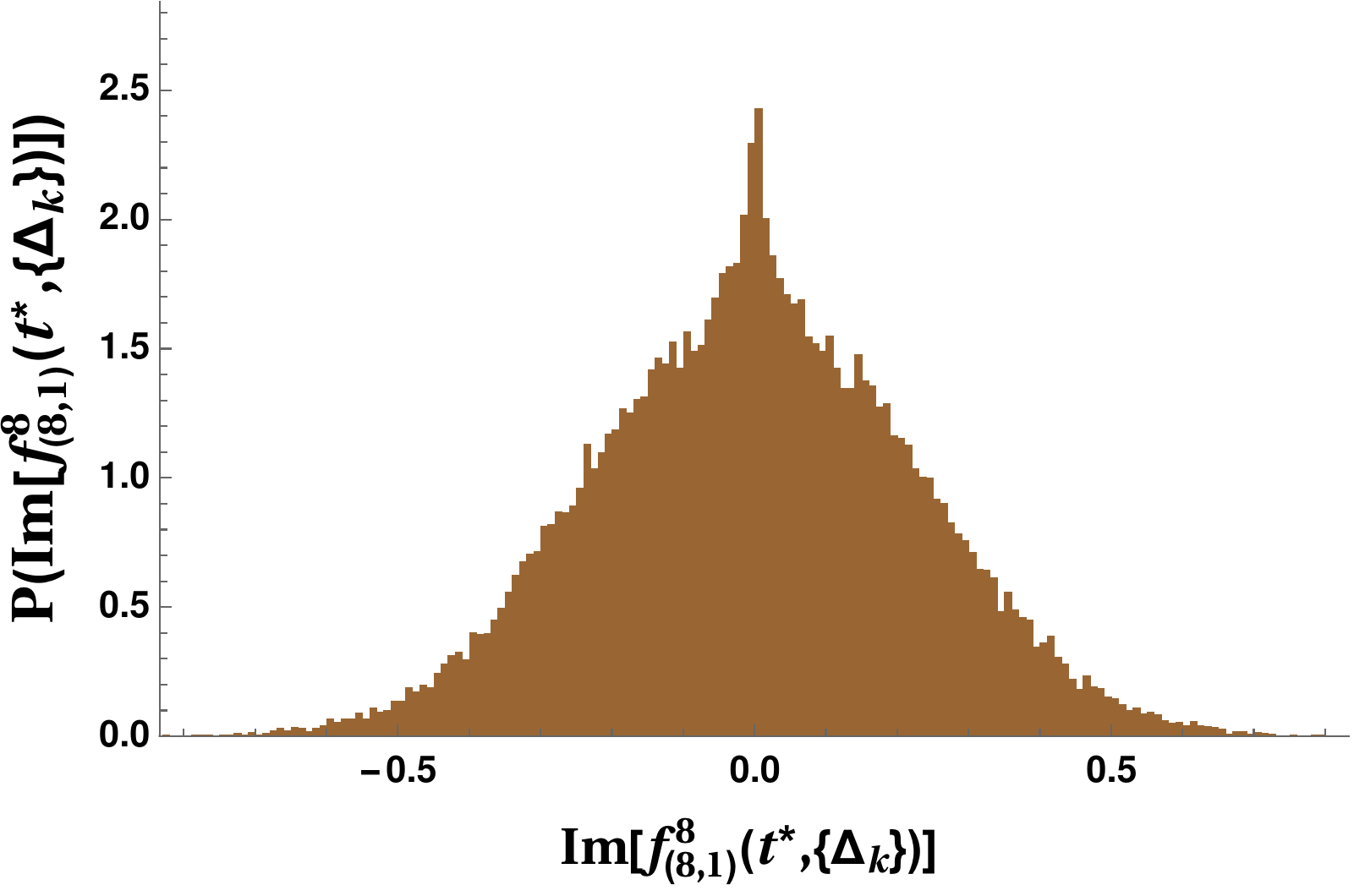}
}
\end{subfigure}
\caption{Distribution of $\texttt{Im}[f_{8,1}^{8}(t^{*}, \{\Delta_{k}\})]$ 
over different disorder realizations drawn from a uniform distribution with disorder strengths $\delta = 0.001$ and $\delta=1$, respectively.}
\label{fig:f_dist_Im}
\end{figure}
Finally, we compute the disorder-averaged value of the transition amplitude 
upto $O(\cH_{\delta}^{2})$. We first modify the expression in Eq.~\eqref{eq:transAmp_final2} to include the second-order perturbation terms:
\begin{equation}
 f^{N}_{r,s}(t,\{\Delta_{k}\}) = f_{r,s}^{N}(t)+ \sum_{i=1}^{N-1}c^{N}_{i}(t) \Delta_{i} + \sum_{i,j=1}^{N-1}d^{N}_{ij} \Delta_{i}\Delta_{j} + \ldots , \label{eq:transAmp_final3}
 \end{equation}
where $\{d^{N}_{ij}\}$ are complex coefficients which are convolutions of the zero-disorder transition amplitude, similar to $\{c^{N}_{i}(t)\}$. Next, using the fact that the random couplings $\{\Delta_{i}\}$ are drawn from a uniform distribution, we obtain,
\begin{eqnarray}\label{eq:trans_ampAvg}
\langle f^{N}_{r,s}(t,\{\Delta_{k}\}) \rangle_{\delta}&=&\frac{1}{(2\delta)^{N-1}}\int_{-\delta}^{\delta} \left( f^{N}_{r,s}(t) + \sum_{i=1}^{N-1}c^{N}_{i}(t)\Delta_{i} +\sum_{l,m=1}^{N-1} d^{N}_{lm}(t) \Delta_{l}\Delta_{m}+\ldots \right) \prod_{i=1}^{N-1}d\Delta_{i} \nonumber \\ 
&=& f^{N}_{r,s}(t)+ \frac{\delta^{2}}{3}\sum_{i}d^{N}_{ii}(t) + O(\delta^{4}) .
\end{eqnarray}
The second moment of $f^{N}_{r,s}(t,\{\Delta_{k}\})$,
\begin{eqnarray}\label{eq:2ndmoment}
\left\langle \, (f^{N}_{r,s}(t,\{\Delta_{k}\}))^{2} \, \right\rangle_{\delta}&=&\frac{1}{(2\delta)^{N-1}} \int_{-\delta}^{\delta} \left( f^{N}_{r,s}(t) + \sum_{i=1}^{N-1}c^{N}_{i}(t)\Delta_{i} +\sum_{l,m=1}^{N-1} d^{N}_{lm}(t) \Delta_{l}\Delta_{m})+\ldots \right)^{2}  \nonumber \\ &&\prod_{i=1}^{N-1}d\Delta_{i} \nonumber \\ 
&=& (f^{N}_{r,s}(t))^{2} + \frac{\delta^{2}}{3}\left( 2 f^{N}_{r,s}(t) \sum_{l=1}^{N-1}d^{N}_{ll}(t)+\sum_{j=1}^{N-1}(c^{N}_{j}(t))^{2}\right) \nonumber \\
&&  + \frac{\delta^{4}}{5}\sum_{l=1}^{N-1}(d^{N}_{ll}(t))^{2}+ \frac{\delta^{4}}{9}\sum_{l \neq m=1}^{N-1}(d^{N}_{lm}(t))^{2} + O(\delta^{6}).
\end{eqnarray}
We can now calculate the variance from Eq~\ref{eq:trans_ampAvg} and Eq~\ref{eq:2ndmoment} as follows:
 \begin{eqnarray}\label{eq:variance}
{\rm Var}[f^{N}_{r,s}(t,\{\Delta_{k}\})] &=& \langle \, (f^{N}_{r,s}(t,\{\Delta_{k}\}))^{2} \, \rangle_{\delta} - \langle f^{N}_{r,s}(t,\{\Delta_{k}\}) \rangle^{2}_{\delta} \nonumber \\
&=& \frac{\delta^{2}}{3}\sum_{j=1}^{N-1}(c^{N}_{j}(t))^{2}+\nonumber \\ &&\delta^{4}\left(\frac{1}{5}\sum_{l=1}^{N-1}(d^{N}_{ll}(t))^{2}+ \frac{1}{9}\sum_{l \neq m=1}^{N-1}(d^{N}_{lm}(t))^{2}-\frac{1}{9}\left(\sum_{l=1}^{N-1}d^{N}_{ll}(t)\right)^{2} \right)  \nonumber \\ &&+ O(\delta^{6}). 
 \end{eqnarray}
To summarize, from Eq.~\eqref{eq:trans_ampAvg} we see that as  $\delta \rightarrow 0$, $\langle f^{N}_{r,s}(t,\{\Delta_{k}\}) \rangle_{\delta}$  approaches the zero-disorder value $f^{N}_{r,s}(t)$. As expected, the variance given in Eq.~\eqref{eq:variance} vanishes in this limit. However as the disorder strength $\delta$ increases, $\langle f^{N}_{r,s}(t,\{\Delta_{k}\}) \rangle_{\delta}$ deviates from the no-disorder case, and the variance also starts growing since terms of $O(\delta^{2})$ become increasingly significant now. 
 
\chapter{Threshold calculation using $[[4,1]]$ code} 
\label{AppendixC} 

\lhead{Appendix sA. \emph{Title of the Appendix in short}} 
\section{Fault tolerance proofs}\label{app:FT}
Here, we present the detailed proofs of the fault tolerance of the various basic encoded gadgets presented in Sec.~\ref{sec:enc_gadget} of the main text.
\subsection{\textsc{ec} unit}\label{app:ECFT}

\begin{lemma}[Properties of the ideal \textsc{ec} unit]\label{lem:ec_prop1}
If the \textsc{ec} unit described in Fig.~\ref{fig:ec} has no fault, it takes an input with at most one damping error to an output with no errors.
 \end{lemma}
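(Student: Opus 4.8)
The plan is to reduce the statement to a finite case analysis over the admissible inputs, using the explicit action of the single-qubit damping error on the codespace recorded in Eq.~\eqref{eq:damping1}. An input ``with at most one damping error'' is either a codeword $|\psi\rangle = a|0_L\rangle + b|1_L\rangle \in \cC$, or the state obtained by acting with a single-qubit damping operator $E_1 \propto (X+\mi Y)$ on one of the four physical qubits of such a codeword. Since the unit is assumed fault-free, every gate and measurement acts ideally, so it suffices to track the (unnormalized) state through the circuit of Fig.~\ref{fig:ec} and verify that the output lies in $\cC$ with the logical data intact. I would first dispose of the error-free case: on any codeword the stabilizer generators $ZZII$ and $IIZZ$ return $+1$, so the parity bits are $s=t=0$; by construction the circuit then terminates, the recovery $\cR$ is never triggered, and the state is returned untouched.

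Next, for a single damping error I would treat the four possible damaged qubits in turn. Using Eq.~\eqref{eq:damping1}, the damping operator on qubit $j$ sends $\cC$ into a two-dimensional subspace spanned by computational-basis states, in which the damaged qubit and its partner in the relevant $ZZ$-parity block sit at definite computational-basis values. Consequently the parity measurements $ZZII,IIZZ$ yield a deterministic $(s,t)$ identifying the affected block, and the follow-up non-stabilizer measurements ($\{ZIII,IZII\}$ or $\{IIZI,IIIZ\}$) yield deterministic, hence non-disturbing, outcomes that pinpoint the damaged qubit exactly as tabulated in Table~\ref{tab:syndrome_bits}. This establishes that the syndrome extraction is both faithful and non-destructive on the relevant inputs. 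The sub-unit $\cR_X$ then applies $X$ to the identified qubit, which I would show maps the post-damping state (for instance $a|0111\rangle + b|0100\rangle$ when qubit $1$ is damped) to a superposition of codeword \emph{components} (here $a|1111\rangle + b|1100\rangle$) that is missing the $XXXX$-conjugate terms.

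The crux, and the step I expect to be the main obstacle, is verifying that $\cR_Z$ regenerates a full codeword despite the non-unitary nature of the damping. Here I would use the decomposition $X+\mi Y = (I+Z)X$: once $\cR_X$ has fixed the $X$ factor, the residual $(I+Z)$ factor has collapsed half of the original superposition, so the state is not yet in $\cC$. Measuring the stabilizer $XXXX$ with the $|+\rangle$ ancilla splits the evolution into two branches. For outcome $c=0$ the $+1$ eigenprojection of $a|1111\rangle+b|1100\rangle$ equals $\tfrac{1}{\sqrt2}(a|0_L\rangle+b|1_L\rangle)$, i.e.\ exactly the input codeword after normalization; for $c=1$ the $-1$ eigenprojection is a pure $Z$-type error, which the local $Z$ gate selected by $Z(\mathbf{P},c)$ in Fig.~\ref{fig:r_ec} removes (the stabilizer $ZZII$ fixing the harmless ambiguity in which qubit receives the $Z$). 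In both branches the output is the original logical state $a|0_L\rangle + b|1_L\rangle$ with no residual error, completing the argument.

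The care required throughout is in confirming that the coherence destroyed by the non-unitary damping operator is fully restored by the $XXXX$ measurement in \emph{both} measurement branches, rather than only on average; this is precisely what distinguishes the correctness of the two-stage recovery $\cR = \cR_Z \circ \cR_X$ from a naive transversal correction. I would present the qubit-$1$ computation in detail and note that the remaining three cases follow by the analogous permutation of the basis strings in Eq.~\eqref{eq:damping1}.
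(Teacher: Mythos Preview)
Your proposal is correct and follows essentially the same approach as the paper's proof: both argue that the syndrome bits uniquely identify the damped qubit (as in Table~\ref{tab:syndrome_bits}) and that the recovery $\cR=\cR_Z\circ\cR_X$ then returns the state to the codespace. Your treatment is considerably more detailed than the paper's rather terse argument---in particular, your explicit verification that the $XXXX$ measurement in $\cR_Z$ restores the coherence lost to the non-unitary damping in \emph{both} outcome branches is exactly the point the paper glosses over with the single sentence ``the recovery unit $\cR$ in Fig.~\ref{fig:r_ec} maps the input state back to a state in the code subspace.''
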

\begin{proof}
Our error-correction gadget in Fig.~\ref{fig:ec} has been designed such that the no-damping error and the single-qubit damping errors point to a unique syndrome bit string $\{s, t, u, h\}$ or $\{s,t,v,g\}$, as tabulated in Tab.~\ref{tab:syndrome_bits}. 

Finally, the recovery unit $\cR$ in Fig.~\ref{fig:r_ec} maps the input state back to a state in the code subspace. Thus our error correction unit without any fault maps an input with upto single-qubit damping errors into the codespace perfectly. 
\end{proof}

We next show that a faulty \textsc{ec} unit does not lead to any errors beyond the single-qubit damping errors. 

\begin{lemma}[Properties of a faulty \textsc{ec} unit]\label{lem:prop2}
If the \textsc{ec} unit in Fig.~\ref{fig:ec} contains at most one fault, it takes an input state with no errors to a state with at most a single damping error.
\end{lemma}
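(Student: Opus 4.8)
The plan is to prove the statement by an exhaustive case analysis over the single fault location in the \textsc{ec} unit, tracking how that fault propagates onto the four data qubits and how the fault-free recovery responds to the resulting syndrome. Since the input is assumed error-free, every error in the output must originate from the one fault, so it suffices to bound the damage each elementary location can inflict. I would partition the unit into three stages: (i) the first-round parity measurements extracting $s$ and $t$, each built from an ancilla prepared in $|0\rangle$, two \textsc{cnot}s with the data qubits as controls and the ancilla as target, and a $Z$-measurement; (ii) the conditional second-round measurements extracting $\{u,h\}$ or $\{v,g\}$; and (iii) the recovery $\cR=\cR_Z\circ\cR_X$. Throughout I would repeatedly invoke the fact, emphasized in Sec.~\ref{sec:noisemodel}, that $|0\rangle$ is the noiseless fixed point of $\cE_\mathrm{AD}$, so a damping fault on any qubit that is in state $|0\rangle$ at that instant acts trivially; this alone eliminates many candidate locations, such as damping on a freshly prepared ancilla.

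The central case is a nontrivial damping fault inside the first-round parity circuit, and here I would use the chosen \textsc{cnot} orientation (data as control, ancilla as target) together with the damping-propagation rules illustrated in Fig.~\ref{fig:cnot_analysis}. A damping error $E_1\propto(X+\mi Y)$ on a control data qubit propagates as that same damping times an $X$ on the ancilla target: it corrupts only the relevant syndrome bit while leaving at most one damping error on a single data qubit. A damping fault on the ancilla, by the complementary target-to-control rule, corrupts the measured bit and can back-propagate at most a single $Z$ onto one later control qubit, which is again a single data-qubit error, detectable by the $XXXX$ check inside $\cR_Z$. Thus any single first-round fault leaves at most one data-qubit error plus a possibly corrupted syndrome. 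If the corrupted syndrome is trivial ($s=t=0$), the recovery is not triggered and the output carries exactly that one damping error, as required; if it is nontrivial, the fault-free recovery is invoked, and I would check against Table~\ref{tab:syndrome_bits} that the diagnosis either cleanly removes the single error or, in the ``fault in one \textsc{cnot}'' rows where no $X$ is applied, leaves at most one residual damping error after $\cR_Z$.

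For stage (ii) I would argue that the non-stabilizer operators $ZIII,IZII$ (respectively $IIZI,IIIZ$) are only measured once stage (i) has flagged a genuine single-damping event, so they act on a state already projected into their joint eigenbasis and cannot delocalize the encoded information; a fault there alters at most one classical bit or one data qubit. Stage (iii) consists of conditional local $X$ and $Z$ applications and the single $XXXX$ measurement with an ancilla in $|+\rangle$, each of which can imprint at most one damping error on one output qubit. Combining the three stages with the companion ideal-unit result (Lemma~\ref{lem:ec_prop1}) gives the claimed bound of at most one damping error in the output.

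The main obstacle I anticipate is the bookkeeping in the ``detected but spuriously diagnosed'' sub-cases of stage (i): a single \textsc{cnot} fault can simultaneously flip a syndrome bit \emph{and} deposit a correlated data error (for instance, a damping on the ancilla between the two \textsc{cnot}s both skews the parity outcome and propagates a $Z$). I must therefore verify that the recovery's conditional $X$ correction and its $XXXX$-based $Z$ correction never conspire to create a second independent damping error or an uncorrectable two-qubit error. Establishing this cleanly rests on the explicit syndrome-to-fault dictionary of Table~\ref{tab:syndrome_bits} and on the observation that genuinely distinct single faults map to distinct syndromes; the detailed verification of these residual sub-cases is what I would defer to the appendix.
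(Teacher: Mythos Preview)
Your approach---an exhaustive case analysis over the single fault location, tracking propagation through the \textsc{cnot}s and into the syndrome---is exactly the paper's approach, and your identification of the ``spurious syndrome plus correlated data error'' cases as the crux matches the paper's treatment. The one simplification you miss, which the paper uses to collapse the analysis, is the conditional structure of the unit: for an error-free input, the second-round measurements (your stage (ii)) and the recovery $\cR$ (stage (iii)) are triggered \emph{only} if $s$ or $t$ is nontrivial, which in turn can happen only if the single allowed fault already lies in the first-round parity measurements. Hence stages (ii) and (iii), whenever they run, are guaranteed fault-free, and your separate treatment of faults in those stages is vacuous. With that observation in hand the paper simply enumerates the handful of fault positions in the first parity block (control-only, target-only, both, and the ancilla $Z$-measurement), writes out the resulting data state and syndrome string for each, and checks against Table~\ref{tab:syndrome_bits} that the fault-free recovery leaves at most one residual damping error---precisely the verification you planned to ``defer to the appendix,'' but with far fewer cases.
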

\begin{figure}[t!]
\centering
\includegraphics[scale=.4]{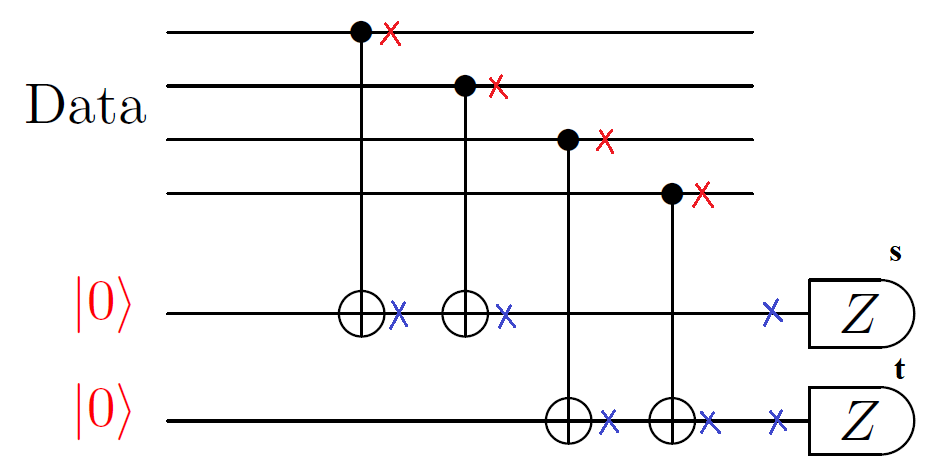}
 \caption{A faulty parity measurement unit, where the faults are marked in red for the data qubits and blue for the ancilla qubits}
 \label{fig:faulty_syndrome}
 \end{figure}
\begin{proof}
Our \textsc{ec} unit consists of a syndrome measurement unit followed by a recovery unit as given in Fig.~\ref{fig:ec}. Unless atleast one of the syndrome bits $s$ or $t$ record an odd parity, the subsequent units are not triggered. Consider a single fault in the parity measurement part of the syndrome extraction unit given in Fig.~\ref{fig:faulty_syndrome}.
\begin{itemize}
\item A faulty \textsc{cnot} in the control at location $1$ in Fig.~\ref{fig:faulty_syndrome} will take an incoming state without any error to an output with single-qubit damping error as given below.
\begin{equation}
a|0\rangle_{L} +b|1\rangle_{L} \rightarrow a| 0111\rangle + b|0100\rangle
\end{equation}
where $\{|0\rangle_L, |1\rangle_L\} \in \cC$,  and $|a|^2 +|b|^2 = 1$. However, one may not correct for this error because syndrome bit $s=0$ in this case.
\item  A faulty \textsc{cnot} in the target at location $2$ in Fig.~\ref{fig:faulty_syndrome} will take an incoming state without any error to an output with partial superposition as given below
\begin{equation}
a|0\rangle_{L} +b|1\rangle_{L} \rightarrow a| 1111\rangle + b|1100\rangle
\end{equation}
In this case one obtains an odd parity outcome where $s=1$, following which one obtains $u=1$, $h=1$. Finally, one passes the state through the recovery unit $\cR$ in Fig.~\ref{fig:r_ec} to rebuild the superposition. 
\item A faulty \textsc{cnot} both in the control (location $1$) and target (location $2$) will take an incoming state without any error to an output with single-qubit damping error as given below.
\begin{equation}
a|0\rangle_{L} +b|1\rangle_{L} \rightarrow a| 0111\rangle + b|0100\rangle .
\end{equation}
In this case we have $s=1$, therefore subsequent units are triggered recording an outcome $u=0$, $h=1$. One then passes through the recovery to correct for the error. Note that we extract additional syndrome bits $h, g$ in order to distinguish between the two instances of a faulty \textsc{cnot} in Fig.~\ref{fig:faulty_syndrome}: (a) one with the fault both on control (location $1$) and target (location $2$), leading to a damping error at the output and (b) a faulty \textsc{cnot} at the target (location $2$) in Fig.~\ref{fig:faulty_syndrome}, leading to a state with partial superposition on the codespace.

\item A faulty $Z$ measurement on the ancilla will not introduce any error to the already error-free incoming state. This is because the ancilla qubit would already be in $|0\rangle$ just before the $Z$ measurement for an error-free incoming state. 
\end{itemize}

We note that similarly, the other faulty locations in Fig.~\ref{fig:faulty_syndrome} will lead to at most one single damping error at the output. For an incoming state without any error, the parity check units in Fig.~\ref{fig:ec} and the recovery unit in Fig.~\ref{fig:r_ec} are triggered only when there is atleast one fault in the parity measurement unit in Fig.~\ref{fig:faulty_syndrome}. Therefore, the \textsc{ec} unit with a single fault propagates at most single damping error at the output for an incoming state without any error.

\end{proof}


\subsection{Bell-state preparation unit}\label{app:BellFT}

\begin{lemma} [Properties of the Bell-state preparation unit]\label{lem:prep}
If the preparation unit in Fig.~\ref{fig:plus} has no faults, it propagates an input with upto one single-qubit damping error to an output with at most one single-qubit damping error. A faulty unit in Fig.~\ref{fig:plus}, with at most single damping fault propagates an incoming state with no error to an output with at most one single-qubit damping error. 
\end{lemma}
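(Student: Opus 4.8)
The plan is to prove both assertions by a direct case analysis of how a single damping error---whether carried in from the input or created by an internal fault---propagates through the verified preparation circuit of Fig.~\ref{fig:plus}, exploiting the fact that the circuit accepts its output only when both parity-check measurements are even. Throughout I would represent a single damping error by the non-trivial part of $E_1\propto(X+\mi Y)$ from Eq.~\eqref{eq:ampdamp}, and propagate such operators through the \textsc{cnot} gates using the rule depicted in Fig.~\ref{fig:cnot_analysis}: a damping error on a control line emerges as a damping error on the control together with an $X$ error on the target. The decisive consequence is that damping errors generically fail to stay of damping type, so the verification step is needed precisely to flag the $X$-type and multi-qubit components before they can reach the accepted Bell pair.

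For the fault-free unit with an input carrying at most one damping error, I would first observe that every qubit entering the circuit is initialized to $|0\rangle$, the fixed point of amplitude damping, so that $E_1|0\rangle=0$ and an incoming damping error can act non-trivially only after a qubit has been rotated into a superposition by a Hadamard. I would then place the single incoming damping error on each admissible qubit in turn, push it through the fault-free verification \textsc{cnot}s, and read off the parities of the two verification measurements. The statement to confirm is a dichotomy: either the propagated error flips at least one measured outcome---forcing an odd-parity \emph{rejection}, in which case the unit emits nothing and the bound holds vacuously---or it leaves both parities even while depositing at most one single-qubit damping error on the retained Bell pair.

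For a clean input with at most one internal fault, I would partition the fault locations into (i) the gates of the two non-fault-tolerant $|\beta_{00}\rangle$ preparations, (ii) the verification \textsc{cnot}s, and (iii) the $X$- and $Z$-basis measurements, and dispose of each class separately. A single fault in a preparation gate deposits at most one damping error, possibly dressed by an $X$ error through \textsc{cnot} propagation, on one copy; I would show that any configuration leaving more than a single damping error, or any $X$-type error, on the surviving block necessarily triggers an odd parity and is discarded---this is exactly the detection claimed in the text below Eq.~\eqref{eq:bell}. A faulty measurement only corrupts a classical outcome, which can cause a harmless false rejection but never a silent corruption of the kept state. Assembling the surviving cases then yields that every accepted output carries at most one single-qubit damping error.

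I expect the main difficulty to lie in the bookkeeping for part (ii) of the second assertion. Because damping errors spawn $X$ errors on \textsc{cnot} targets rather than commuting through cleanly, I must verify that the particular layout of verification \textsc{cnot}s and the choice of $X$- versus $Z$-basis measurements is \emph{complete}: every single fault whose forward image contains a second damping error, or an $X$-type error on the output Bell pair, must disturb the parity of some measured qubit. Excluding any ``silent'' fault that both survives into the output and evades both parity checks is the crux of the argument; once this completeness is established, both assertions follow at once from the post-selection rule.
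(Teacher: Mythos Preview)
Your proposal is correct and follows essentially the same approach as the paper: a case analysis showing that any single damping fault (or incoming damping error) either is flagged by an odd parity in one of the $X$- or $Z$-basis verification measurements and hence rejected, or leaves at most a single damping error on the accepted Bell pair. The paper's own proof is considerably terser---it partitions faults only into ``faulty Hadamard'' (caught by the $X$-parity) versus ``fault elsewhere in the verifier'' (caught by the $Z$-parity)---whereas your three-way decomposition into preparation gates, verification \textsc{cnot}s, and measurements is finer-grained, and your observation that $E_1|0\rangle=0$ renders the first assertion essentially vacuous is a point the paper leaves implicit.
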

\begin{proof}
We now explain how the circuits in Fig.~\ref{fig:b_00} and Fig.~\ref{fig:plus} lead to a fault-tolerant preparation of the Bell state $|\beta_{00}\rangle$.
\begin{itemize}
\item If any of the Hadamard gates turns out to be faulty while preparing two copies of the Bell state $|\beta_{00}\rangle$ in Fig.~\ref{fig:b_00}, we obtain an odd parity outcome in the $X$ measurements in the verifier, in which case we reject the prepared state.
\item A fault in any other location in the verifier in Fig.~\ref{fig:plus} will show up as an odd parity outcome during the $Z$ measurement, in which case we reject the prepared Bell state. 
\item When there are no faults at all, both $X$ measurements as well as $Z$ measurements show an even parity outcome, in which case we accept the prepared Bell state. 
\end{itemize}
\end{proof}


\subsection{Logical $X$ measurement}\label{app:X_meas}
We next prove that the $\overline{X}$ measurement unit satisfies the following fault tolerance properties.
\begin{lemma} [Properties of the $\overline{X}$ measurement unit]\label{lem:meas}
If the $\overline{X}$ measurement unit in Fig.~\ref{fig:xL} has no faults, it leads to a correctable classical outcome when it measures an incoming state with at most one single-qubit damping error. When a faulty $\overline{X}$ measurement unit, with at most a single damping fault measures an incoming state with no error, it leads to a classical outcome that is correctable. 
\end{lemma}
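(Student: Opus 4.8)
The plan is to reduce the fault tolerance of the $\overline{X}$ measurement unit to an elementary classical decoding argument over its three Bell-measurement outcomes, exactly as sketched in Sec.~\ref{sec:X_meas}. First I would fix notation: the unit measures the logical operator $\overline{X}=XXII$ by performing Bell measurements on the two data pairs $(1,2)$, $(3,4)$ and on the ancilla pair prepared in $|\beta_{00}\rangle$, each returning a pair $(m_X,m_Z)$ recording the $XX$- and $ZZ$-eigenvalues of that block. The key preliminary step is to establish the \emph{clean baseline}: on an ideal codeword the three outcomes are identical and of the form $(b,0)$, where $b$ is the $\overline X$ eigenvalue. This follows because $X_1X_2$ and $X_3X_4$ differ only by the stabilizer $XXXX$, so they carry the same eigenvalue $b$ on $\cC$, the ancilla pair is arranged to reproduce the same $b$, and the $ZZ$-parities vanish since $ZZII$ and $IIZZ$ stabilize $\cC$ while $|\beta_{00}\rangle$ is a $+1$ eigenstate of $ZZ$. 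The classical decoding rule is then: discard any block whose flag bit $m_Z=1$, and majority-vote the surviving $m_X$ bits to output $b$.

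For the first assertion (no fault in the unit, input carrying at most one damping error) I would argue that a single damping error $E_1\propto(X+\mi Y)$ sits on a single physical qubit, hence inside a single one of the three two-qubit blocks. Its $X$-component toggles the $ZZ$-parity of that block, so the corresponding Bell outcome is flagged as $(b',1)$, while the two remaining blocks are untouched and return identical clean outcomes $(b,0)$. Applying the discard-and-vote rule therefore recovers the true $b$, so the classical outcome is correctable in the sense of Prop~4. I would treat $E_1$ in the chapter's stated approximation, dropping its $O(p)$ back-action ($Z$) piece, so that each damping error behaves as a flagged, single-block $X$-flip.

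For the second assertion (a single fault anywhere in the unit, clean input) I would classify the fault locations --- a post-gate damping error, a faulty coupling gate, a faulty Bell measurement, or a faulty ancilla preparation --- and show that each perturbs at most one of the three blocks, so that at most one outcome pair is corrupted. The two uncorrupted blocks then remain clean and agree, and the same rule returns $b$. I would additionally dispose of the degenerate sub-case in which a fault flips only the first bit $m_X$ without raising the flag: there all three $m_X$ bits survive, and the plain $2$-versus-$1$ majority vote still yields $b$. This is precisely the content of the statement that distinct single faults map to distinct, classically correctable outcomes.

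The hard part will be verifying the \emph{confinement} claim underlying both cases: that a single damping error --- which, being the non-Pauli operator $(X+\mi Y)$, propagates nontrivially through the \textsc{cnot} couplings of Fig.~\ref{fig:xL} --- never corrupts two of the three redundant copies simultaneously. Concretely, I must commute the damping error through every gate of the unit using the propagation rules of Fig.~\ref{fig:cnot_analysis} and check, location by location, that the induced Bell-outcome pattern always leaves at least two clean, agreeing blocks. This gate-by-gate bookkeeping is exactly what Appendix~\ref{app:X_meas} carries out in detail, and it is where the transversal/disjoint structure of the three Bell measurements does the real work; once that confinement is in hand, both halves of the lemma follow immediately from the discard-and-vote decoding rule.
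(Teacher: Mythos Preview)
Your proposal is correct and follows essentially the same route as the paper: three redundant Bell measurements yielding identical $(b,0)$ in the clean case, with the discard-flagged-blocks-then-majority-vote decoding rule, and a case analysis showing that any single damping error or fault corrupts at most one of the three outcome pairs. The only stylistic difference is that the paper establishes the clean baseline by writing out the explicit six-qubit joint state after the \textsc{cnot}s (Eq.~\eqref{eq:xmeasurement3}) rather than by the stabilizer reasoning you sketch; your handling of the unflagged-flip sub-case via plain majority vote is in fact slightly more careful than the paper's appendix, which tacitly assumes the $m_Z$ flag is always raised.
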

\begin{proof}
Note that after the action of the two \textsc{cnot}s between the ancilla qubits at the top and the four (data) qubits in Fig.~\ref{fig:xL}, the joint state of the system is given by,
\begin{eqnarray} \label{eq:xmeasurement3}
&& |\beta_{00}^{d}\rangle^{\otimes3}(a|\beta_{00}^{\rm anc}\rangle^{\otimes 2} + b |\beta_{10}^{\rm anc}\rangle^{\otimes 2} ) \nonumber \\
&& + |\beta_{10}^{d}\rangle^{\otimes 3}(a|\beta_{00}^{\rm anc}\rangle^{\otimes 2} - b |\beta_{10}^{\rm anc}\rangle^{\otimes 2} ). 
\end{eqnarray}
Amongst the three copies of Bell pairs, one copy refers to the state of the ancilla qubits at the top and the other two refer to the state of the remaining four data qubits. We now discuss all possible single fault scenarios below.
\begin{itemize}
\item When there is no fault anywhere, the ($X$, $Z$) measurements in Fig.~\ref{fig:xL} lead to (a) three copies of the  outcome $(0,0)$ while projecting the four data qubits onto the state $a|+\rangle_L +b |-\rangle_L$, or, (b) three copies of $(1,0)$, in which case the data qubits are projected onto the state $a|+\rangle_L - b |-\rangle_L$. Thus, in the ideal case, our circuit in Fig.~\ref{fig:xL} realises a logical $\overline {X}$ measurement. Note that the measurement outcomes $0$ and $1$ correspond to $+1$ and $-1$ eigenstates, respectively, of the measurement operators in Fig.~\ref{fig:xL}.
\item Observe that whenever there is a fault anywhere within the unit in Fig.~\ref{fig:xL}, or an error in the incoming state, the ($X$, $Z$) measurements at the end in Fig.~\ref{fig:xL} will lead to outcomes of the form $(0,1)$, $(1,0)$ or $(1,1)$. This is due to the realisation of a different Bell pair, namely, $|\beta_{01}\rangle$ or $|\beta_{11}\rangle$, as a result of the fault. However, we note that this can still be detected and corrected. 
\end{itemize}
Thus the measurement of logical operator $\overline{X}$ in Fig.~\ref{fig:xL} is tolerant upto single-qubit damping errors thereby satisfying the desired properties.
\end{proof}

\subsection{Logical $X$ operation}\label{app:xlogical}
\begin{lemma} [Properties of the $\overline{X}$ gadget]\label{lem:xbar1}
A logical $\overline{X}$ gadget without any faults, propagates an input with at most one single-qubit damping error to an output with at most one single-qubit damping error. A logical $\overline{X}$ gadget with at most a single damping fault propagates an incoming state with no error to an output with at most one single-qubit damping error. 
\end{lemma}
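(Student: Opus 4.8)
The plan is to exploit the modular structure of the $\overline{X}$ gadget in Fig.~\ref{fig:xgadget}: it is assembled entirely from sub-units that have already been shown, or can be bounded by the same techniques as in Lemmas~\ref{lem:ec_prop1} and~\ref{lem:prop2}, to propagate at most one damping error. These are the leading parity-check circuitry (identical in construction to the syndrome extraction of the \textsc{ec} unit), the conditional local-$X$ operations, the phase-recovery unit $\cR_Z$ of Fig.~\ref{fig:r_ec}, and the modified correction unit \textsc{ec}$'$ of Fig.~\ref{fig:ecprime}, which internally invokes a full \textsc{ec} unit. Thus both claims reduce to a finite case analysis that tracks a single lowering error $E_1\propto(X+\mi Y)$ through these blocks, verifying at each branch point that (i) the error is diagnosed by the syndrome bits $\{s,t,u,h\}$ exactly as in Table~\ref{tab:syndrome_bits}, and (ii) the local $X$ gates are applied so as never to convert the incoming damping error into an uncorrectable raising error $X-\mi Y$.

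For the fault-free claim I would split on the location of the (at most one) incoming damping error. If the input is error-free the parity bits satisfy $s=t=0$, the circuit proceeds to \textsc{ec}$'$, applies the transversal pair of $X$ gates realizing $\overline{X}=IIXX$, and, since no fault occurs and the state is clean, the auxiliary parity checks $x,y$ of \textsc{ec}$'$ remain trivial, leaving the output in $\overline{X}\cC$ with no error. If instead there is a single damping error, it flips exactly one of $s,t$ (first pair $\Rightarrow s=1$, second pair $\Rightarrow t=1$), as is immediate from the images of the damping errors in Eq.~\eqref{eq:damping1}; the follow-up bits $u,h$ (resp.\ $v,g$) localize it precisely as in the \textsc{ec} analysis, the conditional local $X$ gates simultaneously effect $\overline{X}$ and, by acting on the flagged qubit, return the damping error's $X$-part to a correctable form (absorbing the $\cR_X$ step), and the trailing $\cR_Z$ corrects the residual $(I+Z)$ factor via the $XXXX$ measurement, leaving an output with at most one damping error.

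For the single-fault claim the input is error-free, and I would enumerate the fault location. A fault in the leading parity-check \textsc{cnot}s or $Z$-measurements is handled verbatim as in the proof of Lemma~\ref{lem:prop2}: it is either flagged by $s,t$ and removed downstream, or undetected but, by the construction of the parity circuit, leaves at most one single damping error. A fault in one of the local $X$ gates is amplitude-damping noise acting immediately after a single-qubit gate, hence a single damping error on one data qubit; the parity checks $x,y$ built into \textsc{ec}$'$ are placed precisely to catch such an event and trigger a fault-free \textsc{ec} unit, which by Lemma~\ref{lem:prop2} outputs at most one damping error. A fault inside \textsc{ec}$'$ or the embedded \textsc{ec} is bounded directly by Lemma~\ref{lem:prop2}, while a fault in the $\cR_Z$ recovery (the $XXXX$-measurement ancilla or the conditional $Z$), occurring at the very end, can reach at most one qubit of the output block. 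Collecting the cases establishes the lemma.

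The main obstacle I anticipate lies in the branch where an incoming damping error and the applied logical $X$ interact: I must check case-by-case that the conditional $X$ gates selected from the syndrome never map $X+\mi Y$ to $X-\mi Y$, which the $4$-qubit code cannot correct, and that the residual error after $\cR_Z$ is genuinely a single damping error rather than a two-qubit or phase-type error outside the correctable set of Eqs.~\eqref{eq:damping1}--\eqref{eq:dephasing}. This is exactly the non-transversality subtlety flagged in Sec.~\ref{sec:logicalX}, so the heart of the argument is confirming that the conditioning on $\{s,t,u,h\}$ restores what a naive transversal application of $\overline{X}$ would have destroyed.
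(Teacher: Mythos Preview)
Your modular strategy and your instinct to worry about $X+\mi Y\mapsto X-\mi Y$ under conjugation by $X$ are both well-placed, and the overall shape of the case analysis parallels the paper's. However, there is a concrete misreading of the circuit that breaks your branching argument.

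You treat the leading parity checks as ``identical in construction to the syndrome extraction of the \textsc{ec} unit'' (i.e., $ZZII$ and $IIZZ$ on the four data qubits), and from this conclude that an incoming damping error on the first pair gives $s=1$ and on the second pair gives $t=1$, invoking Eq.~\eqref{eq:damping1} and Table~\ref{tab:syndrome_bits}. That is not the circuit of Fig.~\ref{fig:xgadget}. The gadget first applies two \textsc{cnot}s from data qubits $3,4$ onto two dedicated ancillas, producing the entangled state
\[
a\bigl(|0000\rangle|00\rangle+|1111\rangle|11\rangle\bigr)+b\bigl(|0011\rangle|11\rangle+|1100\rangle|00\rangle\bigr),
\]
and the bits $s,t$ are then the parities of data qubits $3,4$ and of those two ancillas, not of the two data-qubit pairs. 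With this correct reading, an incoming damping error on qubit $1$ or $2$ yields $s=t=0$ (not $s=1$): the state falls through to \textsc{ec}$'$, where the transversal $IIXX$ commutes past the error on the untouched qubit and the output carries exactly one residual damping error. An incoming error on qubit $3$ or $4$, or a fault on one of the entangling \textsc{cnot}s, produces a nontrivial $(s,t)$ pattern and is routed to the $u,h$ branch; the distinct patterns (e.g.\ $(1,0)$, $(0,1)$, $(1,1)$) disambiguate a control-only fault, a target-only fault, a fault on both, and an incoming error. Your conjugation worry is therefore resolved by the routing itself: local $X$ is applied only after the damped qubit among $\{3,4,\textrm{anc}_1,\textrm{anc}_2\}$ has been localized, so $X$ never acts on an un-diagnosed $E_1$.

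The paper's proof carries this out by computing the post-\textsc{cnot} state explicitly and tabulating, for each single fault on the entangling \textsc{cnot}s and for an incoming error, the resulting six-qubit state and its $(s,t,u,h)$ syndrome. Your argument can be repaired, but it must be rebuilt on that actual syndrome structure rather than by analogy with the \textsc{ec} unit's $ZZII/IIZZ$ checks; Table~\ref{tab:syndrome_bits} does not apply here.
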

\begin{proof}
We demonstrate how the logical $X$ gadget described in Fig.~\ref{fig:xgadget} is tolerant against single-qubit damping errors. We need six ancillas initialized to $|0\rangle$ to establish the action of the logical $X$ operation on the incoming encoded state $|\Psi\rangle$. The joint state of the system, which is any arbitrary single qubit state in the codespace $|\psi\rangle$ and two ancillas initialized to $|0\rangle$ is given by
\begin{eqnarray}
|\psi\rangle |00\rangle=  (a|0\rangle_L+b|1\rangle_L)|00\rangle \nonumber
\end{eqnarray}
where $a,b$ define any arbitrary single qubit state, satisfying $|a|^2 +|b|^2 = 1$ and $\{|0\rangle_L,|1\rangle_L\}$ $\in$ $\cC$. Note that the first four qubits represent the data qubits and the last two qubits represent the ancillas. The joint state of the data  qubits $|\Psi\rangle$  and the ancillas after the action of two \textsc{cnot}s is given by
 \begin{eqnarray}
&& a(|0000\rangle{|00\rangle}+ |1111\rangle{|11\rangle}) \nonumber \\ 
&& + b(|0011\rangle{|11\rangle} +|1100\rangle{|00\rangle}) /\sqrt{2} \nonumber
 \end{eqnarray} 
A single fault at this stage can be detected by the parity check  bits $s,t$ and corrected by the recovery unit $\cR_{Z}$ in Fig.~\ref{fig:xgadget}. Specifically, we may consider the following faulty cases that may occur until this stage:
\begin{itemize}
\item A faulty \textsc{cnot} (at both control and target) leads to
\[a|1101\rangle|01\rangle + b|0001\rangle|01\rangle \]
\item A faulty \textsc{cnot} (at control) leads to
\[a|1101\rangle|11\rangle + b|0001\rangle|11\rangle \]
\item A faulty \textsc{cnot} (at target) leads to
\[a|1111\rangle|01\rangle + b|0011\rangle|01\rangle \]
\item An incoming state with an error
\[(a|1101\rangle+b|0001\rangle)|01\rangle\]
\end{itemize}
The parity measurements on the last two data qubits (leading to outcome $s$) and on the two ancilla qubits (leading to parity check bit $t$) trigger another pair of parity checks, denoted by the bits $u$ and $h$, as shown in Fig.~\ref{fig:xgadget}. Depending on the outcomes $s, t, u, h$, we apply local $X$ gates to correct the system state and then pass through the recovery unit $\cR_{Z}$ (described earlier in Sec.~\ref{sec:ec_unit}) to obtain the state $a|1\rangle_L+ b|0\rangle_L$.

In case there is no fault detected by the $s,t$ parity checks, we proceed to the \textsc{ec}' unit in Fig.~\ref{fig:ecprime}. This unit applies four local $X$ gates (two each on the data qubits and the ancilla qubits) to implement the logical $\overline X$ operation. This is followed by another parity check unit, whose outcomes $x,y$ trigger an \textsc{ec} unit, thus ensuring that that the data qubits are finally in the state $a|1\rangle_L+ b|0\rangle_L$. 

Thus the logical $\overline{X}$ gadget is tolerant upto single-qubit damping errors, thereby satisfying the desired fault tolerance properties.
\end{proof}

\subsection{Logical \textsc{cphase} operation}\label{app:cz}
\begin{lemma} [Properties of the \textsc{cphase} gadget]\label{lem:cphase}
If the \textsc{cphase} gadget in Fig.~\ref{fig:cphase} does not have any faults, it propagates an input with upto one single-qubit damping error to an output with at most one single-qubit damping error. A faulty \textsc{cphase} gadget in Fig.~\ref{fig:cphase}, with at most one single damping fault, propagates an incoming state with no errors to an output with upto one single-qubit damping error.  
\end{lemma}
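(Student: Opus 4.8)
The plan is to prove both properties by the same error-propagation and exhaustive case-analysis method used for the \textsc{ec} unit in Lemmas~\ref{lem:ec_prop1} and~\ref{lem:prop2}, treating the transversal controlled-phase gates of Fig.~\ref{fig:cphase} together with the trailing \textsc{ec} and cross-conditioned $\cR_Z$ units. The first thing I would record is the single rule that drives everything: a controlled-phase gate commutes with $Z$ on either qubit and sends $X\otimes I\mapsto X\otimes Z$ and $I\otimes X\mapsto Z\otimes X$. Since the damping Kraus operator in Eq.~\eqref{eq:ampdamp} satisfies $E_1\propto X(I-Z)$, a single-qubit damping error on qubit $i$ of one block just before the transversal gate propagates into a damping error on qubit $i$ of the \emph{same} block (the $(I-Z)$ factor commutes through, while the $X$ factor deposits a $Z$ on the partner qubit) together with a pure phase error $Z$ on qubit $i$ of the \emph{other} block, exactly as in Fig.~\ref{fig:cphase_analysis}. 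This observation is what separates \textsc{cphase} from \textsc{cnot}, whose transversal form fails precisely because the induced error on the partner block is an uncorrectable $X$ rather than a benign $Z$.

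For Property~1 I would assume the gadget is fault-free and the input carries at most one single-qubit damping error, say on qubit $i$ of the control block. After the transversal gates the control block holds a damping error on qubit $i$ and the target block holds a $Z$ on qubit $i$. The fault-free trailing \textsc{ec} of the control block removes its damping error and, by Table~\ref{tab:syndrome_bits}, correctly diagnoses that qubit $i$ was damped; this syndrome is precisely what triggers the partner $\cR_Z$ unit in the target block, which measures $XXXX$ and applies the matching local $Z$ to cancel the induced phase (Fig.~\ref{fig:r_ec}). The roles of control and target are symmetric, and a clean input trivially yields a clean output, so each output block ends error-free, hence with at most one damping error.

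For Property~2 I would fix an error-free input and a single damping fault and enumerate by fault location. A fault on one of the physical controlled-phase gates produces, via the propagation rule, a single damping error on one block plus at most one induced $Z$ on the partner block, handled exactly as in Property~1. A fault inside a trailing \textsc{ec} unit adds at most one damping error by Lemma~\ref{lem:prop2}, and since the data entering that \textsc{ec} is clean there is nothing else to spoil. The remaining subcase is a fault inside a conditional $\cR_Z$ unit itself (a faulty $XXXX$ ancilla measurement or a faulty local $Z$); here I would check, as in Fig.~\ref{fig:r_ec}, that it contributes at most one single-qubit error. Collecting the cases establishes the claim.

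The hard part will be the cross-conditioning bookkeeping in Property~2: because the $\cR_Z$ acting on one block is driven by the syndrome \emph{extracted in the other block}, I must rule out the pathological possibility that the one allowed fault both seeds a damping-plus-induced-phase pair and simultaneously corrupts the very syndrome meant to steer the partner's $Z$ correction. The resolution I would pursue is locational: a fault that creates the induced phase must occur at or before a physical controlled-phase gate, whereas the triggering syndrome is read out strictly later in the fault-free trailing \textsc{ec}, so a single fault can belong to only one of these stages. One must also verify that the $(I-Z)$ factor carried by every damping error is routed correctly, its $X$ part to $\cR_X$ and its phase part to the cross-block $\cR_Z$, so that a lone damping error is not silently split into two uncorrectable pieces; checking this routing for all four qubit pairs is the routine but essential computation, which I would relegate to Appendix~\ref{app:cz}.
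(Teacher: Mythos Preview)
Your proposal is correct and follows essentially the same approach as the paper: both hinge on the propagation identity $\textsc{cz}(E_1\otimes I)\textsc{cz}=E_1\otimes Z$ (Fig.~\ref{fig:cphase_analysis}), then invoke the trailing \textsc{ec} units to remove the surviving damping error and the cross-conditioned $\cR_Z$ units to cancel the induced phase. Your case analysis is in fact more thorough than the paper's very terse proof, which simply states the propagation rule and asserts correctability; your cross-conditioning bookkeeping in Property~2 and the explicit per-location enumeration go beyond what the paper spells out.
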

\begin{proof} 
Consider the case of a damping error propagating through a two-qubit \textsc{cphase} gate, as shown in Fig.~\ref{fig:cphase_analysis}.

\begin{figure}[H]
\centering
\includegraphics[scale=.65]{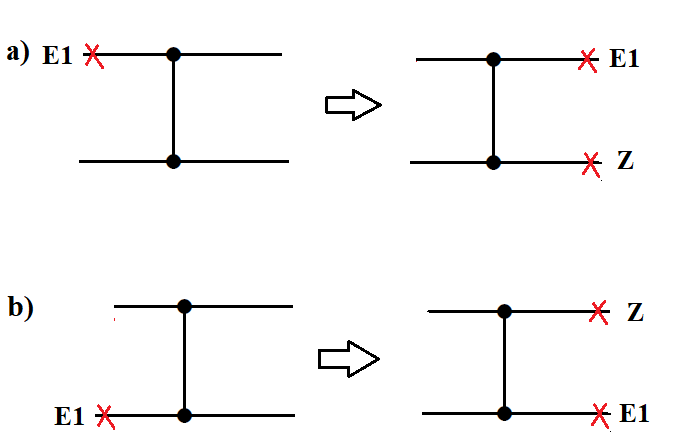}
\caption{a) Propagation of a damping error $E_1$ before the control (top) and b) target (bottom) of a \textsc{cphase} gate. }
\label{fig:cphase_analysis}
\end{figure}

Note that in Fig.~\ref{fig:cphase_analysis}, an incoming damping error right before the control as well as target propagates as a damping error $E_1$ and a phase error $Z$, both of which are correctable by the \textsc{ec} unit, as explained in Sec.~\ref{sec:ec_unit}. Furthermore, the additional $Z$ error in one of the blocks due to a damping error in the other block can be corrected by the $\cR_{Z}$ units at the of the \textsc{cphase} gadget. Hence, our transversal \textsc{cphase} gadget satisfies the desired fault tolerance properties. 
\end{proof}

\section{Pseudothreshold Calculation}\label{app:thresh_calc}

We describe here the details of our pseudothreshold calculation, for the memory unit and the extended \textsc{cz} unit. We assume that the inputs to the units do not have any errors and explicitly count the total number of malignant faults of $O(p^2)$ which will cause a given unit to fail. These malignant faults include phase faults $Z$ and two-qubit damping errors.

\subsection{Total number of locations within a \textsc{EC} unit}
We first count the total number of locations within our \textsc{ec}-unit, which is redrawn here for clarity.

\begin{figure}[H]	
\centering
	\includegraphics[scale=0.55]{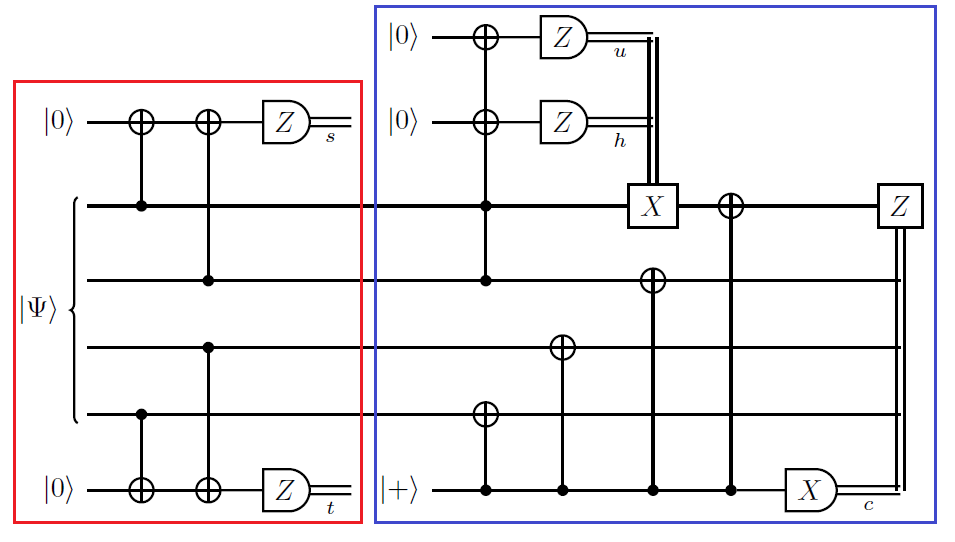}
		\caption{\textsc{ec} unit with parity measurements (red box to the left, we call this part $1$), the rest of the unit involving remaining syndrome extraction and recovery unit $\mathcal{R}$ (blue box to the right, call this part $2$)}
\label{fig:ec_threshold}		
\end{figure}
There are $10$ \textsc{cnot}s, $4$ $Z$ measurements, $1$ $X$ measurement, $1$ preparation of $\ket{+}$, $1$ $X$, $1$ $Z$, and $24$ rest locations adding to a total of $42$ locations. There are $14$ locations in part $1$ and $28$ locations in part $2$. 

Note that the syndrome extraction unit in part $2$ could apply to any pair of qubits depending on the outcome $s$ and $t$. Similarly the $Z$ gate in part $2$ could be applied to any of the four data qubits depending on the syndrome bits and the outcome $c$, although we present a specific case of damping and $Z$ application in Fig.~\ref{fig:ec_threshold}. We explain our calculation with one specific case of damping as depicted in Fig.~\ref{fig:ec_threshold} since the threshold calculation does not change with the other cases of damping. Furthermore, counting of fault locations is easier to explain with a specific case of damping.

\subsection{Memory Pseudothreshold}\label{app:memory}

We now calculate the pseudothreshold for the memory unit in Fig.~\ref{fig:memory}. We first count the malignant fault pairs due to two damping faults (assuming a no-error input to the memory unit) leading to an output that is uncorrectable. This could happen in one of three possible ways: (1) two damping faults could occur within the \textsc{ec} units, (2) both faults could occur within the resting qubits, or, (3) one fault in the \textsc{ec} unit and one fault in the resting qubits. We now enumerate the number of malignant pairs in each of the three cases. 

\begin{enumerate}
\item {\bf Malignant pairs within  a \textsc{ec} unit:} We count the total number of malignant pairs within a \textsc{ec} block. There are totally $14$ locations in the parity measurements part of a \textsc{ec} unit in Fig.~\ref{fig:ec_threshold}. Fault pairs in the same data qubit lines are not malignant because this will not lead to more than a single damping error at the output. Therefore we have $3 \times 4=12$ such benign pairs, where there are $3$ locations in each data qubit line in part $1$ of Fig.~\ref{fig:ec_threshold}. A fault anywhere in the data qubit block and a faulty $Z$ measurement in the ancilla block is not malignant, leading to $12\times 2=24$ benign pairs. Furthermore, a pair of faulty $Z$ measurements in the ancilla block is not malignant. Thus we have totally $C(14,2)- 37=54$ malignant fault pairs. 

Consider the case where there is a single fault in the parity measurements in the \textsc{ec} unit and another fault in the rest of the \textsc{ec} unit (part $2$) in Fig.~\ref{fig:ec_threshold}. To trigger part 2 in Fig.~\ref{fig:ec_threshold}, at least one of the parity outcomes must be $1$, this includes $4$ locations in the first time step on the data line in the parity measurements in part $1$ in Fig.~\ref{fig:ec_threshold}. Then, a second fault in part $2$ in Fig.~\ref{fig:ec_threshold} which will combine with the first fault in part $1$ forming a malignant pair includes all except the $7$ rest locations  during the measurement of $X$  and application of $Z$ gate, and one potential application of $Z$ gate in part $2$ in Fig.~\ref{fig:ec_threshold}, leading to a total of $4 \times (28-8) = 80$ malignant pairs. 

Denoting the number of malignant pairs within each \textsc{ec} block as $N_{\textsc{ec}}$, we thus have, 
\begin{equation}
N_{\textsc{ec}} = 54+80 = 134 . \label{eq:ec_pairs}
\end{equation} 
However, since a memory unit has two \textsc{ec} units, the total number of malignant pair contributions from both the units leads to $134\times 2= 268$ possibilities.

\item {\bf Malignant pairs in the resting qubits:} There are $4$ locations when the qubits rest or there are $4$ applications of Identity gates leading to $C(4,2)=6$ malignant pairs.

\item { \bf $1$ fault in an \textsc{ec}-unit and $1$ fault in the resting qubits:} 
There are $10$ bad locations that cause an undetected error in the output of part $1$ of a \textsc{ec} unit in Fig.~\ref{fig:ec_threshold}. This includes $3$ locations ($1$\textsc{cnot}, $2$ rests) along the $1$st and $4$th data qubit lines each, $2$ ($1$ \textsc{cnot}, $1$ rest) along the 2nd and 3rd data qubit lines each, $2$ for 3rd qubit and $3$ for 4th qubit. These fault locations in combination with a fault during rest could propagate two errors at the output of the memory unit. A rest fault along the same data qubit line as a fault in the \textsc{ec} unit will not lead to any further damping, thereby propagating an output with at most one single damping error. Therefore, each fault location in a \textsc{ec} unit combines with $3$ bad rest locations between the \textsc{ec} units to give $10 \times 3=30$ malignant pairs. There are two such cases: (a) a single fault in the preceding \textsc{ec} unit and  a single fault in the intermediate rest locations, and, (b) a single fault in the intermediate rest locations and a single fault in the succeeding \textsc{ec} unit leading to a total of $30 \times 2=60$ locations.
\end{enumerate}

Finally, we count the malignant faults leading to $Z$ errors in the memory unit in Fig.~\ref{fig:memory}. This includes only the locations from part $1$ of the two \textsc{ec} units except the two $Z$ measurements on the ancilla leading to $14-2=12$ bad locations and $4$ intermediate rest locations, adding up to $(12 \times 2) +4= 28$ locations.

Therefore, the total number of malignant pairs due to malignant pairs of damping errors and malignant fault locations due to $Z$ errors is given by, $A = 356$, leading to a threshold of $p_{\rm th} = 1/ A \approx 2.8 \times 10^{-3}$.

\subsection{Pseudothreshold for the extended \textsc{cphase} unit}\label{app:cphase_threshold}

We first note that the total number of locations in the extended \textsc{cz} unit in Fig.~\ref{fig:exrec} is $42 (\textsc{ec}) \times 4 + 4 (\textsc{cz} \  \rm gadget) = 172$. 

\subsubsection{Malignant fault pairs due to damping errors}
We first count the malignant pairs of damping faults leading to an output that is not correctable in the extended \textsc{cz} unit given in Fig.~\ref{fig:exrec}, assuming a no-error input.
We present a matrix whose rows and columns correspond to each block in Fig.~\ref{fig:exrec}. The entries of the matrix are populated with the total malignant pair contributions from the respective blocks which are labelled as follows.
\begin{enumerate}
	\item \textsc{ec$_1$} - leading \textsc{ec} in block 1
	\item \textsc{ec$_2$} - leading \textsc{ec} in block 2
	\item \textsc{cz} gadget
	\item \textsc{ec$_a$} + $\cR_{Za}$
	\item \textsc{ec$_b$} + $\cR_{Zb}$
\end{enumerate}
\[
\begin{blockarray}{cccccccc}
	 & 1 & 2 & 3 & 4 & 5 \\
	\begin{block}{c(ccccccc)}
		1 & 0 \\
		2 & 74 & 0 \\
		3 & 30 & 30 &  6  \\
		4 & 274 & 240 & 110 & 134 \\
		5 & 240 & 274 & 110 & 224 & 134 \\
	\end{block}	
\end{blockarray}
\]
We now explain in detail, as to how we arrive at the entries of the matrix by listing the malignant pairs for each of the above combinations.

\begin{enumerate}
\item {\bf Malignant pairs within an \textsc{ec} block: } 
We recall from Eq.~\eqref{eq:ec_pairs} that there are $134$ malignant pairs within an \textsc{ec} unit. There are two such \textsc{ec} units, so in total we have $134\times 2=268$ pairs.

\item {$1$ fault in \textsc{ec$_1$} and $1$ fault in \textsc{ec$_2$}:} Any fault that triggers a non-trivial parity in part $1$ of Fig.~\ref{fig:ec_threshold} is corrected by the same \textsc{ec} unit. Recall from Sec.~\ref{app:memory} that there are $10$ bad locations that can cause undetected errors in the output of part $1$ in Fig.~\ref{fig:ec_threshold}. The same set of faults in part $1$ of the other \textsc{ec} unit together lead to a malignant pair. However, this excludes a fault in the same data qubit line in the other \textsc{ec}, since such faults after the \textsc{cz} gadget propagate a global phase. We illustrate this with two contrasting cases as shown below
\begin{itemize}
\item Consider a damping fault $E_1$ in \textsc{ec}$_1$ and \textsc{ec}$_2$ on the first qubit. These faults after the action of the \textsc{cz} gadget propagate a global phase which will not affect our output.
\item Consider a damping fault $E_1$ in \textsc{ec}$_1$ and \textsc{ec}$_2$ on the first qubit and second qubit respectively. These faults after the action of the \textsc{cz} gadget propagate a phase error in addition to damping errors, which cannot be corrected.

\end{itemize}

 Therefore, we have a total of $(3\times7) + (2\times8) + (2\times8) + (3\times7) = 74$ malignant pairs. A fault in part $1$ of of one \textsc{ec} unit and part $2$ of the other \textsc{ec} unit do not contribute to malignant pairs.

\item {\bf $1$ fault in \textsc{EC$_1$} (or \textsc{EC$_2$}) and $1$ fault in \textsc{CZ} gadget: } There are $10$ bad locations in a \textsc{ec} unit causing an error in the output without being detected in Fig.~\ref{fig:ec_threshold} as illustrated in Sec.~\ref{app:memory}. A further fault in the \textsc{cz} gadget in a different data qubit line can lead to a malignant pair. This includes $10 \times 3 = 30$ possibilities. A fault pair in the same data qubit line in a \textsc{ec} unit and \textsc{cz} gadget is benign since it propagates the same damping error at the output of one block and a phase error to the other block which is correctable. 

\item {\bf Malignant fault pairs in a \textsc{cz} gadget:} There are totally $4$ locations in a \textsc{cz} gadget leading to a total of $C(4,2)=6$ malignant pairs.

\item {\bf $1$ fault in the \textsc{ec$_1$} unit and 1 fault in \textsc{ec$_a$} + $\cR_{Za}$ (or $1$ fault in \textsc{ec$_2$} and $1$ fault in \textsc{ec$_b$}+ $\cR_{Zb}$): }
There are $10$ bad locations in a \textsc{ec} unit causing an error in the output without being detected in Fig.~\ref{fig:ec_threshold} as described in Sec.~\ref{app:memory}. 

Consider a further fault in part $1$ of \textsc{ec$_a$} in Fig.~\ref{fig:ec_threshold}. If the first fault is in \textsc{ec$_1$} is in the 1st qubit, there are $10$ bad locations in part $1$ of \textsc{ec$_a$} in Fig.~\ref{fig:ec_threshold} contributing to malignant pairs. Recall that part $1$ has totally $14$ locations, hence the remaining $4$ locations in part $1$ of \textsc{ec}$_a$ that are benign include $1$ \textsc{cnot}, $2$ rests along the data line of the first qubit and $1$ $Z$ measurement in the second ancilla line giving the parity information of the $3^{\rm rd}$ and $4^{\rm th}$ data qubits. 

If the first fault is in \textsc{ec$_1$} in part 1 of in Fig.~\ref{fig:ec_threshold} in the $2^{\rm nd}$ qubit, then there are $11$ bad locations in part $1$ of \textsc{ec$_a$} ($3$ benign locations include $2$ rests along the second data qubit line and $1$ $Z$ measurement in the first ancilla line giving the parity information of $1^{\rm st}$ and $2^{\rm nd}$ data qubits in Fig.~\ref{fig:ec_threshold}) contributing to malignant pairs. Due to symmetry, there are $10$ and $11$ bad locations in part $1$ of \textsc{ec$_a$} for a fault in the $4^{\rm th}$ and $3^{\rm rd}$ data qubit line in part $1$ of \textsc{ec$_1$} respectively. So, there are $(3 \times 10) + (2 \times 11)+ (3 \times 10) + (2 \times 11)=104$ malignant pairs in this case. 

Consider a second fault in part $2$ of \textsc{ec$_a$} in Fig.~\ref{fig:ec_threshold}. Recall that part $2$ of the \textsc{ec} unit has $28$ locations totally. The benign locations include the rests during $X$ measurement, $Z$ application and a fault during $Z$ application, a damping in the same data qubit line as the first fault ($1$ \textsc{cnot}, $1$ rest, $Z$ measurement on the ancilla line) in part $2$ in Fig.~\ref{fig:ec_threshold}. This includes $10 \times (28-11)=170$ malignant pairs.   

In total, there are $104+170=274$ malignant pairs in this case.

\item { \bf $1$ fault in \textsc{ec$_1$} and 1 fault in \textsc{ec$_b$}+ $\cR_{Zb}$ ($1$ fault in \textsc{ec$_2$} and 1 fault in \textsc{ec$_a$} + $\cR_{Za}$)}
Recall from Sec.~\ref{app:memory} that there are $10$ bad locations in part $1$ of the \textsc{ec} unit in Fig.~\ref{fig:ec_threshold} causing an error in the output without being detected . 

If the first fault in the \textsc{ec$_1$} affects the first or second (similarly third or fourth) data qubit in part $1$ of \textsc{ec$_1$}, a further fault in the third and fourth (first and second) data qubit lines in part $1$ of \textsc{ec$_b$} in Fig.~\ref{fig:ec_threshold} lead to an uncorrectable output contributing to the malignant pair. This includes $6$ bad locations ($4$ rests and $2$ \textsc{cnot}s) in part $1$ of \textsc{ec$_b$} in Fig.~\ref{fig:ec_threshold} in each case.  Hence, there are $10 \times 6 = 60$ malignant pairs. 

Consider a second fault in part $2$ of \textsc{ec$_b$} in Fig.~\ref{fig:ec_threshold} where $\cR_{Zb}$  alone is executed. Note that  $\cR_{Za}$ and $\cR_{Zb}$ have the same construction as $\mathcal{R}$ in Fig.~\ref{fig:ec_threshold}.There are totally $18$ bad locations for $\cR_{Zb}$, leading to a total of $10\times 18=180$ malignant pairs. 

In total, there are $240$ malignant pairs in this case. 

\item {\bf $1$ fault in \textsc{cz} gadget and $1$ fault in \textsc{ec$_a$}+$\cR_{Za}$ (or \textsc{ec$_b$}+$\cR_{Zb}$):} This case is similar to the case (5) above, except that there are only $4$ bad locations in a\textsc{cz} gadget. Each bad location in a \textsc{cz} gadget combines with part $1$ of \textsc{ec} to give $10 + 11 + 11 + 10=42$. Each bad location in a\textsc{cz} gadget combines with part $2$ of the \textsc{ec} unit leading to  $17\times4 = 68$ malignant pairs. We refer to item (5) above to recall how only $17$ out of $28$ locations in part $2$ of the \textsc{ec} unit contribute. In total, we have $110$ malignant pairs.

\item {\bf $1$ fault in \textsc{ec$_a$}+ $\cR_{Za}$ and $1$ fault in \textsc{ec$_b$}+ $\cR_{Zb}$:} The \textsc{ec$_a$} unit triggers the R$_{Zb}$ only when there is a nontrivial parity detected in \textsc{EC$_a$} and trivial parity detected in \textsc{EC$_b$}. In order to have non-trivial parity outcomes, the fault must be in one of $4$ locations in the first time step in \textsc{EC$_a$}. Although these faults are correctable within \textsc{EC$_a$}. A further fault after detecting a trivial parity in \textsc{EC$_b$} could introduce a logical $Z$ error. This includes $10$ bad locations in part $1$ in Fig.~\ref{fig:ec_threshold} or $18$ bad locations in R$_{Zb}$. So, there are totally $4 \times (10 + 18) \times 2 = 224$ malignant pairs. The factor of $2$ is due the possibility of the other case where \textsc{ec$_b$} could record nontrivial parity and \textsc{ec$_a$} records trivial parity outcome .  

\item {\bf $2$ faults in \textsc{ec$_a$}+$\cR_{Za}$ (or \textsc{ec$_b$}+ $\cR_{Zb}$): }
We do not count for malignant pairs within a \textsc{ec$_a$}+ $\cR_{Za}$ (or \textsc{ec$_b$}+$\cR_{Zb}$) since that could go into the threshold calculation of a subsequent extended unit. This avoids double counting of the malignant pairs.
\end{enumerate}

\subsubsection{Malignant faults due to second order phase errors}
For a second order $Z$ error, the total number of malignant locations are contributed by locations from part $1$ of a \textsc{ec} unit and \textsc{cz} gadget. The $Z$ measurements on the ancilla of part $1$ in Fig.~\ref{fig:ec_threshold} are benign since they are not affected by phase errors. Therefore we have totally $(14-2) \times 4 + 4 = 52$ malignant fault locations.  

Therefore, the total number of malignant pairs due to damping errors and malignant fault locations due to $Z$ errors is  given by, $A = 1880$, leading to a threshold of $p_{\rm th} = 1/ A \approx 5.31 \times 10^{-4}$.


\section{The \textsc{ccz} gadget}\label{sec:ccz}
Recall that the three-qubit controlled-controlled-\textsc{phase} (\textsc{ccz}) gate, which is a non-Clifford gate, forms a universal gate set along with the Hadamard gate ($H$), a single-qubit Clifford gate. From the analysis shown in Sec.~\ref{app:cz} it is clear that a gate construction with a phase gate is naturally more immune to the amplitude-damping noise. Hence it emerges that an alternate set of universal gates for the $[[4,1]]$ code, would be the set comprising the \textsc{ccz} and $H$ gadgets. We demonstrate a fault-tolerant construction of the \textsc{ccz} gate in this section.

We obtain a construction for the \textsc{ccz} gadget as two successive transversal applications of individual \textsc{ccz} gates, as shown below in Fig.~\ref{fig:ccz_2}. The transversal \textsc{ccz} gadget satisfies the properties described in Lemma~\ref{lem:ccz}.
 \begin{figure}[H]
\centering
\includegraphics[scale=.7]{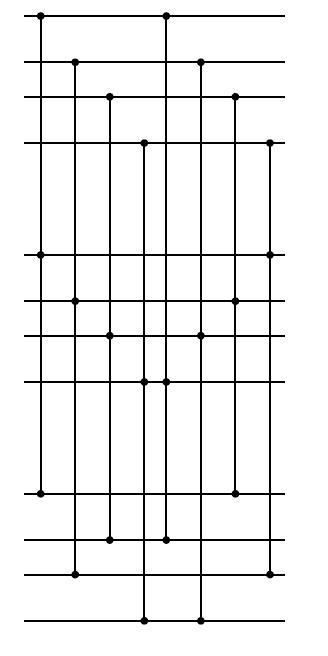}
\caption{\textsc{ccz} gadget}
\label{fig:ccz_2}
\end{figure}

\begin{lemma} [Properties of the \textsc{ccz} gadget]\label{lem:ccz}
The \textsc{ccz} gadget with at most a single damping fault in a block propagates an incoming state with no-error to an output with upto one single-qubit damping error in the same block with or without an induced \text{cphase} error in the other two blocks. Furthermore, a \textsc{ccz} gadget without any faults, propagates an input with upto one single-qubit damping error in a block to an output with at most one single-qubit damping error in the same block with or without an induced \textsc{cphase} error in the other two blocks.
\end{lemma}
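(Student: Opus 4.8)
The plan is to mirror the argument already used for the transversal \textsc{cphase} gadget in Appendix~\ref{app:cz}, extending it from two blocks to three. The key physical fact, established in Fig.~\ref{fig:cphase_analysis}, is that a single-qubit damping error $E_1 \propto (X+\mi Y)$ commutes through a physical controlled-phase gate as a damping error on the \emph{same} qubit, while inducing only a $Z$ (phase) error on the qubit(s) it is controlling. I would first state the analogous single-gate propagation rule for the physical three-qubit \textsc{ccz}: an $E_1$ error on one of the three input qubits, propagated through the \textsc{ccz}, emerges as a damping error on that same qubit together with a residual \textsc{cphase}-type ($ZZ$) error coupling the other two qubits. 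This is the three-qubit generalization of the two-qubit control/target analysis, and it is the crux of why the transversal construction remains viable.

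Given this propagation rule, the proof splits into the two assertions of the Lemma. For the \emph{fault-free, erroneous-input} case, I would take an incoming state carrying at most one single-qubit damping error in one of the three blocks (each block being a \textsc{cnot}-coded logical qubit on four physical qubits, from Eq.~\eqref{eq:4qubit}). Applying the two successive transversal \textsc{ccz} layers of Fig.~\ref{fig:ccz_2} and invoking the single-gate propagation rule transversally, the incoming damping error stays confined to its block as a single-qubit damping error, while the induced phase effects appear as logical \textsc{cphase} errors between the remaining two blocks. Since the error set of the $4$-qubit code includes the single-qubit dephasing $Z$ errors (Eq.~\eqref{eq:dephasing}) and the code corrects single damping errors, the output is precisely a single-qubit damping error in the original block, possibly accompanied by the induced \textsc{cphase} on the other two blocks, as claimed. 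For the \emph{single-fault, clean-input} case, a single damping fault at one location in one block propagates, again by transversality and the propagation rule, to at most one single-qubit damping error in that block (plus the induced \textsc{cphase} between the other blocks). The transversal structure guarantees the fault cannot spread to more than one physical qubit within any single block, so no uncorrectable multi-qubit damping pattern is created.

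The main obstacle I anticipate is verifying rigorously that the \textsc{ccz} induces only a correctable \textsc{cphase}-type error on the remaining blocks, rather than a pattern the $4$-qubit code cannot handle. Concretely, I would need to check that the propagated $Z$-type errors act as \emph{logical} $\overline{Z}\,\overline{Z}$ (a \textsc{cphase} between blocks) rather than as uncorrectable physical errors that anticommute with the stabilizers $\langle XXXX, ZZII, IIZZ\rangle$ in a damaging way. This requires tracking how the physical damping error, sitting on a specific one of the four data qubits, conjugates through the layer of physical \textsc{ccz} gates and confirming the residual acts within the span of logical operators and correctable single-qubit errors. The careful bookkeeping of which physical qubits control which, across the two transversal layers, is where I expect the real work to lie; the rest follows from the fault-tolerance of the trailing \textsc{ec} and $\cR_Z$ units already established in Lemma~\ref{lem:ec_prop1} and Lemma~\ref{lem:cphase}. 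I would defer the detailed case enumeration to an appendix, presenting the single-gate propagation diagram as the conceptual heart of the argument.
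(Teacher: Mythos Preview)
Your approach is essentially the paper's: establish the single-gate propagation identity $\textsc{ccz}\,(E_1\otimes I\otimes I)\,\textsc{ccz}=E_1\otimes\textsc{cz}$ (and its permutations), then invoke transversality. The paper's proof is in fact just those three conjugation identities with no further case enumeration; your anticipated obstacle about verifying the induced residual is a logical $\overline{Z}\,\overline{Z}$ is not needed, since the lemma only asserts the \emph{form} of the output (damping confined to one block, a \textsc{cphase} between the other two) and defers correctability to the \textsc{ccz}-\textsc{exrec} construction---and note the induced residual is the two-qubit \textsc{cz} operator $|0\rangle\langle 0|\otimes I+|1\rangle\langle 1|\otimes Z$, not $Z\otimes Z$.
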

\begin{proof}
A \textsc{ccz} gate propagates a damping error occurring before the control or target as a damping error after control or target respectively and induces a \textsc{cphase} error on the other two qubits, as denoted in Fig.~\ref{fig:cczgate}. This is easily verifiable via the following steps. 
\begin{eqnarray}
\textsc{CCZ} (E_1 \otimes I \otimes I)\textsc{CCZ}&=&E_1 \otimes (|0\rangle \langle0| \otimes I +|1\rangle \langle1| \otimes Z) \nonumber \\ \nonumber
\textsc{CCZ} (I \otimes E_1 \otimes I)\textsc{CCZ}&=&(|0\rangle \langle0| \otimes E_1\otimes I +|1\rangle \langle1| \otimes E_1\otimes Z) \nonumber \\ \nonumber
\textsc{CCZ} (I \otimes I \otimes E_1)\textsc{CCZ}&=&(|0\rangle \langle0| \otimes I +|1\rangle \langle1| \otimes Z)\otimes E_1
\end{eqnarray}

\begin{figure}[H]
\centering
\includegraphics[scale=.55]{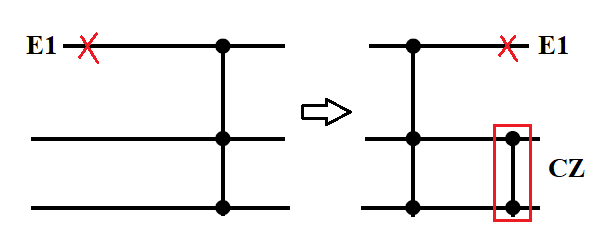}
\caption{ Propagation of a damping error $E_1$ through a faulty \textsc{ccz} gate.}
\label{fig:cczgate}
\end{figure}
Thus, this explains why a \textsc{ccz} gadget satisfies Lemma~\ref{lem:ccz}.
However, note that the \textsc{ccz} gadget propagates a damping error to one of the errors within the correctable set as discussed in Sec.~\ref{sec:noisemodel}.
\end{proof}
We now present the construction of a \textsc{ccz}-\textsc{exrec} in Fig.~\ref{fig:cczexrec} which is tolerant upto single-qubit damping errors. This unit has been constructed such that it can correct for a single damping fault or a single-qubit damping error occurring anywhere and a \textsc{cphase} error induced due to a single damping fault or single damping error. We explain this construction below.

 \begin{figure}[H]
\centering
\includegraphics[scale=.7]{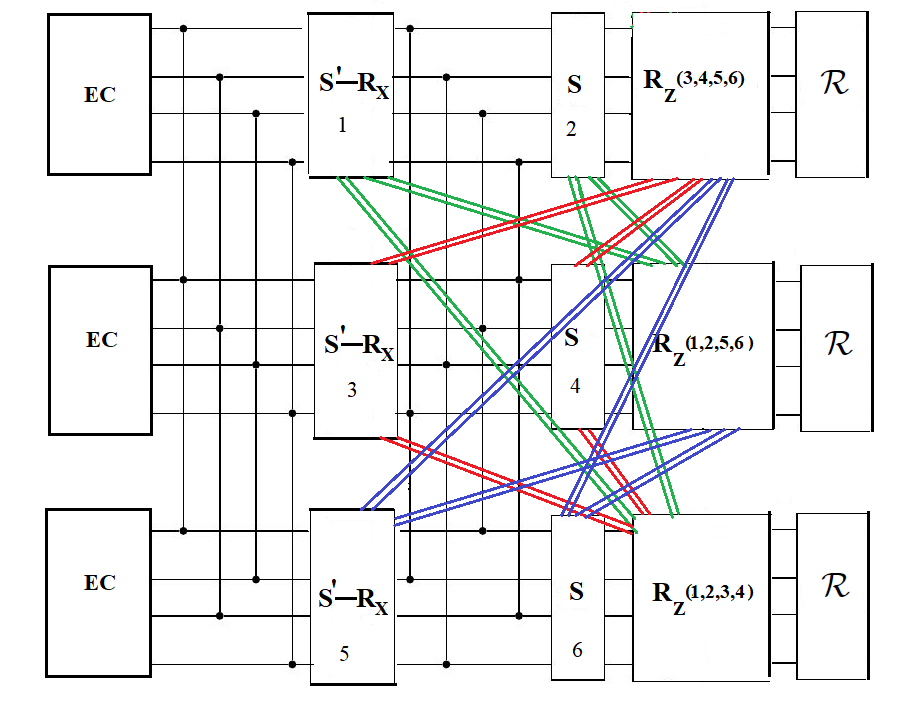}
\caption{\textsc{ccz}-\textsc{exrec}}
\label{fig:cczexrec}
\end{figure}

The \textsc{ccz}-\textsc{exrec} has a \textsc{ec} unit prefixed to each block, followed by two successive transversal applications of the level-$0$ \textsc{ccz} gates, interspersed with local syndrome and recovery units on each block. The units $S$ and $S'$ are the syndrome extraction units given in Fig.~\ref{fig:ec}. In addition to the single-damping recovery unit $\cR$, the \textsc{ccz} gadget also requires two other recovery units, namely, the $\cR_X$ unit described in the Fig.~\ref{fig:r_ec} and the $\cR_Z$ unit already described in Fig.~\ref{fig:r_ec}, whose application is conditioned on the syndrome bits $\{s,t,u,v,h,g\}$. 

The syndrome units $S'$ and $S$ indicate the location of the fault and accordingly one or more of the recovery units are applied. While the phase recovery unit $R_Z$ corrects for the \textsc{cphase} error that arises in the event of a fault occurring before any of the \textsc{ccz} gates in Fig.~\ref{fig:cczexrec}. The $R_{X}$ unit has been suitably introduced so as to allow us to distinguish between the faults occurring before the first application of \textsc{ccz} gates and the faults occurring after the first application of \textsc{ccz} gates . The $\cR$ units at the end correct for the single-damping errors after the second set of \textsc{ccz} gates, and bring the state back into the codespace. 

\begin{figure}[H]
\begin{center}
\caption{ Recovery units $R_X$ (left) and $R_Z$ (right) used in the \textsc{ccz} gadget.}
\label{fig:Rx}
\end{center}
\end{figure}

The phase recovery unit $R_Z$ in each block is conditioned on the $S,S'$ units of other blocks, and this is indicated by the notation $R_{Z}(m,n,q,r)$, where $\{m,n,q,r\}$ $\in$ $\{1,2,3,4,5,6\}$ denote the block numbers of the conditioning syndrome units $S,S'$. We indicate how each $R_Z$ is conditioned using coloured double lines in the Fig.~\ref{fig:cczexrec}. Note that in Fig.~\ref{fig:r_ec}, $Z^{(S,S')/c}$ applies one of the four single-qubit $Z$ operators $\{ZIII,IZII,IIZI,IIIZ\}$, depending on the outcomes of the units $S,S'$. As already explained in the proof of Lemma~\ref{lem:cphase}, the outcome $c$ in the $R_{Z}$ unit allows us to identify if a phase error has been induced due to a damping fault. In case $c=1$, then the phase error has been induced and one corrects for it by applying a local $Z$ gate on one of the four qubits depending on the conditioning syndrome bits from the other blocks $S, S'$. If $c=0$, no phase error has occurred and one need not apply any local $Z$ gate. The unit $\cR$ in Fig.~\ref{fig:cczexrec} refers to the original single-damping recovery unit described in Fig.~\ref{fig:r_ec}. We can also demonstrate the fault tolerance of the \textsc{ccz} gadget to amplitude-damping noise, based on Lemma~\ref{lem:ccz}.

\begin{singlespace}
\bibliographystyle{utphys}
\bibliography{ref}

\begin{thebibliography}{100}
\providecommand{\url}[1]{\texttt{#1}}
\providecommand{\urlprefix}{URL }

\bibitem{deutsch}
D.~Deutsch.
\newblock \emph{Quantum theory, the Church--Turing principle and the universal
  quantum computer}.
\newblock Proceedings of the Royal Society of London. A. Mathematical and
  Physical Sciences \textbf{400}, 97 (1985).

\bibitem{deutschjosza}
D.~Deutsch and R.~Jozsa.
\newblock \emph{Rapid solution of problems by quantum computation}.
\newblock Proceedings of the Royal Society of London. Series A: Mathematical
  and Physical Sciences \textbf{439}, 553 (1992).

\bibitem{shor_factor}
P.~W. Shor.
\newblock \emph{Polynomial-time algorithms for prime factorization and discrete
  logarithms on a quantum computer}.
\newblock SIAM review \textbf{41}, 303 (1999).

\bibitem{grover}
L.~K. Grover.
\newblock \emph{A fast quantum mechanical algorithm for database search}.
\newblock In \emph{Proceedings of the twenty-eighth annual ACM symposium on
  Theory of computing}, 212--219 (1996).

\bibitem{nielsen}
M.~A. Nielsen and I.~L. Chuang.
\newblock \emph{Quantum Computation and Quantum Information}.
\newblock Cambridge University Press (2000).

\bibitem{shor_qec}
P.~W. Shor.
\newblock \emph{Scheme for reducing decoherence in quantum computer memory}.
\newblock Physical review A \textbf{52}, R2493 (1995).

\bibitem{shor_ft}
P.~W. Shor.
\newblock \emph{Fault-tolerant quantum computation}.
\newblock In \emph{Proceedings of 37th conference on foundations of computer
  science}, 56--65. IEEE (1996).

\bibitem{terhal}
B.~M. Terhal.
\newblock \emph{Quantum error correction for quantum memories}.
\newblock Reviews of Modern Physics \textbf{87}, 307 (2015).

\bibitem{steane}
A.~M. Steane.
\newblock \emph{Error correcting codes in quantum theory}.
\newblock Physical Review Letters \textbf{77}, 793 (1996).

\bibitem{wootters}
W.~K. Wootters and W.~H. Zurek.
\newblock \emph{A single quantum cannot be cloned}.
\newblock Nature \textbf{299}, 802 (1982).

\bibitem{bennet}
C.~H. Bennett, D.~P. DiVincenzo, J.~A. Smolin, and W.~K. Wootters.
\newblock \emph{Mixed-state entanglement and quantum error correction}.
\newblock Physical Review A \textbf{54}, 3824 (1996).

\bibitem{knill}
E.~Knill, R.~Laflamme, and L.~Viola.
\newblock \emph{Theory of quantum error correction for general noise}.
\newblock Physical Review Letters \textbf{84}, 2525 (2000).

\bibitem{ekert}
A.~Ekert and C.~Macchiavello.
\newblock \emph{Error correction in quantum communication}.
\newblock arXiv preprint quant-ph/9602022  (1996).

\bibitem{laflamme}
R.~Laflamme, C.~Miquel, J.~P. Paz, and W.~H. Zurek.
\newblock \emph{Perfect quantum error correcting code}.
\newblock Physical Review Letters \textbf{77}, 198 (1996).

\bibitem{preskill_FT}
J.~Preskill.
\newblock \emph{Fault-tolerant quantum computation} 213--269 (1998).

\bibitem{preskill_nisq}
J.~Preskill.
\newblock \emph{Quantum Computing in the NISQ era and beyond}.
\newblock Quantum \textbf{2}, 79 (2018).

\bibitem{haroche}
J.-M. Raimond and S.~Haroche.
\newblock \emph{Exploring the quantum}.
\newblock Oxford University Press \textbf{82}, 86 (2006).

\bibitem{dephasing}
J.~J. Burnett, A.~Bengtsson, M.~Scigliuzzo, D.~Niepce, M.~Kudra, P.~Delsing,
  and J.~Bylander.
\newblock \emph{Decoherence benchmarking of superconducting qubits}.
\newblock npj Quantum Information \textbf{5}, 1 (2019).

\bibitem{leung}
D.~W. Leung, M.~A. Nielsen, I.~L. Chuang, and Y.~Yamamoto.
\newblock \emph{Approximate quantum error correction can lead to better codes}.
\newblock Physical Review A \textbf{56}, 2567 (1997).

\bibitem{fletcher_codes}
A.~S. Fletcher, P.~W. Shor, and M.~Z. Win.
\newblock \emph{Channel-adapted quantum error correction for the amplitude
  damping channel}.
\newblock IEEE Transactions on Information Theory \textbf{54}, 5705 (2008).

\bibitem{fletcher_rec}
A.~S. Fletcher, P.~W. Shor, and M.~Z. Win.
\newblock \emph{Optimum quantum error recovery using semidefinite programming}.
\newblock Physical Review A \textbf{75}, 012338 (2007).

\bibitem{hui_prabha}
H.~K. Ng and P.~Mandayam.
\newblock \emph{Simple approach to approximate quantum error correction based
  on the transpose channel}.
\newblock Physical Review A \textbf{81}, 062342 (2010).

\bibitem{gottesman_stabilizer}
D.~Gottesman.
\newblock \emph{Stabilizer codes and quantum error correction}.
\newblock arXiv preprint quant-ph/9705052  (1997).

\bibitem{Petz}
M.~Ohya and D.~Petz.
\newblock \emph{Quantum entropy and its use}.
\newblock Springer Science \& Business Media (2004).

\bibitem{fletcherthesis}
A.~S. Fletcher.
\newblock \emph{Channel-adapted quantum error correction}.
\newblock arXiv preprint arXiv:0706.3400  (2007).

\bibitem{schum}
B.~Schumacher.
\newblock \emph{Sending entanglement through noisy quantum channels}.
\newblock Phys. Rev. A \textbf{54}, 2614 (1996).
\newblock \urlprefix\url{https://link.aps.org/doi/10.1103/PhysRevA.54.2614}.

\bibitem{kosut}
R.~L. Kosut and D.~A. Lidar.
\newblock \emph{Quantum error correction via convex optimization}.
\newblock Quantum Information Processing \textbf{8}, 443 (2009).

\bibitem{yamamoto}
N.~Yamamoto, S.~Hara, and K.~Tsumura.
\newblock \emph{Suboptimal quantum-error-correcting procedure based on
  semidefinite programming}.
\newblock Phys. Rev. A \textbf{71}, 022322 (2005).
\newblock \urlprefix\url{https://link.aps.org/doi/10.1103/PhysRevA.71.022322}.

\bibitem{reimpell}
M.~Reimpell and R.~F. Werner.
\newblock \emph{Iterative optimization of quantum error correcting codes}.
\newblock Physical review letters \textbf{94}, 080501 (2005).

\bibitem{Barnum}
H.~Barnum and E.~Knill.
\newblock \emph{Reversing quantum dynamics with near-optimal quantum and
  classical fidelity}.
\newblock Journal of Mathematical Physics \textbf{43}, 2097 (2002).

\bibitem{beny}
C.~B\'eny and O.~Oreshkov.
\newblock \emph{General Conditions for Approximate Quantum Error Correction and
  Near-Optimal Recovery Channels}.
\newblock Phys. Rev. Lett. \textbf{104}, 120501 (2010).
\newblock
  \urlprefix\url{https://link.aps.org/doi/10.1103/PhysRevLett.104.120501}.

\bibitem{ak_cartan}
A.~Jayashankar, A.~M. Babu, H.~K. Ng, and P.~Mandayam.
\newblock \emph{Finding good quantum codes using the Cartan form}.
\newblock Physical Review A \textbf{101}, 042307 (2020).

\bibitem{khaneja_glaser}
N.~Khaneja and S.~J. Glaser.
\newblock \emph{Cartan decomposition of SU (2n) and control of spin systems}.
\newblock Chemical Physics \textbf{267}, 11 (2001).

\bibitem{bose}
S.~Bose.
\newblock \emph{Quantum communication through an unmodulated spin chain}.
\newblock Physical review letters \textbf{91}, 207901 (2003).

\bibitem{christandl}
M.~Christandl, N.~Datta, A.~Ekert, and A.~J. Landahl.
\newblock \emph{Perfect state transfer in quantum spin networks}.
\newblock Physical review letters \textbf{92}, 187902 (2004).

\bibitem{christandl2005perfect}
M.~Christandl, N.~Datta, T.~C. Dorlas, A.~Ekert, A.~Kay, and A.~J. Landahl.
\newblock \emph{Perfect transfer of arbitrary states in quantum spin networks}.
\newblock Physical Review A \textbf{71}, 032312 (2005).

\bibitem{albanesemirror}
C.~Albanese, M.~Christandl, N.~Datta, and A.~Ekert.
\newblock \emph{Mirror inversion of quantum states in linear registers}.
\newblock Physical review letters \textbf{93}, 230502 (2004).

\bibitem{karbach}
P.~Karbach and J.~Stolze.
\newblock \emph{Spin chains as perfect quantum state mirrors}.
\newblock Physical Review A \textbf{72}, 030301 (2005).

\bibitem{di}
C.~Di~Franco, M.~Paternostro, and M.~Kim.
\newblock \emph{Perfect state transfer on a spin chain without state
  initialization}.
\newblock Physical review letters \textbf{101}, 230502 (2008).

\bibitem{conclusive}
D.~Burgarth and S.~Bose.
\newblock \emph{Conclusive and arbitrarily perfect quantum-state transfer using
  parallel spin-chain channels}.
\newblock Physical Review A \textbf{71}, 052315 (2005).

\bibitem{perfect}
D.~Burgarth and S.~Bose.
\newblock \emph{Perfect quantum state transfer with randomly coupled quantum
  chains}.
\newblock New journal of physics \textbf{7}, 135 (2005).

\bibitem{efficient}
D.~Burgarth, V.~Giovannetti, and S.~Bose.
\newblock \emph{Efficient and perfect state transfer in quantum chains}.
\newblock Journal of Physics A: Mathematical and General \textbf{38}, 6793
  (2005).

\bibitem{bochkin}
G.~Bochkin, S.~Doronin, S.~Vasil'ev, A.~Fedorova, and E.~Fel'dman.
\newblock \emph{Experimental and theoretical investigations of quantum state
  transfer and decoherence processes in quasi-one-dimensional systems in
  multiple-quantum NMR experiments}.
\newblock In \emph{International Conference on Micro-and Nano-Electronics
  2016}, volume 10224, 102242E. International Society for Optics and Photonics
  (2016).

\bibitem{perez2013}
A.~Perez-Leija, R.~Keil, A.~Kay, H.~Moya-Cessa, S.~Nolte, L.-C. Kwek, B.~M.
  Rodr{\'\i}guez-Lara, A.~Szameit, and D.~N. Christodoulides.
\newblock \emph{Coherent quantum transport in photonic lattices}.
\newblock Physical Review A \textbf{87}, 012309 (2013).

\bibitem{chapman}
R.~J. Chapman, M.~Santandrea, Z.~Huang, G.~Corrielli, A.~Crespi, M.-H. Yung,
  R.~Osellame, and A.~Peruzzo.
\newblock \emph{Experimental perfect state transfer of an entangled photonic
  qubit}.
\newblock Nature communications \textbf{7}, 11339 (2016).

\bibitem{godsil2012}
C.~Godsil, S.~Kirkland, S.~Severini, and J.~Smith.
\newblock \emph{Number-theoretic nature of communication in quantum spin
  systems}.
\newblock Physical review letters \textbf{109}, 050502 (2012).

\bibitem{godsil}
L.~Banchi, G.~Coutinho, C.~Godsil, and S.~Severini.
\newblock \emph{Pretty good state transfer in qubit chains—the Heisenberg
  Hamiltonian}.
\newblock Journal of Mathematical Physics \textbf{58}, 032202 (2017).

\bibitem{osborne}
T.~J. Osborne and N.~Linden.
\newblock \emph{Propagation of quantum information through a spin system}.
\newblock Physical Review A \textbf{69}, 052315 (2004).

\bibitem{hasel}
H.~L. Haselgrove.
\newblock \emph{Optimal state encoding for quantum walks and quantum
  communication over spin systems}.
\newblock Physical Review A \textbf{72}, 062326 (2005).

\bibitem{ashhab}
S.~Ashhab.
\newblock \emph{Quantum state transfer in a disordered one-dimensional
  lattice}.
\newblock Physical Review A \textbf{92}, 062305 (2015).

\bibitem{chiara}
G.~De~Chiara, D.~Rossini, S.~Montangero, and R.~Fazio.
\newblock \emph{From perfect to fractal transmission in spin chains}.
\newblock Physical Review A \textbf{72}, 012323 (2005).

\bibitem{ak_statetransfer}
A.~Jayashankar and P.~Mandayam.
\newblock \emph{Pretty good state transfer via adaptive quantum error
  correction}.
\newblock Phys. Rev. A \textbf{98}, 052309 (2018).

\bibitem{kay}
A.~Kay.
\newblock \emph{Quantum error correction for state transfer in noisy spin
  chains}.
\newblock Physical Review A \textbf{93}, 042320 (2016).

\bibitem{kay2018perfect}
A.~Kay.
\newblock \emph{Perfect coding for dephased quantum state transfer}.
\newblock Physical Review A \textbf{97}, 032317 (2018).

\bibitem{allcock}
J.~Allcock and N.~Linden.
\newblock \emph{Quantum communication beyond the localization length in
  disordered spin chains}.
\newblock Physical review letters \textbf{102}, 110501 (2009).

\bibitem{anderson}
P.~W. Anderson.
\newblock \emph{Absence of diffusion in certain random lattices}.
\newblock Physical review \textbf{109}, 1492 (1958).

\bibitem{gottesman_intro}
D.~Gottesman.
\newblock \emph{An introduction to quantum error correction and fault-tolerant
  quantum computation}.
\newblock In \emph{Quantum information science and its contributions to
  mathematics, Proceedings of Symposia in Applied Mathematics}, volume~68,
  13--58 (2010).

\bibitem{gottesman_nature}
D.~Gottesman.
\newblock \emph{Quantum computing: Efficient fault tolerance}.
\newblock Nature \textbf{540}, 44 (2016).

\bibitem{aliferis}
P.~Aliferis, D.~Gottesman, and J.~Preskill.
\newblock \emph{Quantum accuracy threshold for concatenated distance-3 codes}.
\newblock Quantum Inf. Comput. \textbf{6}, 97 (2005).

\bibitem{aliferis_thesis}
P.~Aliferis.
\newblock \emph{Level reduction and the quantum threshold theorem}.
\newblock arXiv preprint quant-ph/0703230  (2007).

\bibitem{campbell}
E.~T. Campbell, B.~M. Terhal, and C.~Vuillot.
\newblock \emph{Roads towards fault-tolerant universal quantum computation}.
\newblock Nature \textbf{549}, 172 (2017).

\bibitem{lidar_brun}
D.~A. Lidar and T.~A. Brun.
\newblock \emph{Quantum error correction}.
\newblock Cambridge university press (2013).

\bibitem{bombin}
H.~Bomb{\'\i}n.
\newblock \emph{An introduction to topological quantum codes}.
\newblock arXiv preprint arXiv:1311.0277  (2013).

\bibitem{eastin_restrictions}
B.~Eastin and E.~Knill.
\newblock \emph{Restrictions on transversal encoded quantum gate sets}.
\newblock Physical review letters \textbf{102}, 110502 (2009).

\bibitem{chen_restrictions}
X.~Chen, H.~Chung, A.~W. Cross, B.~Zeng, and I.~L. Chuang.
\newblock \emph{Subsystem stabilizer codes cannot have a universal set of
  transversal gates for even one encoded qudit}.
\newblock Physical Review A \textbf{78}, 012353 (2008).

\bibitem{bravyi}
S.~Bravyi and A.~Kitaev.
\newblock \emph{Universal quantum computation with ideal Clifford gates and
  noisy ancillas}.
\newblock Physical Review A \textbf{71}, 022316 (2005).

\bibitem{aharonov}
D.~Aharonov, A.~Kitaev, and J.~Preskill.
\newblock \emph{Fault-tolerant quantum computation with long-range correlated
  noise}.
\newblock Physical review letters \textbf{96}, 050504 (2006).

\bibitem{knill_nature}
E.~Knill.
\newblock \emph{Quantum computing with realistically noisy devices}.
\newblock Nature \textbf{434}, 39 (2005).

\bibitem{ng}
H.~K. Ng and J.~Preskill.
\newblock \emph{Fault-tolerant quantum computation versus Gaussian noise}.
\newblock Physical Review A \textbf{79}, 032318 (2009).

\bibitem{aliferis_biasedExp}
P.~Aliferis, F.~Brito, D.~P. DiVincenzo, J.~Preskill, M.~Steffen, and B.~M.
  Terhal.
\newblock \emph{Fault-tolerant computing with biased-noise superconducting
  qubits: a case study}.
\newblock New Journal of Physics \textbf{11}, 013061 (2009).

\bibitem{gourlay}
I.~Gourlay and J.~F. Snowdon.
\newblock \emph{Concatenated coding in the presence of dephasing}.
\newblock Physical Review A \textbf{62}, 022308 (2000).

\bibitem{raus_FT}
R.~Raussendorf and J.~Harrington.
\newblock \emph{Fault-tolerant quantum computation with high threshold in two
  dimensions}.
\newblock Physical review letters \textbf{98}, 190504 (2007).

\bibitem{stephens}
A.~M. Stephens.
\newblock \emph{Fault-tolerant thresholds for quantum error correction with the
  surface code}.
\newblock Physical Review A \textbf{89}, 022321 (2014).

\bibitem{criger}
B.~Criger and B.~Terhal.
\newblock \emph{Noise thresholds for the [[4, 2, 2]]-concatenated toric code}.
\newblock arXiv preprint arXiv:1604.04062  (2016).

\bibitem{aliferis_biased}
P.~Aliferis and J.~Preskill.
\newblock \emph{Fault-tolerant quantum computation against biased noise}.
\newblock Physical Review A \textbf{78}, 052331 (2008).

\bibitem{FT_assymetry}
I.~Gourlay and J.~F. Snowdon.
\newblock \emph{Concatenated coding in the presence of dephasing}.
\newblock Phys. Rev. A \textbf{62}, 022308 (2000).
\newblock \urlprefix\url{https://link.aps.org/doi/10.1103/PhysRevA.62.022308}.

\bibitem{gottesman_2016}
D.~Gottesman.
\newblock \emph{Quantum fault tolerance in small experiments}.
\newblock arXiv preprint arXiv:1610.03507  (2016).

\bibitem{vaidman}
L.~Vaidman, L.~Goldenberg, and S.~Wiesner.
\newblock \emph{Error prevention scheme with four particles}.
\newblock Phys. Rev. A \textbf{54}, R1745 (1996).
\newblock \urlprefix\url{https://link.aps.org/doi/10.1103/PhysRevA.54.R1745}.

\bibitem{grassl}
M.~Grassl and T.~Beth.
\newblock \emph{A note on non-additive quantum codes}.
\newblock arXiv preprint quant-ph/9703016  (1997).

\bibitem{flammia}
R.~Harper and S.~T. Flammia.
\newblock \emph{Fault-tolerant logical gates in the ibm quantum experience}.
\newblock Physical review letters \textbf{122}, 080504 (2019).

\bibitem{ak_ft}
A.~Jayashankar, M.~D.~H. Long, H.~K. Ng, and P.~Mandayam.
\newblock \emph{Achieving fault tolerance against amplitude damping noise}.
\newblock  arXiv preprint quant-ph/2107.05485  (2021).
\bibitem{choi}
M.~D.~Choi.
\newblock \emph{Completely positive linear maps on complex matrices}.
\newblock Linear algebra and its applications \textbf{10}, 285 (1975).

\bibitem{kraus1971}
K.~Kraus.
\newblock \emph{General state changes in quantum theory}.
\newblock Annals of Physics \textbf{64}, 311 (1971).

\bibitem{kraus}
K.~Kraus, A.~B{\"o}hm, J.~D. Dollard, and W.~Wootters.
\newblock \emph{States, effects, and operations: fundamental notions of quantum
  theory. Lectures in mathematical physics at the University of Texas at
  Austin}.
\newblock Lecture notes in physics \textbf{190} (1983).

\bibitem{lidar}
D.~A. Lidar, I.~L. Chuang, and K.~B. Whaley.
\newblock \emph{Decoherence-free subspaces for quantum computation}.
\newblock Physical Review Letters \textbf{81}, 2594 (1998).

\bibitem{lidar_paper}
D.~A. Lidar, I.~L. Chuang, and K.~B. Whaley.
\newblock \emph{Decoherence-free subspaces for quantum computation}.
\newblock Physical Review Letters \textbf{81}, 2594 (1998).

\bibitem{zanardi}
P.~Zanardi.
\newblock \emph{Stabilizing quantum information}.
\newblock Phys. Rev. A \textbf{63}, 012301 (2000).
\newblock \urlprefix\url{https://link.aps.org/doi/10.1103/PhysRevA.63.012301}.

\bibitem{shabani}
A.~Shabani and D.~A. Lidar.
\newblock \emph{Theory of initialization-free decoherence-free subspaces and
  subsystems}.
\newblock Phys. Rev. A \textbf{72}, 042303 (2005).
\newblock \urlprefix\url{https://link.aps.org/doi/10.1103/PhysRevA.72.042303}.

\bibitem{holbrook}
J.~A. Holbrook, D.~W. Kribs, and R.~Laflamme.
\newblock \emph{Noiseless subsystems and the structure of the commutant in
  quantum error correction}.
\newblock Quantum Information Processing \textbf{2}, 381 (2003).

\bibitem{kribs}
D.~Kribs, R.~Laflamme, and D.~Poulin.
\newblock \emph{Unified and generalized approach to quantum error correction}.
\newblock Physical review letters \textbf{94}, 180501 (2005).

\bibitem{beny2009}
C.~B{\'e}ny.
\newblock \emph{Conditions for the approximate correction of algebras}.
\newblock In \emph{Workshop on Quantum Computation, Communication, and
  Cryptography}, 66--75. Springer (2009).

\bibitem{klesse}
R.~Klesse.
\newblock \emph{Approximate quantum error correction, random codes, and quantum
  channel capacity}.
\newblock Physical Review A \textbf{75}, 062315 (2007).

\bibitem{schumacher}
B.~Schumacher and M.~D. Westmoreland.
\newblock \emph{Approximate quantum error correction}.
\newblock Quantum Information Processing \textbf{1}, 5 (2002).

\bibitem{petz_entropy}
D.~Petz.
\newblock \emph{Monotonicity of quantum relative entropy revisited}.
\newblock Reviews in Mathematical Physics \textbf{15}, 79 (2003).

\bibitem{prabha_thesis}
P.~Mandayam~Doddamane.
\newblock \emph{Emerging Paradigms in Quantum Error Correction and Quantum
  Cryptography}.
\newblock Ph.D. thesis, California Institute of Technology (2011).

\bibitem{prabha_aqec}
P.~Mandayam and H.~K. Ng.
\newblock \emph{Towards a unified framework for approximate quantum error
  correction}.
\newblock Physical Review A \textbf{86}, 012335 (2012).

\bibitem{cafaro}
C.~Cafaro and P.~van Loock.
\newblock \emph{Approximate quantum error correction for generalized
  amplitude-damping errors}.
\newblock Physical Review A \textbf{89}, 022316 (2014).

\bibitem{elizabeth}
F.~G. Brandao, E.~Crosson, M.~B. {\c{S}}ahino{\u{g}}lu, and J.~Bowen.
\newblock \emph{Quantum error correcting codes in eigenstates of
  translation-invariant spin chains}.
\newblock Physical Review Letters \textbf{123}, 110502 (2019).

\bibitem{preskill_symmetry}
P.~Faist, S.~Nezami, V.~V. Albert, G.~Salton, F.~Pastawski, P.~Hayden, and
  J.~Preskill.
\newblock \emph{Continuous symmetries and approximate quantum error
  correction}.
\newblock arXiv preprint arXiv:1902.07714  (2019).

\bibitem{todd}
J.~R. Gonzalez~Alonso and T.~A. Brun.
\newblock \emph{Protecting orbital-angular-momentum photons from decoherence in
  a turbulent atmosphere}.
\newblock Phys. Rev. A \textbf{88}, 022326 (2013).
\newblock \urlprefix\url{https://link.aps.org/doi/10.1103/PhysRevA.88.022326}.

\bibitem{wilde}
L.~Lami, S.~Das, and M.~M. Wilde.
\newblock \emph{Approximate reversal of quantum Gaussian dynamics}.
\newblock Journal of Physics A: Mathematical and Theoretical \textbf{51},
  125301 (2018).

\bibitem{petz_implementation}
A.~Gily{\'e}n, S.~Lloyd, I.~Marvian, Y.~Quek, and M.~M. Wilde.
\newblock \emph{Quantum algorithm for Petz recovery channels and pretty good
  measurements}.
\newblock arXiv preprint arXiv:2006.16924  (2020).

\bibitem{preskill2018quantum}
J.~Preskill.
\newblock \emph{Quantum Computing in the NISQ era and beyond}.
\newblock Quantum \textbf{2}, 79 (2018).

\bibitem{shor}
P.~W. Shor.
\newblock \emph{Scheme for reducing decoherence in quantum computer memory}.
\newblock Physical review A \textbf{52}, R2493 (1995).

\bibitem{calderblank}
A.~R. Calderbank and P.~W. Shor.
\newblock \emph{Good quantum error-correcting codes exist}.
\newblock Physical Review A \textbf{54}, 1098 (1996).

\bibitem{gottesman}
D.~Gottesman.
\newblock \emph{Stabilizer codes and quantum error correction}.
\newblock arXiv preprint quant-ph/9705052  (1997).

\bibitem{tannu}
S.~S. Tannu and M.~K. Qureshi.
\newblock \emph{Not all qubits are created equal: a case for variability-aware
  policies for NISQ-era quantum computers}.
\newblock In \emph{Proceedings of the Twenty-Fourth International Conference on
  Architectural Support for Programming Languages and Operating Systems},
  987--999. ACM (2019).

\bibitem{yamam}
N.~Yamamoto, S.~Hara, and K.~Tsumura.
\newblock \emph{Suboptimal quantum-error-correcting procedure based on
  semidefinite programming}.
\newblock Physical Review A \textbf{71}, 022322 (2005).

\bibitem{wang}
X.~Wang, M.~Byrd, and K.~Jacobs.
\newblock \emph{Minimal noise subsystems}.
\newblock Physical Review Letters \textbf{116}, 090404 (2016).

\bibitem{nelderpaper}
J.~A. Nelder and R.~Mead.
\newblock \emph{A simplex method for function minimization}.
\newblock The Computer Journal \textbf{7}, 308 (1965).

\bibitem{NumericalRecipes}
W.~H. Press, S.~A. Teukolsky, W.~T. Vetterling, and B.~P. Flannery.
\newblock \emph{Numerical Recipes: The Art of Scientific Computing}.
\newblock Cambridge University Press, New York, 3 edition (2007).

\bibitem{cartan}
H.~N.~S. Earp and J.~K. Pachos.
\newblock \emph{A constructive algorithm for the Cartan decomposition of
  SU(2N)}.
\newblock Journal of Mathematical Physics \textbf{46}, 082108 (2005).

\bibitem{Zhang2003}
J.~Zhang, J.~Vala, S.~Sastry, and K.~B. Whaley.
\newblock \emph{Geometric theory of nonlocal two-qubit operations}.
\newblock Physical Review A \textbf{67}, 042313 (2003).

\bibitem{Rezakhani2004}
A.~T. Rezakhani.
\newblock \emph{Characterization of two-qubit perfect entanglers}.
\newblock Physical Review A \textbf{70}, 052313 (2004).

\bibitem{langshor}
R.~Lang and P.~W. Shor.
\newblock \emph{Nonadditive quantum error correcting codes adapted to the
  ampltitude damping channel}.
\newblock arXiv preprint arXiv:0712.2586  (2007).

\bibitem{gottesman_FT}
D.~Gottesman.
\newblock \emph{Theory of fault-tolerant quantum computation}.
\newblock Physical Review A \textbf{57}, 127 (1998).

\bibitem{kayreview}
A.~Kay.
\newblock \emph{Perfect, efficient, state transfer and its application as a
  constructive tool}.
\newblock International Journal of Quantum Information \textbf{8}, 641 (2010).

\bibitem{AD_reliable2017}
{\'A}.~Piedrafita and J.~M. Renes.
\newblock \emph{Reliable channel-adapted error correction: Bacon-Shor code
  recovery from amplitude damping}.
\newblock Physical review letters \textbf{119}, 250501 (2017).

\bibitem{dyson}
F.~J. Dyson.
\newblock \emph{The radiation theories of Tomonaga, Schwinger, and Feynman}.
\newblock Physical Review \textbf{75}, 486 (1949).

\bibitem{gottesman97}
D.~Gottesman.
\newblock \emph{Stabilizer codes and quantum error correction. Caltech Ph. D}.
\newblock Ph.D. thesis, Thesis, eprint: quant-ph/9705052 (1997).

\bibitem{magic_2005}
S.~Bravyi and A.~Kitaev.
\newblock \emph{Universal quantum computation with ideal Clifford gates and
  noisy ancillas}.
\newblock Physical Review A \textbf{71}, 022316 (2005).

\bibitem{knill_FT}
E.~Knill.
\newblock \emph{Scalable quantum computing in the presence of large
  detected-error rates}.
\newblock Phys. Rev. A \textbf{71}, 042322 (2005).
\newblock \urlprefix\url{https://link.aps.org/doi/10.1103/PhysRevA.71.042322}.

\bibitem{fletcher}
A.~S. Fletcher, P.~W. Shor, and M.~Z. Win.
\newblock \emph{Structured near-optimal channel-adapted quantum error
  correction}.
\newblock Physical Review A \textbf{77}, 012320 (2008).

\bibitem{liang}
M.~H. Michael, M.~Silveri, R.~T. Brierley, V.~V. Albert, J.~Salmilehto,
  L.~Jiang, and S.~M. Girvin.
\newblock \emph{New Class of Quantum Error-Correcting Codes for a Bosonic
  Mode}.
\newblock Phys. Rev. X \textbf{6}, 031006 (2016).
\newblock \urlprefix\url{https://link.aps.org/doi/10.1103/PhysRevX.6.031006}.

\bibitem{backaction}
L.~Hu, Y.~Ma, W.~Cai, X.~Mu, Y.~Xu, W.~Wang, Y.~Wu, H.~Wang, Y.~Song, C.-L.
  Zou, \emph{et~al.}
\newblock \emph{Quantum error correction and universal gate set operation on a
  binomial bosonic logical qubit}.
\newblock Nature Physics \textbf{15}, 503 (2019).

\bibitem{renes}
A.~Piedrafita and J.~M. Renes.
\newblock \emph{Reliable Channel-Adapted Error Correction: Bacon-Shor Code
  Recovery from Amplitude Damping}.
\newblock Phys. Rev. Lett. \textbf{119}, 250501 (2017).
\newblock
  \urlprefix\url{https://link.aps.org/doi/10.1103/PhysRevLett.119.250501}.

\bibitem{correlated_AD}
A.~D'Arrigo, G.~Benenti, G.~Falci, and C.~Macchiavello.
\newblock \emph{Classical and quantum capacities of a fully correlated
  amplitude damping channel}.
\newblock Physical Review A \textbf{88}, 042337 (2013).

\bibitem{lyakhov}
A.~Lyakhov and C.~Bruder.
\newblock \emph{Quantum state transfer in arrays of flux qubits}.
\newblock New journal of physics \textbf{7}, 181 (2005).

\bibitem{swaddle}
M.~Swaddle, L.~Noakes, H.~Smallbone, L.~Salter, and J.~Wang.
\newblock \emph{Generating three-qubit quantum circuits with neural networks}.
\newblock Physics Letters A \textbf{381}, 3391 (2017).

\end{thebibliography}
\end{singlespace}

\fancyhf{}
\thispagestyle{empty}
\thispagestyle{empty}
\mbox{}
\thispagestyle{empty}
\mbox{}
\thispagestyle{empty}

%
%
%
\fancyhf{}
\newpage
\mbox{}
\thispagestyle{empty}
\fancyhf{}
\mbox{}
\thispagestyle{empty}
%


\end{document}